\newcommand{\PDC}[1]{\widehat{{#1}}}
\newtheorem{theorem}{Theorem}
\newtheorem{lemma}{Lemma}
\newtheorem{corollary}{Corollary}
\newtheorem{claim}{Claim}
\newtheorem{definition}{Definition}
\newenvironment{proof}{\noindent\textbf{Proof.}}{{}\hfill$\Box$\\}
  \newenvironment{proofclaim}{\noindent \emph{Proof.}\ }{\hfill
    $\Diamond$\vspace{1em}}
\newcounter{sclaim}
\newenvironment{sclaim}[1][]%
{\refstepcounter{sclaim}\vspace{1ex}\noindent{\it  (\arabic{sclaim})  {#1}{}}\it}{\vspace{1ex}}
\newenvironment{proofsclaim}[1][]%
	{\noindent {}{#1}{}}{ This proves~(\arabic{sclaim}).\vspace{1ex}}
\newcommand{\Out}{\normalfont{Out}}
\newcommand{\Dual}{\normalfont{Dual}}
\newcommand{\mb}[1]{\mathbb{#1}}
\newcommand{\mc}[1]{\mathcal{#1}}
\newcommand{\ccw}{counterclockwise }
\newcommand{\cw}{clockwise }
\newcommand{\cww}{clockwise}
\title{A bijection for essentially 4-connected toroidal
  triangulations\thanks{This work was
      supported by the grant EGOS ANR-12-JS02-002-01 and GATO
      ANR-16-CE40-0009-01.}}
\author{Nicolas Bonichon, Benjamin Lévêque}
\begin{document}

\maketitle

\begin{abstract}
  Transversal structures (also known as regular edge labelings) are
  combinatorial structures defined over 4-connected plane
  triangulations with quadrangular outer-face.  They have been
  intensively studied and used for many applications (drawing
  algorithm, random generation, enumeration\dots). In this paper we
  introduce and study a generalization of these objects for the
  toroidal case.  Contrary to what happens in the plane, the set of
  toroidal transversal structures of a given toroidal triangulation is
  partitioned into several distributive lattices. We exhibit a subset
  of toroidal transversal structures, called balanced, and show that
  it forms a single distributive lattice. Then, using the minimal
  element of the lattice, we are able to enumerate bijectively
  essentially 4-connected toroidal triangulations.
\end{abstract}

\section{Introduction}

A graph embedded on a surface is called a \emph{map} on this surface
if all its faces are homeomorphic to open disks. Maps are considered
up to homeomorphism.  A map is a \emph{triangulation} if all its faces
have size three.  Given a graph embedded on a surface, a
\emph{contractible loop} is an edge enclosing a region homeomorphic to
an open disk.  Two edges of an embedded graph are called
\emph{homotopic multiple edges} if they have the same extremities and
their union encloses a region homeomorphic to an open disk.  In this
paper, we restrict ourselves to graphs embedded on surfaces that do not
have contractible loops nor homotopic multiple edges. Note that this
is a weaker assumption, than the graph being \emph{simple}, i.e. not
having loops nor multiple edges.  In this paper we distinguish cycles
from closed walk as cycles have no repeated vertices.  A
\emph{contractible cycle} is a cycle enclosing a region homeomorphic
to an open disk.  A \emph{triangle} (resp. \emph{quadrangle}) of a map
is a closed walk of length three (resp. four) that delimits on one
side a region homeomorphic to an open disk. This region is called the
\emph{interior} of the triangle (resp. quadrangle). Note that a
triangle is not necessarily a face of the map as its interior may be
not empty. Note also that a triangle is not necessarily a cycle since
non-contractible loops are allowed.  A \emph{unicellular map} is a map
with only one face, which corresponds to the natural generalization of
planar trees when going to higher genus, see~\cite{CMS09,Cha11}.

In this paper we consider finite maps. We denote by $n$ be the number
of vertices and $m$ the number of edges of a graph. Given a graph
embedded on a surface, we use $f$ for the number of faces.  Euler's
formula says that any map on an orientable surface of genus $g$
satisfies $n-m+f=2-2g$. In particular, the plane is the surface of
genus $0$, the torus the surface of genus $1$, the double torus the
surface of genus $2$, etc.  By Euler's formula, a toroidal
triangulation with $n$ vertices has exactly $3n$ edges and $2n$ faces.

The universal cover of the torus is a surjective mapping $p$ from the
plane to the torus that is locally a homeomorphism.  If the torus is
represented by a hexagon (or parallelogram) in the plane whose opposite sides are
pairwise identified, then the universal cover of the torus is obtained
by replicating the hexagon (or parallelogram) to tile the plane.

A graph is \emph{$k$-connected} if it has at least $k+1$ vertices and
if it stays connected after removing any $k-1$ vertices.  Extending
the notion of essentially 2-connectedness defined in \cite{MR98}, we
say that a toroidal map $G$ is \emph{essentially $k$-connected} if its
universal cover $G^\infty$ is $k$-connected (note that this is
different from $G$ being $k$-connected).  This paper is focused on the
study of essentially 4-connected toroidal triangulations via 
generalizing transversal structures to the toroidal case.

Transversal structures are originally defined on $4$-connected planar
triangulations with four vertices on the outer face.  They have been
introduced by Kant et He~\cite{KH97} (under the name \emph{regular
  edge labelings}) for graph drawing applications of planar
maps~\cite{KH97,Fus09}. Deep combinatorial properties of these objects
have been studied later by Fusy~\cite{Fus09} with numerous other
applications like encoding, enumeration, random generation, etc.
Indeed, in the planar case, transversal structures are strongly
related to a more general object called $\alpha$-orientations by
Felsner~\cite{Fel04}. Consider a graph $G$, with vertex set $V$, and a
function $\alpha:V\to \mb{N}$. An orientation of $G$ is an
\emph{$\alpha$-orientation} if, for every vertex $v\in V$, its
outdegree $d^+(v)$ equals $\alpha(v)$. For a fix planar map $G$ and
function $\alpha$, the set of $\alpha$-orientations of $G$ carries a
structure of distributive lattice (see \cite{Fel04} and older related
results~\cite{Pro93,Oss94}) In the planar case, there is a bijection
between transversal structure of a planar map and $4$-orientations of
the corresponding angle map. Thus the set of transversal structures of
a given planar map also carries a structure of distributive lattice
whose minimal element plays a crucial role for bijection purpose.

 In the toroidal case, things are more complicated since the bijection
 of transversal structures with $4$-orientations is not valid anymore.
 Moreover the set of $\alpha$-orientations of a given toroidal map is
 now partitioned into several distributive lattices (see~\cite{Pro93,GKL15})
 contrarily to the planar case where there is only one lattice and
 thus only one minimal element.  Similar issues appear in the study of
 Schnyder woods and corresponding $3$-orientations of toroidal
 triangulations. In a series of papers~\cite{GL13,GKL15,DGL15} (see
 also the HDR manuscript of the second author~\cite{LevHDR} which
 present these three papers in a unified way), these problems are
 solved by highlighting a particular global property, called ``balanced''
 in~\cite{LevHDR}, that a $3$-orientation may have.

 By following the same guidelines here, we are able to identify, in
 Section~\ref{sec:balanced}, a similar ``balanced'' property for
 $4$-orientations of the angle map. These so-called balanced
 $4$-orientations form the core object of study of this paper. Whereas
 not all $4$-orientations corresponds to transversal structures, we
 show in Section~\ref{sec:char} that all balanced ones correspond to
 transversal structures.  The existence of balanced objects for
 essentially 4-connected toroidal triangulations is proved in
 Section~\ref{sec:existence} by edge contraction. The set of
 $4$-orientations of the angle map of a given essentially
 $4$-connected toroidal triangulation is partitioned into distributive
 lattices but all the balanced $4$-orientations are contained in the
 same lattice, as shown in Section~\ref{sec:latticemain}.  The minimal
 element of this ``balanced lattice'' as some important properties
 that are used in Section~\ref{sec:bijmain} to obtain a bijection
 between essentially $4$-connected toroidal triangulations and some
 toroidal unicellular maps.  Then this bijection is used in
 Section~\ref{sec:enumeration} to enumerate essentially $4$-connected
 toroidal triangulations.

\section{Angle map, transversal structure,  balanced property and
  universal cover}
\label{sec:balanced}

\subsection{Angle map and balanced 4-orientations}
\label{sec:anglemap}
Consider a toroidal triangulation $G$. The \emph{angle map} $A(G)$ of
$G$ is the bipartite map obtained from a simultaneous embedding of
vertices of $G$ and $G^*$ such that vertices of $G^*$ are embedded
inside faces of $G$ and vice-versa, and for each angle of a vertex $v$
incident to a face $v^*$ there is an edge between $v$ and $v^*$.  Hence, $A(G)$ is a
bipartite map with one part consisting of \emph{primal-vertices} and
the other part consisting of \emph{dual-vertices}. Each dual-vertex has
degree three and each face of $A(G)$ is a quadrangle that consists of two
primal-vertices and two dual-vertices. 

Figure~\ref{fig:anglegraph} gives an example of a toroidal
triangulation and its angle map, primal-vertices are black and
dual-vertices are white (this serves as a convention for the entire
paper).

\begin{figure}[!ht]
\center
\includegraphics[scale=0.4]{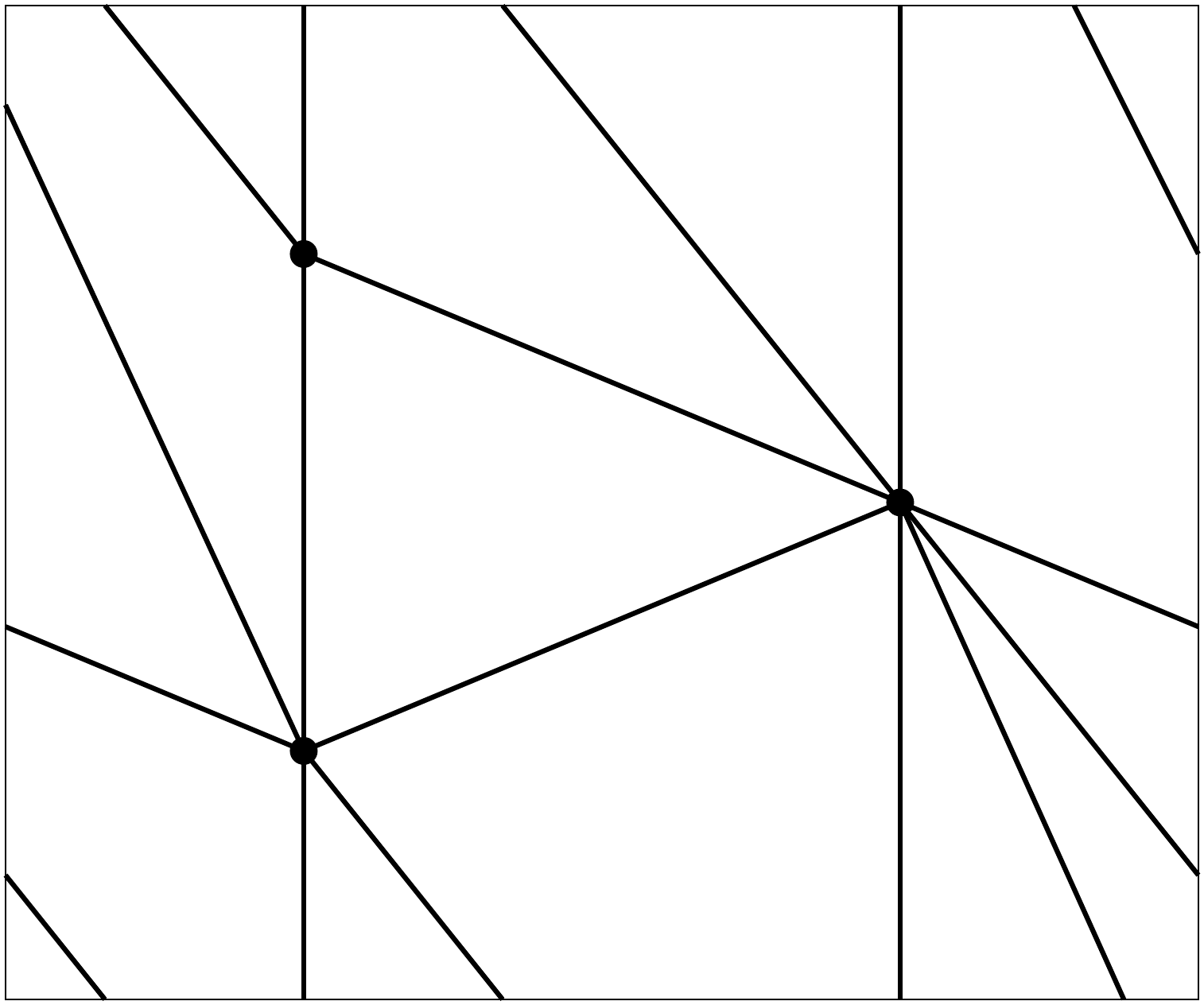} \ \ \ \ 
\includegraphics[scale=0.4]{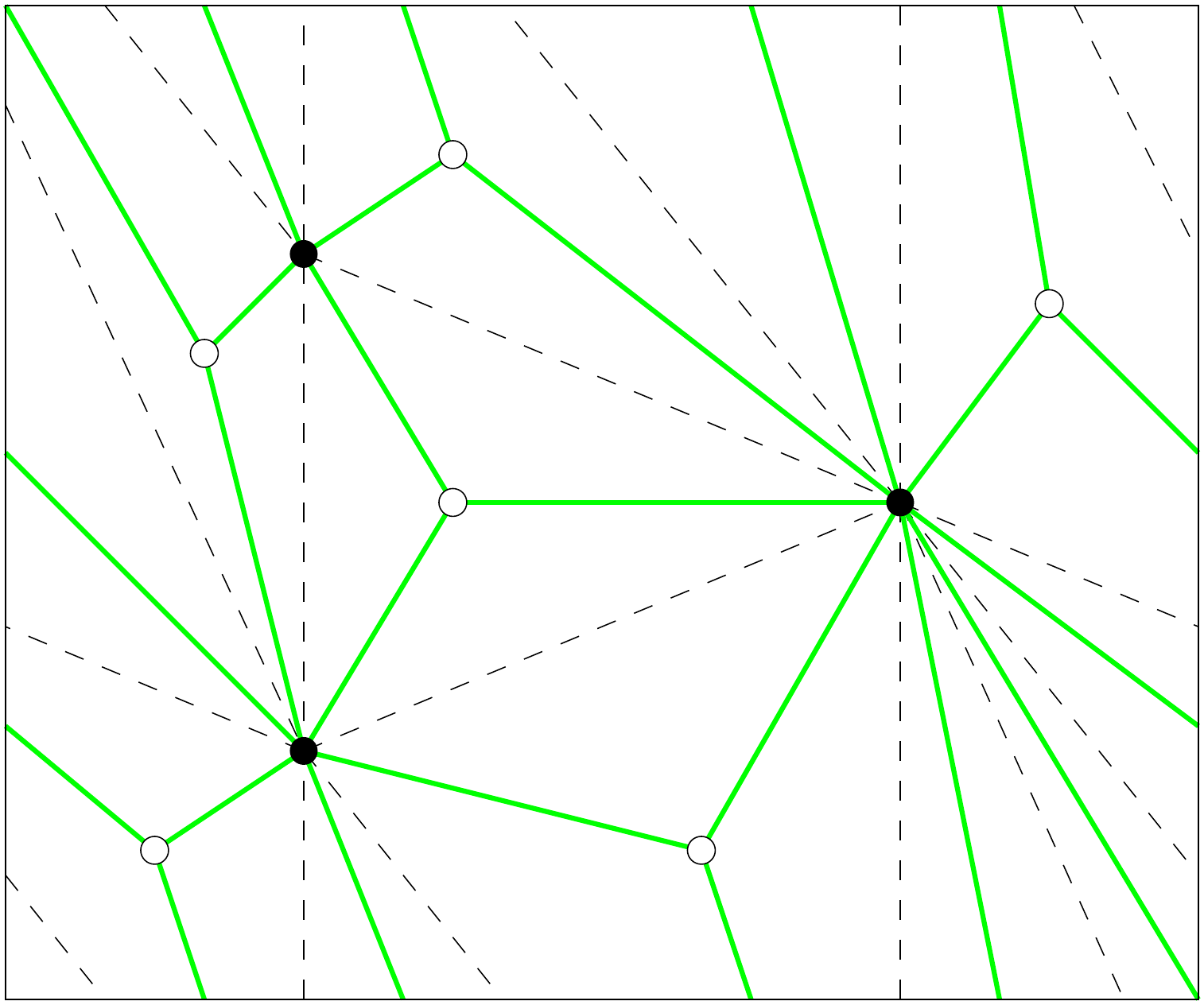}
\caption{A toroidal triangulation and its angle map.}
\label{fig:anglegraph}
\end{figure}

An orientation of the edges of $A(G)$ is called a
\emph{$4$-orientation} if every primal-vertex has outdegree exactly
$4$ and every dual-vertex has outdegree exactly $1$.  Euler's formula
says that for a toroidal triangulation we have $2n=f$, so the number
of edges of the angle map is $3f=4n+f$. Thus Euler's formula is
``compatible'' with existence of $4$-orientations for angle maps of
toroidal triangulations ($4n$ outgoing edges for primal-vertices and
$f$ outgoing edges for dual-vertices.

Consider an orientation of the edges of $A(G)$ and a cycle $C$ of $G$
together with a direction of traversal.  We define $\gamma(C)$ by:
$$\gamma (C) = \#\ \text{edges of $A(G)$ leaving $C$ on its right} -
\#\ \text{edges of $A(G)$ leaving $C$ on its left}.$$

Then we can define balanced orientations:

\begin{definition}[Balanced $4$-orientation]\ \\
A $4$-orientation of $A(G)$ is \emph{balanced} if every
non-contractible cycle $C$ of $G$ satisfies $\gamma(C)=0$.
\end{definition}

Figure~\ref{fig:4orbalanced} gives two examples of $4$-orientations of
the same angle map of a toroidal triangulation. On the left example,
the vertical loop of the triangulation, with upward direction of
traversal, has $\gamma= 2$, thus the orientation is not balanced. On
the right example, one can check that $\gamma=0$ for any
non-contractible cycle (note that we prove latter that it suffices to
check that $\gamma$ equals $0$ for a vertical cycle and a horizontal
cycle to be balanced, see Lemma~\ref{lem:gamma0all}).

\begin{figure}[!ht]
\center
\begin{tabular}{cc}
\includegraphics[scale=0.4]{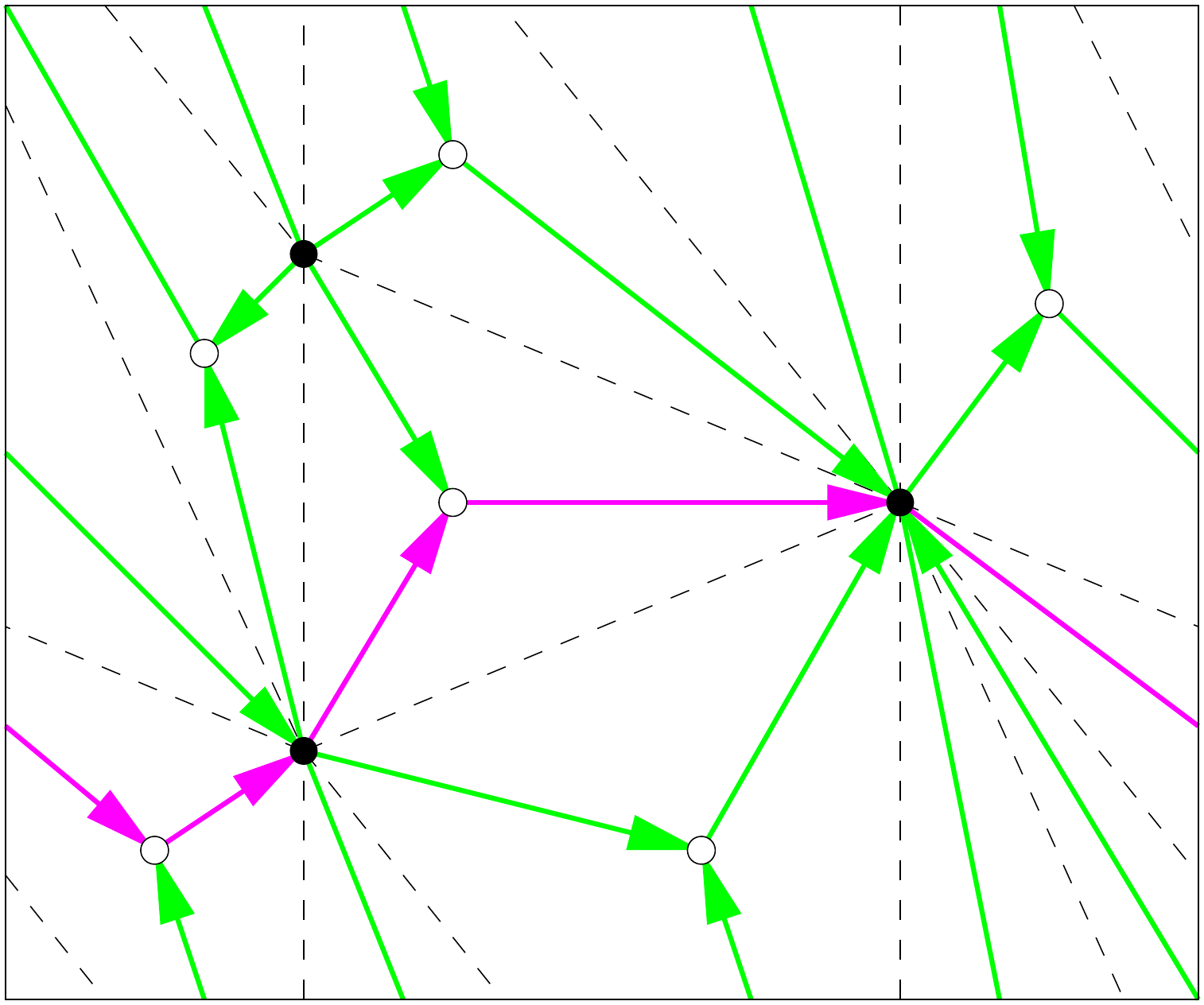} \ \ & \ \ 
\includegraphics[scale=0.4]{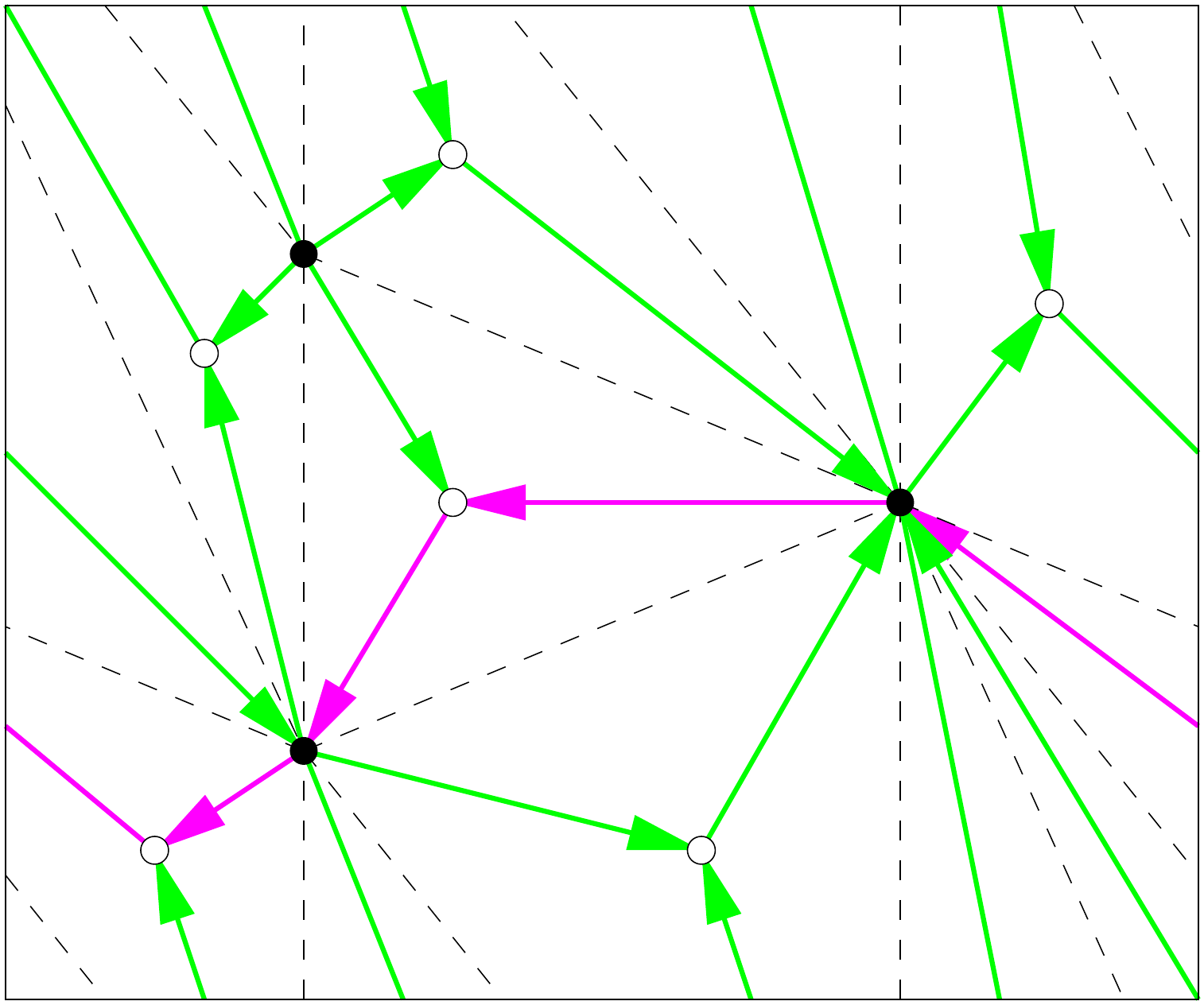}\\
  Non-balanced \ \ & \ \ Balanced \\ 
\end{tabular}
\caption{Two different $4$-orientations of the
  angle map of a toroidal triangulation, exactly one of which is
  balanced. One is obtain from the other by flipping the magenta cycle.}
\label{fig:4orbalanced}
\end{figure}

Balanced $4$-orientations are the main ingredient of this paper. Among
all, we show that an essentially $4$-connected toroidal triangulation
admits a balanced $4$-orientation of its angle map, and we exhibit the
structure of distributive lattice of the set of all these
balanced orientations.

In the next section we show how $4$-orientations are  related to
transversal structures.

\subsection{Balanced transversal structures}
\label{sec:ts}

Transversal structures have been defined
originally in the planar case (see~\cite{KH97,Fus09}) and we propose the
following generalization to the toroidal case. 

First we define the following local rule:

\begin{definition}[Transversal structure local property]
  \label{def:localproperty}\ \\
  Given a map $G$, a vertex $v$ and an orientation and coloring of the
  edges incident to $v$ with the colors blue and red, we say that $v$
  satisfies the \emph{transversal structure local property} (or
  \emph{local property} for short) if the edges around $v$ form in
  counterclockwise order a non-empty interval of outgoing edges of
  color blue, a non-empty interval of outgoing edges of color red, a
  non-empty interval of incoming edges of color blue, a non-empty
  interval of incoming edges of color red (see Figure~\ref{fig:ruleST}).
\end{definition}

\begin{figure}[!ht]
\center
\includegraphics[scale=0.4]{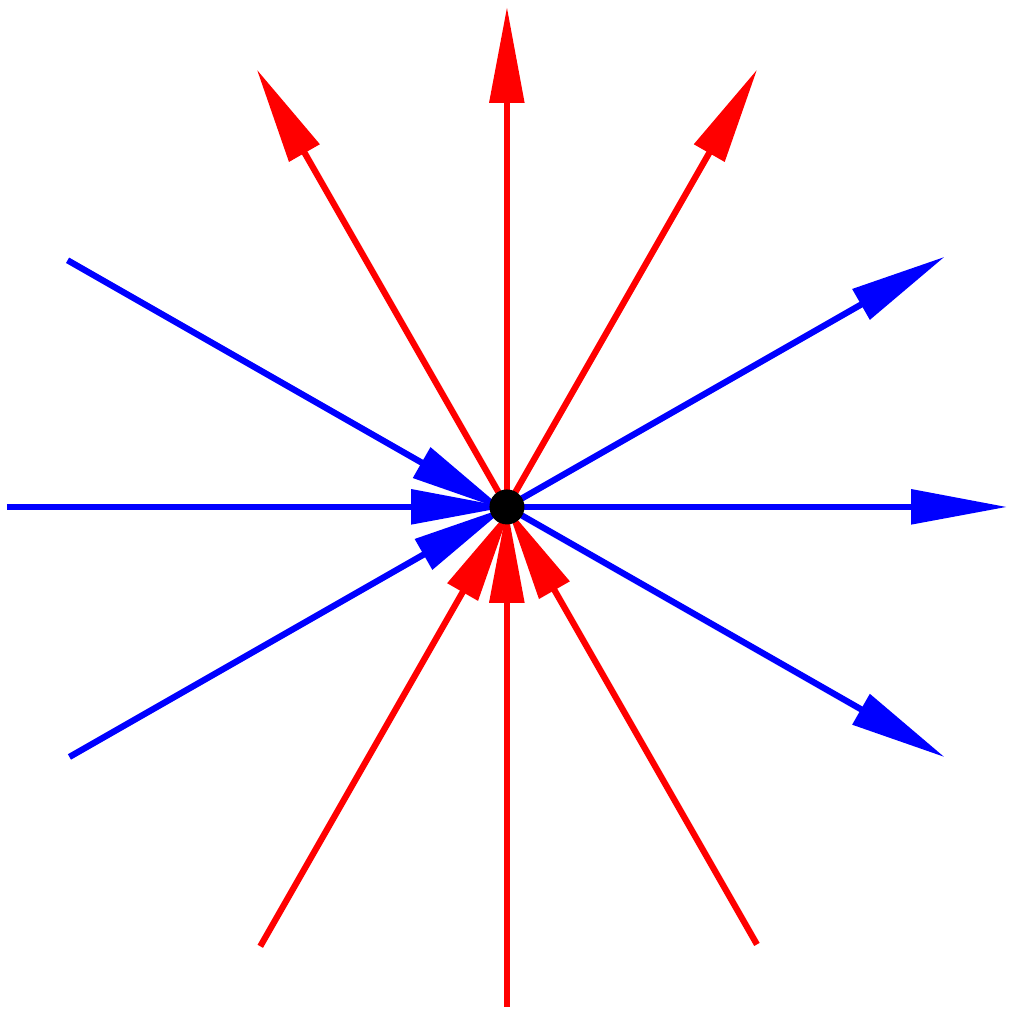} 
\caption{The (transversal structure) local property.}
\label{fig:ruleST}
\end{figure}

Then the definition of toroidal transversal structure is the
following:

\begin{definition}[Toroidal transversal structure]
  \label{def:TTS}\ \\
  Given a toroidal map $G$, a \emph{toroidal transversal structure} of
  $G$ is an orientation and coloring of the edges of $G$ with the
  colors blue and red 
  where every vertex satisfies the transversal structure local
  property.
\end{definition}

See Figure~\ref{fig:balancedTTS} for an example of a toroidal
transversal structure of the triangulation of
Figure~\ref{fig:anglegraph}.
 
\begin{figure}[!ht]
\center
\includegraphics[scale=0.5]{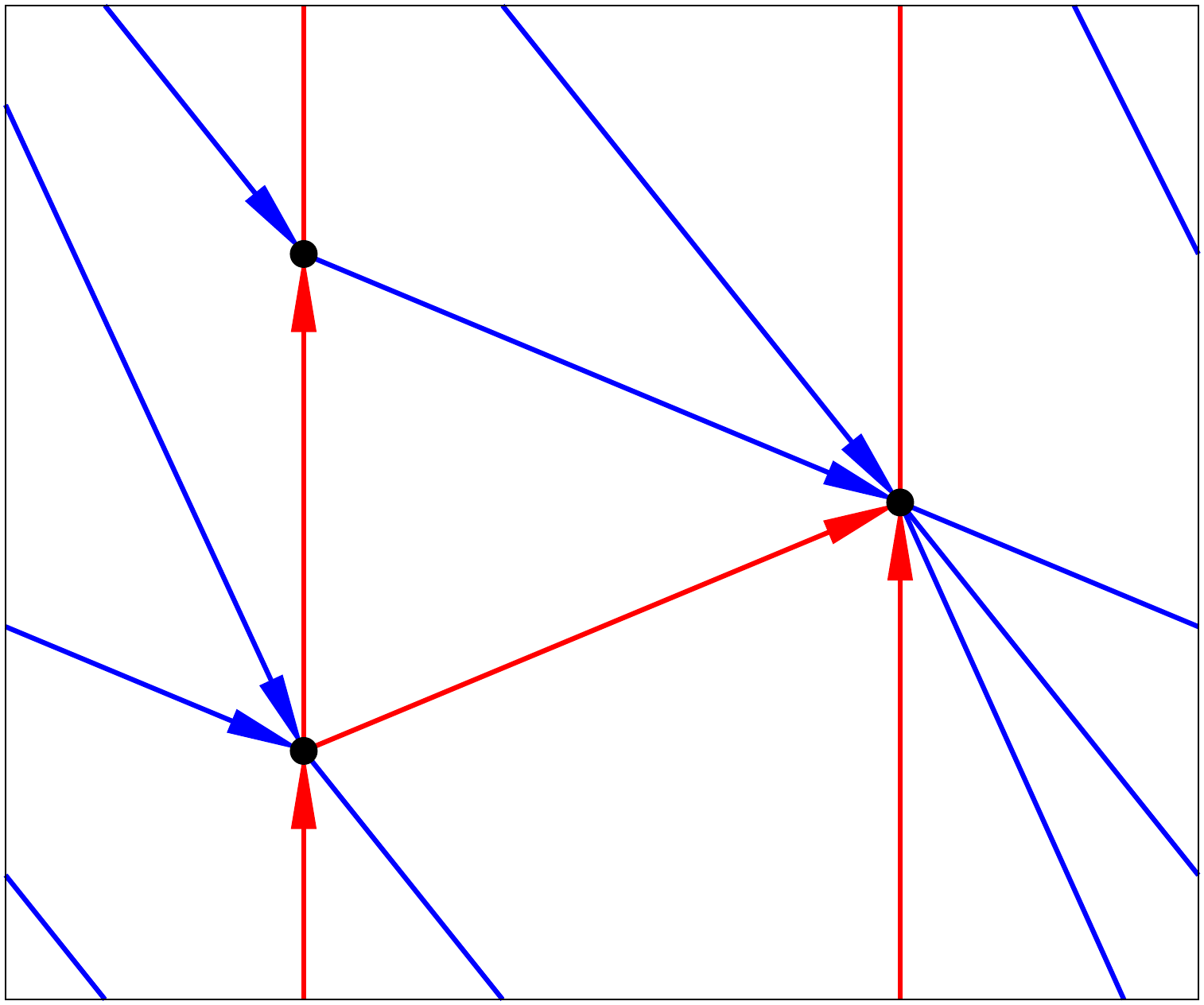}
\caption{Example of a toroidal transversal structure.}
\label{fig:balancedTTS}
\end{figure}

From a toroidal transversal structure of a toroidal triangulation $G$,
one can deduce a $4$-orientation of its angle map $A(G)$ by the
following rule applied around each primal-vertex (see
Figure~\ref{fig:ruleangle}): an edge $e$ of $A(G)$ is oriented toward
its primal-vertex if the two primal edges around $e$ share the same
color otherwise $e$ is oriented toward its dual-vertex. The fact that
primal-vertices of $A(G)$ gets outdegree $4$ is clear by the
definition of transversal structure. The fact that dual-vertices gets
outdegree $1$ is due to the property that, by the local rule, all
(triangular) faces of $G$ looks like one of Figure~\ref{fig:ruleface}
where the four cases are symmetric by rotation of the order (outgoing
blue, outgoing red, incoming blue, incoming red).

\begin{figure}[!ht]
\center
\includegraphics[scale=0.4]{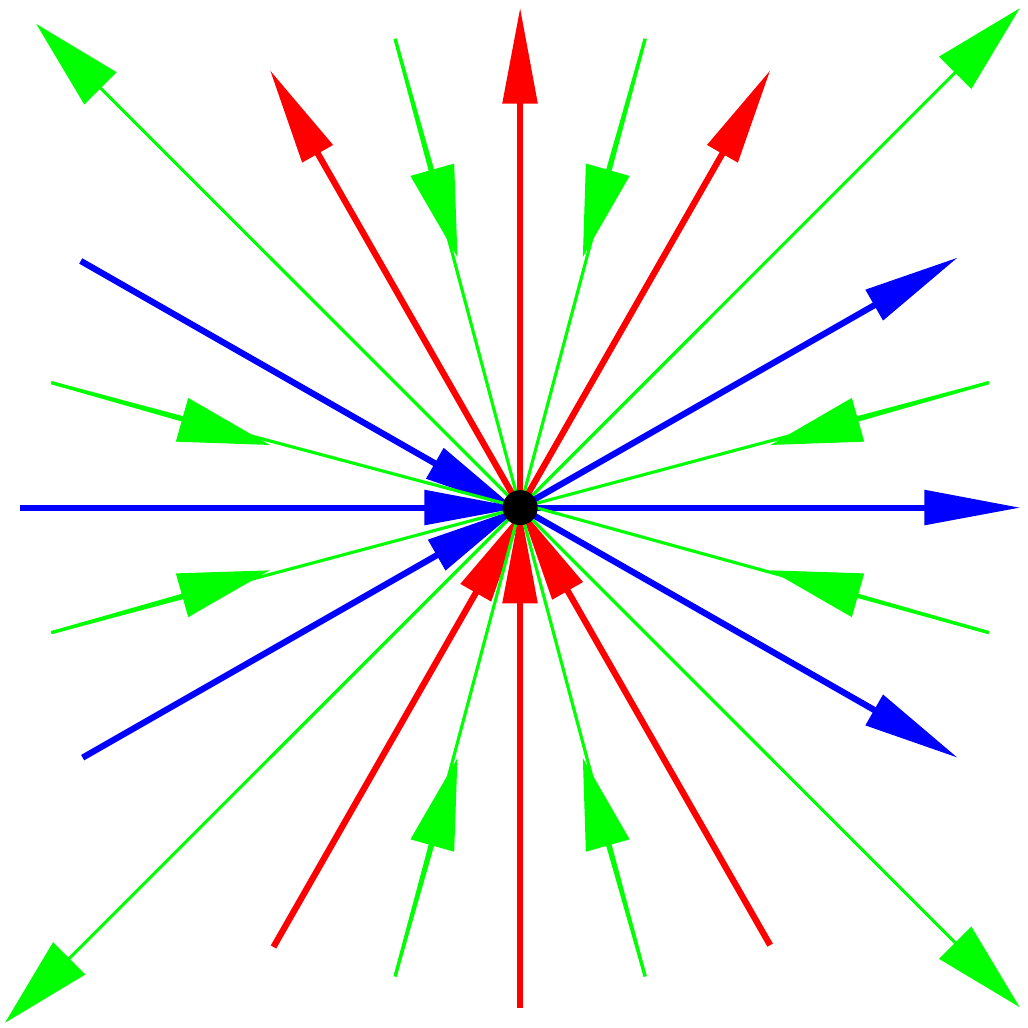}
\caption{Orientation of the angle map corresponding to a transversal
  structure.}
\label{fig:ruleangle}
\end{figure}

\begin{figure}[!ht]
\center
\includegraphics[scale=0.4]{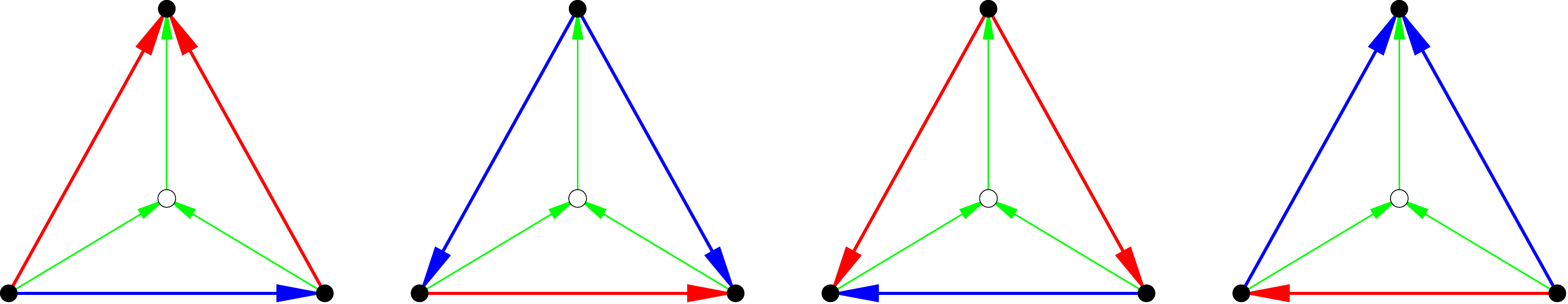} 
\caption{The four possible faces in a transversal structure and the
  corresponding orientation of the angle map.}
\label{fig:ruleface}
\end{figure}

The $4$-orientation on the right of Figure~\ref{fig:4orbalanced} is
the one obtained from Figure~\ref{fig:balancedTTS} by the rule of
Figure~\ref{fig:ruleangle}.

In the plane, there is a bijection between transversal structures of a
map and $4$-orientations of its angle map (see~\cite{Fus09}). This is
not true in the toroidal case. For example, there is no transversal
structure associated to the (non-balanced) $4$-orientation of the left
of Figure~\ref{fig:4orbalanced}.

Like it has been done in~\cite[Theorems 3.7]{GKL15} (see
also~\cite[Section 4.2]{LevHDR}) for toroidal Schnyder woods, it is
possible to characterize which $4$-orientations of the angle map of a
toroidal triangulation corresponds a transversal structure. This is
done in Section~\ref{sec:char}. A consequence of such a
characterization (see Corollary~\ref{cor:bal4orTS}) is that if a
$4$-orientation is balanced, then it corresponds to a transversal
structure.

So the balanced property is a sufficient condition to
corresponds to a transversal structure. Note that it is not a
necessary condition. Figure~\ref{fig:nonbalancedTTS} gives an example
of a transversal structure of a toroidal triangulation whose
corresponding $4$-orientation of its angle map is not balanced. The
horizontal cycle (with direction of traversal from right to left) has $\gamma= 8$.

\begin{figure}[!ht]
\center
\includegraphics[scale=0.4]{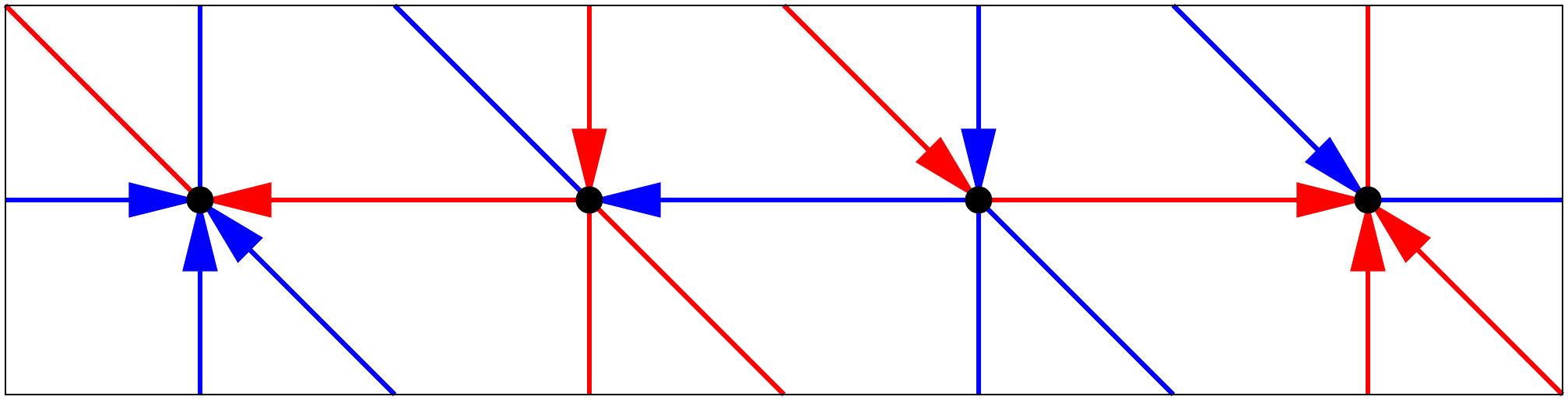} 

\ \\

\includegraphics[scale=0.4]{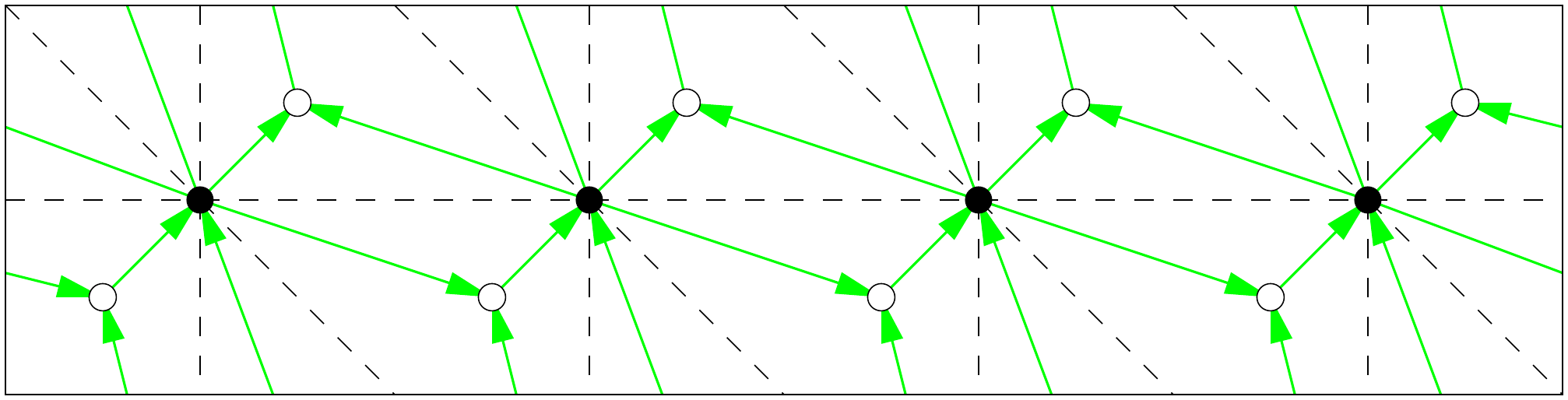}
\caption{Example of a toroidal transversal structure whose
  corresponding $4$-orientation of  angle map is not balanced.}
\label{fig:nonbalancedTTS}
\end{figure}

Note also that in the plane, transversal structures can be defined by
omitting the orientation of the edges in the local property since
there is a bijection with $4$-orientations of the angle map. Again,
this is not the case in the
torus. Figure~\ref{fig:redbluenotorientable} gives an example of a
blue/red coloring of the edges of a toroidal triangulation satisfying
the local rule of transversal structure, without the orientation of
the edges. It is not possible to orient the edges so this coloring
becomes a toroidal transversal structure. The corresponding
orientation of the angle map is still a $4$-orientation. Note that by
 gluing two copies of this example, one obtains
Figure~\ref{fig:nonbalancedTTS} that becomes orientable.

\begin{figure}[!ht]
\center
\includegraphics[scale=0.4]{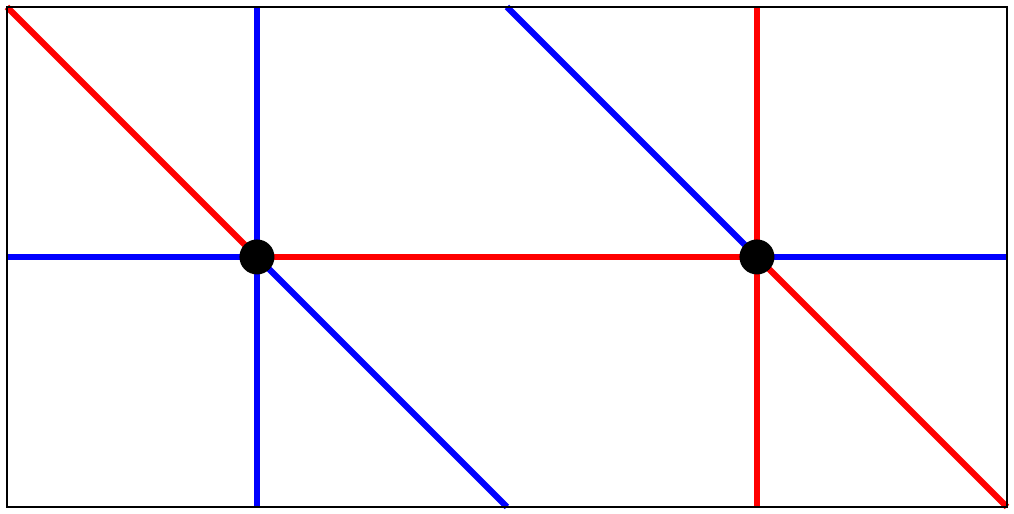} 

\ \\

\includegraphics[scale=0.4]{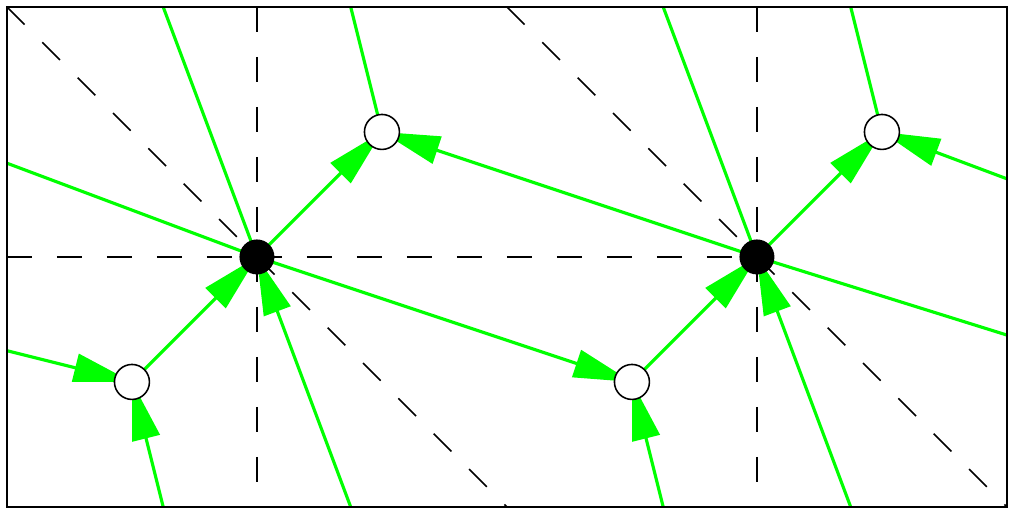}
\caption{Example of a blue/red coloring of the edges of a toroidal
  triangulation satisfying the local rule of transversal structure,
  without the orientation of the edges, and that is not a transversal
  structure.}
\label{fig:redbluenotorientable}
\end{figure}

We give the following definition of \emph{balanced} for toroidal transversal
structure:

\begin{definition}[Balanced toroidal transversal structure]\ \\
  A toroidal transversal structure is \emph{balanced} if its
corresponding $4$-orientation of  angle map is balanced.
\end{definition}

Figure~\ref{fig:balancedTTS} gives an example of a balanced toroidal
transversal structure. The corresponding $4$-orientation of the angle
map is the balanced $4$-orientation of the right of Figure~\ref{fig:4orbalanced}.

In section~\ref{sec:existence} we prove the existence of balanced
toroidal transversal structure for essentially 4-connected toroidal
triangulations. This implies the existence of balanced
$4$-orientations for their angle maps.

\subsection{Transversal structures in the universal cover}
\label{sec:universal}

Consider a toroidal map $G$ and its universal cover
$G^\infty$. Note that $G$ does not have contractible loops nor
homotopic multiple edges if and only if $G^\infty$ is simple.

We need the following lemma from~\cite{GKL15}:

\begin{lemma}[{\cite[Lemma~2.8]{GKL15}}]
\label{lem:finitecc}
Suppose that for a finite set of vertices $X$ of $G^\infty$, the graph
$G^\infty\setminus X$ is not connected. Then $G^\infty\setminus X$ has
a finite connected component.
\end{lemma}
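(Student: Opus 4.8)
The plan is to prove Lemma~\ref{lem:finitecc} by a standard "separating cycle" argument exploiting the fact that $G^\infty$ is the universal cover of a finite toroidal map, hence quasi-isometric to $\mathbb{Z}^2$ and in particular one-ended with a uniform bound on vertex degrees. First I would fix a fundamental domain: since $G$ is finite with $n$ vertices, the projection $p\colon G^\infty\to G$ gives $G^\infty$ a cocompact, properly discontinuous action of $\mathbb{Z}^2$ by deck transformations, and a finite set $X$ is contained in finitely many translates of a fixed fundamental polygon, so $X$ lies in a bounded region $R$ of the plane (in the usual planar embedding of $G^\infty$ obtained by lifting the embedding of $G$ to the torus' universal cover).

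Next I would argue that removing a bounded set from $G^\infty$ leaves exactly one unbounded component. Choose a large disk $D$ in the plane containing $R$ (hence containing $X$) such that the boundary circle of $D$ meets $G^\infty$ only transversally in edges and avoids vertices; because $G^\infty$ is a triangulation of the plane with bounded face size and bounded degree, the edges of $G^\infty$ crossing $\partial D$, together with the vertices just outside $D$, form a connected "annular" subgraph $B$ that separates the inside of $D$ from the outside and that is itself connected. Every vertex of $G^\infty$ outside $D$ can be joined, within $G^\infty\setminus D\supseteq G^\infty\setminus X$, to $B$ and hence all such vertices lie in a single connected component of $G^\infty\setminus X$; call it $C_\infty$. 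So every connected component of $G^\infty\setminus X$ other than $C_\infty$ consists only of vertices inside $D$, hence is finite. Since $G^\infty\setminus X$ is assumed disconnected, at least one such other component exists, and it is finite, which is the conclusion.

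The step I expect to be the main obstacle is making rigorous the claim that the "outside" of a large disk is connected in $G^\infty$ and that all far-away vertices are connected to each other after deleting the bounded set $X$ — i.e.\ that $G^\infty$ is one-ended. The cleanest way to justify this is to use the cocompact $\mathbb{Z}^2$-action directly: pick two non-contractible simple cycles $C_1,C_2$ of $G$ generating the fundamental group with distinct homotopy classes; their lifts tile $G^\infty$ by a grid of quadrilateral-like regions, and any vertex far enough out can be connected to any other far-enough-out vertex by walking along the "grid lines" of lifted cycles, all of which can be chosen to avoid the bounded region containing $X$. (Here one uses that $G$ has no contractible loops nor homotopic multiple edges, so $G^\infty$ is simple and these lifted cycles are genuinely bi-infinite simple paths partitioning the plane.) Alternatively, one can simply cite that $G^\infty$, being quasi-isometric to $\mathbb{Z}^2$, is one-ended; but since the paper seems to want a self-contained combinatorial treatment, I would spell out the grid argument. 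Everything else — boundedness of $X$, finiteness of the interior, existence of a further component from the disconnectedness hypothesis — is routine.
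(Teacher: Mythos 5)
The paper does not actually prove this statement: Lemma~\ref{lem:finitecc} is imported from \cite{GKL15} (Lemma~2.8 there) and used as a black box, so there is no in-paper proof to compare yours against. Judged on its own, your plan is the natural one and is sound in outline: $X$ lies in a bounded region of the plane, $G^\infty$ has exactly one ``infinite direction'' (one end), so all vertices far from $X$ lie in a single component of $G^\infty\setminus X$, every other component is confined to a bounded disk and hence is finite (by local finiteness), and the disconnectedness hypothesis guarantees that such another component exists.

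Two points about the details. First, your intermediate claim that the crossing edges together with ``the vertices just outside $D$'' form a connected annular subgraph is not justified as stated: in a plane triangulation the vertices just outside a topological disk need not induce a connected graph, so this shortcut should be dropped — and you can drop it, since the grid argument you give afterwards is the real justification. Second, in that grid argument you should take $C_1,C_2$ to be cycles of $G$ forming a basis for the homology (for instance the tree/cotree basis discussed in Section~\ref{sec:homology}, whose intersection is a vertex or a path and whose union has a single complementary face). Then $p^{-1}(C_1\cup C_2)$ is connected because the inclusion induces a surjection on fundamental groups, its complementary regions in the plane are bounded lifts of that single disk face, every vertex of $G^\infty$ can be joined to the grid inside its own region (the region is a disk bounded by grid edges in a triangulation), and it remains to verify the routine fact that this $\mathbb{Z}^2$-periodic grid minus a bounded region is still connected (route around the obstacle along two families of lifted lines, as in $\mathbb{Z}^2$). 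With these points written out the argument is complete; it is the standard one-endedness proof one would expect for this lemma.
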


Suppose now that $G$ is a toroidal triangulation  given with a
transversal structure. Consider the natural extension of the
transversal structure of $G$ to $G^\infty$, where an edge of
$G^\infty$ receive the orientation and color of the corresponding edge
in $G$.  Let $G^\infty_B$, $G^\infty_R$ be the directed subgraphs of
$G^\infty$ induced by the edges of color blue and red, respectively.
The graphs $(G^\infty_B)^{-1}$ and $(G^\infty_R)^{-1}$ are the graphs obtained
from $G^\infty_B$ and $G^\infty_R$ by reversing all their edges.  Similarly to what
happens for Schnyder woods (see~\cite[Lemma~6]{LevHDR}) we have the following property:

\begin{lemma}
  \label{lem:nodirectedcycle}
  The graphs $G^\infty_B\cup G^\infty_R$ and
  $G^\infty_B\cup (G^\infty_R)^{-1}$ contain no directed cycle.
\end{lemma}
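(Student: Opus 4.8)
The plan is to prove the stronger claim that \emph{any} directed cycle $C$ in either of the two digraphs yields a contradiction, via a double count of certain corners inside the disk bounded by $C$. Since $G^\infty$ is homeomorphic to the plane, $C$ is a simple closed curve and bounds a closed disk $D$ of $G^\infty$; write $\ell$ for the length of $C$ and $k$ for the number of vertices of $G^\infty$ strictly inside $D$. Because $G^\infty$ is a triangulation, $D$ is a triangulated disk, so by Euler's formula it has exactly $\ell+2k-2$ (triangular) faces. No minimality or innermost-cycle assumption on $C$ is needed.

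I would then introduce \emph{special corners}. By the transversal structure local property, the edges around a vertex $v$ split, in counterclockwise order, into four nonempty intervals: outgoing-blue, outgoing-red, incoming-blue, incoming-red in the case of $G^\infty_B\cup G^\infty_R$, and outgoing-blue, incoming-red, incoming-blue, outgoing-red in the case of $G^\infty_B\cup (G^\infty_R)^{-1}$ (reversing the red edges keeps the rotation system but relabels the red intervals). A corner of $v$ is \emph{special} if its two edges lie in two distinct, hence cyclically consecutive, intervals. The argument hinges on two local facts: an interior vertex of $D$ contributes exactly $4$ special corners (one at each interval boundary), and every triangular face of $D$ contributes exactly $2$. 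The second follows from Figure~\ref{fig:ruleface}, or, uniformly for both digraphs, from the observation that along the boundary of a face each edge moves the interval label by $2$ positions (the outgoing and incoming intervals of a fixed color are antipodal among the four), while each corner moves it by $0$ or $1$, forcing the three corner-shifts to sum to $8-6=2$. Double counting the special corners, by faces and by vertices, gives $2(\ell+2k-2)=4k+S$, where $S$ is the number of special corners at vertices of $C$ that lie on the $D$-side; hence $S=2\ell-4$.

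The second half is a direct computation of $S$ by walking along $C$. At a vertex $v$ of $C$ with incoming $C$-edge $a_v$ and outgoing $C$-edge $b_v$, the corners of $v$ on the $D$-side form the arc from $b_v$ to $a_v$, and the number of special ones among them is the number of interval boundaries crossed by that arc, i.e.\ $\mathrm{pos}(a_v)-\mathrm{pos}(b_v)$ reduced mod $4$ into $\{1,2,3\}$ (with the sign adjusted if $D$ lies on the other side of $C$), where $\mathrm{pos}$ records which of the four intervals ($0,1,2,3$ counterclockwise) contains the edge at that endpoint. Now $b_v$ is the edge $vv^{+}$, with $v^{+}$ the successor of $v$ on $C$, and $a_{v^{+}}$ is that same edge seen at $v^{+}$; since the two monochromatic intervals of a given color are antipodal, this forces $\mathrm{pos}(a_{v^{+}})\equiv\mathrm{pos}(b_v)+2\pmod 4$. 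Summing the per-vertex count over all $v\in C$, the $\mathrm{pos}$-terms telescope and leave $S=2\ell$. Together with $S=2\ell-4$ this gives $0=-4$, a contradiction, so neither digraph has a directed cycle.

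I expect the main obstacle to be the bookkeeping rather than any deep idea: correctly identifying, at a boundary vertex of $D$, which corners lie inside $D$ and what interval labels their edges carry, and keeping the orientations straight so that the two local facts (``every face has exactly two special corners'' and ``the two monochromatic intervals are antipodal'') are applied to the correct rotation/label structure for each of the two digraphs. Once those are nailed down, the clash between the face count $S=2\ell-4$ and the telescoping count $S=2\ell$ closes the argument.
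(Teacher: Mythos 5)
Your argument is correct, but it takes a genuinely different route from the paper. The paper's proof is extremal: it takes a directed cycle whose disk contains the minimum number of faces, notes (via the local property) that every vertex of the disk has an outgoing red edge inside it, so minimality forces the cycle to be a red cycle, and the same argument with incoming blue edges forces it to be blue, a contradiction; the second digraph is handled ``similarly''. You instead run a global double count with no minimality assumption: the disk bounded by any directed cycle of length $\ell$ with $k$ interior vertices has $\ell+2k-2$ triangular faces, each face has exactly $2$ special corners (your mod-$4$ computation around a face is exactly what Figure~\ref{fig:ruleface} shows), each interior vertex exactly $4$, so the boundary contributes $S=2\ell-4$, while the walk along $C$ gives $S=2\ell$. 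I checked the two delicate points and they do go through: at a boundary vertex the two cycle half-edges always lie in distinct intervals (one outgoing, one incoming in the re-oriented structure), so each per-vertex count is indeed in $\{1,2,3\}$; and the exact value $S=2\ell$ holds in all four (digraph, side of $D$) cases --- note this exactness is essential, since naive telescoping only yields $S\equiv 2\ell \bmod 4$, which would not contradict $S=2\ell-4$; for $G^\infty_B\cup G^\infty_R$ the telescoping is literal (integer differences), whereas for $G^\infty_B\cup (G^\infty_R)^{-1}$ the label shift along a reversed red edge is $-2$ rather than $+2$ as an integer and the $+4$ adjustments at the vertices whose outgoing cycle edge is a reversed red edge exactly compensate, so the bookkeeping you flagged is the only real work. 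In terms of what each approach buys: the paper's color-chasing proof is shorter and needs no counting, while yours avoids the minimal-cycle step, treats both digraphs uniformly through the relabeling, and is in the same Euler-formula/corner-counting style that the paper itself uses later (e.g.\ Lemmas~\ref{lem:facialwalktilde} and~\ref{lem:8diskin}).
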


\begin{proof}
  Let us prove the property for $G^\infty_B\cup G^\infty_R$, the proof
  is similar for $G^\infty_B\cup (G^\infty_R)^{-1}$.  Suppose by
  contradiction that there is a directed cycle in
  $G^\infty_B\cup G^\infty_R$.  Let $C$ be such a cycle containing the
  minimum number of faces in the finite map $D$ with border
  $C$. Suppose without loss of generality that $C$ turns around $D$
  counterclockwisely. By the transversal structure local property,
  every vertex of $D$ has at least one outgoing edge of color red in
  $D$.  So there is a cycle of color red in $D$ and this cycle is $C$
  by minimality of $C$.  Every vertex of $D$ has at least one incoming
  edge of color blue in $D$. So, again by minimality of $C$, the cycle
  $C$ is a cycle of color blue. This contradicts the fact that edges
  of $G^\infty$ have a unique color.
\end{proof}

For a vertex $v$ of $G^\infty$, we define  $P_0(v)$ (resp. $P_1(v)$, $P_2(v)$, $P_3(v)$)  the subgraph of
$G^\infty$ obtained by keeping all the edges that are on an oriented
path  of
$G^\infty_B$ (resp. $G^\infty_R$, $(G^\infty_B)^{-1}$, $(G^\infty_R)^{-1}$)
starting at $v$.
Then we have the following lemma: 

\begin{lemma}
  \label{lem:nocommongeneral}
  For every vertex $v$ and $i,j\in\{0,1,2,3\}$, $i\neq j$, the two
  subgraphs $P_{i}(v)$ and $P_{j}(v)$  of $G^\infty$ have $v$ as only common
  vertex.
\end{lemma}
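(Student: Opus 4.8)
The plan is to argue by contradiction, reducing everything to Lemma~\ref{lem:nodirectedcycle}. Suppose that for some vertex $v$ and some $i\neq j$ in $\{0,1,2,3\}$ the subgraphs $P_i(v)$ and $P_j(v)$ share a vertex $w\neq v$. The first step is to translate the statement ``$w$ is a vertex of $P_k(v)$'' into the existence of a directed path: by definition $w$ lies on some oriented path starting at $v$ in one of $G^\infty_B$, $G^\infty_R$, $(G^\infty_B)^{-1}$, $(G^\infty_R)^{-1}$, hence there is an oriented path from $v$ to $w$ in $G^\infty_B$ (if $k=0$), from $v$ to $w$ in $G^\infty_R$ (if $k=1$), or, after reversing, from $w$ to $v$ in $G^\infty_B$ (if $k=2$) or from $w$ to $v$ in $G^\infty_R$ (if $k=3$).

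The second step is a short case analysis on the pair $\{i,j\}$. In each of the six cases one takes the two paths provided by the two indices, reverses one of them if necessary so that the two can be concatenated (at $w$, or at $v$), and observes that the resulting closed directed walk lies inside $G^\infty_B\cup G^\infty_R$ or inside $G^\infty_B\cup (G^\infty_R)^{-1}$. Concretely: when both paths run ``from $v$ to $w$'' (the case of $P_0$ and $P_1$) one reverses the second and obtains a closed walk in $G^\infty_B\cup (G^\infty_R)^{-1}$; when both paths run ``towards $v$'' (the case of $P_2$ and $P_3$) one again reverses one of them and lands in $G^\infty_B\cup (G^\infty_R)^{-1}$; and when one path runs from $v$ to $w$ and the other towards $v$, a direct concatenation produces a closed walk inside $G^\infty_B\cup G^\infty_R$, using $G^\infty_R\cup G^\infty_B=G^\infty_B\cup G^\infty_R$ and the fact that $G^\infty_B$ and $G^\infty_R$ are subgraphs of $G^\infty_B\cup G^\infty_R$. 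Since $w\neq v$, each of the two paths has length at least one, so in every case the closed directed walk has positive length.

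The last step invokes the elementary fact that a directed graph with no directed cycle contains no closed directed walk of positive length (a shortest such walk cannot repeat a vertex, hence is itself a directed cycle). Applied to the walk constructed above, this contradicts Lemma~\ref{lem:nodirectedcycle}, and the lemma follows. I do not expect a real obstacle here: the only points requiring care are the bookkeeping of edge directions in the six cases and the observation that the closed walk obtained is genuinely non-degenerate. The latter is guaranteed by $w\neq v$, together with the fact that $G^\infty$ is simple, so that the two graphs appearing in Lemma~\ref{lem:nodirectedcycle} are orientations of subgraphs of $G^\infty$ and in particular contain no pair of antiparallel edges that could let the walk collapse.
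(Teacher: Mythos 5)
Your argument is correct and is essentially the paper's proof: the paper disposes of the lemma in one line by noting that a second common vertex of $P_i(v)$ and $P_j(v)$ yields a directed cycle in $G^\infty_B\cup G^\infty_R$ or $G^\infty_B\cup (G^\infty_R)^{-1}$, contradicting Lemma~\ref{lem:nodirectedcycle}. You merely make explicit the case analysis and the closed-walk-to-cycle reduction that the paper leaves implicit, and your cases do land only in the two unions covered by that lemma, as required.
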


\begin{proof}
  If $P_{i}(v)$ and $P_{j}(v)$ intersect on two vertices, then
   $G^\infty_B\cup G^\infty_R$ or
  $G^\infty_B\cup (G^\infty_R)^{-1}$ contains a directed cycle,
  contradicting Lemma~\ref{lem:nodirectedcycle}.
\end{proof}

Now we can prove that the existence of a transversal structure for a
toroidal triangulation implies the $4$-connectedness of its universal
cover:

\begin{lemma}
\label{lem:e4c}
If a toroidal triangulation admits a toroidal transversal structure,
then it is essentially 4-connected.
\end{lemma}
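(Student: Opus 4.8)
The plan is to show that removing any three vertices from $G^\infty$ leaves it connected, i.e. $G^\infty$ is $4$-connected (it clearly has more than four vertices since it is infinite). So fix a set $X$ of three vertices of $G^\infty$ and suppose for contradiction that $G^\infty\setminus X$ is disconnected. By Lemma~\ref{lem:finitecc}, there is a finite connected component $H$ of $G^\infty\setminus X$. I will derive a contradiction from the existence of such a finite component together with the transversal structure.

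First I would pick a vertex $v$ of $H$ and look at the four subgraphs $P_0(v), P_1(v), P_2(v), P_3(v)$ given by the oriented monochromatic paths out of $v$ (in blue, red, reversed-blue, reversed-red). By Lemma~\ref{lem:nocommongeneral}, these four subgraphs pairwise meet only in $v$. The key point is that each $P_i(v)$ must be infinite: starting at $v$ one can always extend a blue outgoing path by the transversal structure local property (every vertex has a non-empty interval of outgoing blue edges), and since $G^\infty$ is simple with no directed monochromatic cycle (Lemma~\ref{lem:nodirectedcycle}, whose hypotheses a fortiori exclude monochromatic cycles), such a path never repeats a vertex and hence has infinitely many vertices; the same holds for red, reversed-blue and reversed-red. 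Since $H$ is finite, each $P_i(v)$ must leave $H$, and to do so it must pass through $X$ (as $H$ is a connected component of $G^\infty\setminus X$, every edge from $H$ to the rest of $G^\infty$ lands in $X$). Thus each of $P_0(v),\dots,P_3(v)$ contains a vertex of $X$; since they pairwise share only $v\notin X$, these are four \emph{distinct} vertices of $X$, contradicting $|X|=3$.

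The main obstacle is making precise the step ``$P_i(v)$ leaves $H$, hence hits $X$, and the four hitting vertices are distinct.'' One has to be a little careful: $P_i(v)$ is a subgraph (a union of paths), not a single path, so I would phrase it as: take one infinite oriented monochromatic path $Q_i$ starting at $v$ inside $P_i(v)$ (it exists by the extension argument above); since $H$ is finite, $Q_i$ eventually leaves $H$, so some edge of $Q_i$ goes from a vertex of $H$ (or from $v$ itself) to a vertex outside $H$, and that outside endpoint lies in $X$. Call this first exit vertex $x_i\in X$, and note $x_i\in P_i(v)$. If $x_i=x_j$ for $i\neq j$, then $x_i$ is a common vertex of $P_i(v)$ and $P_j(v)$ other than $v$, contradicting Lemma~\ref{lem:nocommongeneral}. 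Hence $x_0,x_1,x_2,x_3$ are four distinct elements of $X$, which is impossible.

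I should also dispatch the trivial case where $X$ disconnects nothing because $v\in X$: if $v\in X$ we simply choose $v$ to be any vertex of the finite component $H$, which by definition is disjoint from $X$, so $v\notin X$ and the argument applies. This completes the contradiction and shows $G^\infty$ is $4$-connected, i.e. $G$ is essentially $4$-connected.
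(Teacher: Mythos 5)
Your proof is correct and follows essentially the same route as the paper: take a finite component given by Lemma~\ref{lem:finitecc}, note that the four subgraphs $P_0(v),\dots,P_3(v)$ are infinite and acyclic, so each must meet the three deleted vertices, and conclude by pigeonhole against Lemma~\ref{lem:nocommongeneral}. The only difference is that you spell out details the paper leaves implicit (why each $P_i(v)$ is infinite, and why the exit vertices must be distinct), which is fine.
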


\begin{proof}
  Suppose by contradiction that there exists three vertices $x,y,z$ of
  $G^\infty$ such that $G'=G^\infty\setminus\{x,y,z\}$ is not
  connected. Then, by Lemma~\ref{lem:finitecc}, the graph $G'$ has a
  finite connected component $R$. Let $v$ be a vertex of $R$. By
  Lemma~\ref{lem:nodirectedcycle}, for $i\in\{0,1,2,3\}$, the infinite
  and acyclic graph $P_{i}(v)$ does not lie in $R$ so it intersects
  one of $x,y,z$. So for two distinct $i,j$, the two graphs $P_{i}(v)$
  and $P_{j}(v)$ intersect on a vertex distinct from $v$, a
  contradiction to Lemma~\ref{lem:nocommongeneral}.
\end{proof}

In Section~\ref{sec:existence}, we prove the converse of
Lemma~\ref{lem:e4c} (see Theorem~\ref{th:existence}).

 A \emph{separating
  triangle}  of a map is a triangle whose interior is non empty.  We have the
following equivalence:

\begin{lemma}
\label{lem:e4ciffnst}
A toroidal triangulation is essentially $4$-connected if and only if
its universal cover has no
separating triangle.
\end{lemma}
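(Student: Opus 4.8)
The statement is an equivalence about a toroidal triangulation $G$: essential $4$-connectedness of $G$ versus the absence of separating triangles in its universal cover $G^\infty$. Recall that, by definition, $G$ is essentially $4$-connected exactly when $G^\infty$ is $4$-connected, and recall also Lemma~\ref{lem:finitecc}, which guarantees that whenever a finite vertex set disconnects $G^\infty$ there is a finite connected component. My plan is to prove the two directions by contraposition, using the fact that $G^\infty$ is a (simple) planar triangulation together with the structure of its small separators.

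\medskip

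\noindent\emph{From a separating triangle to a small cut.} Suppose $G^\infty$ has a separating triangle $T$ with vertex set $\{x,y,z\}$; ``separating'' here means its interior (the bounded disk it bounds) is non-empty, i.e. contains at least one vertex. Then removing $\{x,y,z\}$ disconnects $G^\infty$: the interior vertices of $T$ form a non-empty set that has no edge to the exterior (all such edges would have to cross $T$, impossible in a plane embedding since $T$ is a cycle in the simple graph $G^\infty$), and the exterior is likewise non-empty because $G^\infty$ is infinite. Hence $G^\infty$ is not $4$-connected, i.e. $G$ is not essentially $4$-connected. This direction is routine once one notes $G^\infty$ is simple (as $G$ has no contractible loops nor homotopic multiple edges).

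\medskip

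\noindent\emph{From a small cut to a separating triangle.} Conversely, suppose $G$ is not essentially $4$-connected, so some set $X$ of at most three vertices disconnects $G^\infty$. By Lemma~\ref{lem:finitecc} there is a finite connected component $R$ of $G^\infty\setminus X$. The key point is that $G^\infty$ is a planar triangulation, so it is $3$-connected (every planar triangulation on at least four vertices is $3$-connected); therefore $|X|=3$, say $X=\{x,y,z\}$, and each of $x,y,z$ must have a neighbour in $R$ (otherwise a $2$-element subset of $X$ would already disconnect $G^\infty$, contradicting $3$-connectedness). Now consider the finite subgraph of $G^\infty$ induced by $R\cup X$ together with its inherited planar embedding. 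Its outer boundary, separating $R$ from the rest of $G^\infty$, is a closed walk using only the vertices of $X$ (the vertices of $R$ are strictly interior, having all their neighbours inside $R\cup X$). Since $|X|=3$, this outer boundary is a closed walk of length three on $\{x,y,z\}$ bounding a disk that contains the non-empty set $R$ in its interior: that is exactly a separating triangle of $G^\infty$.

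\medskip

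\noindent\textbf{Main obstacle.} The delicate step is the converse direction, specifically extracting an actual triangle (closed walk of length three) as the boundary between the finite component $R$ and the rest. One must argue that the three cut vertices $x,y,z$ are pairwise adjacent in $G^\infty$ and that they bound $R$ in the required disk-like way. The cleanest route is to invoke $3$-connectedness of the planar triangulation $G^\infty$ to force $|X|=3$ and to force each of $x,y,z$ to touch $R$; then a standard planarity argument shows the boundary of the disk containing $R$ is a $3$-cycle (or closed $3$-walk) on $\{x,y,z\}$. Care is needed because $G^\infty$, although simple, is infinite, so one should localize to the finite planar graph induced by $R$ and a suitable neighbourhood before talking about ``the'' bounded face; Lemma~\ref{lem:finitecc} is precisely what makes this localization legitimate.
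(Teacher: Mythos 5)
Your forward direction is fine and coincides with the paper's. The problems are in the converse, and they are genuine gaps rather than mere terseness.

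First, you justify $|X|=3$ and the fact that each of $x,y,z$ has a neighbour in $R$ by citing ``every planar triangulation on at least four vertices is 3-connected''. That classical theorem is about \emph{finite} simple maximal planar graphs; here $G^\infty$ is an infinite plane triangulation, so the fact cannot be quoted off the shelf. It is true for locally finite infinite plane triangulations (one can prove it, e.g., by a link argument: the neighbours of each vertex induce a cycle, and two adjacent removed vertices still have a common neighbour because each edge lies on two triangular faces), but you would have to supply such an argument. Note that the paper's own proof never uses 3-connectedness: it takes a set of exactly three vertices, passes to the face $F$ of $G^\infty\setminus R$ containing the finite component $R$ given by Lemma~\ref{lem:finitecc}, and rules out $|F|\le 2$ by simplicity and $|F|\ge 4$ by the triangulation property, so that $F$ itself is the separating triangle.

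Second, and more seriously, your final claim --- that the outer boundary of $H=G^\infty[R\cup X]$ is a closed walk of length three on $\{x,y,z\}$ --- is essentially the statement to be proved, and the justification offered does not carry it. (i) The reason given for vertices of $R$ being ``strictly interior'', namely that all their neighbours lie in $R\cup X$, is not sufficient in a general plane graph: a vertex can lie on the outer face of an induced subgraph containing all its neighbours. What is actually needed is the triangulation hypothesis: every face of $G^\infty$ at $r\in R$ is a triangle $rab$ with $a,b\in N(r)\subseteq R\cup X$, hence (the subgraph being induced) is also a face of $H$, so $r$ is surrounded by bounded faces of $H$ and cannot lie on its outer face. (ii) Even granting that the outer boundary uses only vertices of $X$, a closed boundary walk on at most three vertices of a simple plane graph need not have length three: it could traverse a bridge twice (a walk of length $2$ or $4$ visiting only two of $x,y,z$). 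To exclude this you must argue, for instance, that every edge of $H$ between two vertices of $X$ lies on a cycle through the connected set $R$ (both endpoints having neighbours in $R$), hence is not a bridge and is traversed at most once by the outer boundary; simplicity of $G^\infty$ then forces the boundary to be the triangle $xyz$, whose bounded side contains $R$. These are exactly the places where simplicity and the triangulation property must enter; dismissing them as ``a standard planarity argument'' leaves out the core of the proof, which in the paper is carried by the analysis of the face of $G^\infty\setminus R$ containing $R$.
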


\begin{proof}
  ($\Longrightarrow$) Consider an essentially $4$-connected toroidal
  triangulation $G$. So $G^\infty$ is $4$-connected. If
  $G^\infty$ has a separating triangle, then, the three vertices of the
  triangle form a contradiction to the $4$-connectedness of
  $G^\infty$. So $G^\infty$ has no separating triangle.

  ($\Longleftarrow$) Consider a toroidal triangulation $G$ such that
  $G^\infty$ has no separating triangle. Suppose  by contradiction that
  $G^\infty$ is not $4$-connected. Then there exists a set of $3$
  vertices $X=\{x,y,z\}$ such that $G^\infty\setminus X$ is not
  connected. By Lemma~\ref{lem:finitecc}, the graph $G^\infty\setminus X$
  has a finite connected component $R$. Let $F$ be the face of
  $G^\infty\setminus R$ containing $R$.  If $F$ has length $1$ or $2$
  then $G^\infty$ is not simple, a contradiction. If $F$ has size $3$,
  then $F$ is a separating triangle of $G^\infty$, a contradiction. So
  $F$ has size at least $4$.  Then there exists a vertex $v$ in
  $F\setminus X$. There is no edges between $v$ and $R$ and thus in
  $G^\infty$ the face incident to $v$ and $R$ has length strictly more
  than $3$, a contradiction of $G$ being a triangulation.
\end{proof}

We say that a quadrangle is \emph{maximal} (by inclusion) if its
interior is not strictly contained in the interior of another
quadrangle. 

\begin{lemma}
\label{lem:uniqueQuadrangle}
Consider an essentially $4$-connected toroidal triangulation $G$ and
an edge $e$ of $G$. Then there is a unique maximal quadrangle of
$G$ whose interior contains $e$.
\end{lemma}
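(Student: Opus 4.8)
The plan is to work in the universal cover $G^\infty$, which by Lemma~\ref{lem:e4ciffnst} has no separating triangle, and to use the fact that a quadrangle of $G$ lifts to a quadrangle (a $4$-cycle bounding a disk) of $G^\infty$ since $G^\infty$ is simple. First I would reduce the statement to existence plus uniqueness of a maximal quadrangle containing a lift $\tilde e$ of $e$ in $G^\infty$, and then project back. For existence: the edge $\tilde e=xy$ lies in exactly two faces of the triangulation $G^\infty$, say $xyz$ and $xyw$ with $z\neq w$ (they are distinct because $G^\infty$ is simple and has no homotopic multiple edges). If $zw$ is not an edge, then $xzyw$ is a quadrangle with $\tilde e$ in its interior; if $zw$ is an edge, I would argue using the absence of separating triangles that we can still find such a quadrangle, possibly after swapping which pair of opposite vertices we use. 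In any case, some quadrangle contains $\tilde e$, and since $G^\infty$ is locally finite and the nesting of interiors is a partial order with no infinite ascending chain on a fixed finite region, a maximal one exists.

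The crux is uniqueness, and I expect this to be the main obstacle. Suppose $Q_1$ and $Q_2$ are two maximal quadrangles of $G^\infty$ whose interiors both contain $\tilde e$. Write $Q_1 = a b c d$ and $Q_2 = a' b' c' d'$ as $4$-cycles. I would analyze how the boundary cycle of $Q_1$ interacts with the boundary cycle of $Q_2$: since both disks contain the common edge $\tilde e$, their interiors intersect, so by a standard planar topology argument the two boundary $4$-cycles cross, meaning $Q_1 \cup Q_2$ and $Q_1 \cap Q_2$ decompose the union of the two disks into regions bounded by sub-paths of $\partial Q_1$ and $\partial Q_2$. The key point is that each such region is bounded by a closed walk using at most four edges in total from the two $4$-cycles, so it is a triangle or quadrangle; and the absence of separating triangles then forces these small regions to be faces, which in turn forces a very rigid configuration. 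Chasing the possible crossing patterns of two $4$-cycles (they can share $0$, $1$, or $2$ vertices, and the bounding walks of the pieces have controlled length), I would show the only possibility consistent with no separating triangle is $Q_1 = Q_2$, or else one interior is strictly contained in the other, contradicting maximality of one of them.

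To make the topological bookkeeping precise I would use that $G^\infty$ is a simple triangulation of the plane, so every contractible cycle of length $3$ bounds either an empty triangle (a face) or a separating triangle, and the latter is forbidden; likewise a $4$-cycle bounds a disk whose interior triangulation is itself a plane triangulation with a $4$-gon boundary. The union of the two quadrangle-disks is again a disk (or the whole plane, which cannot happen in finite $G$, hence in a fundamental domain of $G^\infty$), whose boundary and internal structure I can read off from the crossing pattern. The expected hard part is organizing the case analysis on how $\partial Q_1$ and $\partial Q_2$ cross so that it is genuinely exhaustive while staying short; I would lean on the counting constraint that the total number of boundary edges available from the two $4$-cycles is $8$, so the interior arcs are short, to keep the number of cases small. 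Projecting the unique maximal quadrangle of $G^\infty$ containing $\tilde e$ back down to $G$ — and checking independence of the chosen lift $\tilde e$, which follows because the deck transformations act on the family of all quadrangles containing lifts of $e$ and preserve maximality — then gives the unique maximal quadrangle of $G$ whose interior contains $e$. $\hfill\Box$
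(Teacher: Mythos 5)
Your overall strategy (existence from the two faces at $e$, uniqueness by analyzing how the boundaries of two maximal quadrangles interact, using no separating triangles and simplicity) is in the same spirit as the paper's, but the step you rely on to dispose of the crossing case is wrong, and that case is precisely where the content of the lemma lies. You claim that the small regions cut out by the two boundary $4$-cycles are forced to be faces by the absence of separating triangles, and hence that the only possibilities consistent with no separating triangle are $Q_1=Q_2$ or containment of one interior in the other. Neither claim holds: a region bounded by four edges is not forced to be a face, and two quadrangles whose interiors both contain the faces at $e$ can genuinely cross with no separating triangle anywhere. Concretely, take an octahedral patch with vertices $u,v,x,y,a,b$, edges $uv,ux,uy,ua,vx,vy,vb,xa,ya,xb,yb$ and faces $axu$, $auy$, $xuv$, $uvy$, $xvb$, $yvb$; this patch is bounded by the closed walk $a$-$x$-$b$-$y$ and is compatible with the absence of separating triangles. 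Inside it, $Q_1=x$-$a$-$y$-$v$ and $Q_2=x$-$u$-$y$-$b$ both contain the edge $uv$ in their interiors, neither interior contains the other, and no separating triangle is present. What rules this out in the lemma is not ``no separating triangle'' but maximality: in any crossing configuration the union of the two interiors is bounded by a closed walk of length at most four (here $a$-$x$-$b$-$y$), so it is either a strictly larger quadrangle, contradicting the maximality of $Q_1$ and $Q_2$, or a walk of length at most three, contradicting Lemma~\ref{lem:e4ciffnst} or the absence of homotopic multiple edges. This union-of-interiors argument is exactly the paper's key step, and it is absent from your plan; as written, the case analysis you defer to would not close.

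Two smaller points. Your detour through $G^\infty$ creates transfer obligations that you only assert: that a quadrangle of $G^\infty$ projects to a quadrangle of $G$ (injectivity of the covering projection on the bounded disk needs an argument) and that maximality is preserved in both directions; the paper avoids this entirely by arguing directly on the torus, invoking $G^\infty$ only to exclude an inner chord between opposite vertices of a quadrangle via a separating triangle. Also, your existence case split on whether $zw$ is an edge is unnecessary: the closed walk $x$-$z$-$y$-$w$ around the two faces incident to $e$ bounds a disk and is a quadrangle in the paper's sense whether or not $zw$ is an edge, since the definition places no condition on chords.
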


\begin{proof}
Since $G$ is a toroidal triangulation, $e$ is clearly contained in the
interior of the quadrangle bordering its two incident faces. So $e$ is
contained in a maximal quadrangle.

Suppose by contradiction that there exist two distinct maximal
quadrangles $Q,Q'$ such that their interiors contain $e$.  Let $R,R'$
denote the interior of $Q,Q'$ respectively. The two region $R,R'$ are
distinct, both contain the two faces incident to $e$ plus some other
faces.  If there is an edge in $R$ (resp. $R'$) connecting two
opposite vertices of $Q$ (resp. $Q'$), then $G^\infty$ contains a
separating triangle, a contradiction of $G$ being essentially
$4$-connected and Lemma~\ref{lem:e4ciffnst}.  Then since there is no
homotopic multiple edges in $G$, there is at least one or two vertices
of $Q$ (resp $Q'$) in the interior of $Q'$ (resp. $Q$).  Thus the
border of the union of $R$ and $R'$ has size less or equal to four, a
contradiction to the maximality of $Q,Q'$ or of $G$ being an
essentially $4$-connected triangulation.
 \end{proof}

\section{Characterization of  orientations corresponding to transversal structures}
\label{sec:char}

\subsection{Transversal structure labeling}
\label{sec:tslab}
We need the following equivalent definition of toroidal transversal structures:

\begin{definition}[Toroidal transversal structure labeling]
  \label{def:tslab}\ \\
  Given a toroidal map $G$, a \emph{toroidal transversal structure labeling}
  (or \emph{TTS-labeling} for short) of $G$ is a labeling of the
  half-edges of $G$ with integers $0,1,2,3$ (considered modulo $4$)
  such that each edge is labeled with two integers that differ
  exactly by $(2 \bmod 4)$ and around each vertex the labeling form in
  counterclockwise order four non-empty intervals of $0$, $1$, $2$,
  $3$.
\end{definition}

Consider a toroidal map $G$.  The mapping of
Figure~\ref{fig:edgelabeling}, where an outgoing half-edge blue is
labeled $0$, an outgoing half-edge red is labeled $1$, an incoming
half-edge blue is labeled $2$, and an incoming half-edge red is
labeled $3$, shows how to see a toroidal transversal structure of $G$
as a TTS-labeling of $G$ and vice-versa. The two objects are indeed the
same.

\begin{figure}[!ht]
\center
\includegraphics[scale=0.4]{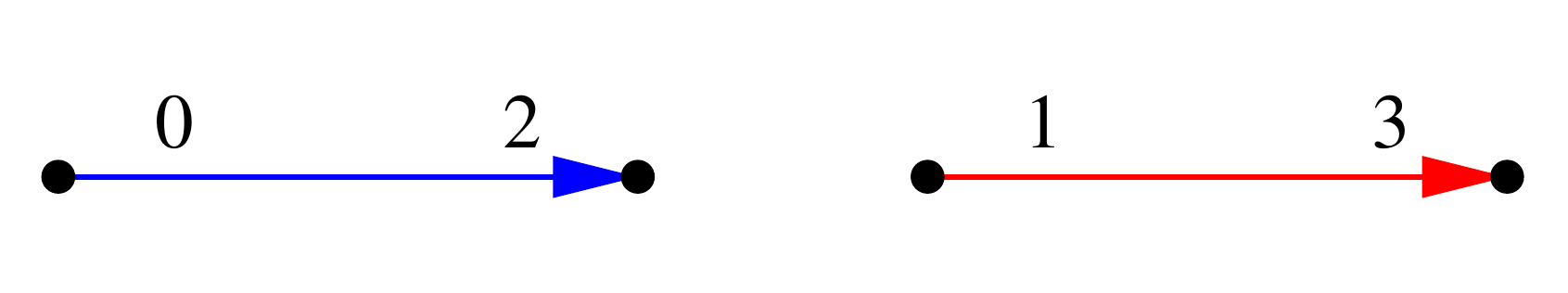}
\caption{Mapping between $TTS$-labelings and transversal structures.}
\label{fig:edgelabeling}
\end{figure}

Figure~\ref{fig:example-labeling} shows the TTS-labeling corresponding
to the transversal structure of Figure~\ref{fig:balancedTTS}. This
labeling is also represented on the corresponding orientation of its
angle map.

\begin{figure}[!ht]
\center
\includegraphics[scale=0.4]{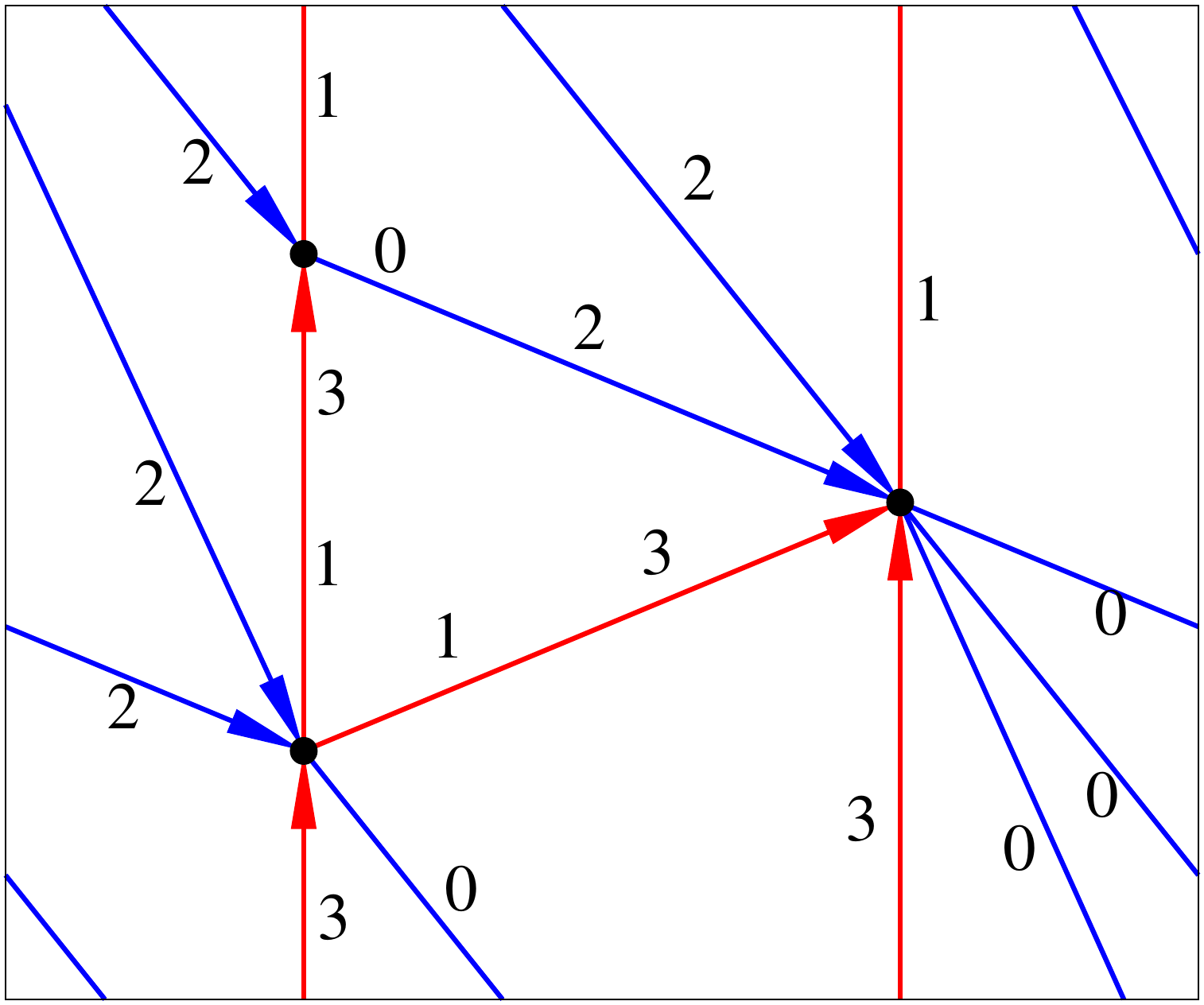} \ \ \ \
\includegraphics[scale=0.4]{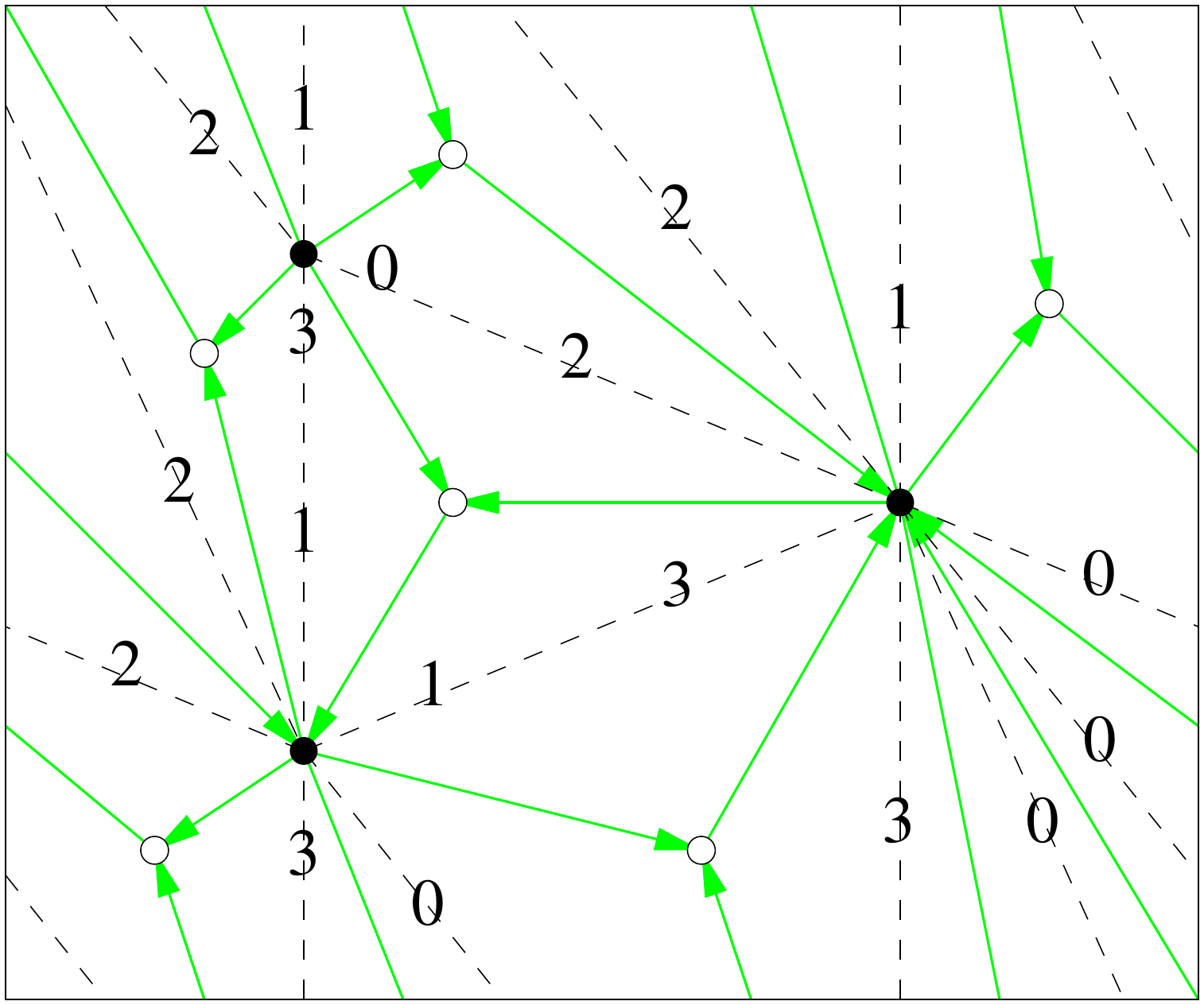}
\caption{Labeling of the half-edges of a transversal structure and
  in the  angle map.}
\label{fig:example-labeling}
\end{figure}

We say that a 4-orientation of $A(G)$ \emph{admits a
  TTS-labeling} if there is a labeling of the angles of the
primal-vertices of $A(G)$ such that this labeling corresponds to a
transversal structure of $G$  (as on
Figure~\ref{fig:example-labeling}).

The goal of this section is to
characterize which are the 4-orientations that admit TTS-labelings.
For that purpose we have to introduce some more formalism, 
similarly to what is done for Schnyder woods (see~\cite{GKL15,LevHDR}).

\subsection{A bit of homology}
\label{sec:homology}

We need a bit of surface homology of general
maps which we  discuss now. 

Consider a map $G=(V,E)$, on an orientable surface of genus $g$, given
with an arbitrary orientation of its edges. This fixed arbitrary
orientation is implicit in all the paper and is used to handle flows.
A \emph{flow} $\phi$ on $G$ is a vector in $\mb{Z}^{E}$. For any
$e\in E$, we denote by $\phi_e$ the coordinate $e$ of $\phi$.

A \emph{walk} $W$ of $G$ is a sequence of edges with a direction of
traversal such that the ending point of an edge is the starting point
of the next edge.  A walk is \emph{closed} if the start and end
vertices coincide.  A walk has a \emph{characteristic flow} $\phi(W)$
defined by:

$$\phi(W)_e=\#\text{times }W\text{ traverses } e \text{ forward} - \#\text{times
}\\
W\text{ traverses } e \text{ backward}.$$

This definition naturally extends to sets of walks.  From now on we
consider that a set of walks and its characteristic flow are the same
object and by abuse of notation we can write $W$ instead of
$\phi(W)$. We do the same for \emph{oriented subgraphs}, i.e.,
subgraphs that can be seen as a set of walks of unit length.

A \emph{facial walk} is a closed walk bounding a face.  Let $\mc{F}$
be the set of counterclockwise facial walks and let
$\mb{F}={<}\phi(\mc{F}){>}$ be the subgroup of $\mb{Z}^E$ generated by
$\mc{F}$.  Two flows $\phi, \phi'$ are \emph{homologous} if
$\phi -\phi' \in \mb{F}$. 
They are \emph{weakly homologous} if $\phi -\phi' \in \mb{F}$ or
$\phi +\phi' \in \mb{F}$.  We say that a flow $\phi$ is $0$-homologous
if it is homologous to the zero flow, i.e.  $\phi \in \mb{F}$.

Let $\mc{W}$ be the set of closed walks and let
$\mb{W}={<}\phi(\mc{W}){>}$ be the subgroup of $\mb{Z}^E$ generated by
$\mc{W}$.  The group $H(G)=\mb{W}/\mb{F}$ is the \emph{first homology
  group} of $G$. It is well-known that $H(G)$ only depends on the
genus of the map, and actually it is isomorphic to $\mb{Z}^{2g}$.

A set $\{B_1,\ldots,B_{2g}\}$ of (closed) walks of $G$ is said to be a
\emph{basis for the homology} if the equivalence classes of their
characteristic vectors $([\phi(B_1)],\ldots,[\phi(B_{2g})])$ generate
$H(G)$.  Then for any closed walk $W$ of $G$, we have
$W=\sum_{F\in\mc{F}}\lambda_FF+\sum_{1\leq i\leq 2g}\mu_iB_i$ for some
$\lambda\in\mathbb{Z}^\mc{F},\mu\in\mathbb{Z}^{2g}$. Moreover one of the
$\lambda_F$ can be set to zero (and then all the other coefficients
are unique).  

For any map, there exists a set of cycles that forms a basis for the
homology and it is computationally easy to build. A possible way to do
this is by considering a spanning tree $T$ of $G$, and a spanning tree
$T^*$ of $G^*$ that contains no edges dual to $T$.  By Euler's
formula, there are exactly $2g$ edges in $G$ that are not in $T$ nor
dual to edges of $T^*$. Each of these $2g$ edges forms a unique cycle
with $T$. It is not hard to see that this set of cycles, given with
any direction of traversal, forms a basis for the homology. Moreover,
note that the intersection of any pair of these  cycles is
either a single vertex or a common path.

The edges of the dual map $G^*$ of $G$ are oriented such that the dual
$e^*$ of an edge $e$ of $G$ goes from the face on the right of $e$ to
the face on the left of $e$.  Let $\mc{F}^*$ be the set of
counterclockwise facial walks of $G^*$.  Consider
$\{B^*_1,\ldots,B^*_{2g}\}$ a set of closed walks of $G^*$ that form a
basis for the homology. Let $p$ and $d$ be flows of $G$ and $G^*$,
respectively. We define the following:
$$\beta(p,d)=\sum_{e\in G}p_e d_{e^*}.$$ 

Note that $\beta$ is a
bilinear function.  We need the following lemma from~\cite{GKL15}:

\begin{lemma}[{\cite[Lemma~3.1]{GKL15}}]
\label{lm:homologous}
Given  two flows $\phi,\phi'$ of $G$, the following properties are
equivalent to each other:

\begin{enumerate}
\item The two flows $\phi, \phi'$ are homologous.
\item For any closed walk $W$ of $G^*$ we have
  $\beta(\phi,W)=\beta(\phi',W)$.
\item For any $F\in \mc{F^*}$, we have $\beta(\phi,F)=\beta(\phi',F)$, and,
  for any $1\le i\le 2g$, we have $\beta(\phi,B^*_i)=\beta(\phi',B^*_i)$.
\end{enumerate}
\end{lemma}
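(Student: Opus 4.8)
The statement to prove is Lemma~\ref{lm:homologous}, the equivalence of three conditions characterizing when two flows $\phi,\phi'$ of $G$ are homologous: (1) $\phi-\phi'\in\mb{F}$; (2) $\beta(\phi,W)=\beta(\phi',W)$ for every closed walk $W$ of $G^*$; (3) the same equality holds for every counterclockwise facial walk $F\in\mc{F}^*$ of $G^*$ and for every element $B^*_i$ of a fixed homology basis of $G^*$. Since $\beta$ is bilinear, it suffices to work with the difference $\psi=\phi-\phi'$ and show: $\psi\in\mb{F}$ $\iff$ $\beta(\psi,W)=0$ for all closed walks $W$ of $G^*$ $\iff$ $\beta(\psi,F)=0$ for all $F\in\mc{F}^*$ and $\beta(\psi,B^*_i)=0$ for all $i$. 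This is exactly a statement that $\mb{F}$ (the facial subgroup of $G$) is the annihilator, under the pairing $\beta$, of the group $\mb{W}^*$ of closed-walk flows of $G^*$, together with the fact that $\mc{F}^*\cup\{B^*_1,\dots,B^*_{2g}\}$ generates $\mb{W}^*$.

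The plan is as follows. First I would record the key fact about the pairing $\beta$: for a counterclockwise facial walk $F_v$ of $G$ bounding a face $v^*$ (a vertex of $G^*$) and a closed walk $W$ of $G^*$, one has $\beta(\phi(F_v),\phi(W))=0$; indeed $\sum_{e\in F_v}(F_v)_e\,(e^*)$-contribution counts, with sign, how $W$ enters and leaves the vertex $v^*$ of $G^*$ when we go around the face $v^*$, which is the total signed degree of $W$ at $v^*$, and this is $0$ because $W$ is a closed walk (flow conservation at every vertex of $G^*$). This immediately gives $\mb{F}\subseteq(\mb{W}^*)^{\perp}$ with respect to $\beta$, hence (1) $\Rightarrow$ (2); and (2) $\Rightarrow$ (3) is trivial since facial walks and basis walks of $G^*$ are in particular closed walks. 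The implication (3) $\Rightarrow$ (2) follows from the expansion $W=\sum_{F^*\in\mc{F}^*}\lambda_{F^*}F^*+\sum_i\mu_i B^*_i$ of any closed walk $W$ of $G^*$ in terms of facial walks and the homology basis (this expansion is exactly the content of the homology discussion in Section~\ref{sec:homology}), combined with bilinearity of $\beta$.

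The main work is the implication (2) $\Rightarrow$ (1): if $\beta(\psi,W)=0$ for every closed walk $W$ of $G^*$, then $\psi\in\mb{F}$. The cleanest route is a dimension/rank argument over $\mb{Z}$ (or, if one is careful, over $\mb{Q}$ and then back to $\mb{Z}$ using that $\mb{F}$ is a direct summand of $\mb{Z}^E$). One shows that the map $e\mapsto e^*$ identifies $\mb{Z}^E$ with $\mb{Z}^{E^*}$, carrying $\mb{F}$ (facial flows of $G$, i.e. the cycle space of $G$ via boundaries of faces) onto the cocycle space of $G^*$, and carrying $\mb{W}$ (closed-walk flows of $G$) onto the cycle space of $G^*$; then $\mb{F}$ and $\mb{W}^*$ are exact orthogonal complements of each other inside the full lattice under the standard dot product, which is precisely $\beta$ after the identification. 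Concretely: $\mb{F}^{\perp}=\mb{W}^*$ and $(\mb{W}^*)^{\perp}=\mb{F}$, the classical orthogonality of the bond space and cycle space. Since we already have $\mb{F}\subseteq(\mb{W}^*)^{\perp}$, and $\psi\in(\mb{W}^*)^{\perp}$ by hypothesis (2), it remains to argue the reverse inclusion $(\mb{W}^*)^{\perp}\subseteq\mb{F}$; equivalently, that any flow of $G$ orthogonal to every cycle of $G^*$ is a combination of facial walks of $G$. I expect this to be the only real obstacle, and I would handle it either by citing the standard cycle–cocycle orthogonality for the planar-like local structure of a map on a surface, or by a direct argument: take $\psi\perp\mb{W}^*$; first use orthogonality to every facial walk of $G^*$ (which are closed walks of $G^*$) to deduce that $\psi$ is a \emph{cycle-like} flow, i.e. satisfies flow conservation at every vertex of $G$ — wait, that is automatic; rather, orthogonality to facial walks of $G^*$ forces $\psi$ to be $0$-homologous \emph{modulo} adjusting by facial walks of $G$, and orthogonality to the homology basis $B^*_i$ then kills the remaining $2g$ homology coordinates of $\psi$, so $\psi\in\mb{F}$. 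Since the expansion of closed walks into $\mc{F}$ plus a homology basis is granted by the Section~\ref{sec:homology} setup, this last step is bookkeeping with $\beta$'s bilinearity once the orthogonality framework is in place.
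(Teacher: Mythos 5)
First, a point of reference: the paper does not prove Lemma~\ref{lm:homologous} at all — it is imported verbatim from~\cite[Lemma~3.1]{GKL15} — so there is no in-paper argument to compare yours against, and your proposal has to stand on its own. Most of it does. The reduction to $\psi=\phi-\phi'$, the key computation behind (1)$\Rightarrow$(2) (for a counterclockwise facial walk $F$ of $G$ and a closed walk $W$ of $G^*$, $\beta(\phi(F),\phi(W))$ is the net signed flow of $W$ through the dual vertex corresponding to that face, hence $0$ by conservation along a closed walk), the trivial (2)$\Rightarrow$(3), and (3)$\Rightarrow$(2) via the decomposition $W=\sum_{F\in\mc{F}^*}\lambda_F\phi(F)+\sum_i\mu_i\phi(B^*_i)$ and bilinearity of $\beta$ are all correct and are the natural steps.

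The gap is in (2)$\Rightarrow$(1), which is the only substantial implication, and as written you do not prove it. Your fallback ``direct argument'' is not meaningful: $\psi$ is an arbitrary flow, not a closed walk, so it has no homology coordinates, and ``$0$-homologous modulo adjusting by facial walks of $G$'' is circular — that is precisely the conclusion $\psi\in\mb{F}$. Your primary route (identify $\mb{F}$ with the lattice spanned by the vertex coboundaries of $G^*$ under $e\mapsto e^*$ and invoke cycle/cut orthogonality over $\mb{Z}$) is correct in substance, but over $\mb{Z}$ that orthogonality is essentially the lemma itself, so it must be argued rather than waved at; also two side remarks are wrong, though unused: $\mb{F}$ is not the cycle space of $G$ in positive genus, and $\mb{W}$ of $G$ does not map onto the cycle space of $G^*$. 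A complete elementary finish is short: $\mb{F}\otimes\mb{Q}$ and $\mb{W}^*\otimes\mb{Q}$ have ranks $f-1$ and $m-f+1$ (connectivity of $G^*$), they are $\beta$-orthogonal by the facial-walk computation above, and $\beta$ is the standard (positive definite) pairing after $e\mapsto e^*$, so they are rational orthogonal complements; hence $\psi=\sum_{F\in\mc{F}}\lambda_F\phi(F)$ with $\lambda\in\mb{Q}^{\mc{F}}$, normalized by $\lambda_{F_0}=0$ for one fixed face. Then for every edge $e$ with faces $f_\ell$ on its left and $f_r$ on its right one has $\psi_e=\lambda_{f_\ell}-\lambda_{f_r}$, so differences of $\lambda$ across dual edges are integers, and connectivity of $G^*$ forces every $\lambda_F\in\mb{Z}$, i.e.\ $\psi\in\mb{F}$. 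With that step supplied (it also justifies your unproved claim that $\mb{F}$ is saturated in $\mb{Z}^E$), your plan becomes a full proof.
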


\subsection{The angle-dual-completion}

Consider a toroidal triangulation $G$. 
The angle-dual-completion $\PDC{A(G)}$ of $G$ is the map obtained from
simultaneously embedding $A(G)$ and $G^*$ and subdividing each edge of
$G^*$ by adding a vertex at its intersection with the corresponding
primal-edge of $G$ (see
Figure~\ref{fig:example-labeling-completion}). In $\PDC{A(G)}$ there
are three types of vertices called \emph{primal-},
\emph{dual-} and \emph{edge-vertices}, represented, respectively, in black, white, and
gray on the figures. There are  two types of edges called \emph{angle-} and
\emph{dual-edges}. Each angle-edge is between a primal- and
dual-vertex. Each dual-edge is between a dual- and an
edge-vertex. Since $G$ is a triangulation, each
dual-vertex is incident to three angle-edges and three dual-edges. Each
edge-vertex is incident to two dual-edges. Each
face of $\PDC{A(G)}$ represents a half-edge of $G$ and is a quadrangle
incident to one primal-vertex, two dual-vertices and one edge-vertex.

Given an orientation of the angle map $A(G)$, this orientation
naturally extends to an orientation of the angle-dual-completion
$\PDC{A(G)}$ where angle-edges get the orientation they have in $A(G)$
and dual-edges are oriented from the edge-vertex to the dual-vertex. A
4-orientation of $\PDC{A(G)}$ is an orientation of its edges that
corresponds to a 4-orientation of $A(G)$, i.e. primal-vertices have
outdegree exactly $4$, dual-vertices have out-degree exactly $1$ and
edge-vertices have outdegree exactly $2$.

A TTS-labeling of $G$ can be represented on $\PDC{A(G)}$ by putting labels into
faces of $\PDC{A(G)}$ (see
Figure~\ref{fig:example-labeling-completion}).  When crossing an angle-edge
that is incoming for a primal-vertex, the label does not change. When crossing an
angle-edge that is outgoing for a primal-vertex, the label changes by
$\pm 1$ depending on the orientation of this angle-edge: from left to right
$(+ 1 \bmod 4)$ or right to left $(- 1 \bmod 4)$.  When crossing a
dual-edge the label changes by $\pm 2$, and the orientation is not
relevant since $-2 \bmod 4=+2 \bmod 4$. 

\begin{figure}[!ht]
\center
\includegraphics[scale=0.4]{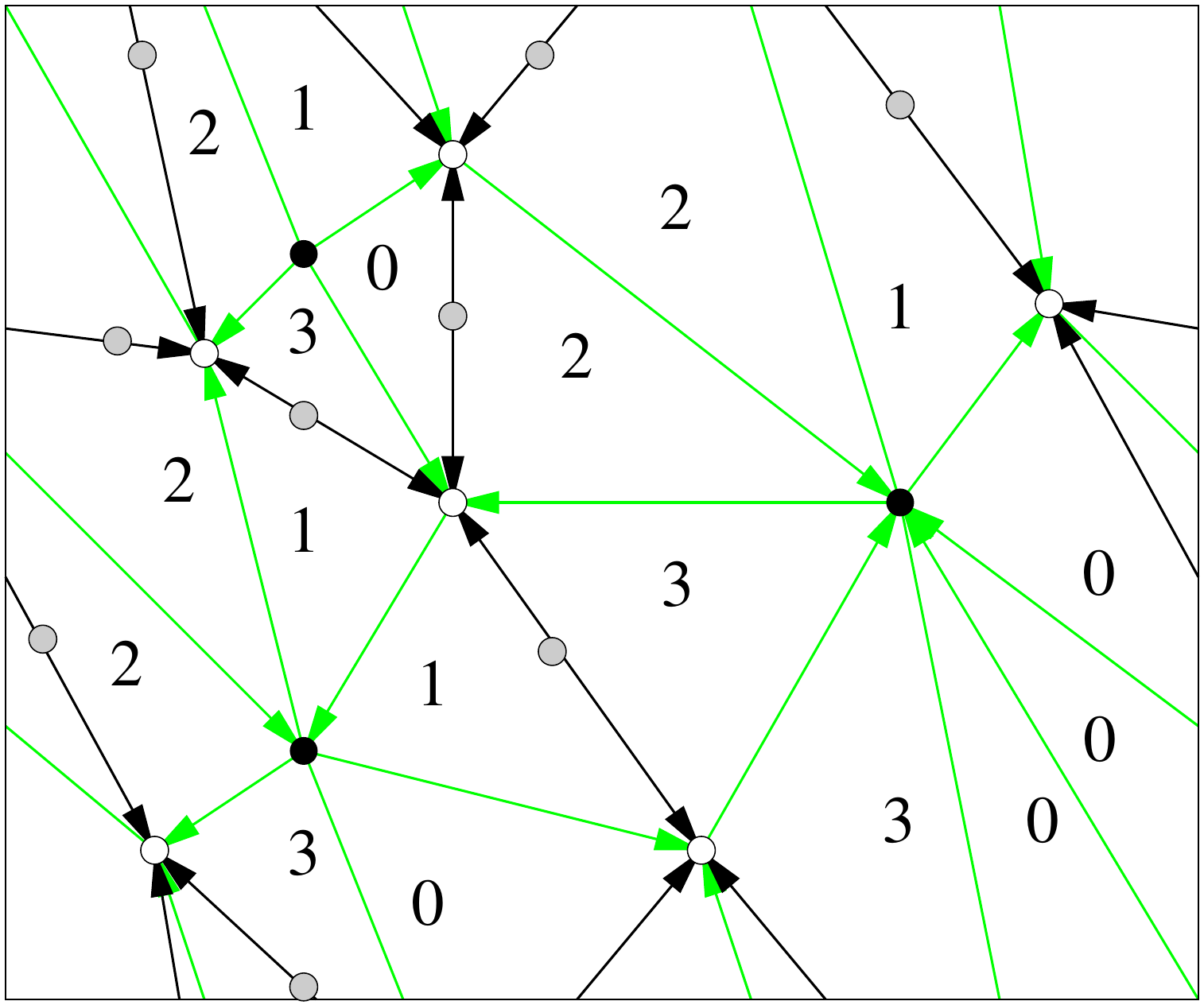}
\caption{Orientation and labeling of the angle-dual-completion
  corresponding to Figure~\ref{fig:example-labeling}.}
\label{fig:example-labeling-completion}
\end{figure}

Let $\Out$ be the set of
edges of $\PDC{A(G)}$ which are going from a primal-vertex to a
dual-vertex. We call these edges \emph{out-edges} of $\PDC{A(G)}$.
Let $\Dual$ be the set of dual-edges of $\PDC{A(G)}$.  For $\phi$ a
flow of the dual of the angle-dual-completion $\PDC{A(G)}^*$,
 we
define $\delta(\phi)=\beta(\Out,\phi)+2\beta(\Dual,\phi)$.  More
intuitively, if $W$ is a walk of $\PDC{A(G)}^*$, then:
$$
\begin{array}{ll}
  \delta(W)  = &  \ \ \#\text{out-edges crossing }W\text{
                 from left to right}\\
               & -\#\text{out-edges crossing }W\text{ from right to
                 left}\\  
                &  + \ 2\times \#\text{dual-edges crossing }W\text{
                 from left to right}\\
               & - \ 2 \times\#\text{dual-edges crossing }W\text{ from right to
                 left}\\  
\end{array}
$$

The bilinearity of $\beta$ implies the linearity of $\delta$.

The following lemma gives a necessary and sufficient condition
for a 4-orientation of the angle map to admit a TTS-labeling.

\begin{lemma}
 \label{lem:charforall}
 A 4-orientation of ${A(G)}$ admits a TTS-labeling if and only if
 any closed walk $W$ of $\PDC{A(G)}^*$ satisfies
 $\delta (W) = 0 \bmod 4$.
\end{lemma}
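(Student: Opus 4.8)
The plan is to view the TTS-labeling as a function on the faces of $\PDC{A(G)}$ (equivalently, a $\mb{Z}/4\mb{Z}$-valued $0$-cochain on $\PDC{A(G)}^*$) and recognize $\delta$ as measuring the ``coboundary defect'' of the prescribed crossing rule. Concretely, the crossing rule described before the lemma (label unchanged across incoming angle-edges, $\pm1$ across outgoing angle-edges according to left-to-right or right-to-left, $\pm2$ across dual-edges) assigns to every edge $a$ of $\PDC{A(G)}$ — equivalently every edge $a^*$ of $\PDC{A(G)}^*$ — a prescribed increment $\iota(a)\in\mb{Z}/4\mb{Z}$, and by definition $\delta(W)\bmod 4$ is exactly $\sum_{a^*\in W}\iota(a^*)$ for a closed walk $W$, using the chosen reference orientation of $\PDC{A(G)}^*$ to fix signs. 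So the statement to prove is the standard fact: a prescribed set of edge-increments in an abelian group is realizable as the increments of a globally defined vertex-potential (on the connected graph $\PDC{A(G)}^*$) if and only if it sums to zero around every cycle, equivalently around every element of a generating set of closed walks.

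First I would establish the ``only if'' direction, which is immediate: if a TTS-labeling exists, it gives a well-defined label $\ell(F)\in\mb{Z}/4\mb{Z}$ in each face $F$ of $\PDC{A(G)}$, and for any closed walk $W$ of $\PDC{A(G)}^*$ the telescoping sum of increments $\sum (\ell(F_{i+1})-\ell(F_i))$ vanishes mod $4$; one just has to check that each one-step increment agrees with the contribution recorded in $\delta$, which is precisely how $\delta$ was set up (out-edges contribute $\pm1$, dual-edges $\pm2$, incoming angle-edges $0$). Hence $\delta(W)=0\bmod 4$ for all closed $W$.

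For the ``if'' direction, assume $\delta(W)\equiv 0\pmod 4$ for every closed walk $W$ of $\PDC{A(G)}^*$. Pick a base face $F_0$, set $\ell(F_0)=0$, and for any face $F$ define $\ell(F)$ as the $\delta$-sum along any walk from $F_0$ to $F$ in $\PDC{A(G)}^*$ (which is connected since $\PDC{A(G)}$ is a map); the hypothesis guarantees this is independent of the chosen walk, so $\ell$ is well defined. Then I must verify that this face-labeling is an honest TTS-labeling of $G$, i.e. that it satisfies Definition~\ref{def:tslab}: (a) the two half-edge labels of each edge of $G$ differ by $2\bmod 4$ — this follows because crossing the subdividing dual-edge through an edge of $G$ changes the label by exactly $\pm2$; and (b) around each vertex $v$ of $G$ the labels read, in counterclockwise order, four non-empty intervals of constant value $0,1,2,3$. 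Part (b) is where the $4$-orientation hypothesis enters: going counterclockwise around a primal-vertex $v$, each time we cross an angle-edge the label is unchanged (incoming) or increases by $1$ (outgoing, left-to-right, which is the correct sense for ccw traversal), and since $v$ has outdegree exactly $4$ we see exactly four ``$+1$'' steps and a return to the start after a total increment of $4\equiv 0$, giving four non-empty monotone blocks of the four residues; the non-emptiness of each block and the fact that the four outgoing edges are not cyclically adjacent in a degenerate way comes from the dual-vertices having outdegree $1$ (so that in each face-corner the labels behave as in Figure~\ref{fig:ruleface}). Translating this labeling back through the mapping of Figure~\ref{fig:edgelabeling} yields the orientation/coloring, and one checks it induces the given $4$-orientation of $A(G)$ via the rule of Figure~\ref{fig:ruleangle}.

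The main obstacle is step (b): carefully controlling the cyclic sequence of increments around a primal-vertex and around each face to conclude the ``four non-empty intervals'' shape, rather than some other distribution of the residues $0,1,2,3$. This requires using both outdegree conditions (primal outdegree $4$ and dual outdegree $1$) together with planarity of the local picture, essentially repackaging Figures~\ref{fig:ruleST}, \ref{fig:ruleface} and the increment rules; it is routine but is the one place a genuine case check on local configurations is needed. The homology reformulation (Lemma~\ref{lm:homologous}, bases for the homology) is not strictly needed for this version of the statement — it will matter only when one wants the finite checkable criterion over a homology basis rather than all closed walks — so I would keep the proof here purely combinatorial via the potential/telescoping argument.
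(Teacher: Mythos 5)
Your proposal is correct and follows essentially the same route as the paper: the only-if direction by telescoping the label increments along a closed walk, and the if direction by fixing a base face, defining labels as $\delta$-sums along paths (well defined by the hypothesis), and then verifying the edge condition from the $\pm 2$ dual-edge rule and the vertex condition from the primal outdegree being exactly $4$. One minor nuance: the non-emptiness of the four intervals around a primal-vertex is automatic because faces and angle-edges alternate around the vertex (so between any two consecutive outgoing angle-edges there is at least one face), rather than being a consequence of dual-vertices having outdegree $1$; this does not affect the correctness of your argument.
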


\begin{proof} $(\Longrightarrow)$ Consider a TTS-labeling $\ell$ of
  $A(G)$.  The definition of $\delta$ is such that $\delta$ modulo $4$
  counts the variation of the labels when going from one face of
  $\PDC{A(G)}$ to another face of $\PDC{A(G)}$.  Thus for any walk $W$
  of $\PDC{A(G)}^*$ from a face $F$ to a face $F'$, the value of
  $\delta(W) \bmod 4$ is equal to $\ell(F')-\ell(F)\bmod 4$.  Thus if
  $W$ is a closed walk then $\delta(W)= 0 \bmod 4$.

$(\Longleftarrow)$ Consider a 4-orientation of $\PDC{A(G)}$ such that
any closed
walk $W$ of $\PDC{A(G)}^*$ satisfies $\delta (W) = 0 \bmod 4$.  Pick any
face $F_0$ of $\PDC{A(G)}$ and label it $0$. Consider any face $F$ of
$\PDC{A(G)}$ and a path $P$ of $\PDC{A(G)}^*$ from $F_0$ to $F$. Label $F$
with the value $\delta (P)\bmod4$. Note that the label of $F$ is
independent from the choice of $P$ as for any two paths $P_1, P_2$
going from $F_0$ to $F$, we have $\delta (P_1) =\delta (P_2) \bmod4$
since $\delta (P_1 - P_2) = 0 \bmod4$ as $P_1- P_2$ is a closed walk.

Consider a primal-vertex $v$ of $\PDC{A(G)}$.  By assumption
$d^+(v)=4$ so the labels around $v$ form in \ccw order four non-empty
intervals of $0$, $1$, $2$, $3$.  Moreover, the labels of the two faces
incident to an edge-vertex differ by $(2 \bmod 4)$. So the obtained
labeling corresponds to a TTS-labeling of $G$.
\end{proof}

In the next section we study properties of $\delta$ w.r.t.~homology in
order to simplify the condition of Lemma~\ref{lem:charforall} that
concerns any closed walk of $\PDC{A(G)}^*$. We also replace the
condition on $\delta$ to a condition on $\gamma$ that is simpler to
handle.


\subsection{Characterization theorem}
\label{sec:characterization}

Consider a toroidal triangulation $G$.  Let $\widehat{\mc{F}^*}$ be
the set of counterclockwise facial walks of the angle-dual-completion
$\PDC{A(G)}^*$.

We have the following lemmas:

\begin{lemma}
\label{lem:facedelta0}
In a $4$-orientation of $\PDC{A(G)}$, any $F\in\widehat{\mc{F}^*}$
satisfies $\delta(F)=0\bmod4$.
\end{lemma}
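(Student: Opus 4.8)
The plan is to compute $\delta(F)$ directly for each face $F$ of $\PDC{A(G)}^*$, using the structure of faces of $\PDC{A(G)}$ described just before Lemma \ref{lem:charforall}. Recall that each face of $\PDC{A(G)}$ is a quadrangle incident to one primal-vertex $v$, one edge-vertex $w$, and two dual-vertices, and it represents a half-edge of $G$ at $v$. A facial walk $F\in\widehat{\mc{F}^*}$ of the dual $\PDC{A(G)}^*$ corresponds exactly to a vertex of $\PDC{A(G)}$, and crossing the edges bounding that vertex in cyclic (counterclockwise) order. So there are three cases, according to whether the bounding vertex is a primal-, dual-, or edge-vertex, and in each case I would just read off which edges of $\Out$ and which edges of $\Dual$ are crossed, and with what sign.

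First I would handle the edge-vertex case. An edge-vertex $w$ has degree two in $\PDC{A(G)}$, both edges being dual-edges, and both are oriented \emph{toward} their dual-vertices, i.e. away from $w$. Going counterclockwise around $w$, the facial walk crosses one of them from left to right and the other from right to left (since they point in "opposite" rotational senses relative to the small loop around $w$). No out-edges are incident to $w$. Hence $\delta(F) = 0 + 2(1-1) = 0$. Next, the dual-vertex case: a dual-vertex has outdegree $1$ in a $4$-orientation, so exactly one of its six incident edges (three angle-edges, three dual-edges) leaves it; the other five enter it. But edges of $\Out$ are by definition the ones going \emph{from a primal-vertex to a dual-vertex}, so around a dual-vertex the relevant out-edges are precisely the incoming angle-edges, and the $\Dual$ edges around it are its three incident dual-edges (all incoming, since dual-edges point from edge-vertex to dual-vertex). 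I would argue that, going around the vertex, each such edge is crossed once from left to right and... — here one must be careful: crossing a small counterclockwise loop around a vertex, every incident edge is crossed exactly once, and the sign (left-to-right vs right-to-left) depends on orientation. An edge \emph{entering} the vertex is crossed, say, from right to left; an edge \emph{leaving} it from left to right (with a consistent convention). So around a dual-vertex $u$: the one outgoing edge contributes $+1$ or $+2$ (depending on its type), and each of the five incoming edges contributes $-1$ or $-2$. The total must be shown to be $0\bmod 4$; one checks that the out-edges (incoming angle-edges, which are in $\Out$) together with the three incoming dual-edges give $-(\text{number of incoming angle-edges in }\Out) - 2\cdot 3$ plus the contribution of the single outgoing edge, and this works out to $0\bmod 4$ by a small count. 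Finally, the primal-vertex case is analogous: a primal-vertex $v$ has outdegree $4$ (all angle-edges, none are dual-edges, and out-edges in $\Out$ are precisely the outgoing angle-edges of $v$), its remaining incident angle-edges are incoming, and no dual-edges touch $v$; the $4$ outgoing contribute $+4$ total and this is $0\bmod 4$.

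The main obstacle I anticipate is purely bookkeeping: getting the sign convention for "crossing from left to right" consistent with the fixed arbitrary orientation of $\PDC{A(G)}^*$ underlying $\beta$ and $\delta$, and making sure that a counterclockwise facial walk around a vertex crosses each incident edge with the sign I claim. Once that convention is pinned down — an edge oriented outward from the vertex is crossed one way, inward the other way, exactly once — the three computations are immediate. I would therefore open the proof by fixing that convention explicitly (perhaps via a small picture reference analogous to Figure~\ref{fig:ruleangle}), then dispatch the three vertex types. In each case the key numerical facts are: outdegree $4$ at primal-vertices contributing a multiple of $4$; outdegree $1$ at dual-vertices with the compensating $-2\cdot(\text{three dual-edges})=-6\equiv 2\bmod 4$ and $-(\text{incoming angle out-edges})$, which combine with the single $+$ contribution to give $0\bmod 4$; and the trivial cancellation $2-2=0$ at edge-vertices.
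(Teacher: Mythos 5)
Your overall strategy is exactly the paper's: identify each $F\in\widehat{\mc{F}^*}$ with a vertex of $\PDC{A(G)}$ and compute $\delta(F)$ by a case analysis on the three vertex types. But the sign bookkeeping, which you yourself flag as the main obstacle, goes wrong at two of the three cases. First, the edge-vertex case: the crossing sign is determined by whether the edge is outgoing or incoming at the vertex (as you correctly state later: outgoing crosses the counterclockwise facial walk from left to right, incoming from right to left), \emph{not} by any ``rotational sense'' of where the edge sits around the loop. Since both dual-edges at an edge-vertex are outgoing, both are crossed from left to right and $\delta(F)=2\times 2=4$, not the cancellation $2-2=0$ you claim. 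Here the error is harmless modulo $4$, but it contradicts your own stated convention, and the exact (not just mod $4$) values of $\delta$ on such walks are what feed into Lemma~\ref{lem:gammaequaldelta}, so this confusion would propagate.

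Second, and more seriously, the dual-vertex case is left unresolved precisely at the step that makes the count work. The unique outgoing edge of a dual-vertex is an angle-edge oriented from the dual-vertex to a primal-vertex; it therefore lies in neither $\Out$ (which contains only primal-to-dual edges) nor $\Dual$, and contributes $0$. You instead write that it contributes ``$+1$ or $+2$ depending on its type'' and defer to ``a small count'' that is never carried out; under either of those values the total would be $-7$ or $-6$, neither of which is $0\bmod 4$. The correct count, as in the paper, is: two incoming angle-edges (these are the elements of $\Out$ incident to the vertex) contribute $-1$ each, three incoming dual-edges contribute $-2$ each, the outgoing angle-edge contributes $0$, giving $-8\equiv 0\bmod 4$. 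So the plan needs these two fixes — a single uniform convention (outgoing $\Rightarrow$ left-to-right, incoming $\Rightarrow$ right-to-left) applied to all three cases, and the explicit observation that a dual-to-primal angle-edge is counted by neither $\Out$ nor $\Dual$ — after which it coincides with the paper's proof.
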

\begin{proof}
  If $F$ corresponds to a primal-vertex $v$ of $\PDC{A(G)}$, then $v$ has
  outdegree exactly $4$. So $\delta(F)=4=0\bmod4$.

 If $F$ corresponds to a dual-vertex $v$ of $\PDC{A(G)}$, then $v$ is
 incident to three angle-edges, exactly two of which are incoming (and
 thus in Out), and
 incident to three incoming dual-edges. So $\delta(F)= -2\times 1 - 3 \times 2
 = -8=0\bmod4$.

 If $F$ corresponds to an edge-vertex $v$ of $\PDC{A(G)}$, then $v$ is
 incident to two outgoing dual-edges. So $\delta(F)=2\times 2=4=
0\bmod4$.
\end{proof}

\begin{lemma}
\label{lem:basedelta}
In a $4$-orientation of $\PDC{A(G)}$, if $\{B_1,B_{2}\}$ is a pair of
cycles of $\PDC{A(G)}^*$, given with a direction of traversal, that
forms a basis for the homology, then for any closed walk $W$ of
$\PDC{A(G)}^*$ homologous to $\mu_1 B_1 + \mu_{2} B_{2}$,
$\mu\in\mathbb{Z}^{2}$ we have
$\delta(W)= \mu_1 \delta(B_1) + \mu_{2} \delta(B_{2}) \bmod 4$.
\end{lemma}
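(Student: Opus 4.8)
We want to show $\delta(W) \equiv \mu_1\delta(B_1)+\mu_2\delta(B_2) \pmod 4$ whenever $W$ is homologous to $\mu_1 B_1 + \mu_2 B_2$. The natural strategy is to combine the linearity of $\delta$ with the fact that $\delta$ vanishes modulo~$4$ on the facial walks of $\PDC{A(G)}^*$ (Lemma~\ref{lem:facedelta0}). Since $\{B_1,B_2\}$ is a basis for the homology of $\PDC{A(G)}^*$, and since $W$ is homologous to $\mu_1 B_1 + \mu_2 B_2$, by the definition of homology there exist coefficients $\lambda_F \in \mathbb{Z}$ such that, as flows,
$$
W = \mu_1 B_1 + \mu_2 B_2 + \sum_{F\in\widehat{\mc{F}^*}}\lambda_F\, F .
$$
Here I should be a little careful: the homology statement recalled in the excerpt is phrased for $G$ and its counterclockwise facial walks $\mc{F}$, but it applies verbatim to the map $\PDC{A(G)}^*$ with its facial walk set $\widehat{\mc{F}^*}$; this is just an instance of the general homology formalism of Section~\ref{sec:homology}.

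First I would apply the linearity of $\delta$ (noted right after its definition, coming from the bilinearity of $\beta$) to the displayed identity, obtaining
$$
\delta(W) = \mu_1\delta(B_1) + \mu_2\delta(B_2) + \sum_{F\in\widehat{\mc{F}^*}}\lambda_F\,\delta(F).
$$
Then I would invoke Lemma~\ref{lem:facedelta0}, which gives $\delta(F)\equiv 0 \pmod 4$ for every $F\in\widehat{\mc{F}^*}$, so the entire sum $\sum_F \lambda_F\,\delta(F)$ is $\equiv 0 \pmod 4$. Reducing the identity modulo~$4$ yields exactly $\delta(W)\equiv \mu_1\delta(B_1)+\mu_2\delta(B_2) \pmod 4$, which is the claim.

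The only genuinely delicate point — and the step I would be most careful about — is the very first one: justifying that $W$ can be written as an integer combination of $B_1, B_2$ and facial walks with the stated coefficients $\mu_1,\mu_2$. This is precisely what "homologous to $\mu_1 B_1 + \mu_2 B_2$" means: $W - (\mu_1 B_1 + \mu_2 B_2) \in \mathbb{F}$, i.e. it is $0$-homologous, hence an integer combination of counterclockwise facial walks. So there is no real obstacle; the lemma is essentially a bookkeeping consequence of linearity plus Lemma~\ref{lem:facedelta0}. (One could also note that the value of $\delta(B_i) \bmod 4$ does not depend on which representative $B_i$ of its homology class one picks, for the same reason, which is why the statement is well-posed.) Everything else is routine, so I would keep the write-up short: state the homology decomposition of $W$, apply linearity of $\delta$, apply Lemma~\ref{lem:facedelta0} to kill the facial terms mod~$4$, and conclude.
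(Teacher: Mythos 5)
Your proposal is correct and follows exactly the paper's own argument: decompose $W$ as $\mu_1 B_1+\mu_2 B_2$ plus an integer combination of counterclockwise facial walks of $\PDC{A(G)}^*$, apply linearity of $\delta$, and use Lemma~\ref{lem:facedelta0} to discard the facial contributions modulo $4$. No gaps.
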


\begin{proof} We have
  $W=\sum_{F\in\widehat{\mc{F}^*}}\lambda_FF+\mu_1B_1+\mu_2B_2$
  for some $\lambda\in\mathbb{Z}^{f}$, $\mu\in\mathbb{Z}^{2}$.  Then by linearity of $\delta$
  and Lemma~\ref{lem:facedelta0}, the lemma follows.
\end{proof}

Lemma~\ref{lem:basedelta} can be used to simplify the condition of
Lemma~\ref{lem:charforall} and show that if $\{B_1,B_{2}\}$ is a pair
of cycles of $\PDC{A(G)}^*$ that forms a basis for the homology, then
a $4$-orientation of $\PDC{A(G)}$ admits a TTS-labeling if and only
if $\delta(B_{i})= 0 \bmod4$, for $i\in\{1,2\}$. We prefer to
formulate such a result with function $\gamma$ that is simpler to handle
(see Theorem~\ref{th:characterizationgamma}).

Let $C$ be a cycle of $G$ with a direction of traversal. Let
$W_L(C)$ be the closed walk of $\PDC{A(G)}^*$ just on the left of $C$
and going in the same direction as $C$.  Note that since the faces of
$\PDC{A(G)}^*$ have exactly one incident vertex that is a
primal-vertex, the walk $W_L(C)$ is in fact a cycle of
$\PDC{A(G)}^*$. Similarly, let $W_R(C)$ be the cycle of $\PDC{A(G)}^*$
just on the right of $C$ and going in the same direction as $C$.



\begin{lemma}
\label{lem:gammaequaldelta}
Consider a $4$-orientation of ${A(G)}$ and 
a cycle $C$ of $G$, then we have:
$$\gamma(C) = \delta (W_L(C)) + \delta (W_R(C))$$
$$\delta (W_L(C))=0 \bmod4 \iff 
\gamma(C) = 0 \bmod8
.$$ 
\end{lemma}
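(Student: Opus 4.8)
The plan is to unwind all the definitions and compare the two edge-counts face by face along the strip of faces of $\PDC{A(G)}$ straddling the cycle $C$. First I would establish the identity $\gamma(C) = \delta(W_L(C)) + \delta(W_R(C))$. Recall that $\gamma(C)$ counts edges of $A(G)$ leaving $C$ on the right minus those leaving $C$ on the left; every edge of $A(G)$ with an endpoint on $C$ is either a primal-to-dual edge (an out-edge of $\PDC{A(G)}$) or a dual-to-primal edge, and in $\PDC{A(G)}$ the dual-edges (those between dual- and edge-vertices) do not touch $C$ since $C$ is a cycle of $G$ through primal-vertices only. The walk $W_L(C)$ runs through the faces of $\PDC{A(G)}$ immediately to the left of $C$ and $W_R(C)$ through those immediately to the right; an angle-edge $e$ of $A(G)$ incident to a vertex of $C$ is crossed by exactly one of $W_L(C)$, $W_R(C)$ (the one on the side of $C$ where the dual-endpoint of $e$ lies), while angle-edges lying \emph{along} $C$ (connecting two consecutive primal-vertices of $C$ — there are none, since $A(G)$ is bipartite, so in fact each quadrilateral face of $\PDC{A(G)}^*$ crossed contributes cleanly). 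I would check that an out-edge leaving $C$ on the right contributes $+1$ to $\delta(W_R(C))$ (it crosses $W_R(C)$ from left to right, the "left" side of $W_R(C)$ being the side containing $C$) and is not counted in $\delta(W_L(C))$; symmetrically an out-edge leaving $C$ on the left contributes $-1$ to $\delta(W_L(C))$; and the dual-to-primal edges incident to $C$ contribute $0$ to $\delta$ since they are neither out-edges nor dual-edges of $\PDC{A(G)}$. Summing these local contributions over all edges of $A(G)$ incident to $C$ gives exactly $\gamma(C) = \delta(W_L(C)) + \delta(W_R(C))$.

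For the second statement I would show $\delta(W_L(C)) = \delta(W_R(C)) \bmod 4$, so that modulo $4$ the sum is $2\delta(W_L(C))$, whence $\delta(W_L(C)) = 0 \bmod 4$ is equivalent to $\gamma(C) = 2\delta(W_L(C)) = 0 \bmod 8$. To get $\delta(W_L(C)) \equiv \delta(W_R(C)) \pmod 4$, note that $W_L(C)$ and $W_R(C)$ are homologous in $\PDC{A(G)}^*$: together they bound the narrow cylindrical strip of faces of $\PDC{A(G)}$ along $C$, i.e.\ $W_L(C) - W_R(C)$ is a sum of facial walks of $\PDC{A(G)}^*$ (those dual to the vertices of $C$ and to the edge-vertices on the angle-edges along... again these are the primal-vertices of $C$ and the faces of $A(G)$ incident to $C$). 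By Lemma~\ref{lem:facedelta0} every facial walk $F\in\widehat{\mc{F}^*}$ has $\delta(F) = 0 \bmod 4$, so by linearity of $\delta$ we get $\delta(W_L(C)) - \delta(W_R(C)) = \delta(W_L(C) - W_R(C)) = 0 \bmod 4$. This yields the claimed equivalence.

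The main obstacle is bookkeeping the orientation conventions in the first step: being careful about which side of the oriented cycles $W_L(C)$ and $W_R(C)$ is "left" versus "right", matching the sign conventions in the definitions of $\gamma$ and of $\delta$ (which weights out-edges by $\pm 1$ and dual-edges by $\pm 2$), and confirming that dual-edges of $\PDC{A(G)}$ genuinely never cross $W_L(C)$ or $W_R(C)$ so that only the $\beta(\Out,\cdot)$ term of $\delta$ is active here. Once the local picture at a single primal-vertex of $C$ and its two incident faces is pinned down — essentially one careful figure — the rest is a direct summation and the homology argument via Lemma~\ref{lem:facedelta0} is routine.
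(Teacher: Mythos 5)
Your proposal has two genuine problems, one factual and one logical, and the second is fatal to the mod-$8$ equivalence as you argue it. First, the claim that dual-edges of $\PDC{A(G)}$ ``genuinely never cross $W_L(C)$ or $W_R(C)$'' is false: for each edge of $C$ there is an edge-vertex of $\PDC{A(G)}$ sitting on that edge of $G$, hence on $C$ itself, and its two incident dual-edges go to the dual-vertices in the two faces of $G$ bordering that edge; so one of them crosses $W_L(C)$ (contributing $-2$ to $\delta(W_L(C))$) and the other crosses $W_R(C)$ (contributing $+2$). With $k$ the length of $C$, the correct values are $\delta(W_L(C))=-2k-x_L$ and $\delta(W_R(C))=2k+x_R$; your first identity $\gamma(C)=\delta(W_L(C))+\delta(W_R(C))$ survives only because the $\mp 2k$ terms cancel in the sum, but the premise is wrong and the missing $2k$-terms are exactly what is needed for the second statement.

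Second, the homology step is too weak. From Lemma~\ref{lem:facedelta0} and the fact that $W_L(C)-W_R(C)$ is $0$-homologous you only get $\delta(W_L(C))\equiv\delta(W_R(C)) \bmod 4$, and this does \emph{not} imply the claimed equivalence: for instance $\delta(W_L(C))=0$, $\delta(W_R(C))=4$ is compatible with that congruence, yet $\delta(W_L(C))\equiv 0\bmod 4$ while $\gamma(C)=4\not\equiv 0\bmod 8$. What is actually true, and what the paper's proof extracts by direct counting, is that the difference is exactly $8k$: since every vertex of $C$ has outdegree $4$ in the $4$-orientation and all its out-edges leave $C$ on one side or the other, $x_L+x_R=4k$, which combined with the dual-edge contributions gives $\gamma(C)=2\delta(W_L(C))+8k$, hence the equivalence. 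Your route could be repaired by computing $\delta$ exactly (not just mod $4$) on the facial walks lying in the strip between $W_L(C)$ and $W_R(C)$ --- the strip contains precisely the $k$ primal-vertices of $C$ (each facial walk has $\delta=4$) and the $k$ edge-vertices on $C$ (each has $\delta=4$), and no dual-vertices --- so that $\delta(W_R(C))-\delta(W_L(C))=8k$; but as written, the argument has a gap.
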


\begin{proof}
  Let $x_R$ (resp. $x_L$) be the number of edges of $A(G)$ leaving $C$
  on its right (resp. left). So $\gamma(C) = x_R-x_L$. Let $k$ be the
  number of vertices of $C$. Since we are considering a
  $4$-orientation of $A(G)$, we have $x_R+x_L=4k$. 
  Moreover, an edge of $A(G)$ leaving $C$ on its right (resp. left) is
  counting $+1$ (resp. $-1$) for $\delta (W_R(C))$ (resp.
  $\delta (W_L(C))$). For each edge of $C$ there is a corresponding
  edge-vertex in $\PDC{A(G)}$, that is incident to two dual-edges of
  $\PDC{A(G)}$, one that is crossing $\delta (W_L(C))$ from right to
  left, counting $-2$, and one crossing $\delta (W_R(C))$ from left to
  right, counting $+2$.  So $\delta (W_L(C))=-2k-x_L$ and
  $\delta (W_R(C))=2k+x_R$.

Combining these equalities, one obtain: 
$\gamma(C) = \delta (W_L(C)) + \delta (W_R(C))$,
 $\gamma(C)=2\delta (W_L(C))+8k$,  $\delta (W_L(C))=\gamma(C)/2-4k$.
Then clearly, $\delta (W_L(C))=0 \bmod 4$ implies
$\gamma(C)=0 \bmod8$, and $\gamma(C)=0 \bmod8$ implies
$\delta (W_L(C))=0 \bmod4$.
\end{proof}

Finally, we have the following theorem which characterizes  the
4-orientations that admit TTS-labelings:

\begin{theorem}
\label{th:characterizationgamma}
Consider a toroidal triangulation $G$. Let $\{B_1,B_{2}\}$ be a pair
of cycles of $G$, given with a direction of traversal, that forms a
basis for the homology.  A $4$-orientation of $A(G)$ admits a
toroidal transversal structure labeling if and only if
$\gamma(B_{1})= 0 \bmod 8$ and $\gamma(B_{2})= 0 \bmod 8$.
\end{theorem}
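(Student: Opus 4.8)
The plan is to deduce Theorem~\ref{th:characterizationgamma} by combining Lemma~\ref{lem:charforall}, Lemma~\ref{lem:basedelta}, and Lemma~\ref{lem:gammaequaldelta}, after passing from the basis $\{B_1,B_2\}$ of cycles of $G$ to a corresponding basis of cycles of $\PDC{A(G)}^*$. First I would observe that, given a cycle $C$ of $G$ with a direction of traversal, the cycle $W_L(C)$ of $\PDC{A(G)}^*$ lying just on the left of $C$ is homologous (in $\PDC{A(G)}^*$) to $C$ itself under the natural identification of the homology of the torus: more precisely, $W_L(C)$ and $C$ are freely homotopic closed curves on the torus, and hence $W_L(B_1)$ and $W_L(B_2)$ also form a basis for the homology of $\PDC{A(G)}^*$. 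The key point making this work is that each face of $\PDC{A(G)}^*$ has exactly one incident primal-vertex, so that $W_L(C)$ is genuinely a cycle, as already noted in the excerpt before Lemma~\ref{lem:gammaequaldelta}.

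Next I would run the characterization of Lemma~\ref{lem:charforall}: a $4$-orientation of $A(G)$ admits a TTS-labeling if and only if $\delta(W)=0\bmod 4$ for every closed walk $W$ of $\PDC{A(G)}^*$. By Lemma~\ref{lem:basedelta} applied to the basis $\{W_L(B_1),W_L(B_2)\}$, this condition holds for all closed walks $W$ if and only if it holds for the two basis cycles, i.e.\ if and only if $\delta(W_L(B_1))=0\bmod 4$ and $\delta(W_L(B_2))=0\bmod 4$. Finally, Lemma~\ref{lem:gammaequaldelta} translates each of these into the statement $\gamma(B_i)=0\bmod 8$, which is exactly the condition in the theorem. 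So the chain is: TTS-labeling exists $\iff$ $\delta(W)=0\bmod4$ for all closed $W$ (Lemma~\ref{lem:charforall}) $\iff$ $\delta(W_L(B_i))=0\bmod4$ for $i\in\{1,2\}$ (Lemma~\ref{lem:basedelta}) $\iff$ $\gamma(B_i)=0\bmod8$ for $i\in\{1,2\}$ (Lemma~\ref{lem:gammaequaldelta}).

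The main obstacle, and the step deserving the most care, is justifying that $\{W_L(B_1), W_L(B_2)\}$ is indeed a basis for the homology of $\PDC{A(G)}^*$, so that Lemma~\ref{lem:basedelta} is applicable to it. One clean way is to argue that the map from (free homotopy classes of) closed curves on the torus avoiding the vertices and edges of $G$ and of $\PDC{A(G)}$ to $H(\PDC{A(G)}^*)\cong\mb{Z}^2$ respects the identification with $H(G)\cong\mb{Z}^2$ coming from the common underlying torus; since $W_L(C)$ is a small pushoff of $C$, it is freely homotopic to $C$, and thus $[W_L(B_1)],[W_L(B_2)]$ generate $H(\PDC{A(G)}^*)$ because $[B_1],[B_2]$ generate $H(G)$. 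Alternatively, one can invoke Lemma~\ref{lem:basedelta} in the form already stated there (which allows \emph{any} homology basis of $\PDC{A(G)}^*$) together with the fact that a basis of cycles of $\PDC{A(G)}^*$ can be obtained by taking $W_L$ of a basis of cycles of $G$. I would also double-check the ``only if'' direction is not circular: the forward direction of Lemma~\ref{lem:charforall} gives $\delta(W)=0\bmod4$ for \emph{all} closed walks, in particular $W=W_L(B_i)$, so $\gamma(B_i)=0\bmod 8$ by Lemma~\ref{lem:gammaequaldelta}, with no need for the basis argument on that side.
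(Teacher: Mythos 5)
Your proposal is correct and follows essentially the same route as the paper: both directions go through Lemma~\ref{lem:charforall}, with Lemma~\ref{lem:gammaequaldelta} translating $\delta(W_L(B_i))=0\bmod 4$ into $\gamma(B_i)=0\bmod 8$, and the converse using that $\{W_L(B_1),W_L(B_2)\}$ is a homology basis so that Lemma~\ref{lem:basedelta} applies. The paper simply asserts that basis fact, whereas you supply the (correct) pushoff/free-homotopy justification, which is a welcome but minor addition rather than a different approach.
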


\begin{proof}
  $(\Longrightarrow)$
By Lemma~\ref{lem:charforall}, we have
  $\delta (W) = 0 \bmod4$ for any closed walk $W$ of $\PDC{A(G)}^*$. So
  we have $\delta(W_L(B_1)), \delta(W_L(B_{2}))$,
are both equal to
  $0 \bmod4$.  Thus, by Lemma~\ref{lem:gammaequaldelta}, we have
  $\gamma(B_{i}) = 0 \bmod8$, for $i\in\{1,2\}$.

  $(\Longleftarrow)$ Suppose that $\gamma(B_{i}) = 0 \bmod4$, for
  $i\in\{1,2\}$.  By Lemma~\ref{lem:gammaequaldelta}, we have
  $\delta(W_L(B_{i}))= 0 \bmod4$, for $i\in\{1,2\}$. Moreover
  $\{W_L(B_1),W_L(B_{2})\}$ forms a basis for the homology.  So by
  Lemma~\ref{lem:basedelta}, $\delta (W) = 0 \bmod4$ for any closed
  walk $W$ of $\PDC{A(G)}^*$. So the orientation admits a TTS-labeling
  by Lemma~\ref{lem:charforall}.
\end{proof}

The $4$-orientation of the toroidal triangulation on the left of
Figure~\ref{fig:4orbalanced} is an example where some non-contractible
cycles have value $\gamma$ not equal to $0 \bmod 8$. The vertical loop
of the triangulation, with upward direction of traversal, has
$\gamma= 2$.  Thus by Theorem~\ref{th:characterizationgamma}, this
orientation does not correspond to a transversal structure. Whereas,
on the right example, one can check that $\gamma=0$ for a vertical
cycle and a horizontal one, thus  this orientation corresponds
to a transversal structure (represented on
Figure~\ref{fig:balancedTTS}).

A consequence of Theorem~\ref{th:characterizationgamma} is that any
balanced 4-orientation of the angle graph $A(G)$ of a toroidal
triangulation $G$ admits a TTS-labeling and thus is the
$4$-orientation corresponding to a transversal structure of $G$.

\begin{corollary}
\label{cor:bal4orTS}
  Any balanced 4-orientation of $A(G)$ is the
$4$-orientation corresponding to a (balanced) transversal structure of $G$.
\end{corollary}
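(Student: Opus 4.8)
The plan is to derive Corollary~\ref{cor:bal4orTS} as an essentially immediate consequence of Theorem~\ref{th:characterizationgamma}, together with the already-noted correspondence (recorded just before Definition~\ref{def:TTS} and in Section~\ref{sec:tslab}) that a $4$-orientation of $A(G)$ admits a TTS-labeling if and only if it is the $4$-orientation coming from a transversal structure of $G$. So the real content is just to check that ``balanced'' implies the hypothesis of Theorem~\ref{th:characterizationgamma}.

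First I would fix a pair $\{B_1,B_2\}$ of cycles of $G$, each given with a direction of traversal, that forms a basis for the homology; such a pair exists by the spanning-tree construction recalled in Section~\ref{sec:homology} (take a spanning tree $T$ of $G$ and a spanning tree $T^*$ of $G^*$ using no edge dual to $T$; the two leftover edges close cycles with $T$, and genus $g=1$ gives exactly $2$ of them). Since $G$ is a toroidal triangulation, these basis cycles are non-contractible (a contractible cycle is null-homologous and cannot be part of a homology basis). Now apply the definition of balanced: a balanced $4$-orientation of $A(G)$ satisfies $\gamma(C)=0$ for every non-contractible cycle $C$ of $G$; in particular $\gamma(B_1)=0$ and $\gamma(B_2)=0$, hence trivially $\gamma(B_1)=0\bmod 8$ and $\gamma(B_2)=0\bmod 8$.

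Then I would invoke Theorem~\ref{th:characterizationgamma} directly: the conditions $\gamma(B_i)=0\bmod 8$ for $i\in\{1,2\}$ are exactly what is needed to conclude that the $4$-orientation admits a TTS-labeling, i.e.\ corresponds to a transversal structure of $G$. Finally, that transversal structure is balanced by definition of balanced toroidal transversal structure (its corresponding $4$-orientation of the angle map is the one we started with, which is balanced by hypothesis), which justifies the parenthetical ``(balanced)'' in the statement.

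I do not expect any genuine obstacle here — the corollary is a packaging of Theorem~\ref{th:characterizationgamma}. The only point that deserves a sentence of care is the observation that a homology-basis cycle of $G$ is necessarily non-contractible, so that the balanced hypothesis applies to $B_1$ and $B_2$; everything else is a direct substitution into the previously established equivalence.
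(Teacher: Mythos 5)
Your proposal is correct and matches the paper's own treatment: the paper derives this corollary directly from Theorem~\ref{th:characterizationgamma}, since a balanced $4$-orientation has $\gamma=0$ (hence $0\bmod 8$) on any homology-basis pair of non-contractible cycles. Your extra remark that basis cycles are necessarily non-contractible is a harmless clarification the paper leaves implicit.
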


Note again that there are transversal structures whose corresponding
4-orientations are not balanced, thus for which  $\gamma= 0 \bmod 8$ for
every non-contractible cycles, but not exactly $0$ for some of
them. Such an example is given on Figure~\ref{fig:nonbalancedTTS}.

\section{Existence of balanced transversal structures}
\label{sec:existence}
In this section we prove existence of balanced transversal structures
for essentially 4-connected triangulations by contracting edges until
we obtain a triangulation with just one vertex. This is done by
preserving the property that the triangulation is essentially
4-connected. The toroidal triangulation on one vertex is represented
on Figure~\ref{fig:1vertex} with a balanced transversal structure and
the corresponding angle map. Then the graph can be decontracted step
by step to obtain a balanced transversal structures of the original
triangulation.

\begin{figure}[!ht]
\center
\includegraphics[scale=0.4]{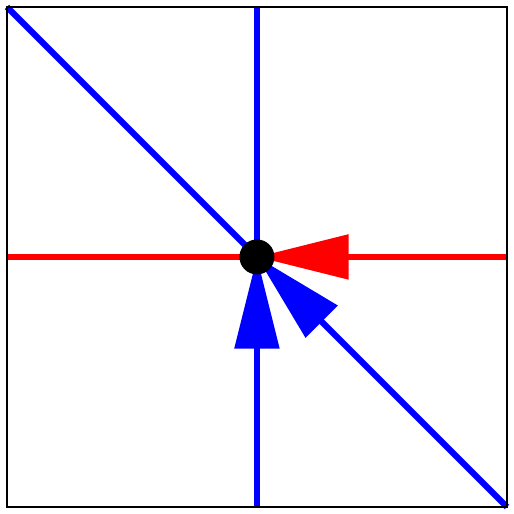} 
\ \ \ \
\includegraphics[scale=0.4]{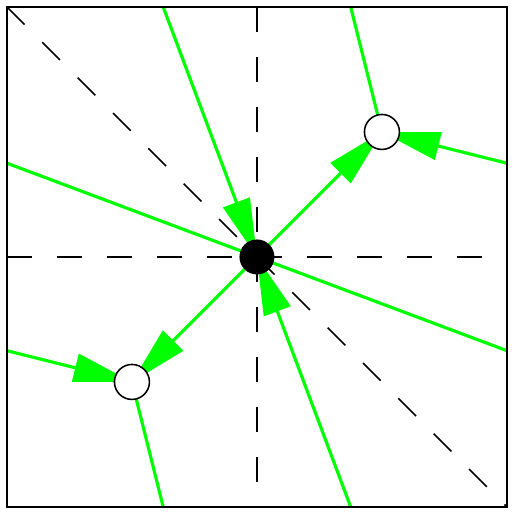}
\caption{Example of a balanced transversal
structure of the toroidal triangulation on one vertex.}
\label{fig:1vertex}
\end{figure}

\subsection{Contraction  preserving
  ``essentially 4-connected''}
\label{sec:contraction}

Given a toroidal triangulation $G$, the \emph{contraction} of a
non-loop-edge $e$ of $G$ is the operation consisting of continuously
contracting $e$ until merging its two ends. We note $G/e$ the obtained
map.  On Figure~\ref{fig:contraction-tri} the contraction of an edge
$e$ is represented.  Note that only one edge of each multiple edges
that is created is preserved (edge $e_{wx}$ and $e_{wy}$ on the
figure).

\begin{figure}[!ht]
\center
\begin{tabular}{ccc} 
\includegraphics[scale=0.4]{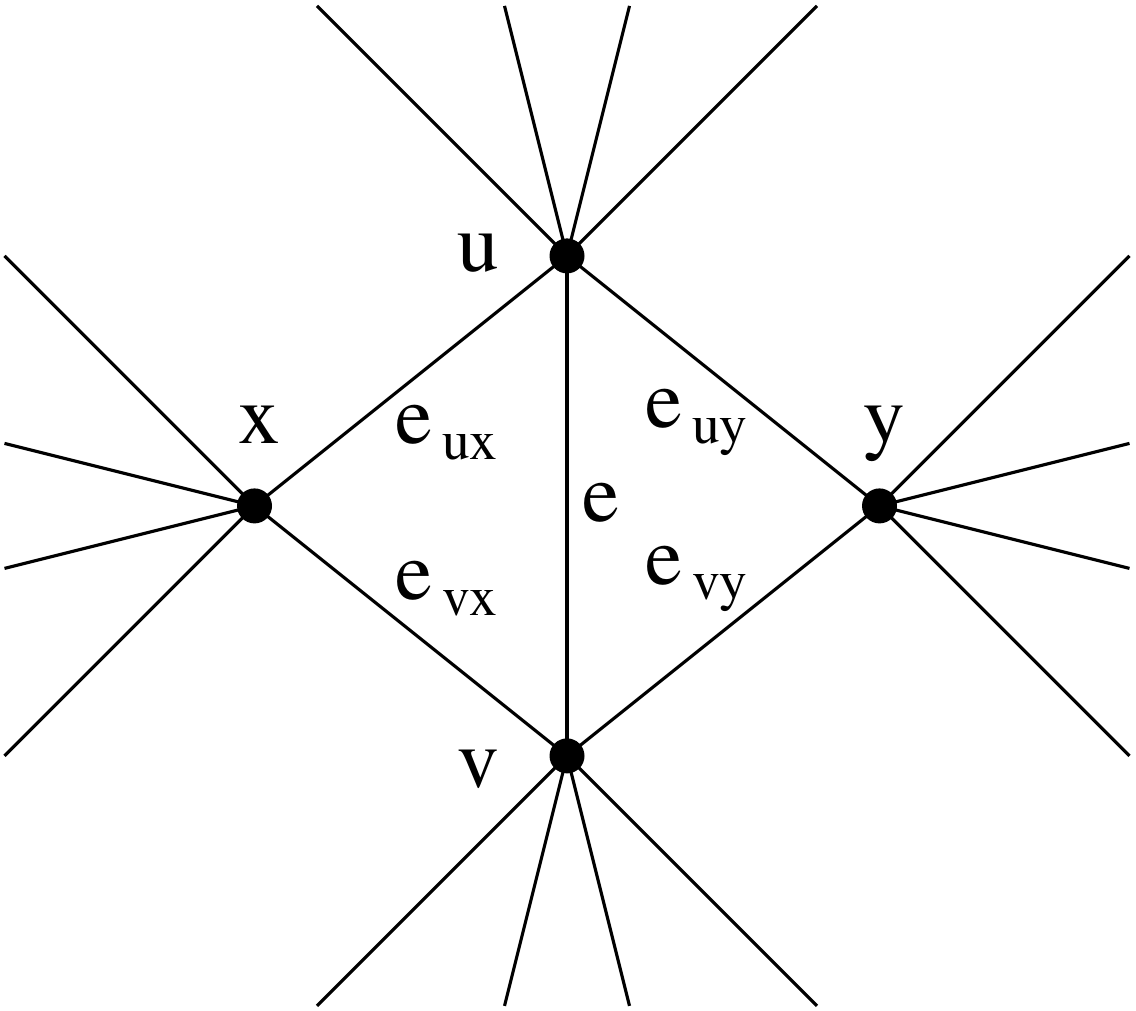}
& \hspace{3em} &
\includegraphics[scale=0.4]{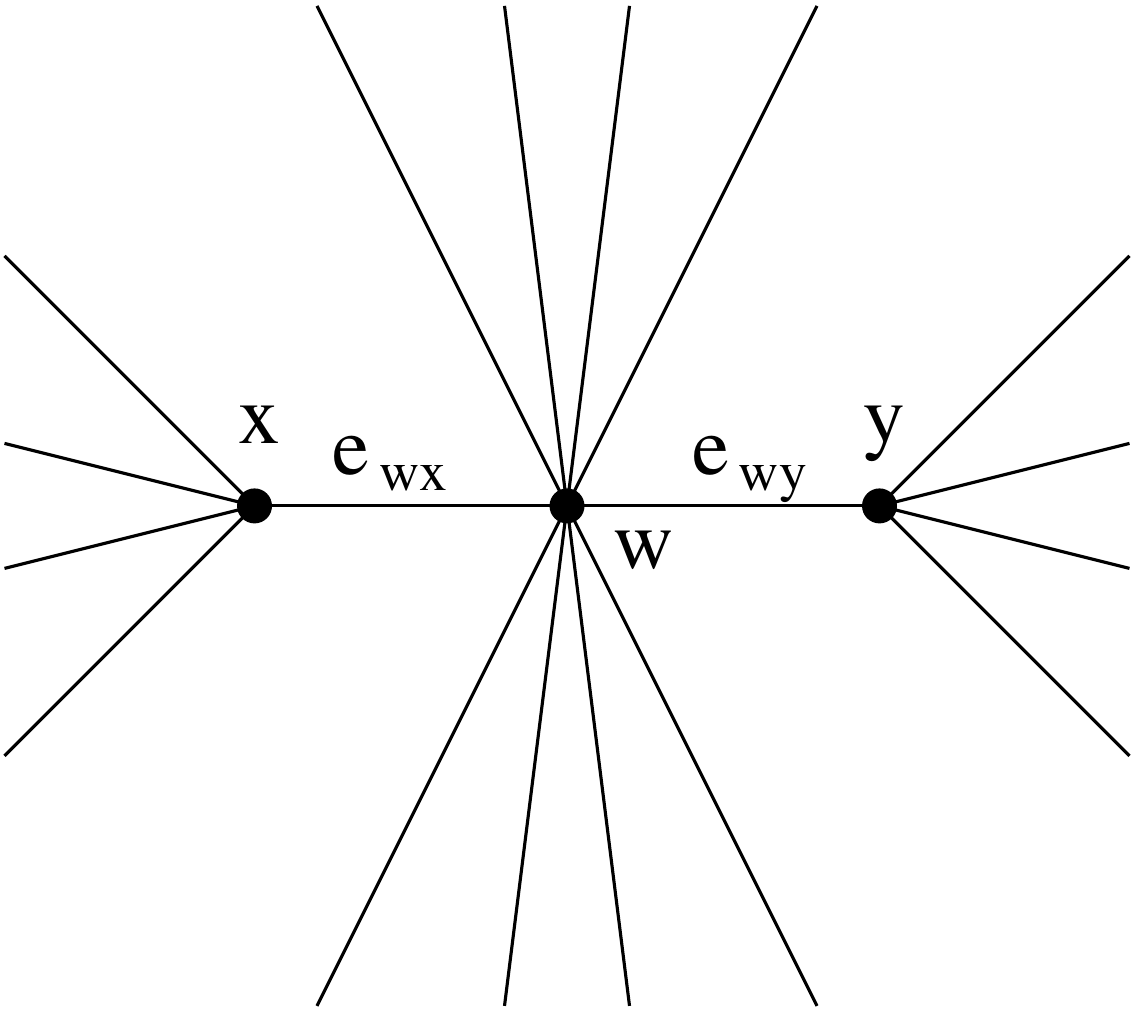}\\
& & \\
$G$& &$G/e$ \\
\end{tabular}
\caption{The contraction operation for triangulations}
\label{fig:contraction-tri}
\end{figure}

Note that the contraction operation is also defined when some vertices
are identified: $x=u$ and $y=v$ (the case represented on
Figure~\ref{fig:contraction-loop-tri}), or $x=v$ and $y=u$
(corresponding to the symmetric case with a diagonal in the other
direction).

\begin{figure}[!ht]
\center
\begin{tabular}{ccc} 
\includegraphics[scale=0.4]{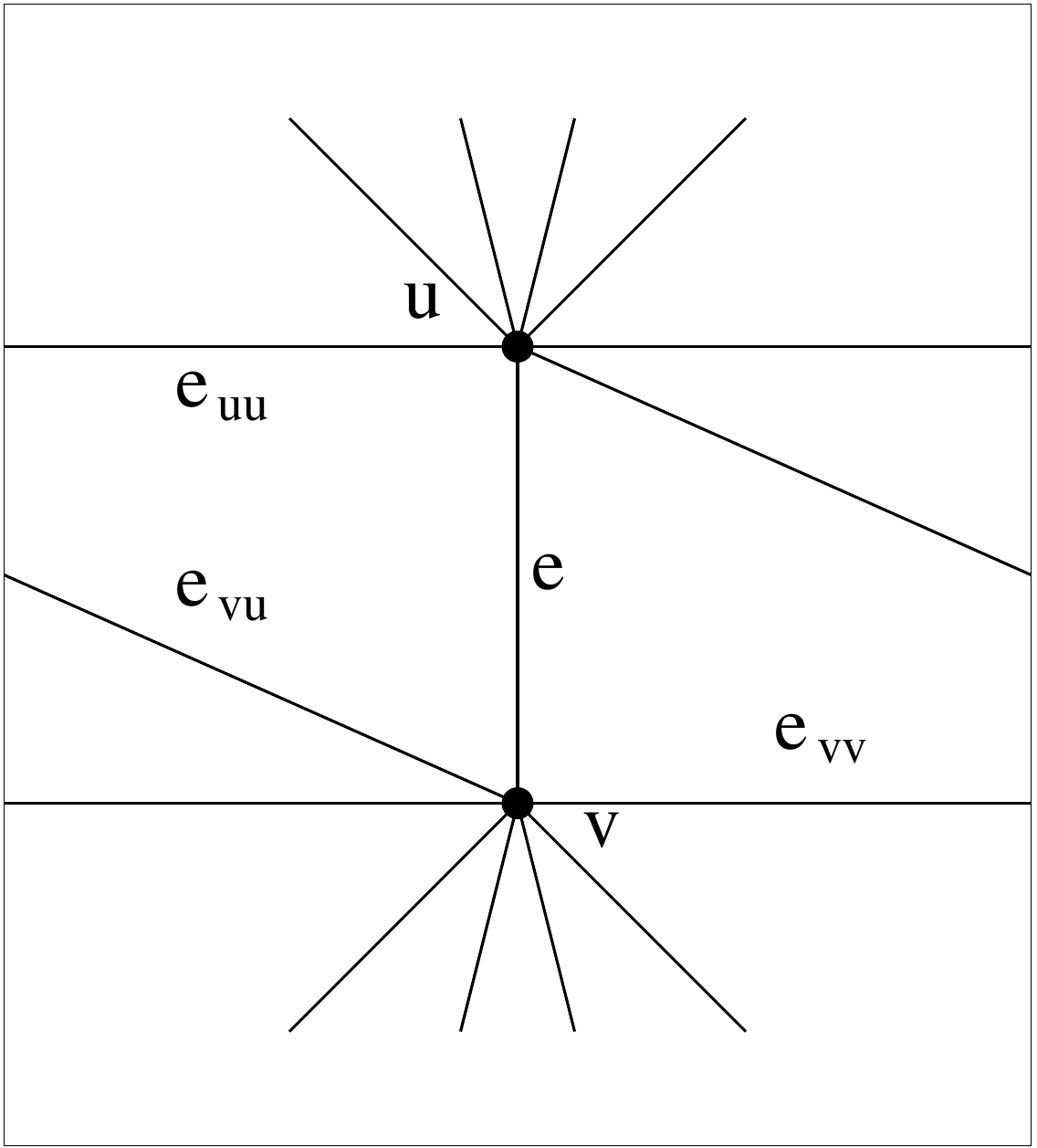}
& \hspace{3em} &
\includegraphics[scale=0.4]{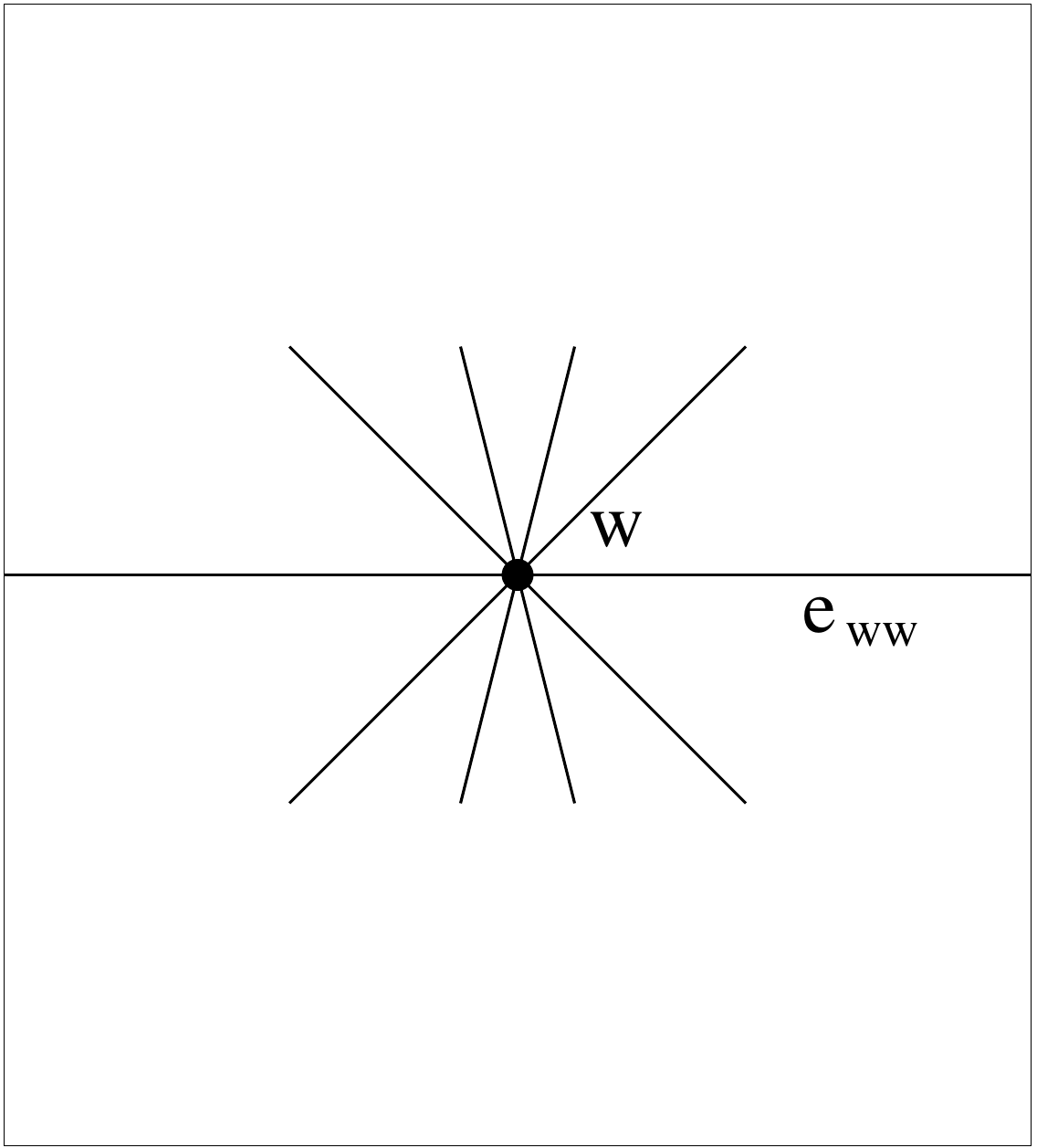}\\
& & \\
$G$& &$G/e$ \\
\end{tabular}
\caption{The contraction operation when some vertices are identified.}
\label{fig:contraction-loop-tri}
\end{figure}

In~\cite{Moh96} it is proved that in a toroidal triangulation (with no
contractible loop nor homotopic multiple edges) with at least two
vertices, one can find an edge whose contraction preserves the fact
that the map is a toroidal triangulation (with no contractible loop
nor homotopic multiple edges). Here we also need to show that we can
preserve the fact of being essentially 4-connected during contraction.
We say that a non-loop edge $e$ of an essentially 4-connected toroidal
triangulation $G$ is \emph{contractible} if $G/e$ is an essentially
4-connected toroidal triangulation.  We have the following lemma:

\begin{lemma}
\label{lem:contraction}
An essentially 4-connected toroidal triangulation with at
  least two vertices has a contractible edge.
\end{lemma}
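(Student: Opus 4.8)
The plan is to follow the strategy of Mohar~\cite{Moh96}, who shows that in a toroidal triangulation with no contractible loop nor homotopic multiple edges and at least two vertices one can contract some edge and stay in this class; we must additionally arrange that ``essentially 4-connected'' is preserved, i.e. by Lemma~\ref{lem:e4ciffnst} that the universal cover stays free of separating triangles. First I would recall when an edge $e=uv$ fails to be contractible: either $G/e$ acquires a contractible loop or homotopic multiple edges (a ``topological'' obstruction), or $G/e$ has a separating triangle in its universal cover (the new obstruction). The key observation is that the topological obstructions are controlled by small configurations around $e$: a loop appears when $u,v$ have two common neighbours forming a non-facial pattern, homotopic multiple edges appear similarly, and the first type of problem is exactly the non-existence of certain short non-contractible cycles through $e$. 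Working in $G^\infty$ (which is a simple $4$-connected plane triangulation of the whole plane by Lemma~\ref{lem:e4ciffnst} and the hypotheses), an edge $\widetilde e$ lifting $e$ is ``topologically bad'' iff $\widetilde e$ lies in a separating triangle of $G^\infty$, and ``essentially-4-connected bad'' iff after contracting all lifts of $e$ a separating triangle is created. So the whole problem reduces to: in a simple $4$-connected triangulation of the plane invariant under a rank-$2$ translation group $\Lambda$, with at least two orbits of vertices, find a $\Lambda$-orbit of edges whose contraction creates no separating triangle.

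The second step is a counting/extremal argument to locate such an orbit. For a $4$-connected plane triangulation, contracting an edge $uv$ creates a separating triangle precisely when $u$ and $v$ have a common neighbour $w$ with $uvw$ not a face — equivalently $uvw$ is a separating triangle of $G^\infty$ already, OR $u,v,w$ pairwise adjacent is created... but since $G^\infty$ is $4$-connected it has no separating triangle, so the only obstruction is that $u$ and $v$ have a common neighbour other than the two third-vertices of the two faces on $e$; call such $e$ \emph{bad}. I would then argue, as in the planar $4$-connected case (this is the classical lemma that $4$-connected planar triangulations have a contractible edge, due to essentially the same counting), that not all edges can be bad: if every edge were bad, each edge would be ``blamed'' on a common neighbour giving rise to a $3$-cycle, and a discharging/averaging argument over a fundamental domain (which is finite, since $G$ is finite) forces too many edges or too small a face count, contradicting Euler's formula $n-m+f=0$ on the torus ($m=3n$, $f=2n$). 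Because everything is $\Lambda$-periodic, a bound that works on one fundamental domain works globally, and a non-bad edge in $G^\infty$ projects to a contractible edge of $G$ — provided it is not a loop, which can be ensured because we have at least two vertices so some edge of $G$ is a non-loop and we may choose among its lifts.

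Concretely the steps in order are: (i) reduce ``contractible'' to a statement about $G^\infty$ via Lemma~\ref{lem:e4ciffnst}, noting $G^\infty$ is a simple $4$-connected triangulation of the plane; (ii) show that for a non-loop edge $e$, $G/e$ is again a toroidal triangulation with no contractible loop nor homotopic multiple edges iff no lift of $e$ is in a separating triangle or short non-contractible obstruction of $G^\infty$ — but by $4$-connectivity of $G^\infty$ there are no separating triangles, so the only way $G/e$ leaves the class or stops being essentially $4$-connected is that the two endpoints of a lift $\widetilde e$ share a common neighbour beyond the two obvious ones (the contraction then merges two faces or creates a separating triangle); (iii) call such $e$ \emph{bad} and prove by a counting argument over a fundamental domain that not every non-loop edge is bad; (iv) conclude that a good non-loop edge exists and is contractible. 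I expect the main obstacle to be step (iii): making the counting argument clean requires carefully bounding, for each bad edge, the configuration it forces (a $4$-cycle $u x v y$ with $x,y$ the third vertices of the two incident faces and an additional common neighbour $w$ of $u,v$), and showing these configurations cannot tile the whole periodic triangulation — this is exactly the toroidal analogue of the classical planar lemma and needs the full strength of essential $4$-connectedness (equivalently, absence of separating triangles, Lemma~\ref{lem:e4ciffnst}) plus the absence of homotopic multiple edges to rule out degenerate small cases; handling the degenerate contraction cases of Figures~\ref{fig:contraction-loop-tri} (when vertices are identified) separately will also require a little care.
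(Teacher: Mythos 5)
Your reduction breaks down at the point where you characterize the ``bad'' edges. In a $4$-connected plane triangulation the endpoints of an edge can never have a common neighbour other than the two apices of its incident faces (such a neighbour would give a separating triangle, which $4$-connectedness excludes), so under your definition no edge is ever bad and your criterion detects nothing. The real obstruction to contracting $e$ is a separating \emph{quadrangle} through $e$ (it collapses to a separating triangle after contraction), and, specifically on the torus, the additional configurations in which $e$ appears \emph{twice} on a separating $4$-walk or $5$-walk because one endpoint carries a non-contractible loop and the other a non-contractible cycle of length two (cases (a) and (c) of Figure~\ref{fig:sepwalk}). These degenerate configurations are possible precisely because the paper's triangulations are not simple graphs, they live across distinct lifts in $G^\infty$ (contracting $e$ in $G$ means contracting a whole translation-orbit of lifts at once), and they are invisible to the per-lift, common-neighbour test you set up in step (ii); they are not a matter of ``a little care'' but a substantial part of any correct proof.

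Even after correcting the badness notion, your step (iii) — the Euler-formula/discharging argument over a fundamental domain — is only asserted, and it is doubtful it can be made to work: the relations $m=3n$, $f=2n$ hold for \emph{every} toroidal triangulation and give no bound on how many edges lie on separating quadrangles; with deeply nested separating quadrangles essentially all edges in a region can be non-contractible, so one cannot conclude ``some edge is good'' by averaging — one has to locate a good edge in a specific place. That is exactly why the paper's proof is structural rather than enumerative: it fixes a \emph{maximal} separating quadrangle $Q$, shows $Q$ has no inner chord and that any other separating quadrangle meeting its interior is nested inside it, then works in the $4$-connected planar map bounded by $Q$, where either some inner vertex sees three outer vertices (and an explicit edge is shown contractible after ruling out the loop/$2$-cycle degeneracies) or Kant--He's lemma supplies an inner edge whose contraction preserves $4$-connectedness; the quadrangle-free case is treated separately via the $5$-walk configurations, using Lemma~\ref{lem:e4ciffnst} throughout. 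As it stands, your proposal is missing both the correct obstruction and any workable version of the counting step, so the core of the argument is absent.
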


\begin{proof} For $k\geq 3$, a \emph{separating k-walk} is a closed
  walk of size $k$ that delimits on one side a region homeomorphic to an
  open disk containing at least one vertex. This region is called the
  \emph{interior} of the separating k-walk. A separating $3$-walk is a
  separating triangle and we call a separating $4$-walk a
  \emph{separating quadrangle}.

  Let $G$ be an essentially 4-connected toroidal triangulation with at
  least two vertices. By Lemma~\ref{lem:e4ciffnst}, the map $G$ has no
  contractible loop, no homotopic multiple edges and no separating
  triangle. Consider a non-loop edge $e$ of $G$. The contracted graph
  $G/e$ is an essentially 4-connected toroidal triangulation if and
  only if $G/e$ has no contractible loop, no homotopic multiple edges
  and no separating triangle.

  Since $G$ has no homotopic multiple edges, the contraction of $e$
  cannot create a contractible loop.  Since $G$ has no separating
  triangle, the only way to create a pair of homotopic multiple edges
  in $G/e$ is if $e$ appears twice on a separating quadrangle such
  that each extremity of $e$ is incident to a non-contractible loop as
  depicted on Figure~\ref{fig:sepwalk}.a (where the dashed region
  represents the interior of the separating quadrangle).  There are two
  ways to create a separating triangle in $G/e$: either $e$ appears
  once on a separating quadrangle as depicted on
  Figure~\ref{fig:sepwalk}.b, where some vertices may be identified
  but not edges, or $e$ appears twice on a separating 5-walk such that
  one extremity of $e$ is incident to a non-contractible loop and the
  other extremity is incident to edges distinct from $e$ forming a
  non-contractible cycle of size two as depicted on
  Figure~\ref{fig:sepwalk}.c.

\begin{figure}[!ht]
\center
\begin{tabular}{cccc} 
Case a & \includegraphics[scale=0.4]{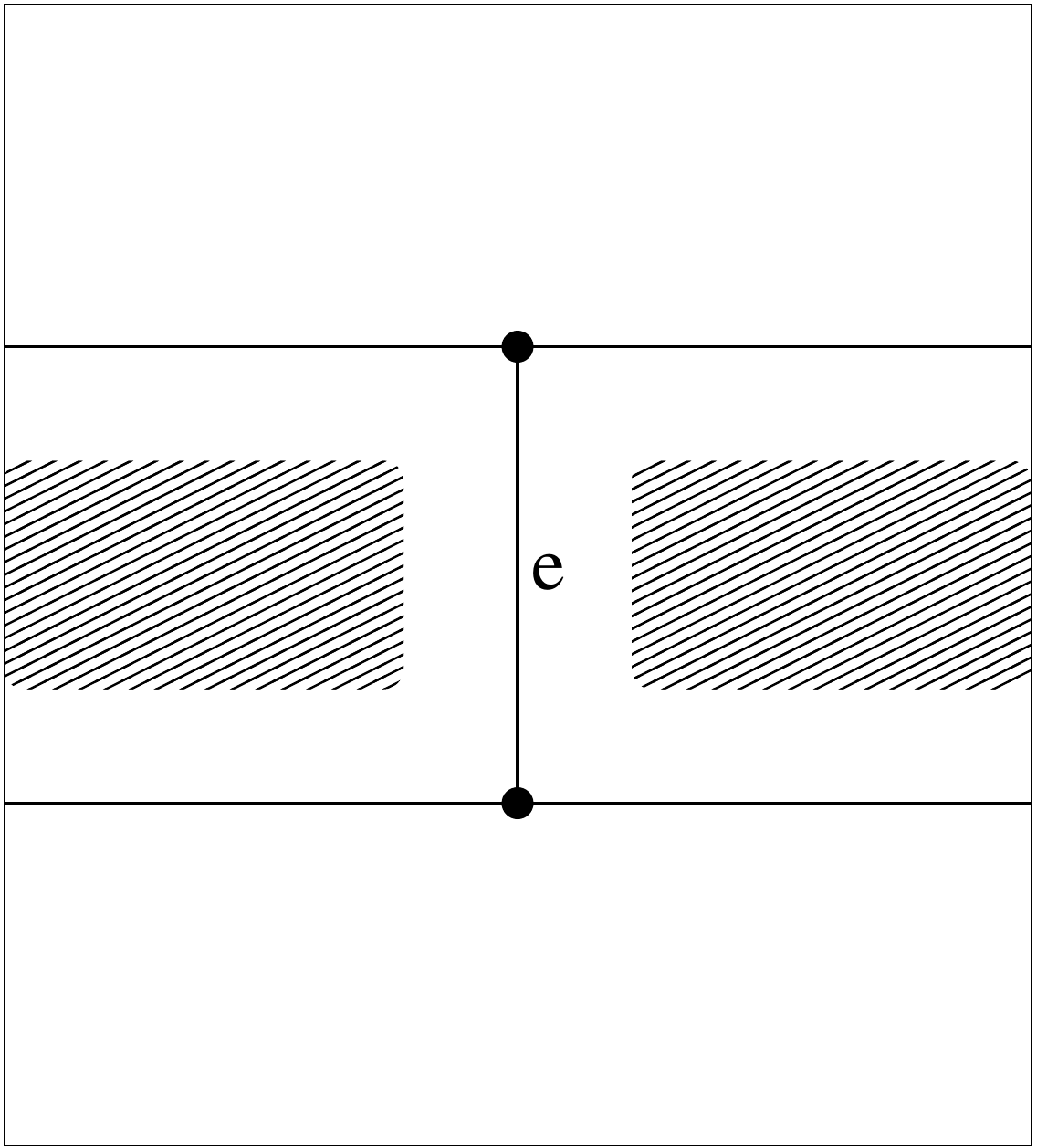}
& \hspace{3em} &
\includegraphics[scale=0.4]{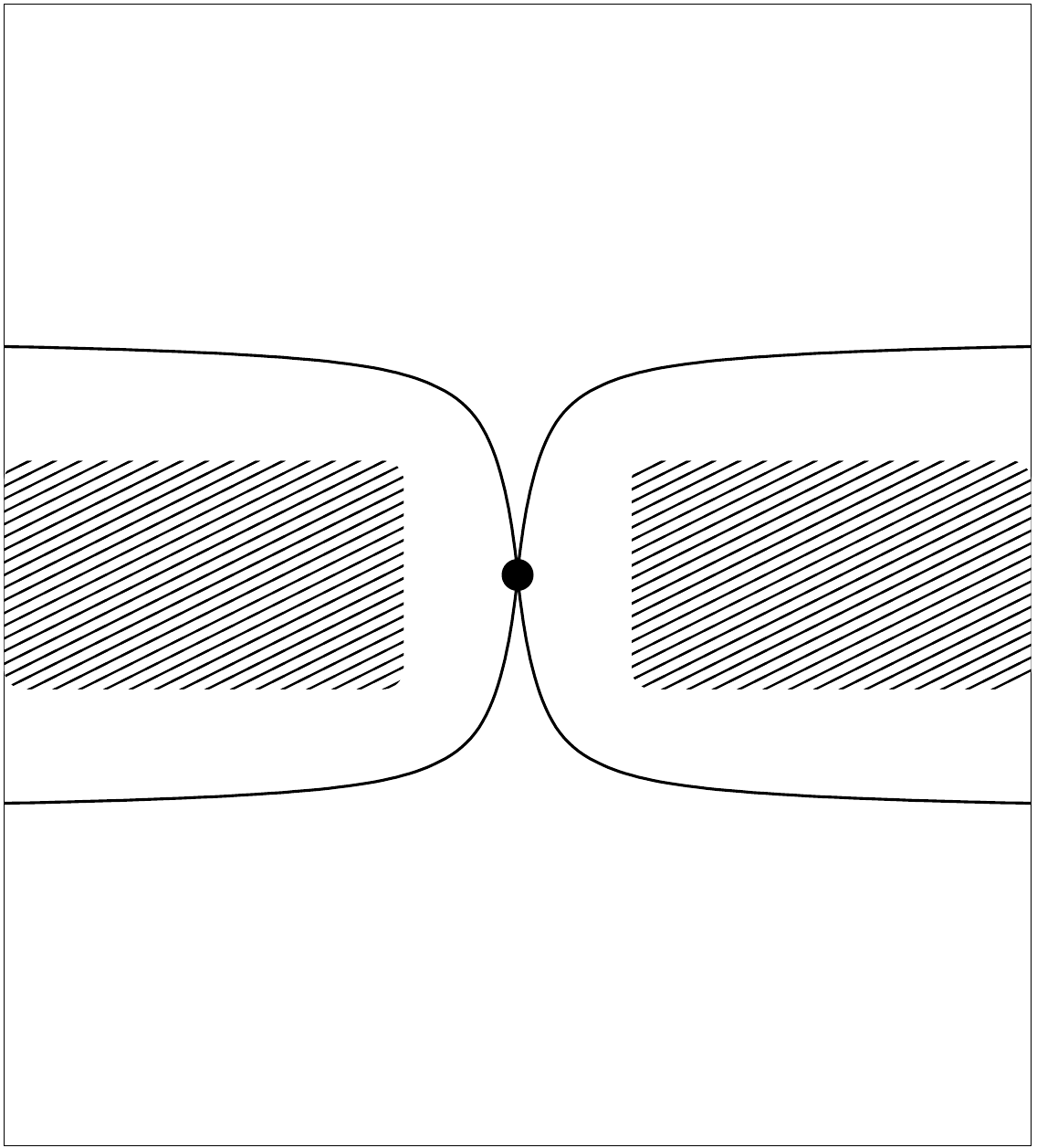}\\
& & & \\
Case b & \includegraphics[scale=0.4]{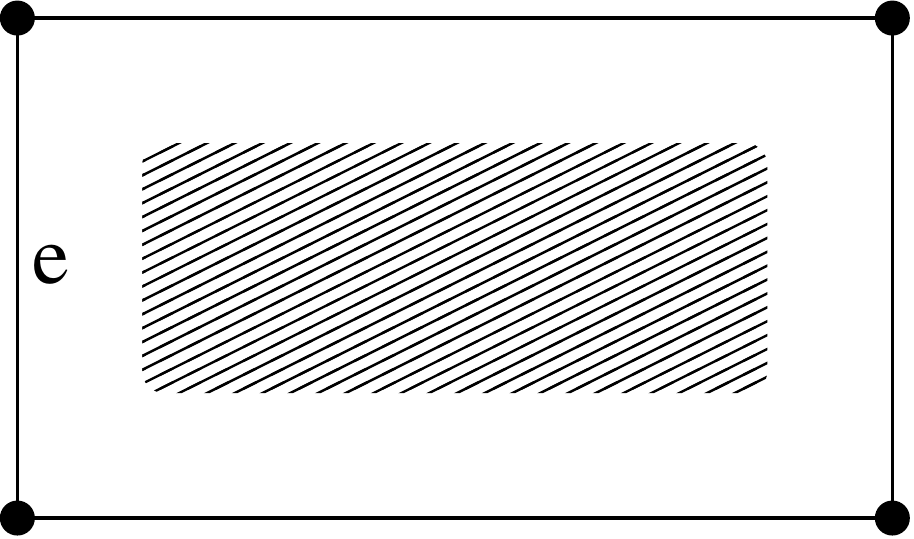}
& \hspace{3em} &
\includegraphics[scale=0.4]{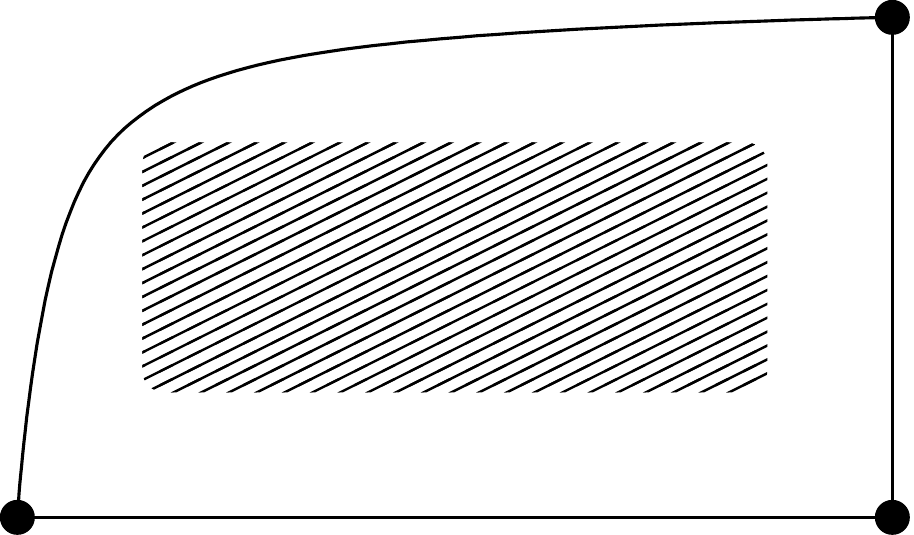}\\
& & & \\
Case c & \includegraphics[scale=0.4]{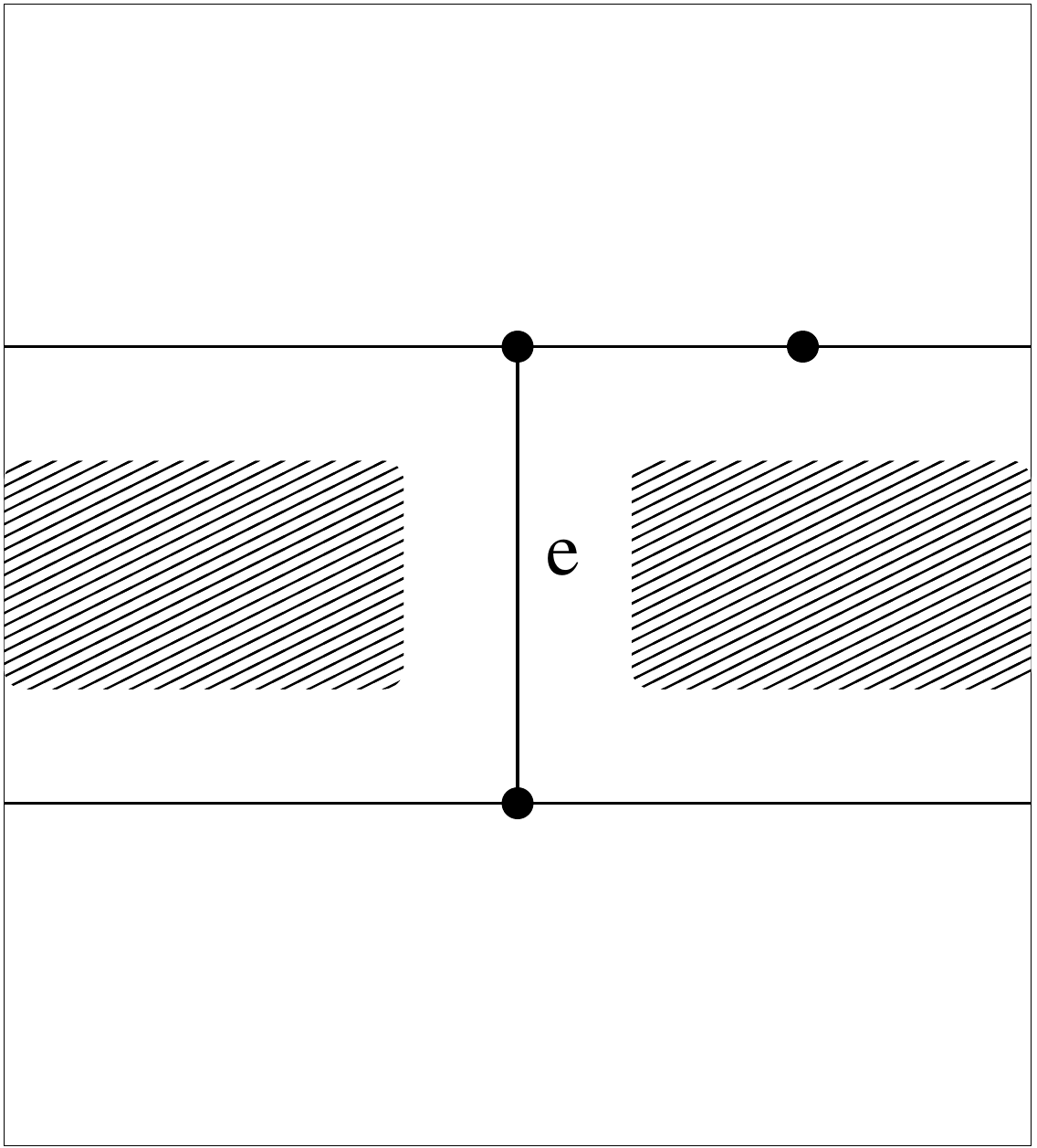}
& \hspace{3em} &
\includegraphics[scale=0.4]{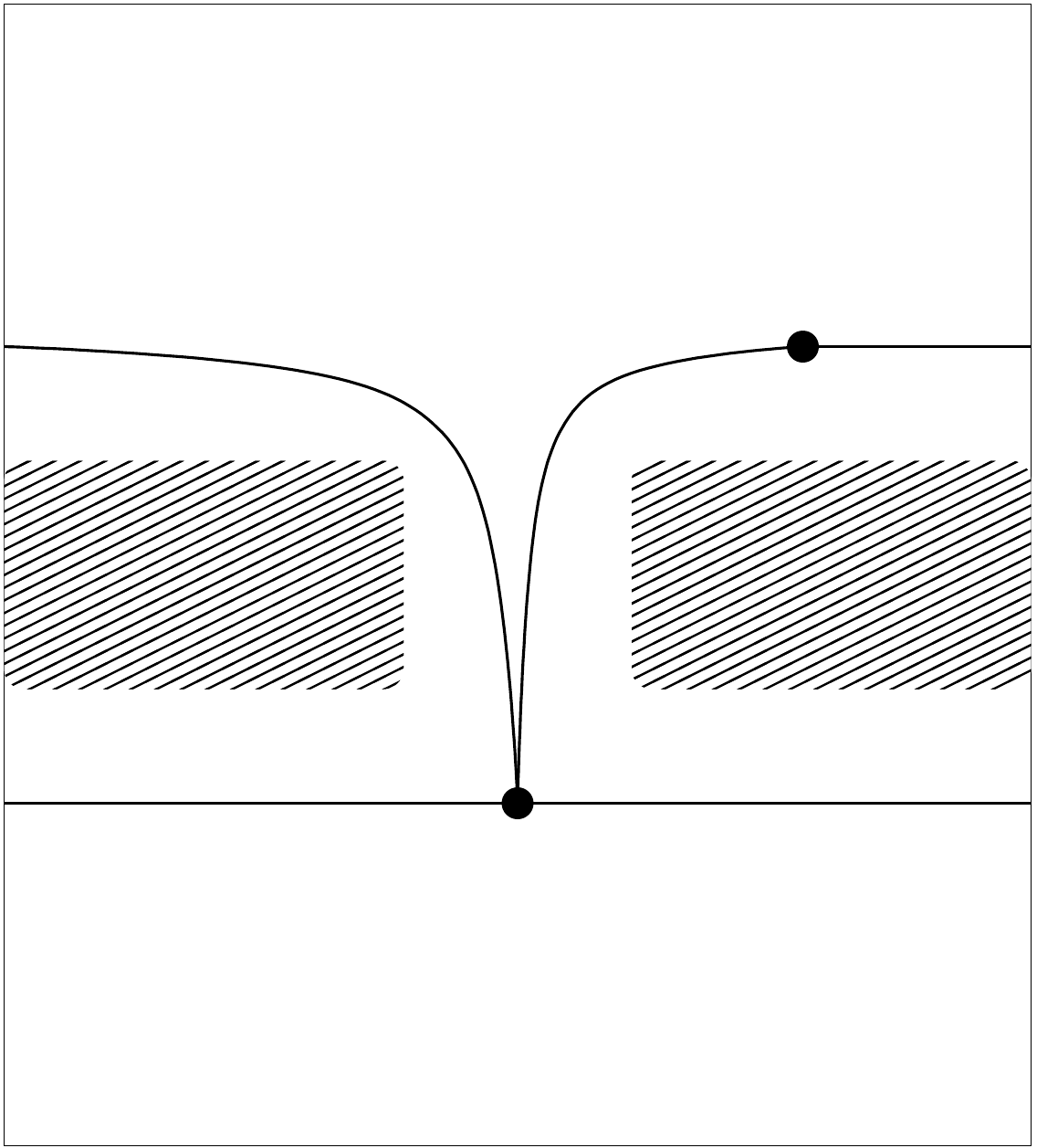}\\
&& & \\
& $G$& &$G/e$ \\
\end{tabular}
\caption{Contraction of an edge $e$ creating  a pair of homotopic
  multiple edges or a separating triangle.}
\label{fig:sepwalk}
\end{figure}

 We consider two cases whether there are separating quadrangles
  in $G$ or not.

  \begin{itemize}
  \item \emph{$G$ has some separating quadrangles:} 

    An \emph{inner chord} of a separating quadrangle $Q$ is an edge between
    its vertices that lie in the interior of the separating quadrangle.  We
    claim that a separating quadrangle $Q$ of $G$ has no inner
    chord. Suppose by contradiction that such a chord exists. Since
    there is no pair of homotopic multiple edges, this chord is
    between ``opposite'' vertices of $Q$. Thus it
    partitions the interior of $Q$ into two triangles. These two
    triangles are not separating by assumption on $G$ and thus the
    quadrangle $Q$ is not separating either, a contradiction.

    Let $Q$ be a maximal separating quadrangle.  Suppose by
    contradiction that there exists a separating quadrangle $Q'$ of
    $G$ distinct from $Q$ whose interior $R'$ intersects $R$ and is
    not included in $R$. By maximality of $Q$, we also have $R$ is not
    included in $R'$. As observed previously, $Q$ and $Q'$ have no
    inner chord.  So there is at least one or two vertices of $Q$
    (resp $Q'$) in the interior of $Q'$ (resp. $Q$).  Thus the border
    of the union of $R$ and $R'$ has size less or equal to four, a
    contradiction to the maximality of $Q$ or of $G$ being an
    essentially $4$-connected triangulation.  So a separating
    quadrangle of $G$ whose interior intersects $R$ has its interior
    included in $R$.



    Let $G'$ be the map obtained from $G$ by keeping all the vertices
    and edges in $R$, including $Q$. The vertices and edges appearing
    several times on $Q$ are duplicated so $G'$ is a planar map. Then
    $G'$ is a 4-connected planar map in which every inner
    face is a triangle and the outer face is a quadrangle.  Let
    $a,b,c,d$ denote the outer vertices of $G'$ in \ccw order.  We
    denote also $a,b,c,d$ the corresponding vertices of $G$. Note
    that in $G$ some of these vertices might be identified.  We
    consider two cases, whether, in $G'$, there exists an inner vertex
    incident to at least three outer vertices or not.

    \begin{itemize}
    \item \emph{In $G'$, there exists an inner vertex $v$ that
    is incident to at least three outer vertices:}

  W.l.o.g., we may
    assume that $v$ is incident to $a,b,c$ in $G'$ with edges
    $e_a,e_b,e_c$ respectively.  We prove that $e_b$ is contractible
    in $G$. 

    Suppose by contradiction that $e_b$ belongs to a separating
    quadrangle $Q'$ of $G$. Then $Q'$ is distinct from $Q$ and its
    interior $R'$ of $Q'$ intersects $R$. Thus by the above remark,
    $R'$ is included in $R$.  Then $Q'$ has an inner chord $e_a$ or
    $e_c$, a contradiction.

    Suppose that $e_b$ appears twice on a separating 5-walk $W$ as
    depicted on Figure~\ref{fig:sepwalk}.c. Then, one of the extremity
    of $e_b$ is incident to a non-contractible loop $\ell$ of $W$ and
    this extremity cannot lie inside $R$ so it is $b$. So $v$ is
    incident to two edges $e_1,e_2$, distinct from $e_b$, so that the
    5-walk $W$ is the sequence of edges $\ell, e_b, e_1,e_2, e_b$.
    Then, in $G'$, the edges $e_1,e_2$ are incident to two distinct
    vertices of $\{a,c,d\}$ that are identified in $G$ so that
    $e_1,e_2$ form a non-contractible cycle of size two of $G$.

    Suppose, by contradiction, that $e_1,e_2$ are incident to
    ``consecutive'' vertices of $Q$, then w.l.o.g., we can assume that
    $e_1,e_2$ are incident to $a$ and $d$ that are identified in $G$.
    If $b,c$ are also identified in $G$, then we are in the situation
    of Figure~\ref{fig:walkstar}.a, with $W$ represented in magenta.
    Then, the interior of the separating 5-walk $W$ is partitioned
    into triangles whose interiors are empty since $G$ is essentially
    4-connected. So the interior of the separating 5-walk $W$ contains
    no vertices, a contradiction. If $b,c$ are not identified, then we
    are in the situation of Figure~\ref{fig:walkstar}.b. Then the two
    loops of the figure plus $e$ form a quadrangle whose interior
    strictly contains the interior of $Q$, a contradiction to the
    maximality of $Q$.

\begin{figure}[!ht]
\center
\begin{tabular}{ccc} 
\includegraphics[scale=0.4]{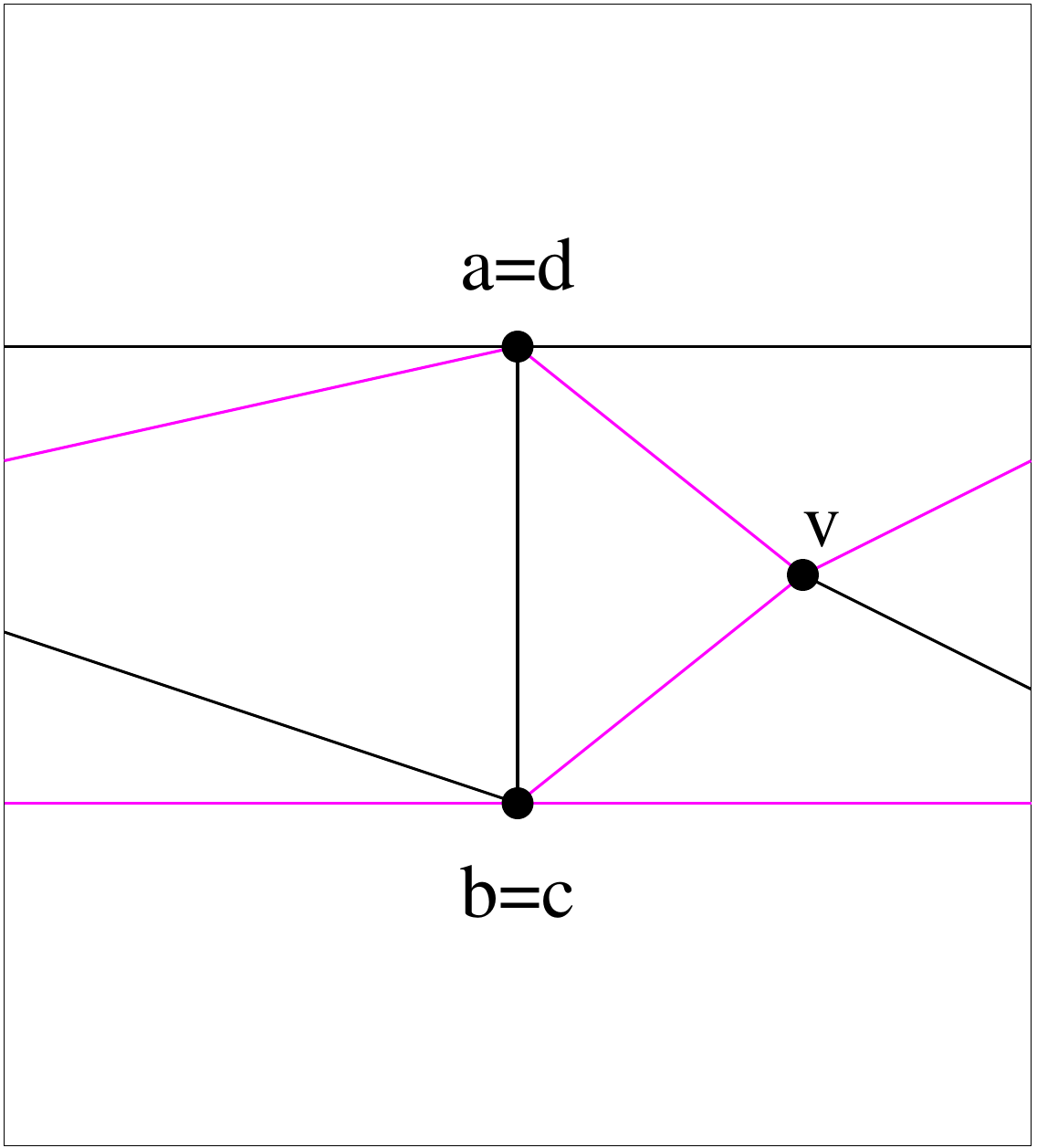} \  &\  
\includegraphics[scale=0.4]{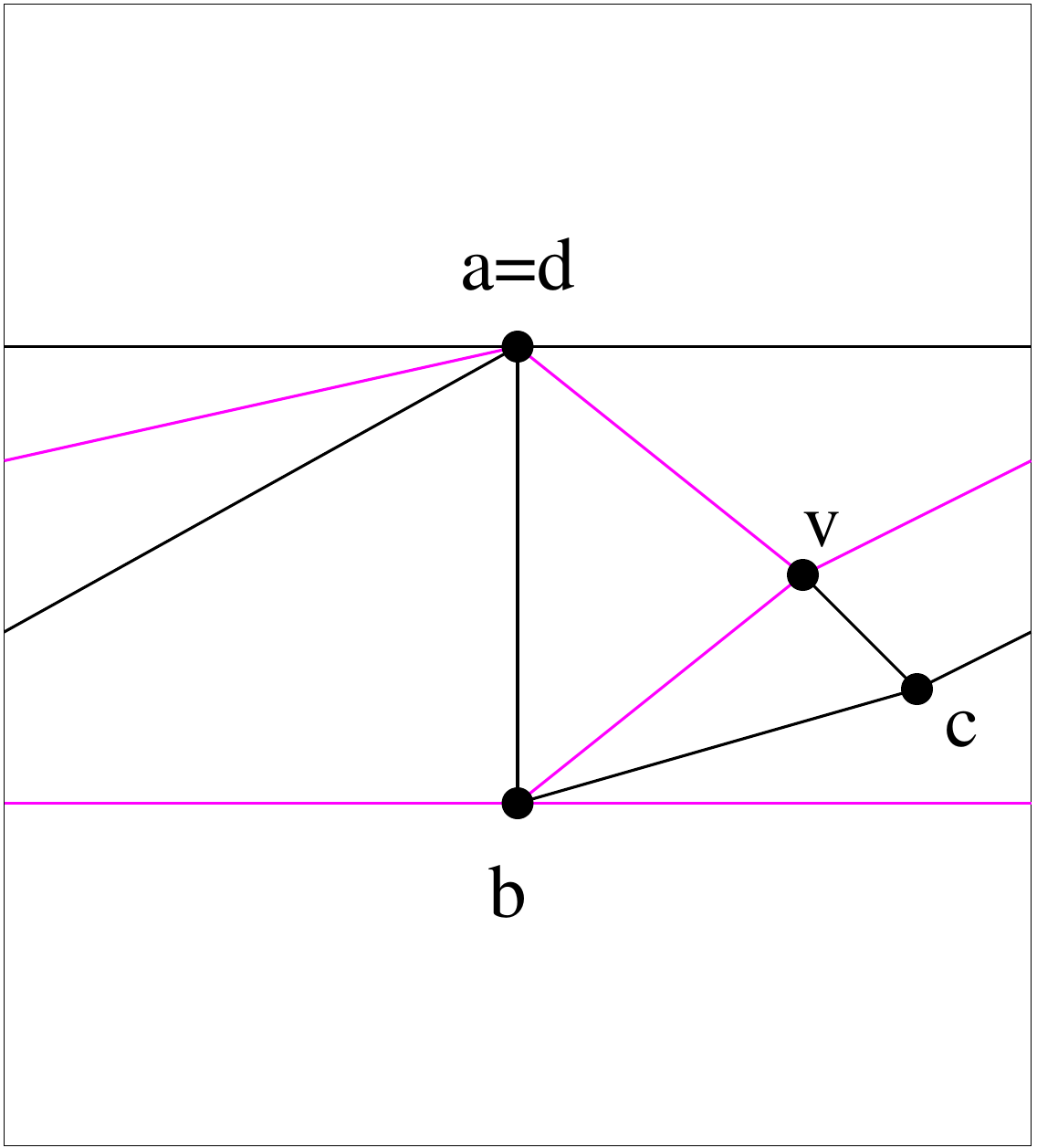} \  &\ 
\includegraphics[scale=0.4]{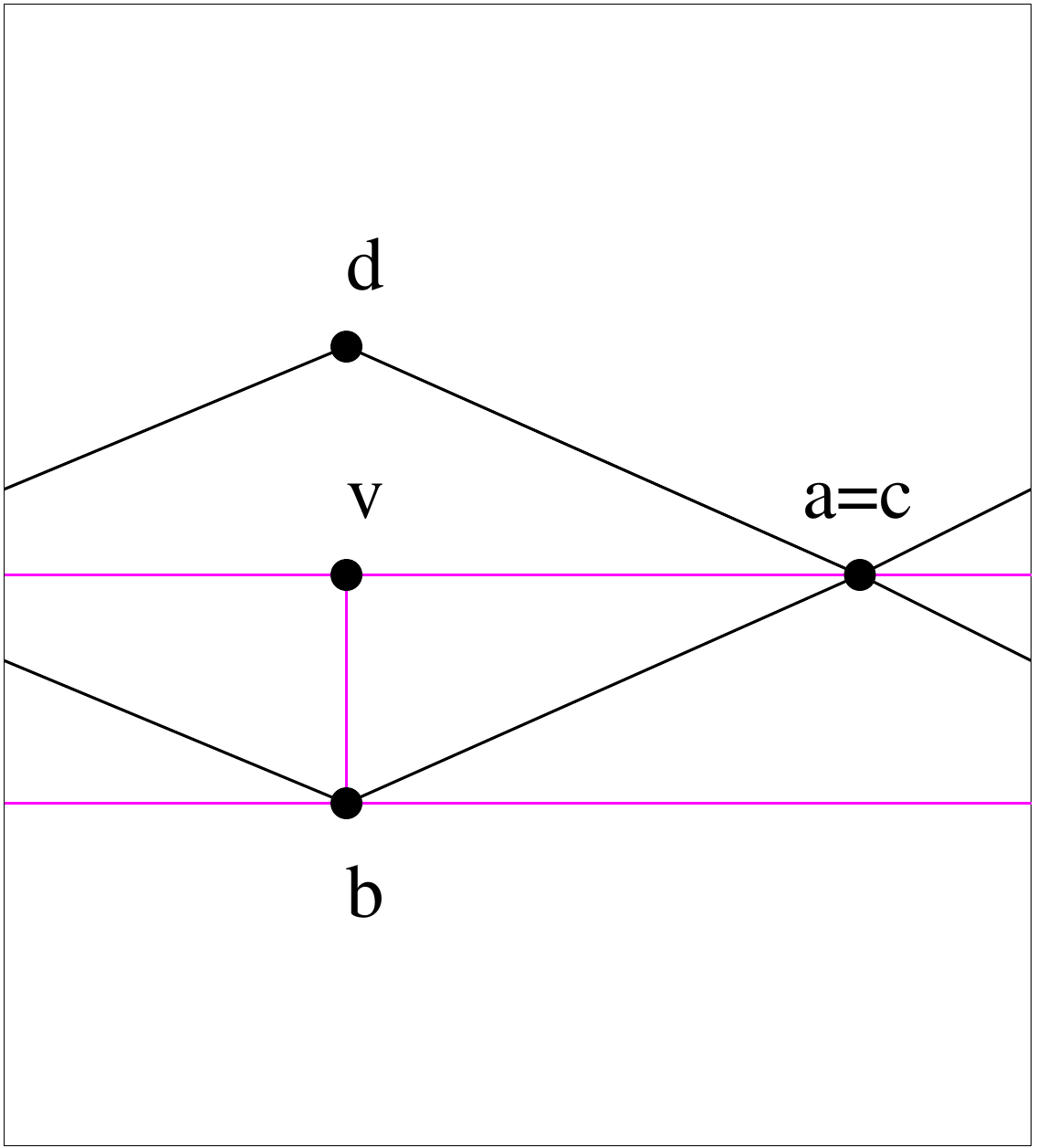} \\
& & \\
Case a \  &\  Case b \  &\  Case c \\
\end{tabular}
\caption{Cases of Lemma~\ref{lem:contraction}.}
\label{fig:walkstar}
\end{figure}

So, $e_1,e_2$ are incident to ``opposite'' vertices of $Q$.  These two
vertices are $a$ and $c$ and they are identified in $G$. Then we are
in the situation of Figure~\ref{fig:walkstar}.c
. Then, the interior of
the separating 5-walk $W$ is partitioned into triangles whose interior
are empty since $G$ is essentially 4-connected. So the interior of the
separating 5-walk $W$ contains no vertices, a contradiction.

To conclude, $e_b$ does not belong to a separating quadrangle, nor
appears twice on a separating 5-walk, so $e_b$ is contractible in $G$.

\item \emph{In $G'$, all inner vertices are incident to at most two
    outer vertices:}

  Kant and He proved~\cite[Lemma 3.1]{KH97} that $G'$ contains an
  internal edge $e$ such that $G'/e$ is 4-connected.  Let us show $e$
  is contractible in $G$.

    Suppose by contradiction that $e$ belongs to a separating quadrangle
    $Q'$ of $G$. Then the interior $R'$ of $Q'$ intersects $R$ and
    thus is included in $R$ by above remark. But then $G'/e$ is not
    4-connected, a contradiction.

    Suppose by contradiction that $e'$ appears twice on a separating
    5-walk as depicted on Figure~\ref{fig:sepwalk}.c. Then, in $G$,
    one extremity $u$ of $e'$ is incident to a non-contractible loop
    and the other extremity $v$ of $e'$ is incident to two edges
    $e_1,e_2$, distinct from $e'$, forming a non-contractible cycle of
    size two. Thus $u$ is not an inner vertex of $G'$ and the two
    extremities of $e_1,e_2$ also. So $v$ is incident to three outer
    vertices of $G'$, a contradiction.

    To conclude, $e$ does not belong to a separating quadrangle, nor
    appears twice on a separating 5-walk, so $e$ is contractible in
    $G$.

    \end{itemize}
  \item \emph{$G$ has no separating quadrangle:}

    Consider a non-loop edge $e$ of $G$. If $e$ is contractible we are
    done, so we can assume that $e$ is not contractible. Then, since
    there is no separating quadrangle, we have $e$ that appears twice
    on a separating 5-walk $W$ as depicted on
    Figure~\ref{fig:sepwalk}.c. More precisely, one  extremity
    $u$ of $e$ is incident to a non-contractible loop $\ell$ and the
    other extremity $v$ of $e$ is incident to two edges $e_1,e_2$,
    distinct from $e$, forming a non-contractible cycle of size two of
    $G$. Let $R$ be the interior of the separating 5-walk $W$. We
    consider two cases whether $v$ has some neighbors in the strict
    interior of $R$ or not.

    Suppose by contradiction that $v$ has no neighbors in the strict
    interior of $R$. Then either $v$ has some incident edges inside
    $R$ or not. Suppose first that $v$ has some incident edges inside
    $R$. Then, since $v$ has no neighbors in the strict interior of
    $R$, we have that $v$ is incident to $u$ with an edge in the
    strict interior of $R$, as depicted on
    Figure~\ref{fig:sep5diag}.a. Then since there is no separating
    triangle, nor separating quadrangle, the region $R$ contains no
    vertices, a contradiction. Suppose now that $v$ has no incident
    edge inside $R$. Then, Since $G$ is a triangulation, vertex $u$
    must be incident twice to the third vertex $w$ of the $5$-walk
    $w$, as depicted on Figure~\ref{fig:sep5diag}.b. Again since there
    is no separating triangle, the region $R$ contains no vertices, a
    contradiction.

\begin{figure}[!ht]
\center
\begin{tabular}{ccc} 
\includegraphics[scale=0.4]{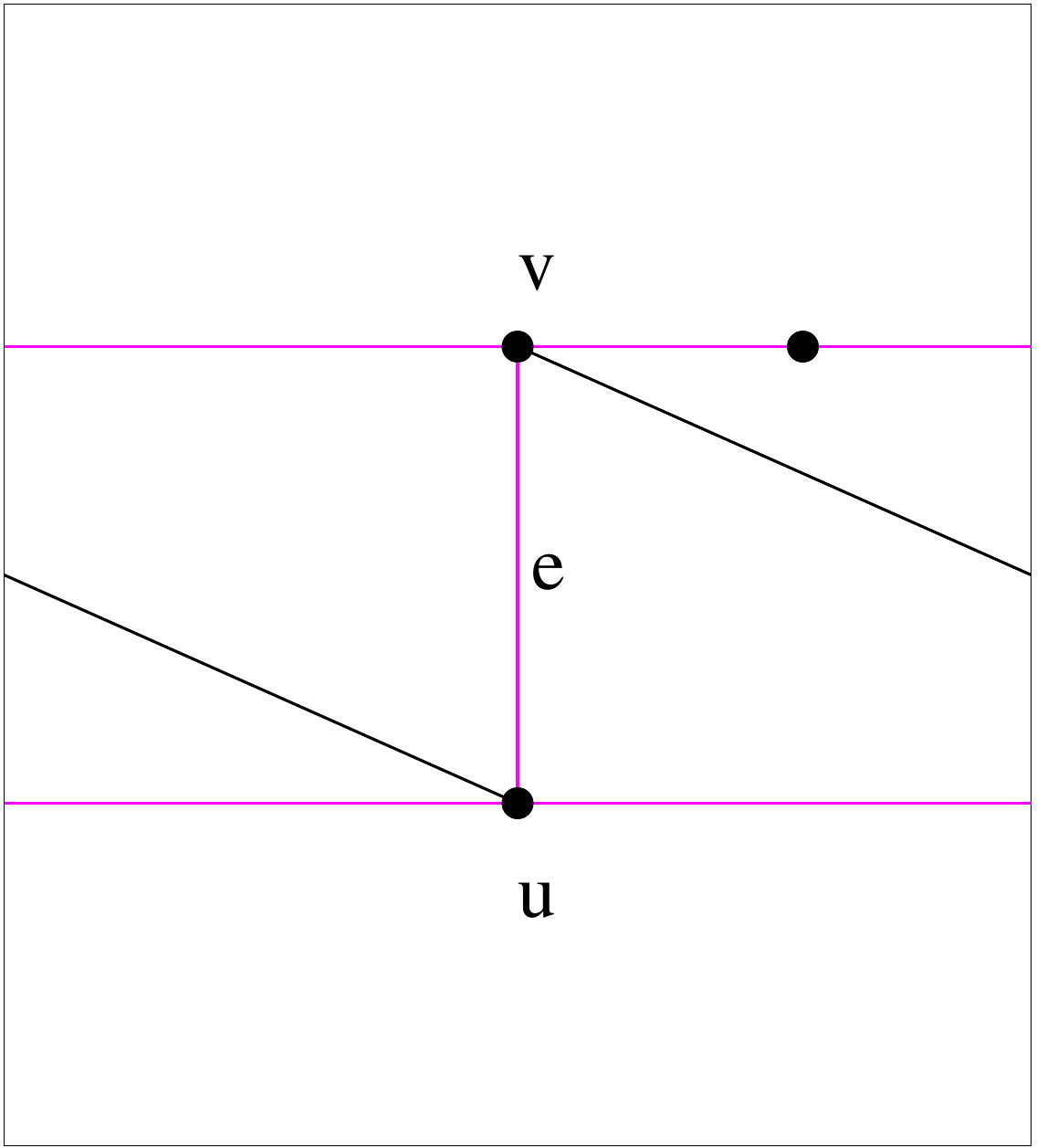} \  &\  
\includegraphics[scale=0.4]{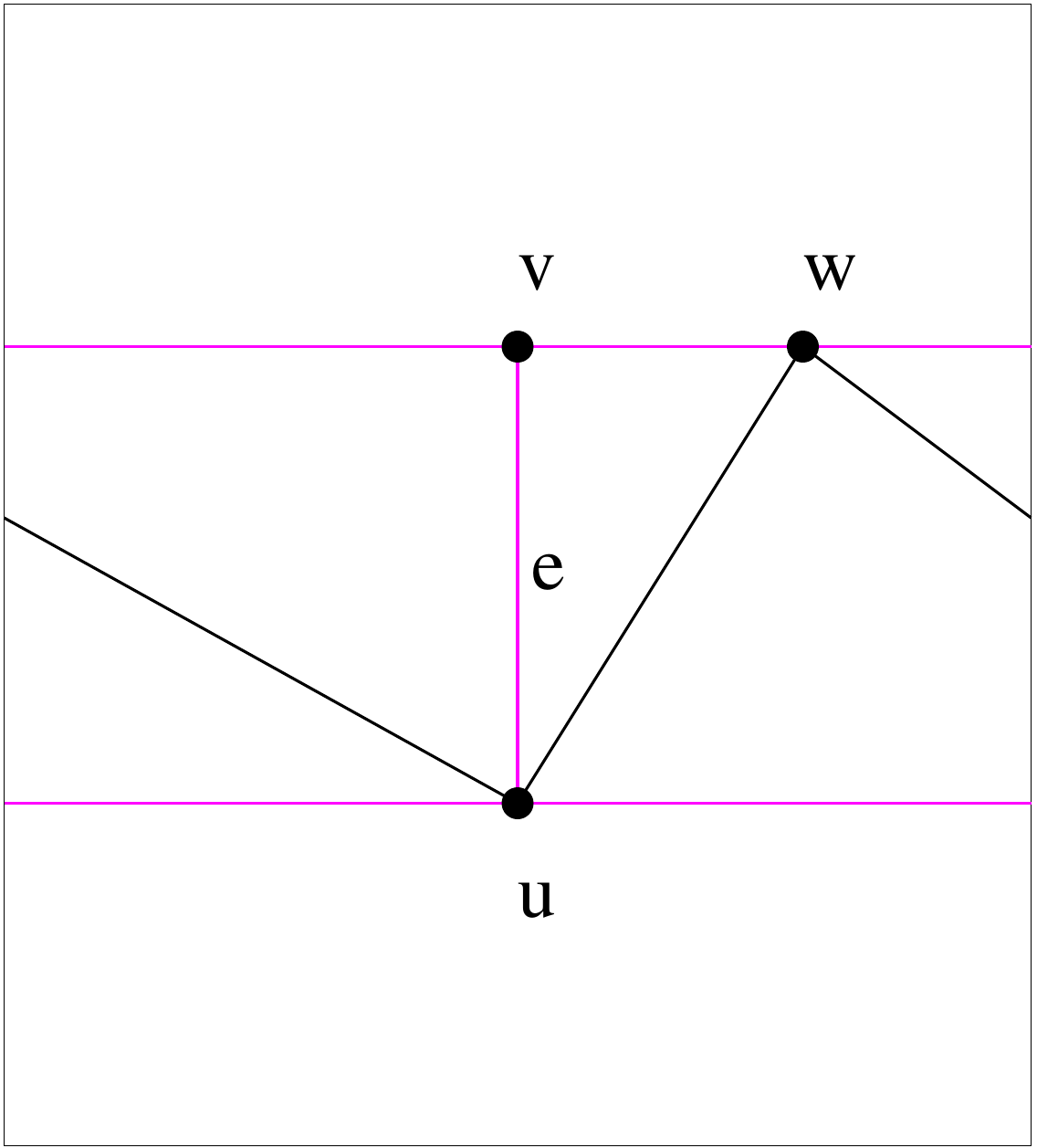} \ &\ 
\includegraphics[scale=0.4]{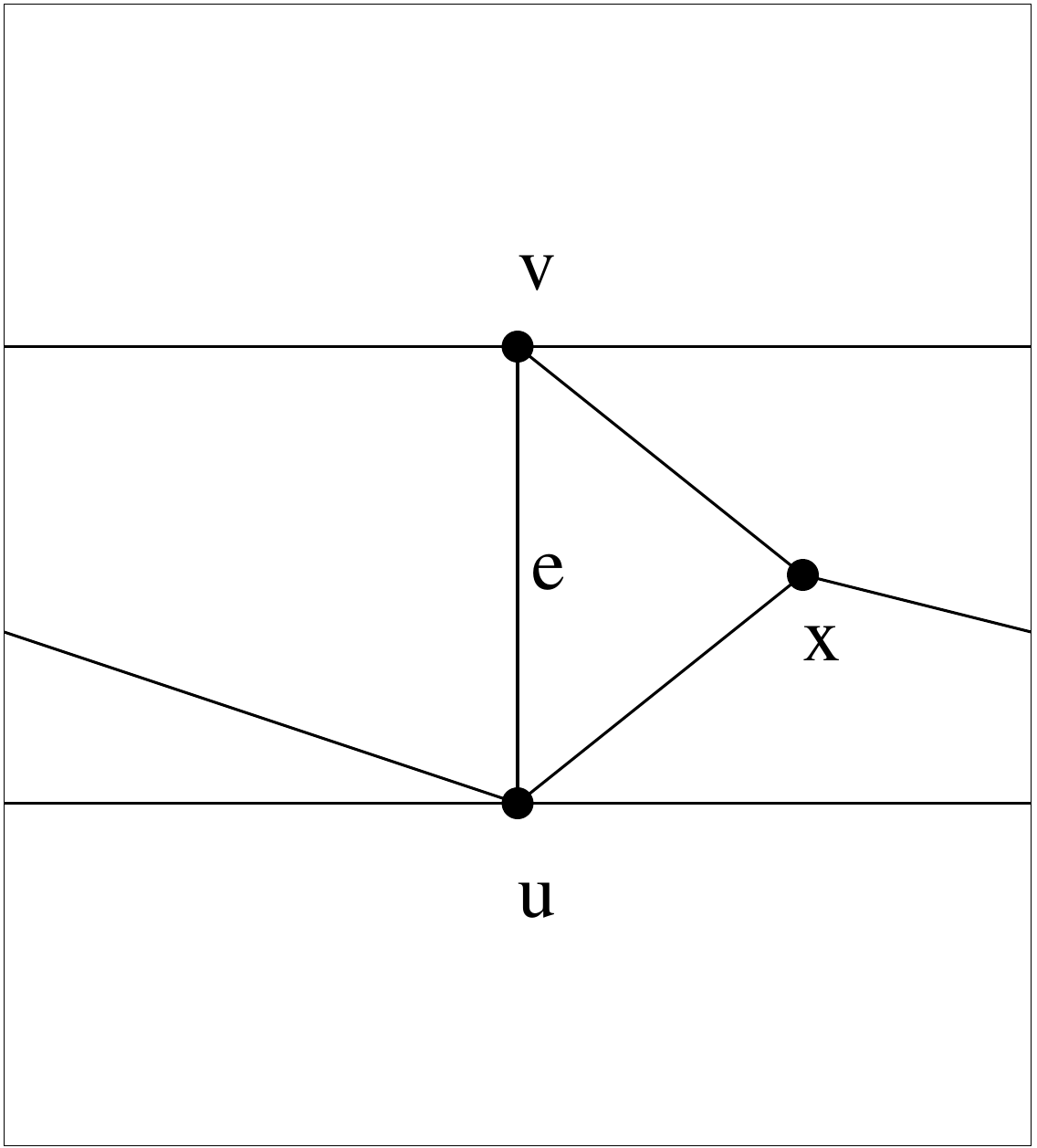} \\
&  & \\
Case a \  &\  Case b \ &\  Case c \\
\end{tabular}
\caption{Cases of Lemma~\ref{lem:contraction}.}
\label{fig:sep5diag}
\end{figure}

So $v$ has a neighbor $x$ in the strict interior of $R$.  Let $e'$ be
the edge between $v,x$. If $e'$ appears twice on a separating 5-walk
$W$, then $x$ is incident twice to $u$ to form a non-contractible
cycle of size two and $v$ is incident to a non-contractible loop
$\ell'$, as depicted on Figure~\ref{fig:sep5diag}.c. Then
$e,\ell,\ell'$ forms a separating quadrangle, a contradiction. So $e'$
is contractible.
\end{itemize}
\end{proof}

\subsection{Balanced properties and homology}

In Section~\ref{sec:char}, we have proved some properties of $\gamma$
(or $\delta$) w.r.t. homology. The obtained equalities where
conditioned by a ``modulo''. In next lemma we prove some properties of
$\gamma$ w.r.t. a basis for the homology with exact equality to obtain
a simple condition to prove that a $4$-orientation is balanced (see
Lemma~\ref{lem:gamma0all}). 

Consider an essentially 4-connected toroidal triangulation $G$ and its
angle map $A(G)$.  

\begin{lemma}
  \label{lem:gammahomology} 
  Consider a $4$-orientation of $A(G)$, a non-contractible
  cycle $C$ of $G$, given with a direction of traversal, and a
  basis for the homology $\{B_1,B_2\}$ of $G$, such that $B_1,B_2$ are
  non-contractible cycles whose intersection is a single vertex or a
  common path. If $C$ is homologous to $k_1 B_1 + k_2B_{2}$, then
  $\gamma(C)=k_1\,\gamma (B_1) + k_2\,\gamma (B_2)$.
 \end{lemma}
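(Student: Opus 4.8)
The statement is an "exact equality" version of the mod-8 relations proved earlier, so the natural route is to mimic the homology bookkeeping of Section~\ref{sec:characterization} but keep track of signs and face-coefficients honestly rather than working modulo $4$. Concretely, I would pass to the function $\gamma$ via the two flanking cycles $W_L(\cdot),W_R(\cdot)$ of $\PDC{A(G)}^*$ and the function $\delta$, using Lemma~\ref{lem:gammaequaldelta} which already gives $\gamma(C)=\delta(W_L(C))+\delta(W_R(C))$ with \emph{no} modulus. So it suffices to prove the exact linearity statement for $\delta$ on the relevant cycles of $\PDC{A(G)}^*$, i.e. that if $W_L(C)$ is homologous to $k_1 W_L(B_1)+k_2 W_L(B_2)$ then $\delta(W_L(C))=k_1\delta(W_L(B_1))+k_2\delta(W_L(B_2))$, and likewise for $W_R$. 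By bilinearity/linearity of $\delta$ (already noted in the text), $\delta$ extended to all flows is additive; what needs to be controlled is that when we write $W_L(C)=\sum_{F\in\widehat{\mc{F}^*}}\lambda_F F + k_1 W_L(B_1)+k_2 W_L(B_2)$, the face terms contribute \emph{exactly} $0$, not merely $0\bmod 4$. This is where essential $4$-connectedness enters.

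**Key steps.** First I would recall that every $F\in\widehat{\mc F^*}$ has $\delta(F)=0$ exactly (not just mod $4$): re-examining the proof of Lemma~\ref{lem:facedelta0}, primal-vertex faces give $\delta(F)=4$, dual-vertex faces give $\delta(F)=-8$, edge-vertex faces give $\delta(F)=4$ — wait, these are not literally $0$. So the face terms do \emph{not} vanish exactly, and one cannot argue naively. The correct fix is that the coefficient vector $\lambda$ is not arbitrary: since $W_L(C)$, $W_L(B_1)$, $W_L(B_2)$ are all closed walks bounding nothing extra, and one $\lambda_F$ may be set to $0$, the combination $\sum\lambda_F F$ is a $0$-homologous flow whose \emph{value under $\delta$} can be computed independently. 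Here is the real idea: $\delta$ on a $0$-homologous flow of $\PDC{A(G)}^*$ need not be $0$, but $\delta$ modulo $4$ is (Lemma~\ref{lem:facedelta0} plus linearity), and separately the \emph{sign/geometry} of the flanking-cycle construction forces the exact value. Rather than fight this, the cleaner path — and I expect this is what the authors do — is to use the \emph{unit} of the first homology group directly on $G$ (not $\PDC{A(G)}$): lift everything to the universal cover $G^\infty$, where $C$, $B_1$, $B_2$ become bi-infinite paths, and count leaving-edges by periodicity. Because $G$ is essentially $4$-connected, Lemma~\ref{lem:nodirectedcycle}-type acyclicity and the absence of separating triangles/quadrangles (Lemmas~\ref{lem:e4ciffnst}, \ref{lem:uniqueQuadrangle}) guarantee that the flanking cycles are genuine cycles and that the homology decomposition of $C$ can be realized \emph{geometrically} by cutting along disjoint copies of $B_1,B_2$, so the $\gamma$-count is additive with integer multiplicities $k_1,k_2$ and with \emph{no} correction term.

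**The main obstacle and how I would handle it.** The crux is exactly the discrepancy between "$0\bmod 4$" and "$0$ exactly". The way to close it: write $\gamma(C)-k_1\gamma(B_1)-k_2\gamma(B_2)=:N$. By Lemma~\ref{lem:gammahomology}'s mod-$8$ ancestor (Lemma~\ref{lem:gammaequaldelta} combined with Theorem~\ref{th:characterizationgamma}'s machinery, or directly Lemma~\ref{lem:basedelta}), $N\equiv 0\bmod 8$; actually one gets more — applying Lemma~\ref{lem:gammaequaldelta} to $C$ and to the $B_i$ and combining linearly, $N/2=\delta(W_L(C))-k_1\delta(W_L(B_1))-k_2\delta(W_L(B_2)) + 4(\text{something involving vertex counts})$, and the $\delta$ part is $0\bmod 4$. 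To upgrade to exact equality one bounds $|N|$: choosing $C$ and the $B_i$ as \emph{cycles} (which the hypothesis permits, and which is possible by the spanning-tree construction recalled in Section~\ref{sec:homology}) with intersection a single vertex or common path, the leaving-edges of $C$ can be put in bijection with the disjoint union of $k_1$ translated copies of the leaving-edges of $B_1$ and $k_2$ of $B_2$ inside a fundamental domain — disjointness being where "intersection is a single vertex or a common path" and essential $4$-connectedness are used to rule out the copies overlapping in a way that double-counts. This bijection makes $N=0$ on the nose. So the steps, in order, are: (1) reduce to additivity of $\delta$ on flanking cycles via Lemma~\ref{lem:gammaequaldelta}; (2) take the homology decomposition of $C$ into $B_1,B_2$ and lift to $G^\infty$; (3) realize the decomposition geometrically by disjoint translated copies, using essential $4$-connectedness (Lemmas~\ref{lem:e4ciffnst}, \ref{lem:uniqueQuadrangle}) and the single-vertex/common-path hypothesis to ensure disjointness; (4) conclude the leaving-edge count of $C$ splits exactly as $k_1\gamma(B_1)+k_2\gamma(B_2)$, with the possible mod-$8$ slack from (1) pinned to $0$ by this exact combinatorial matching. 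I expect step (3) — controlling that the translated copies of $B_1$ and $B_2$ do not interfere, so that the count is genuinely additive with no boundary correction — to be the delicate part, and it is precisely there that the essential $4$-connectedness hypothesis (which was not needed for the mod-$8$ version) is doing its work.
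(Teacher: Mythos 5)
There is a genuine gap at the decisive step. After (correctly) abandoning the $\delta$-route because the facial values in Lemma~\ref{lem:facedelta0} are $4,-8,4$ rather than $0$, you rest the whole exact equality on the claim that ``the leaving-edges of $C$ can be put in bijection with the disjoint union of $k_1$ translated copies of the leaving-edges of $B_1$ and $k_2$ of $B_2$''. No such bijection can exist in general: in a $4$-orientation every vertex of a cycle of $G$ has exactly four outgoing edges of $A(G)$, none of which lies on the cycle, so the total number of edges leaving $C$ is $4|C|$, while the corresponding total for $k_1B_1+k_2B_2$ is $4(|k_1||B_1|+|k_2||B_2|)$; these cardinalities differ in general, and only the \emph{differences} (right minus left) are comparable. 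You give no argument for why disjointness of translated copies would force additivity of these differences, and the ``pin the mod-$8$ slack to $0$'' step is never carried out. You also attribute the exactness to essential $4$-connectedness (via Lemmas~\ref{lem:e4ciffnst} and~\ref{lem:uniqueQuadrangle}); in fact the paper's proof of Lemma~\ref{lem:gammahomology} never uses it.

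For comparison, the paper does lift to the universal cover as you suggest, but the exact equality comes from two counting arguments rather than a bijection. First, it concatenates $k_1$ copies of $B_1$ and $k_2$ copies of $B_2$ into an infinite walk $B^\infty$ (the hypothesis that $B_1\cap B_2$ is a single vertex or a common path is what makes this walk a path after simplifying back-and-forth parts), places a lift $C^\infty$ of $C$ and a translate $B'^\infty$ so that $C^\infty$ separates them, and proves (Claim~\ref{cl:computegamma}) that $k_1\gamma(B_1)+k_2\gamma(B_2)=x-y$, where $x,y$ count edges of $A(G)^\infty$ leaving $B^\infty$, $B'^\infty$ into the cylinder between them; the junction corrections cancel precisely because every primal-vertex has outdegree exactly $4$. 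Second, applying Euler's formula and degree counting to the two cylinder maps bounded by $B$, $C'$ and by $C'$, $B'$ yields $x+y'=2(k+k')=x'+y$ (with $k,k'$ the lengths of $B$ and $C'$, and $x',y'$ the right/left leaving counts of $C'$), hence $x-y=x'-y'=\gamma(C)$. This region-counting mechanism, which turns boundary lengths into exact totals of leaving edges, is the ingredient your sketch is missing, and without it the gap between ``$0\bmod 8$'' and ``$0$'' is not closed.
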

\begin{proof}
  Let $v$ be a vertex in the intersection of $B_1,B_2$ such that, if
  this intersection is a common path, then $v$ is one of the
  extremities of this path and let $u$ be the other extremity.
  Consider a drawing of ${G}^\infty$ obtained by replicating a flat
  representation of ${G}$ to tile the plane.  Let $v_0$ be a copy of
  $v$ in ${G}^\infty$.  Consider the walk $W$ starting from $v_0$ and
  following $k_1$ times the edges corresponding to $B_1$ and then
  $k_2$ times the edges corresponding to $B_2$ (we are going backward
  if $k_i$ is negative). This walk ends at a copy $v_1$ of $v$.  Since
  $C$ is non-contractible we have $k_1$ or $k_2$ not equal to $0$ and
  thus $v_1$ is distinct from $v_0$.  Let $W^\infty$ be the infinite
  walk obtained by replicating $W$ (forward and backward) from $v_0$.
  Note that there might be some repetition of vertices in $W^\infty$
  if the intersection of $B_1,B_2$ is a path. But in that case, by the
  choice of $B_1,B_2$ (i.e. whose intersection is a single vertex or a common path), we have that $W^\infty$ is almost a path,
  except maybe at all the transitions from ``$k_1 {B_1}$'' to
  ``$k_2B_{2}$'', or at all the transitions from ``$k_2 {B_2}$'' to
  ``$k_1B_{1}$'', where it can go back and forth a path
  corresponding to the intersection of $B_1$ and $B_2$. The existence
  or not of such ``back and forth'' parts depends on the signs of
  $k_1, k_2$ and the way $B_1,B_2$ are going through their common
  path. Figure~\ref{fig:replicating} gives an example of this
  construction with $(k_1,k_2)=(1,1)$ and $(k_1,k_2)=(1,-1)$ when
  $B_1,B_2$ intersects on a path and are oriented the same way along
  this path as on Figure~\ref{fig:replicating0}.

\begin{figure}[!ht]
\center
\includegraphics[scale=0.3]{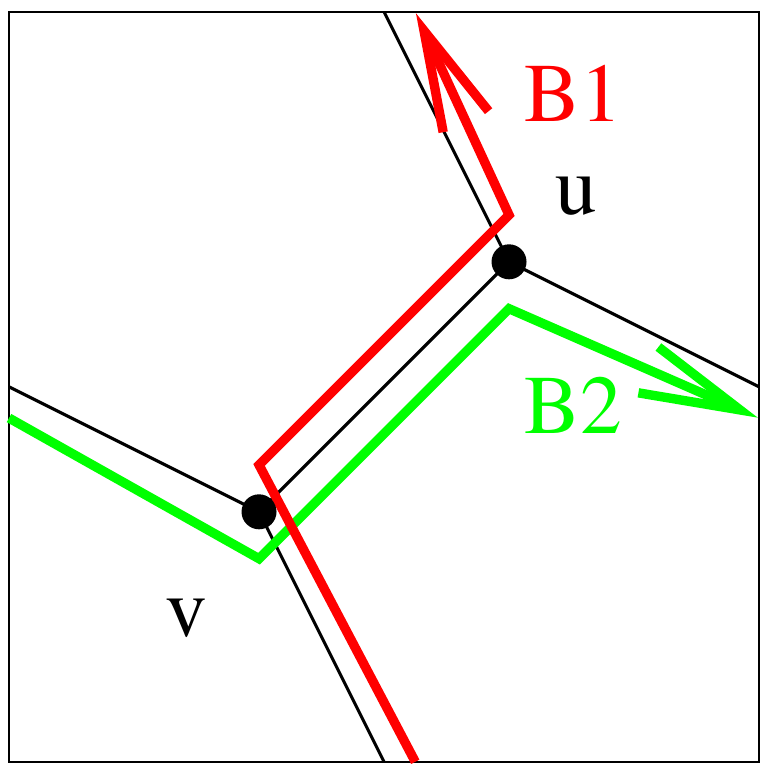}
\caption{Intersection of the basis.}
\label{fig:replicating0}
\end{figure}

\begin{figure}[!ht]
\center
\begin{tabular}{cc}
\includegraphics[scale=0.3]{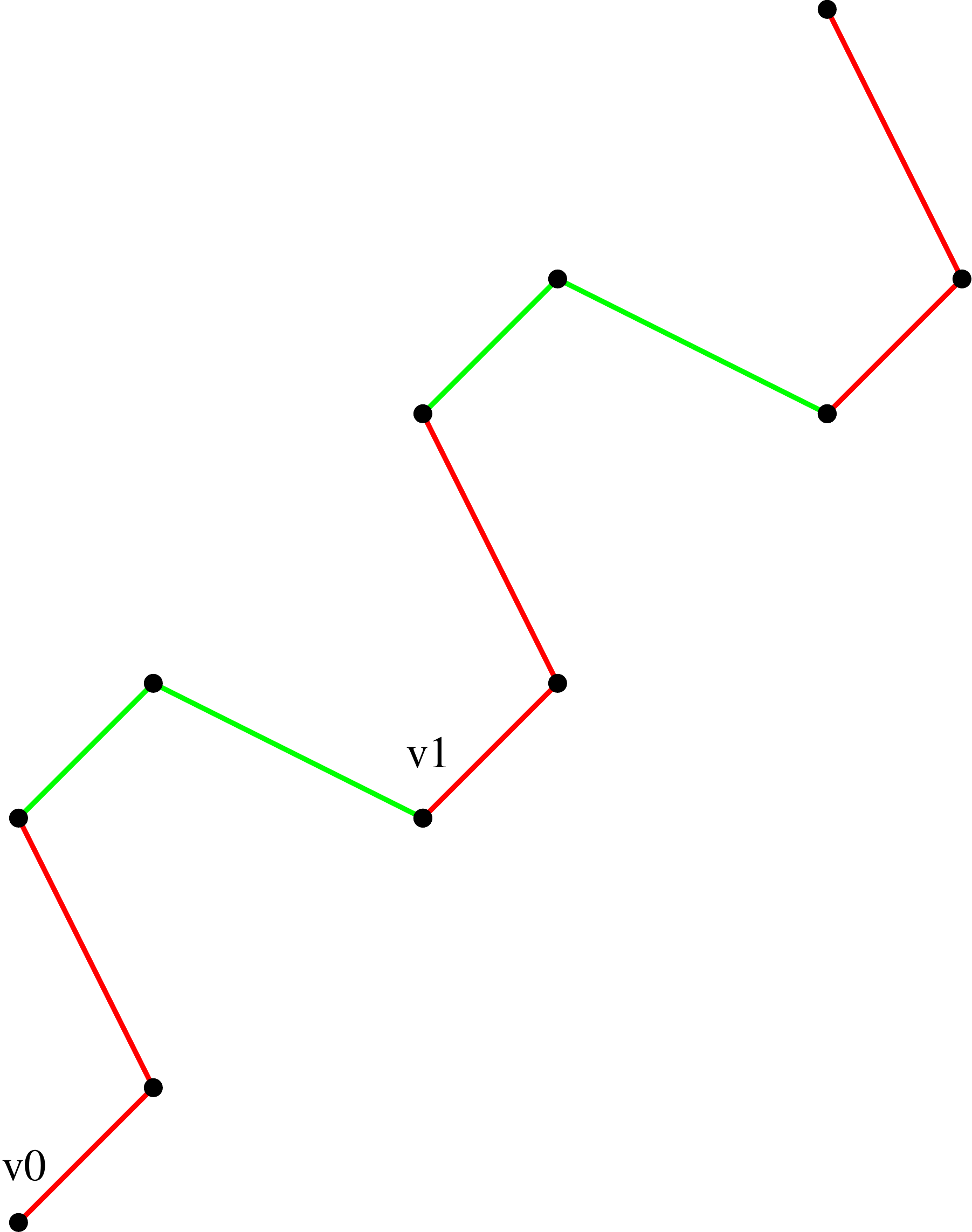} \ \ \ & \ \ \
\includegraphics[scale=0.3]{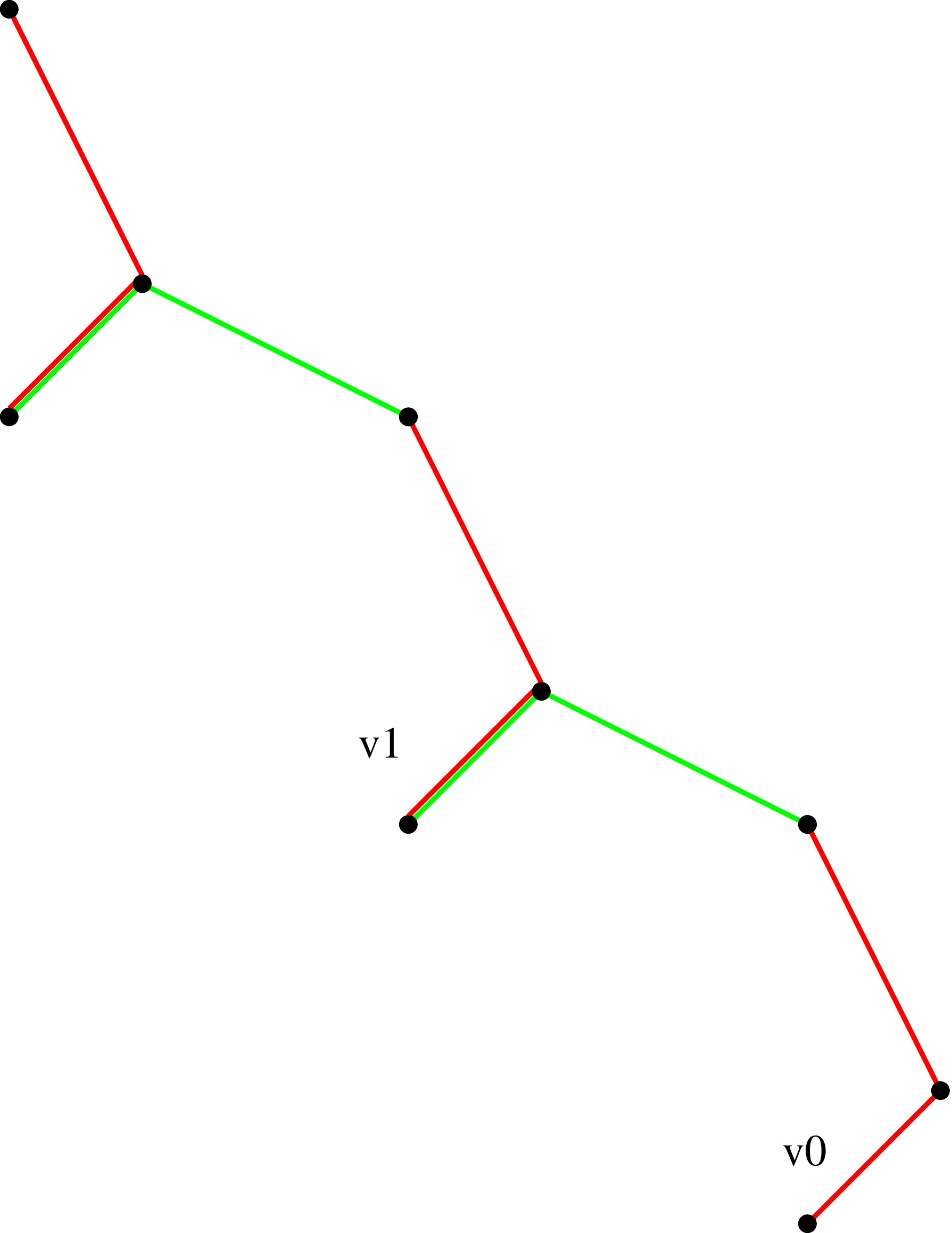} \\
$(k_1,k_2)=(1,1)$ \ \ \ & \ \ \ $(k_1,k_2)=(1,-1)$\\
\end{tabular}
\caption{Replicating ``$k_1 {B_1}$'' and ``$k_2B_{2}$'' in the universal cover.}
\label{fig:replicating}
\end{figure}

   We ``simplify'' $W^\infty$ by removing all the parts that consists
   of going back and forth along a path (if any) and call $B^\infty$
   the obtained walk that is now without repetition of vertices. By
   the choice of $v$, we have that $B^\infty$ goes through copies of
   $v$. If $v_0,v_1$ are no more a vertex along $B^\infty$, because of
   a simplification at the transition from ``$k_2 {B_2}$'' to
   ``$k_1B_{1}$'', then we replace $v_0$ and $v_1$ by the next copies
   of $v$ along $W^\infty$, i.e. at the transition from
   ``$k_1 {B_1}$'' to ``$k_2B_{2}$''.

Since ${C}$ is homologous to $k_1 {B_1} + k_{2}{B_{2}}$,
we can find an infinite path $C^\infty$, that corresponds to copies of
$C$ replicated, that does not intersect $B^\infty$ and situated on the
right side of $B^\infty$. Now we can find a copy $B'^\infty$ of
$B^\infty$, such that $C^\infty$ lies between $B^\infty$ and
$B'^\infty$ without intersecting them. Choose two copies $v'_0,v'_1$
of $v_0,v_1$ on $B'^\infty$ such that the vectors $v_0v_1$ and
$v'_0v'_1$  are equal.

Let $R_0$ be the region bounded by $B^\infty,B'^\infty$.  Let $R_1$
(resp. $R_2$) be the subregion of $R_0$ delimited by $B^\infty$ and
$C^\infty$ (resp. by $C^\infty$ and $B'^\infty$).  We consider
$R_0,R_1,R_2$ as cylinders, where part of the lines $(v_0,v'_0),(v_1,v_1')$
 are identified. Let $B,B',C'$ be the cycles of $R_0$
corresponding to $B^\infty, B'^\infty, C^\infty$ respectively.

 Let $x$ (resp. $y$) be the number of edges of $A(G)^\infty$
  leaving $B$ (resp. $B'$) in $R_0$.  Let $x'$ (resp. $y'$) be the
  number of edges of $A(G)^\infty$ leaving $C'$ on its right
  (resp. left) side in $R_0$.  We have $C'$ corresponds to exactly one copy of
$C$, so $\gamma(C)=x'-y'$. Similarly, we have $B$ and $B'$ that
almost corresponds to $k_1$ copies of $B_1$ followed by $k_2$ copies
of $B_2$, except the fact that we may have removed a back and forth
part (if any).  In any case we have the following:

\begin{claim}
\label{cl:computegamma}
  $k_1\,\gamma
(B_1) + k_2\,\gamma (B_2)=x-y$
\end{claim}

 \begin{proofclaim}
   We prove the case where the common intersection of $B_1,B_2$ is a
   path (if the intersection is a single vertex, the proof is even
   simpler). We assume, w.l.o.g, by eventually reversing one of $B_1$
   or $B_2$, that $B_1,B_2$ are oriented the same way along their
   intersection, so we are in the situation of
   Figure~\ref{fig:replicating0}.

Figure~\ref{fig:computegamma1} shows
   how to compute $k_1\,\gamma (B_1) + k_2\,\gamma (B_2)+y-x$ when
   $(k_1,k_2)=(1,1)$. Then, one can check that
   each outgoing edge of the angle graph is counted exactly the same number
   of time positively and negatively. So everything compensates and we
   obtain $k_1\,\gamma (B_1) + k_2\,\gamma (B_2)+y-x=0$.

\begin{figure}[!ht]
\center
\begin{tabular}{cccccc}
\includegraphics[scale=0.27]{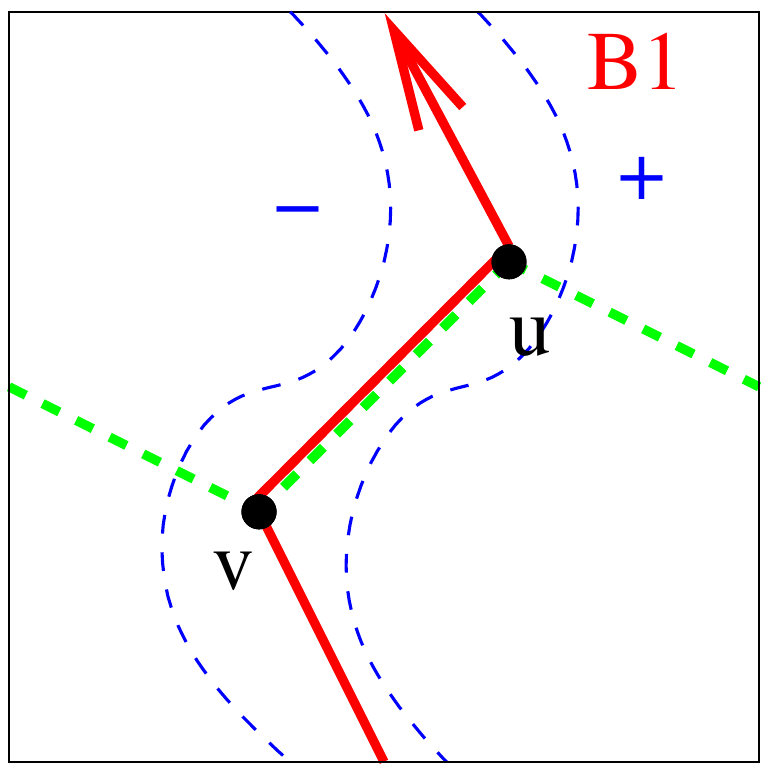}
&&\includegraphics[scale=0.27]{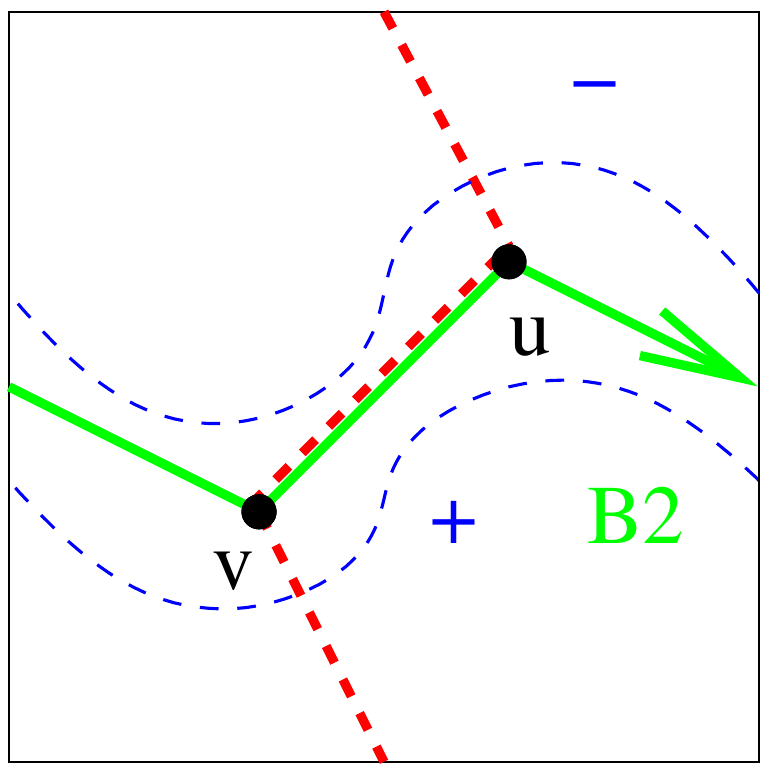}
&&\includegraphics[scale=0.27]{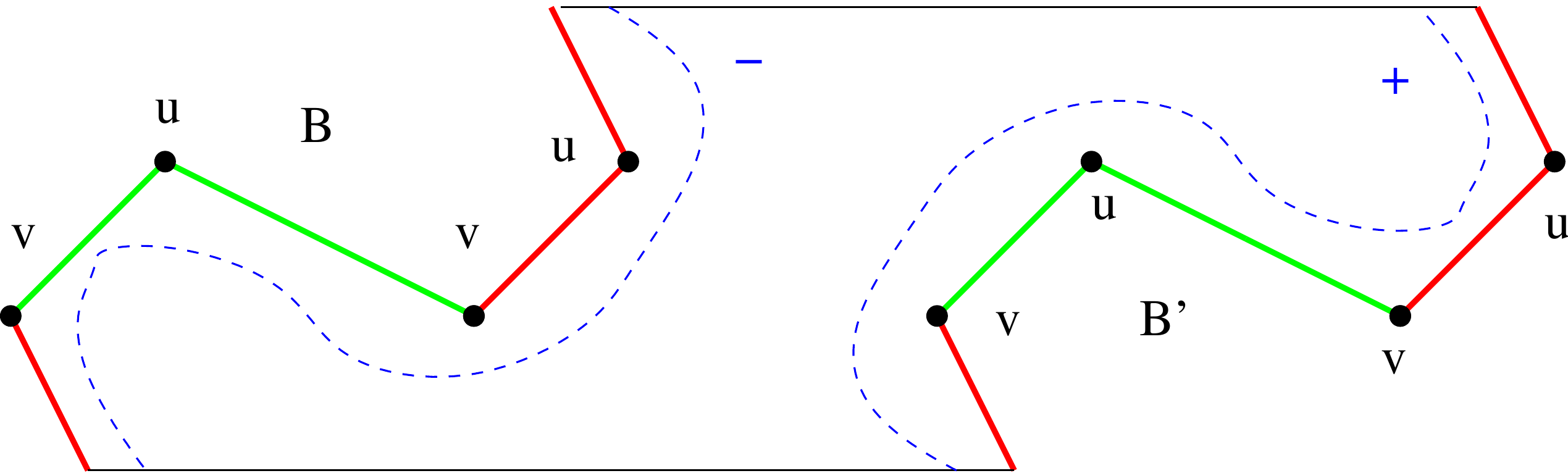} &\\
$\gamma(B_1)$&$+$ & $\gamma (B_2)$ &$+$& $(y-x)$ &$=0$
\end{tabular}
\caption{Case $(k_1,k_2)=(1,1)$.}
\label{fig:computegamma1}
\end{figure}

Figure~\ref{fig:computegamma2} shows how to compute
$k_1\,\gamma (B_1) + k_2\,\gamma (B_2)+y-x$ when
$(k_1,k_2)=(1,-1)$. As above, most of the things compensate but, in
the end, we obtain
$k_1\,\gamma (B_1) + k_2\,\gamma
(B_2)+y-x=d_{A(G)}^+(u)-d_{A(G)}^+(v)$,
as depicted on the figure.  Since the number of outgoing edges of
$A(G)$ around each vertex is equal to $4$, we have again the
conclusion $k_1\,\gamma (B_1) + k_2\,\gamma (B_2)+y-x=0$.

\begin{figure}[!ht]
\center
\begin{tabular}{ccccc}
\includegraphics[scale=0.3]{gammabasis-5}
&&\includegraphics[scale=0.3]{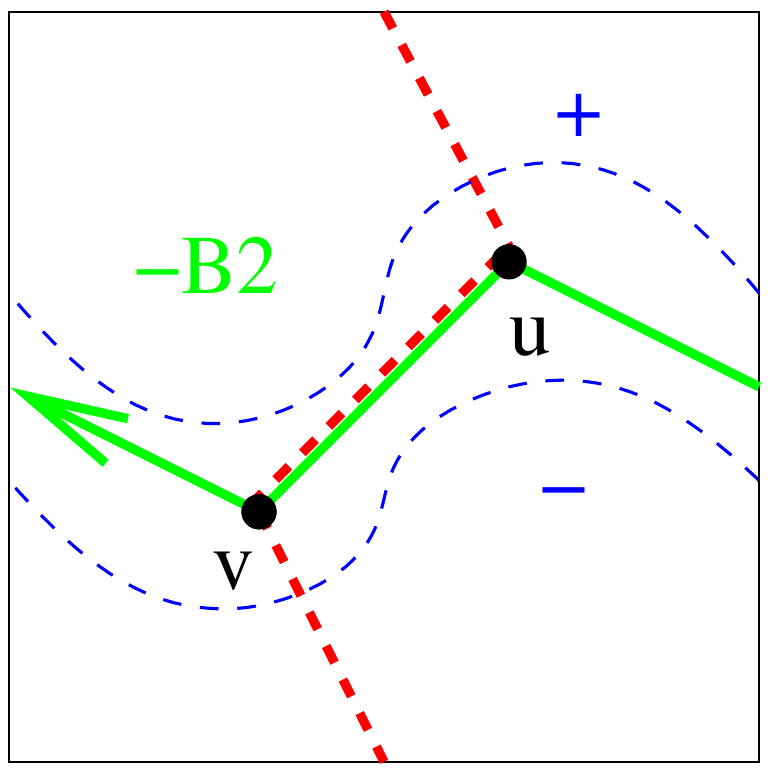}
&&\includegraphics[scale=0.3]{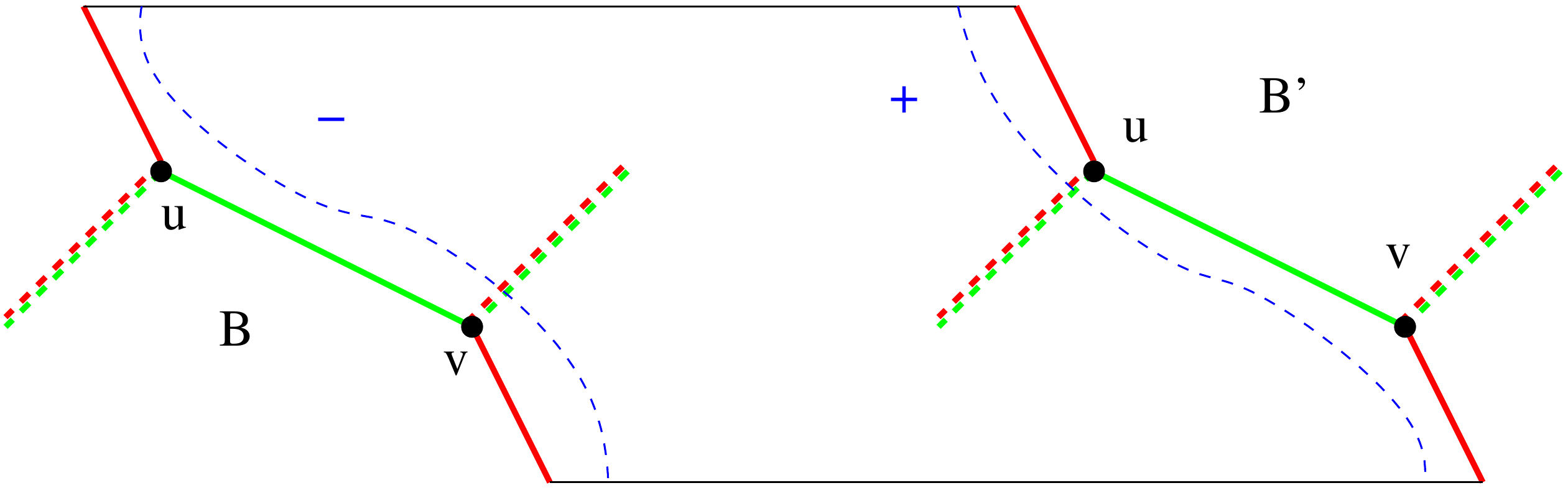}\\
$\gamma(B_1)$&$+$ & $(-\gamma (B_2))$ &$+$& $(y-x)$
\end{tabular}

\includegraphics[scale=0.3]{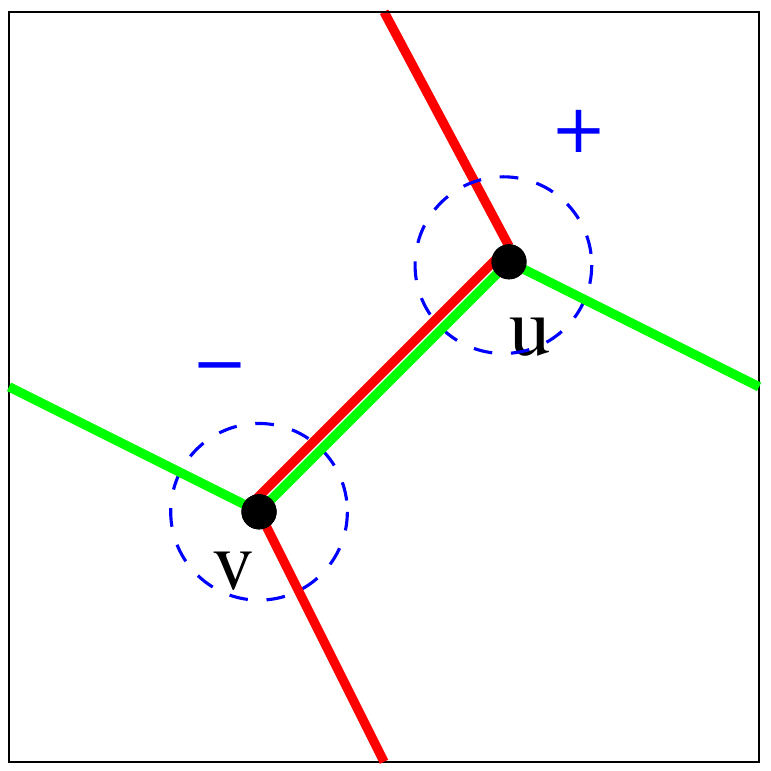}\\
$=d_{A(G)}^+(u)-d_{A(G)}^+(v)=0$
\caption{Case $(k_1,k_2)=(1,-1)$.}
\label{fig:computegamma2}
\end{figure}

One can easily be
convinced that when $|k_1|\geq 1$ and $|k_2|\geq 1$ then the same arguments
 apply. The only difference is that the red or green part of the
figures in the universal cover would be
longer (with  repetitions of $B_1$ and $B_2$). This parts being very
``clean'', they do not affect the way we compute the  equality.
Finally, if one of $k_1$ or $k_2$ is equal to zero, the analysis
is  simpler and the conclusion still holds.
 \end{proofclaim}

 For $i\in\{1,2\}$, let $H_i$ be the cylinder map made of all the
 vertices and edges of ${G}^\infty$ that are in the cylinder region
 $R_i$.  Let $k$ (resp. $k'$) be the length of $B$ (resp. $C'$).  Let
 $n_1,m_1,f_1$ be respectively the number of vertices, edges and faces
 of $H_1$.

  The number of edges of $A(G)^\infty$ in $H_1$ is equal to
  $3f_1$. Since we are considering a $4$-orientation of $A(G)$, these
  edges are decomposed into: the outgoing edges from inner vertices of
  $H_1$ (primal-vertices have outdegree $4$ in $A(G)^\infty$, so there
  are $4(n_1-k-k')$ such edges), the outgoing edges from outer
  vertices of $H_1$ (there are $x+y'$ such edges), and the outgoing
  edges from faces of $H_1$ (faces have outdegree $1$ in
  $A(G)^\infty$, so there are $f_1$ such edges). Finally, we have
  $3f_1=4(n_1-k-k')+x+y'+f_1$. Combining this with Euler's formula,
  $n_1-m_1+f_1=0$, and the fact that all the faces of $H_1$ are
  triangles, i.e. $2m_1=3f_1+k+k'$, one obtains that $x+y'=2(k+k')$.
  Similarly, by considering $H_2$, one obtain that
  $x'+y=2(k+k')$. Thus finally $x+y'=x'+y$ and thus
  $\gamma(C)=k_1\,\gamma (B_1) + k_2\,\gamma (B_2)$ by
  Claim~\ref{cl:computegamma}.
\end{proof}

Lemma~\ref{lem:gammahomology} implies the following:

\begin{lemma}
\label{lem:gamma0all}
In a 4-orientation of $A(G)$, if for two non-contractible not weakly
homologous cycles $C,C'$ of ${G}$, we have
$\gamma(C)=\gamma(C')=0$, then the $4$-orientation of $A(G)$
is balanced.
\end{lemma}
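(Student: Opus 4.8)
The plan is to deduce Lemma~\ref{lem:gamma0all} from Lemma~\ref{lem:gammahomology} by choosing a convenient homology basis adapted to the cycles $C$ and $C'$. Concretely, I would first argue that since $C$ and $C'$ are non-contractible and not weakly homologous, their homology classes $[C]$ and $[C']$ are $\mathbb{Z}$-linearly independent in $H(G)\cong\mathbb{Z}^2$. Indeed, weakly homologous means $\phi(C)-\phi(C')\in\mathbb{F}$ or $\phi(C)+\phi(C')\in\mathbb{F}$; more generally, if $[C]$ and $[C']$ were dependent there would be integers $a,b$, not both zero, with $a[C]=b[C']$, and a short case analysis (using that $C$, $C'$ are non-contractible so $[C]\neq 0\neq[C']$) forces $a=\pm b$, making them weakly homologous. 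So $\{[C],[C']\}$ spans a finite-index sublattice of $H(G)$.

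Next I would invoke the existence of a well-behaved basis. By the discussion in Section~\ref{sec:homology}, $G$ admits a homology basis consisting of two non-contractible cycles $B_1,B_2$ whose intersection is a single vertex or a common path, and for such a basis $\{[B_1],[B_2]\}$ generates all of $H(G)$. In particular we can write $[C]=k_1[B_1]+k_2[B_2]$ and $[C']=k_1'[B_1]+k_2'[B_2]$ for integers $k_i,k_i'$, and by Lemma~\ref{lem:gammahomology} we get $\gamma(C)=k_1\gamma(B_1)+k_2\gamma(B_2)$ and $\gamma(C')=k_1'\gamma(B_1)+k_2'\gamma(B_2)$. The hypotheses give two linear equations $k_1\gamma(B_1)+k_2\gamma(B_2)=0$ and $k_1'\gamma(B_1)+k_2'\gamma(B_2)=0$ in the two unknowns $\gamma(B_1),\gamma(B_2)$. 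Because $[C]$ and $[C']$ are linearly independent over $\mathbb{Z}$ (hence over $\mathbb{Q}$), the coefficient matrix $\begin{pmatrix}k_1&k_2\\k_1'&k_2'\end{pmatrix}$ is invertible over $\mathbb{Q}$, so $\gamma(B_1)=\gamma(B_2)=0$.

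Finally, to conclude that the orientation is balanced I must show $\gamma(C'')=0$ for \emph{every} non-contractible cycle $C''$ of $G$. Given any such $C''$, write $[C'']=\ell_1[B_1]+\ell_2[B_2]$; then Lemma~\ref{lem:gammahomology} yields $\gamma(C'')=\ell_1\gamma(B_1)+\ell_2\gamma(B_2)=0$. Since $C''$ was arbitrary, the definition of balanced is satisfied.

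The main obstacle I anticipate is the first step — verifying that ``not weakly homologous'' together with ``non-contractible'' really forces $\mathbb{Z}$-linear independence of the homology classes. One has to rule out dependencies of the form $a[C]=b[C']$ with $|a|\neq|b|$; this requires knowing that $H(G)\cong\mathbb{Z}^2$ is torsion-free (so primitivity arguments apply) and a careful look at what $\mathbb{F}$ (the $0$-homologous flows) contributes. A clean way is: if $a[C]=b[C']$ with $\gcd(a,b)=1$, then $[C']$ is a multiple of a primitive class and so is $[C]$, forcing $a=\pm 1=\pm b$ after dividing out common primitive factors — one should spell this out rather than wave at it. The rest of the argument is routine linear algebra over $\mathbb{Q}$ combined with a direct application of Lemma~\ref{lem:gammahomology}.
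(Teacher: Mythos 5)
Your proposal is correct and follows essentially the same route as the paper: choose the basis $\{B_1,B_2\}$ of non-contractible cycles meeting in a vertex or path, apply Lemma~\ref{lem:gammahomology} to $C$ and $C'$ to force $\gamma(B_1)=\gamma(B_2)=0$, then apply it again to an arbitrary non-contractible cycle. The only difference is packaging: you establish $\mathbb{Z}$-linear independence of $[C],[C']$ up front and invert the $2\times 2$ system, while the paper argues by contradiction assuming $\gamma(B_2)\neq 0$; both hinge on the same fact (which the paper also states without proof) that a non-contractible cycle on the torus cannot be homologous to a multiple $\neq\pm1$ of another, i.e.\ primitivity of its homology class.
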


\begin{proof}
  Consider two non-contractible not weakly homologous cycles $C,C'$ of
  ${G}$ such that $\gamma(C)=\gamma(C')=0$. Consider an homology-basis
  $\{B_1,B_2\}$ of $G$, such that $B_1,B_2$ are non-contractible
  cycles whose intersection is a single vertex or a path (see
  Section~\ref{sec:homology} for discussion on existence of such a
  basis). Let $k_1,k_2,k'_1,k'_2\in \mathbb Z$, such that $C$
  (resp. $C'$) is homologous to $k_1 B_1+k_2 B_2$ (resp.
  $k'_1 B_1+k'_2 B_2$).  Since $C$ is non-contractible we have
  $(k_1,k_2)\neq (0,0)$. By eventually exchanging $B_1,B_2$, we can
  assume, w.l.o.g., that $k_1\neq 0$.  By
  Lemma~\ref{lem:gammahomology}, we have
  $k_1 \gamma(B_1)+k_2 \gamma(B_2)=\gamma(C)=0=\gamma(C')=k'_1
  \gamma(B_1)+k'_2 \gamma(B_2)$.
  So $\gamma(B_1)=(-k_2/k_1) \gamma(B_2)$ and thus
  $(-k_2 k'_1/k_1 + k'_2)\gamma(B_2)=0$. So $k'_2=k_2 k'_1/k_1$ or
  $\gamma(B_2)=0$.  Suppose by contradiction, that
  $\gamma(B_2)\neq 0$.  Then
  $(k'_1,k'_2)= \frac{k'_1}{k_1} (k_1,k_2)$, and $C'$ is homologous to
  $\frac{k'_1}{k_1} C$.  Since $C$ and $C'$ are both non-contractible
  cycles, it is not possible that one is homologous to a multiple of
  the other, with a multiple different from $-1,1$. So $C, C'$ are
  weakly homologous, a contradiction.  So $\gamma(B_2)= 0$ and thus
  $\gamma(B_1)=0$.  Then by Lemma~\ref{lem:gammahomology}, any
  non-contractible cycle of ${G}$, have $\gamma$ equal to $0$.  Thus
  the $4$-orientation is balanced
\end{proof}

\subsection{Decontraction preserving ``balanced''}

The goal of this section is to prove the following lemma:

\begin{lemma}
  \label{lem:decontraction}
  If $G$ is a toroidal triangulation given with a non-loop edge $e$
  whose extremities are of degree at least four and such that $G/ e$
  admits a balanced transversal structure, then $G$ admits a balanced
  transversal structure.
\end{lemma}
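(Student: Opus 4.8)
The plan is to reverse the contraction operation: given the balanced transversal structure (equivalently, the balanced $4$-orientation of the angle map) of $G/e$, I want to ``blow up'' the merged vertex back into the two endpoints $u,v$ of $e$ and redistribute the edges, colors and orientations around $u$ and $v$ so that the transversal structure local property is restored at every vertex and the balanced property is preserved. First I would set up notation: let $w$ be the vertex of $G/e$ obtained by identifying $u$ and $v$, and let $e_{ux}$, $e_{uy}$, $e_{vx}$, $e_{vy}$ be the (at most) two pairs of edges that get merged by the contraction (using the notation of Figure~\ref{fig:contraction-tri}), so the cyclic sequence of edges around $w$ is obtained by concatenating the edges around $u$ and the edges around $v$, after deleting one copy of each merged pair. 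The key structural fact, following~\cite{KH97} in the planar case, is that in the transversal structure of $G/e$ the edges around $w$ split into the four monochromatic-oriented intervals (out-blue, out-red, in-blue, in-red), and the two edges $e_{wx}, e_{wy}$ — along which we must ``cut'' $w$ back into $u$ and $v$ — must be placed so that each of $u,v$ inherits a cyclic interval of the edge-rotation around $w$ that, together with the four edges newly incident to $e$ (two copies, one at $u$ one at $v$), again forms four non-empty monochromatic intervals.

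The central step is a case analysis on where the two ``cut edges'' $e_{wx}$ and $e_{wy}$ fall among the four color-orientation intervals around $w$. In each case I assign $e$ a color and orientation at its two endpoints so that the local property holds at $u$ and at $v$; the point (exactly as in the planar argument of Kant–He / Fusy for decontraction) is that the hypothesis that both endpoints of $e$ have degree at least four in $G$ guarantees that the two intervals into which $w$'s rotation is split are each ``large enough'' that after inserting the two new half-edges of $e$, all four intervals at $u$ and all four intervals at $v$ remain non-empty; degree $\ge 4$ is precisely what prevents a degenerate split where some interval would be empty. One must also check the faces of $G$ incident to $e$: the two triangular faces sharing $e$ in $G$ come from subdividing a single face or two adjacent faces of $G/e$, and one verifies, using Figure~\ref{fig:ruleface}, that the chosen color/orientation of $e$ makes these new faces legal transversal-structure faces. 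This gives a transversal structure of $G$.

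It then remains to check that the resulting transversal structure of $G$ is balanced. Here I would argue on the angle-map side, using $\gamma$. The contraction $G \mapsto G/e$ induces a natural correspondence between non-contractible cycles of $G$ and of $G/e$ (a non-contractible cycle through $e$ maps to one avoiding $w$ or passing through $w$, with the same homology class; a fixed homology basis $\{B_1,B_2\}$ of $G$ maps to one of $G/e$). By Lemma~\ref{lem:gamma0all} it suffices to exhibit two non-contractible, not weakly homologous cycles of $G$ with $\gamma = 0$. Take two such cycles $B_1, B_2$ in $G/e$ realizing $\gamma(B_i)=0$ (which exist since $G/e$ is balanced), chosen to avoid $w$ whenever possible; pull them back to $G$ avoiding $e$. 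Since the angle-map orientation near such a cycle is unchanged by the local surgery at $u,v$ (the edges leaving the cycle left/right are the same), $\gamma$ of the pulled-back cycle equals $\gamma$ of the original, hence is $0$. If the homology forces one of the $B_i$ to pass through $w$, one routes the pulled-back cycle through $e$ and checks directly, from the chosen color/orientation of $e$ and the $4$-orientation rule of Figure~\ref{fig:ruleangle}, that the contribution of the two angle-map edges newly created around $e$ to $\gamma$ vanishes (this is a small local bookkeeping, analogous to Claim~\ref{cl:computegamma}). Then Lemma~\ref{lem:gamma0all} yields that the transversal structure of $G$ is balanced.

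The main obstacle I expect is the case analysis in the middle step: correctly enumerating the positions of the two cut edges relative to the four intervals around $w$ (including the subcases where $u=v$-type identifications occur, as in Figure~\ref{fig:contraction-loop-tri}, and where the two faces incident to $e$ degenerate), and verifying in each subcase that the degree-$\ge 4$ hypothesis is exactly what is needed to keep all eight intervals (four at $u$, four at $v$) non-empty. The balanced-preservation part is comparatively routine once the transversal structure of $G$ is in hand, since $\gamma$ is insensitive to a purely local modification away from the tested cycle.
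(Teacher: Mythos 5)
Your proposal follows essentially the same route as the paper's proof: a case analysis on where $e_{wx}$ and $e_{wy}$ sit among the four intervals around $w$ (including the identified case of Figure~\ref{fig:contraction-loop-tri}), decontraction rules whose choice is governed by the degree-at-least-four hypothesis (which guarantees an edge of $G'$ in the sector $]e_{wx},e_{wy}[$), and preservation of the balanced property by replacing two non-contractible, not weakly homologous cycles of $G/e$ with $\gamma=0$ by cycles of $G$ of the same homology and same $\gamma$, then invoking Lemma~\ref{lem:gamma0all}. The approach is correct and matches the paper's argument, differing only in the level of detail of the case analysis.
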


\begin{proof} Let $G'=G/e$ and consider a balanced transversal
  structure $G'$.  We show how to extend the balanced transversal
  structure of $G'$ to a balanced transversal structure of $G$.  Let
  $u,v$ be the two extremities of $e$ and $x,y$ the two vertices of
  $G$ such that the two faces incident to $e$ are $A=u,v,x$ and
  $B=v,u,y$ in clockwise order (see Figure~\ref{fig:contraction-tri}).
  Note that $u$ and $v$ are distinct by definition of edge contraction
  but that $x$ and $y$ are not necessarily distinct, nor distinct from
  $u$ and $v$. Let $w$ be the vertex of $G'$ resulting from the
  contraction of $e$. Let $e_{wx}, e_{wy}$ be the two edges of $G'$
  represented on Figure~\ref{fig:contraction-tri} (these edges are
  identified and form a loop on
  Figure~\ref{fig:contraction-loop-tri}).

  There are different cases to consider, corresponding to the
  different possible orientations and colorings of the edges $e_{wx}$
  and $e_{wy}$ in $G'$. By symmetry, there are just three cases to
  consider for Figure~\ref{fig:contraction-tri}: edges $e_{wx}$ and
  $e_{wy}$ may be in consecutive, same or opposite intervals,
  w.r.t.~the four intervals of the local property around $w$. When
  $e_{wx}, e_{wy}$ are identified as in
  Figure~\ref{fig:contraction-loop-tri}, these ``two'' edges are
  necessarily in opposite intervals, and there is just one case to
  consider.  So only the four cases represented on the left side of
  Figure~\ref{fig:decontractrule} by case $x.0$ for
   $x\in\{a,b,c,d\}$ have to be considered.

\begin{figure}[!ht]
\center
\begin{tabular}{ccc}
  \includegraphics[scale=0.4]{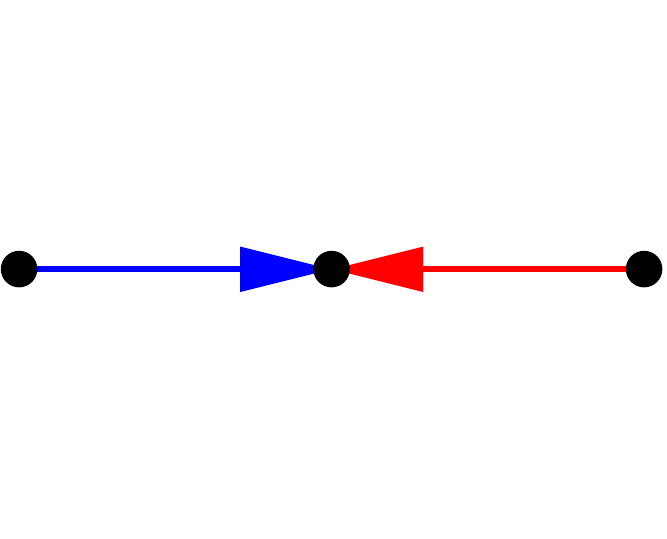} \ \ & \ \
  \includegraphics[scale=0.4]{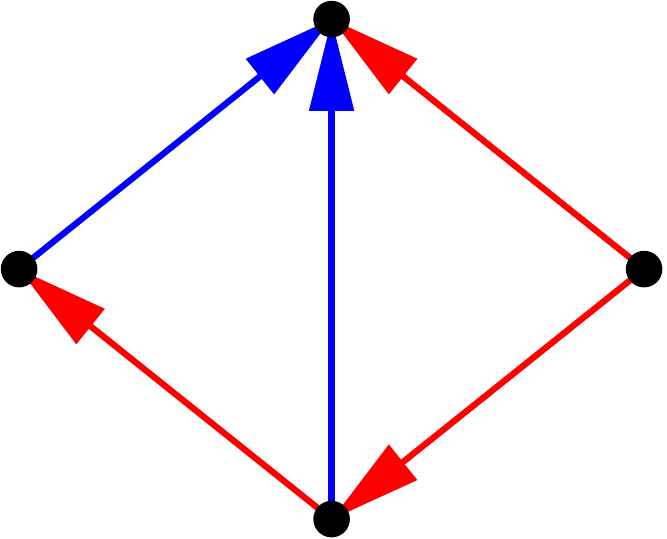} \ \ & \ \
  \includegraphics[scale=0.4]{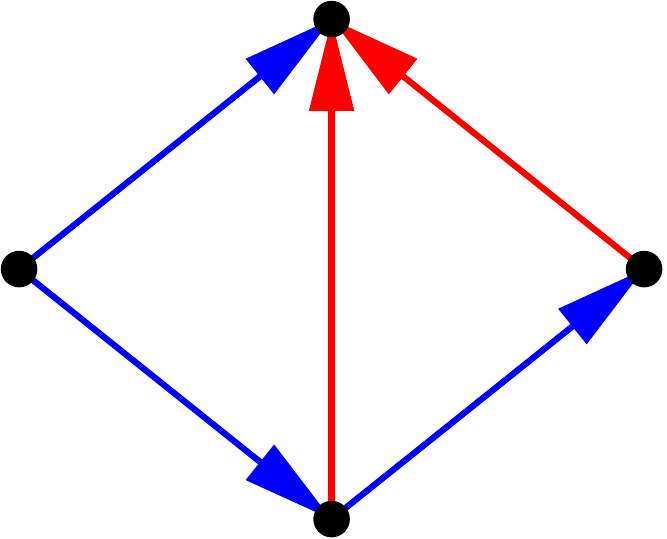} \\
a.0 \ \ & \ \ a.1 \ \ & \ \ a.2 \\ 
\\
 \includegraphics[scale=0.4]{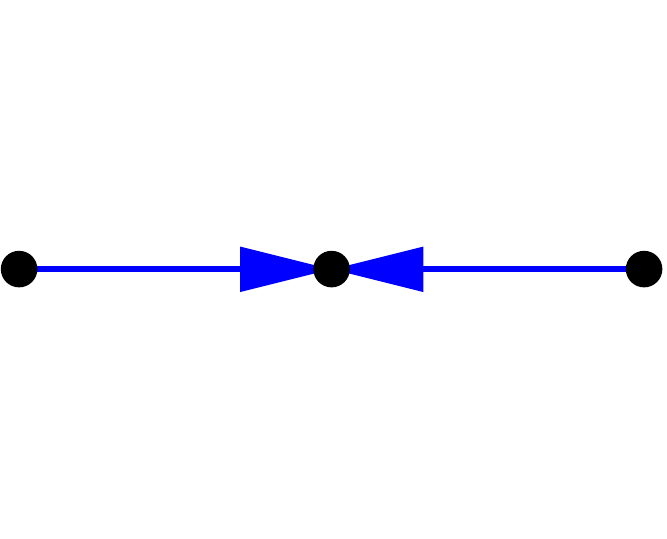} \ \ & \ \
  \includegraphics[scale=0.4]{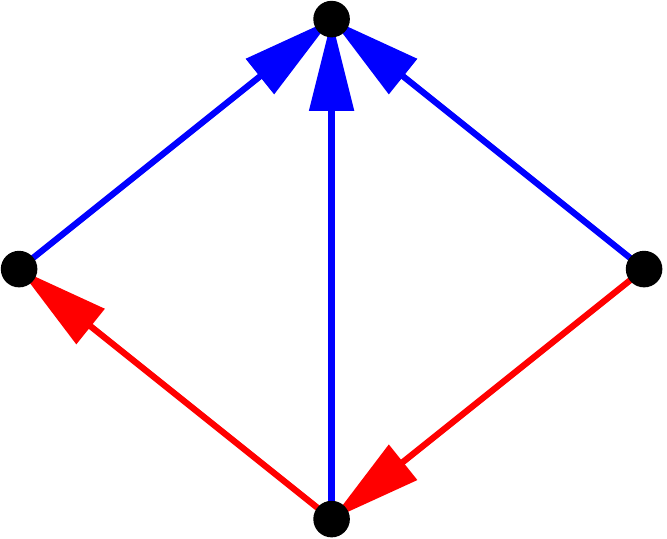} \ \ & \ \
  \includegraphics[scale=0.4]{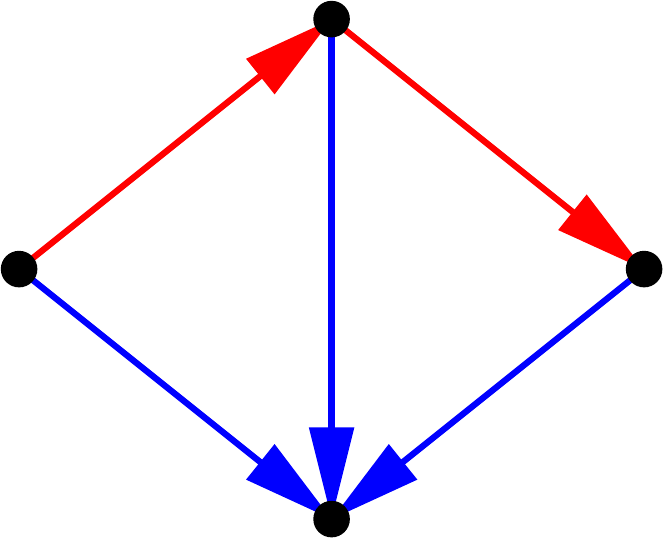} 
\\
b.0 \ \ & \ \ b.1 \ \ & \ \  b.2 \\ 
\\
  \includegraphics[scale=0.4]{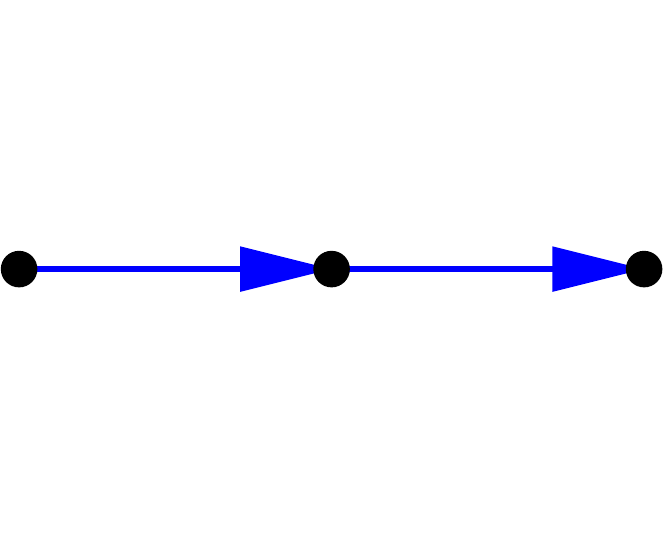} \ \ & \ \
  \includegraphics[scale=0.4]{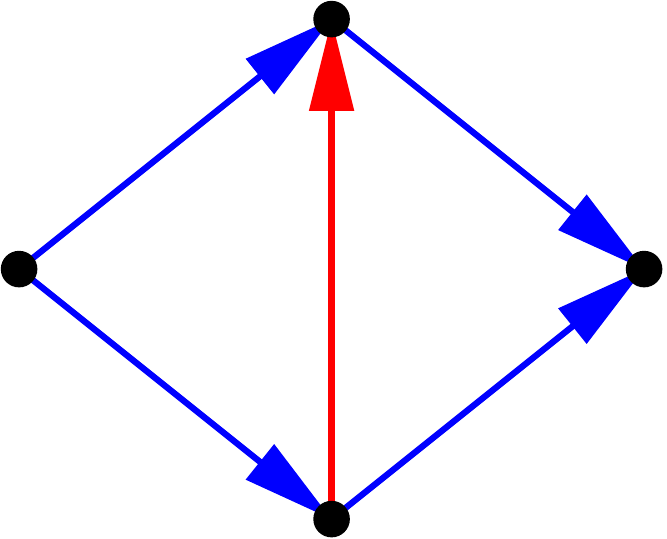} \ \ & \ \
 \\
c.0 \ \ & \ \ c.1 \ \ & \ \  \\ 
\\
  \includegraphics[scale=0.4]{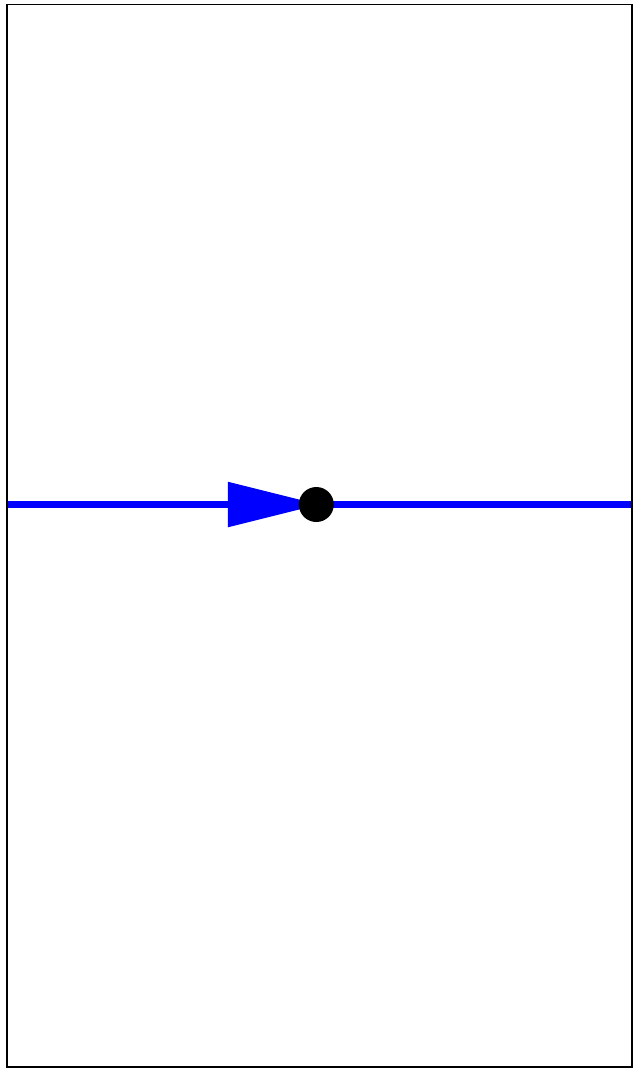} \ \ & \ \
  \includegraphics[scale=0.4]{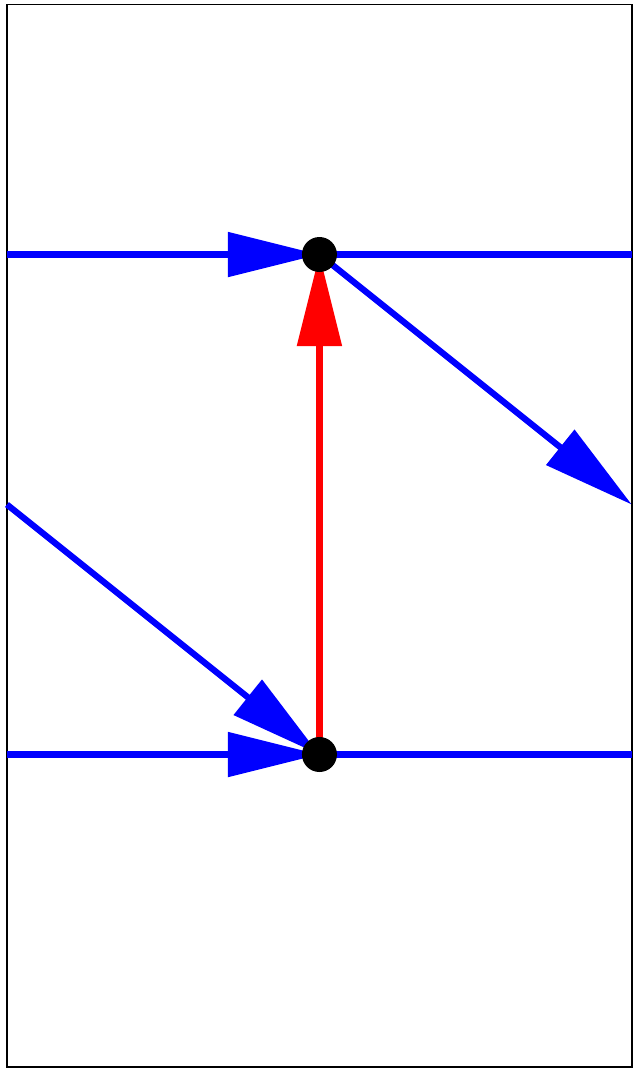} \ \ & \ \
 \\
d.0 \ \ & \ \ d.1 \ \ & \ \  \\ 
\\
\end{tabular}
\caption{Decontraction rules of the transversal structure.}
\label{fig:decontractrule}
\end{figure}

In each case $x.0$, we prove that one can color and orient the edges
of $G$ to obtain a balanced transversal structure of $G$. For that
purpose, just the edges of $G$ that are labeled on
Figures~\ref{fig:contraction-tri} and~\ref{fig:contraction-loop-tri}
have to be specified, all the other edges of $G$ keep the orientation
and coloring that they have in the balanced transversal structure of
$G'$.  For each case $x.0$, with $x\in\{a,b,c,d\}$, the orientation
and coloring of the labeled edges of $G$ are represented on the right
side of Figure~\ref{fig:decontractrule} by case $x.i$, $i\geq 1$. For
the first two cases $x\in\{a,b\}$, we might have to choose between
cases $x.1$ and $x.2$, to orient and color $G$, depending on a case
analysis explained below. For the other cases, $x\in\{c,d\}$, there is
just one coloring and orientation of $G$ to consider: case $x.1$.

The \emph{sector} $]e_1,e_2[$ of a vertex $w$, for $e_1$ and $e_2$ two edges
incident to $w$, is the counterclockwise sector of $w$ between $e_1$
and $e_2$, excluding the edges $e_1$ and $e_2$.

Let us consider the different possible orientations and
colorings of edges $e_{wx}$ and $e_{wy}$.

\begin{itemize}
\item \emph{$e_{wx}$ and $e_{wy}$ are not identified and  in
    consecutive intervals:}

  W.l.o.g., we might assume that we are in case a.0 of
  Figure~\ref{fig:decontractrule}, i.e. $e_{wx}$ (resp. $e_{wy}$) is a
  blue (resp. red) edge entering $w$.

  Since all vertices of $G$ have degree at least $4$, we have that $v$
  is incident to at least one edge in the sector $]e_{vx},e_{vy}[$ of
  $v$. So $w$ is incident to at least one edge in the sector
  $]e_{wx},e_{wy}[$ of $w$. Such an edge can be blue or red in the
  transversal structure of $G'$. Depending on if there is such a blue
  or red edge we apply coloring a.1 or a.2 to $G$, as explain below.

  W.l.o.g., we can assume that there is a blue edge incident to $w$ in
  the sector $]e_{wx},e_{wy}[$. By the local rule, this edge is
  entering $w$. Moreover the edge incident to $x$ and just after
  $e_{wx}$ in clockwise order around $x$ is entering $x$ in color
  red. So we are in the situation depicted on the left of
  Figure~\ref{fig:a-0-more}. Apply the coloring a.1 to $G$ as depicted
  on the right of Figure~\ref{fig:a-0-more}.  One can easily check
  that the local property is satisfied around every vertex of $G$ (for
  that purpose one just as to check the local property around
  $u,v,x,y$). Thus we obtain a  transversal structure of $G$.

\begin{figure}[!ht]
\center
\includegraphics[scale=0.4]{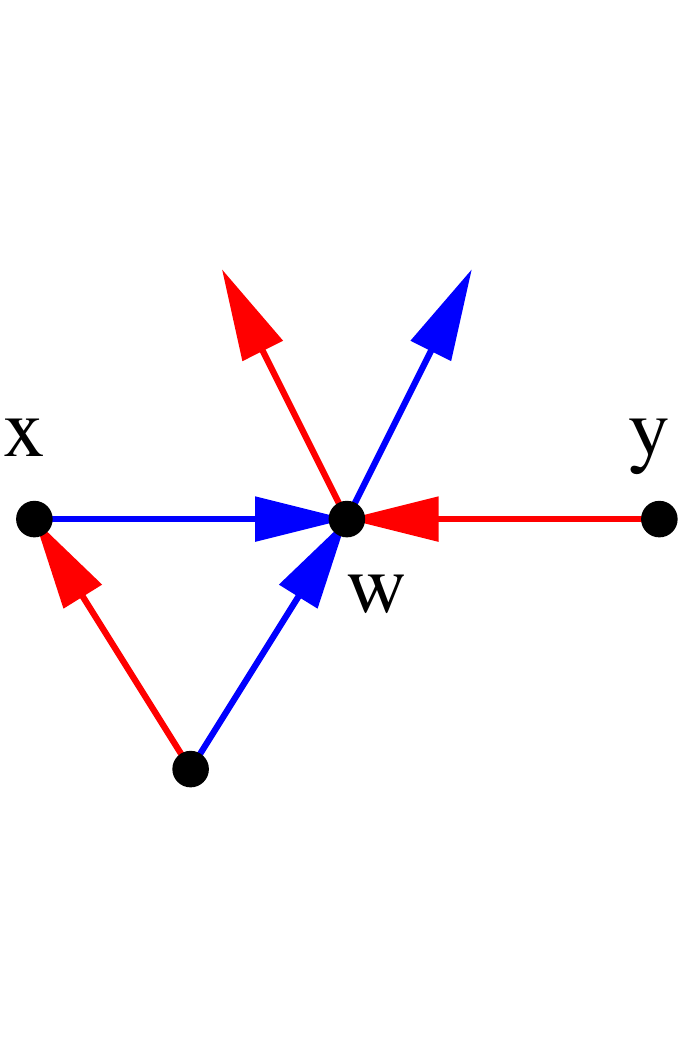} \ \ \ \ \ \
\includegraphics[scale=0.4]{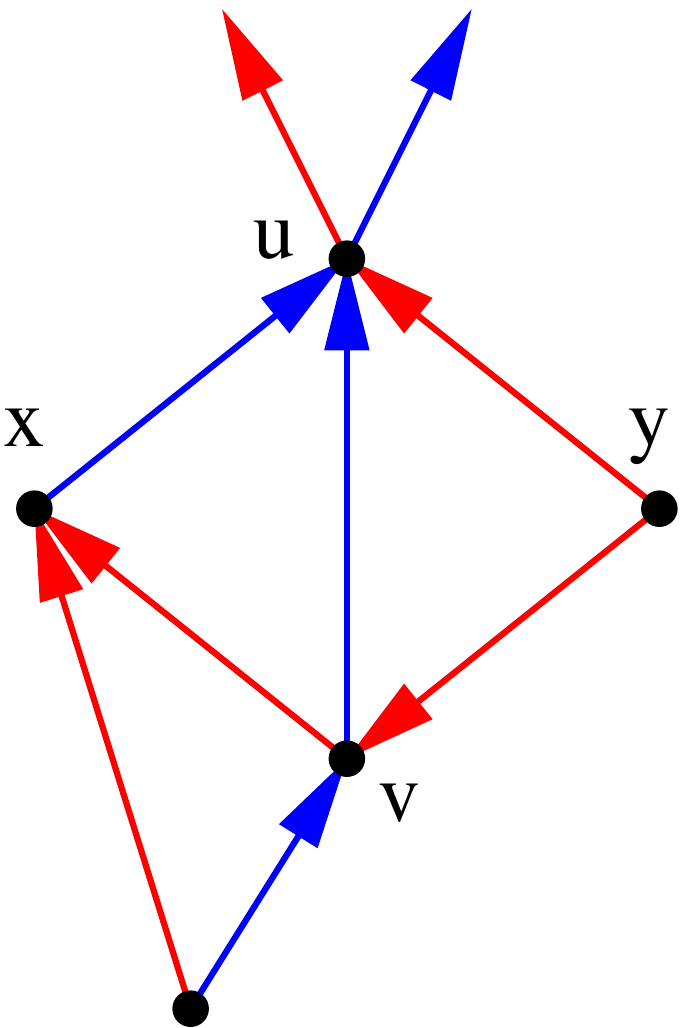}
\caption{Decontraction of case a.0 when there is a blue
  edge entering $w$ by below.}
\label{fig:a-0-more}
\end{figure}

It remains to prove that the obtained transversal structure is
balanced. For that purpose consider two non-contractible not weakly
homologous cycles $(B'_1,B'_2)$ of $G'$. Since the transversal
structure of $G'$ is balanced, we have $\gamma(B'_1)=\gamma(B'_2)=0$
by definition of balanced property. If $B'_i$ does not intersect $w$,
then it is not affected by the decontraction, let $B_i=B'_i$, so that
$B_i$ is a cycle of $G$ with same homology and same value $\gamma$ as
$B'_i$. If $B'_i$ intersects $w$, then one can consider where $B'_i$
is entering and leaving the contracted region and replace $B'_i$ by a
cycle $B_i$ of $G$ with the same homology and same value $\gamma$ as
$B'_i$. This property illustrated on Figure~\ref{fig:a-0-angle}, on
the left part we consider an example of a cycle $B'_i$ and on the
right part we give a corresponding cycle $B_i$ having the same
homology and $\gamma$ as $B'_i$. To be completely convinced that this
transformation works, one may consider all the different possibility
to enter and leave the contracted region and check that one can find a
corresponding cycle of $G$ with same $\gamma$. Finally, by this
method, we obtain two non-contractible not weakly homologous cycles
$B_1,B_2$ of $G$ with $\gamma(B_1)=\gamma(B_2)=0$, so by
Lemma~\ref{lem:gamma0all}, the obtained transversal structure is
balanced.

\begin{figure}[!ht]
\center
\includegraphics[scale=0.4]{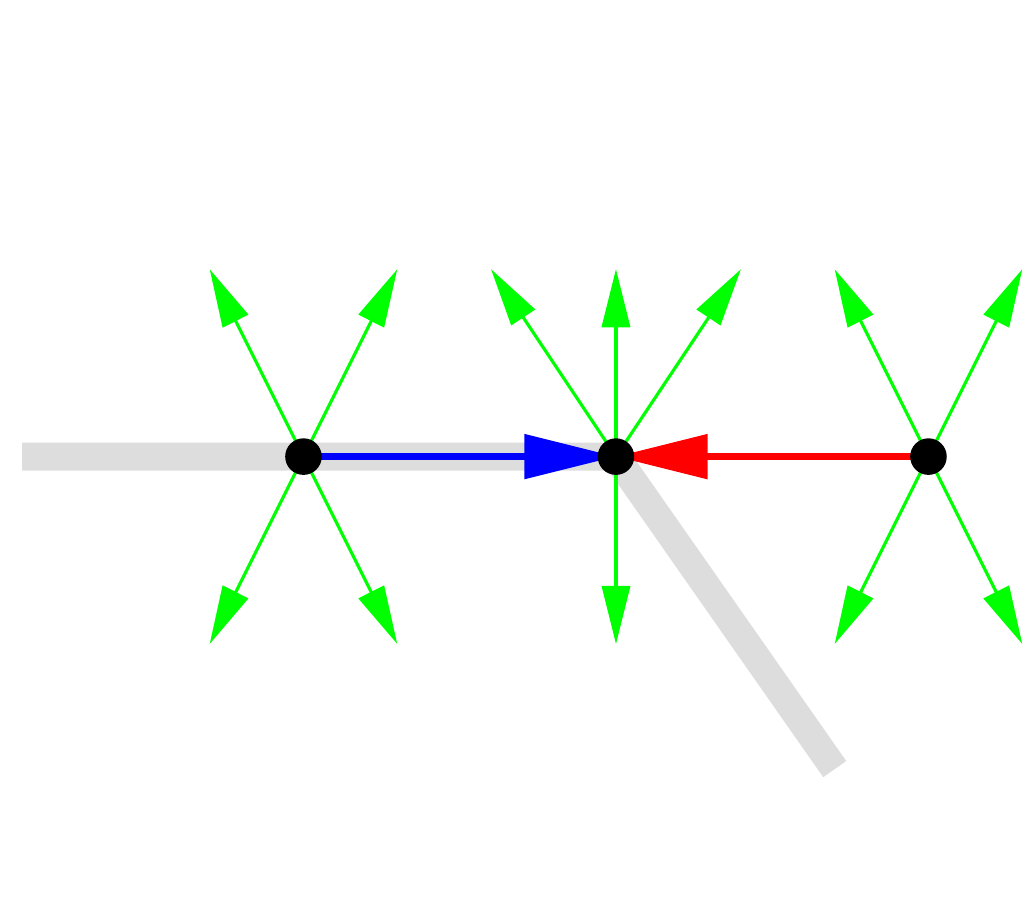} \ \ \ \ \ \
\includegraphics[scale=0.4]{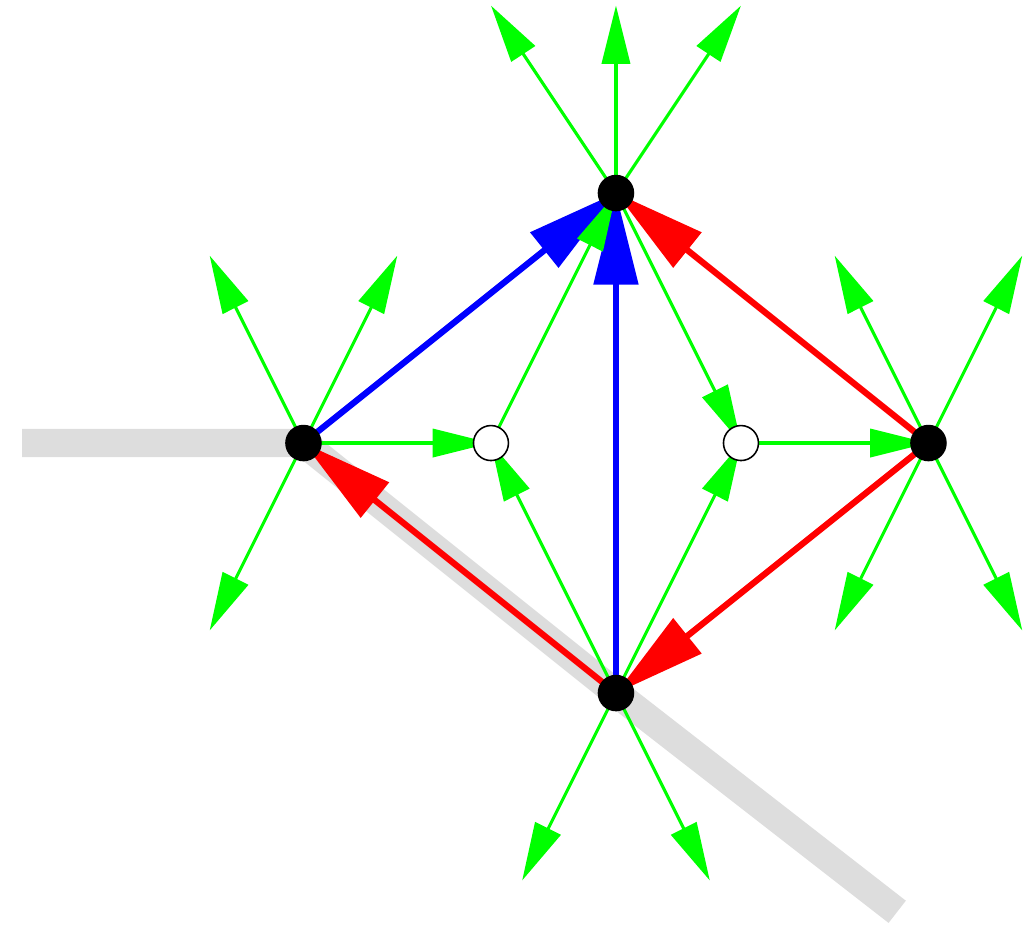}
\caption{Decontraction of case a.0 preserving $\gamma$.}
\label{fig:a-0-angle}
\end{figure}

\item \emph{$e_{wx}$ and $e_{wy}$ are not identified and  in
    the same interval:}

  W.l.o.g., we might assume that we are in case b.0 of
  Figure~\ref{fig:decontractrule}, i.e. $e_{wx}$ and $e_{wy}$ are blue
  edges entering $w$.  Depending on if the outgoing blue interval of
  $w$ is above or below $w$ we apply coloring b.1 or b.2 to $G$, as
  explain below.

  W.l.o.g., we can assume that the outgoing blue edges incident to $w$
  are above $w$. Since all vertices of $G$ have degree at least $4$,
  we have that $v$ is incident to at least one edge in the sector
  $]e_{vx},e_{vy}[$ of $v$. So $w$ is incident to at least one edge in
  the sector $]e_{wx},e_{wy}[$ of $w$. By the local rule, we are in
  the situation depicted on the left of
  Figure~\ref{fig:b-0-more}. Apply the coloring b.1 to $G$ as depicted
  on the right of Figure~\ref{fig:b-0-more}.  One can easily check
  that the local property is satisfied around every vertex of $G$ (for
  that purpose one just as to check the local property around
  $u,v,x,y$). Thus we obtain a transversal structure of $G$.

\begin{figure}[!ht]
\center
\includegraphics[scale=0.4]{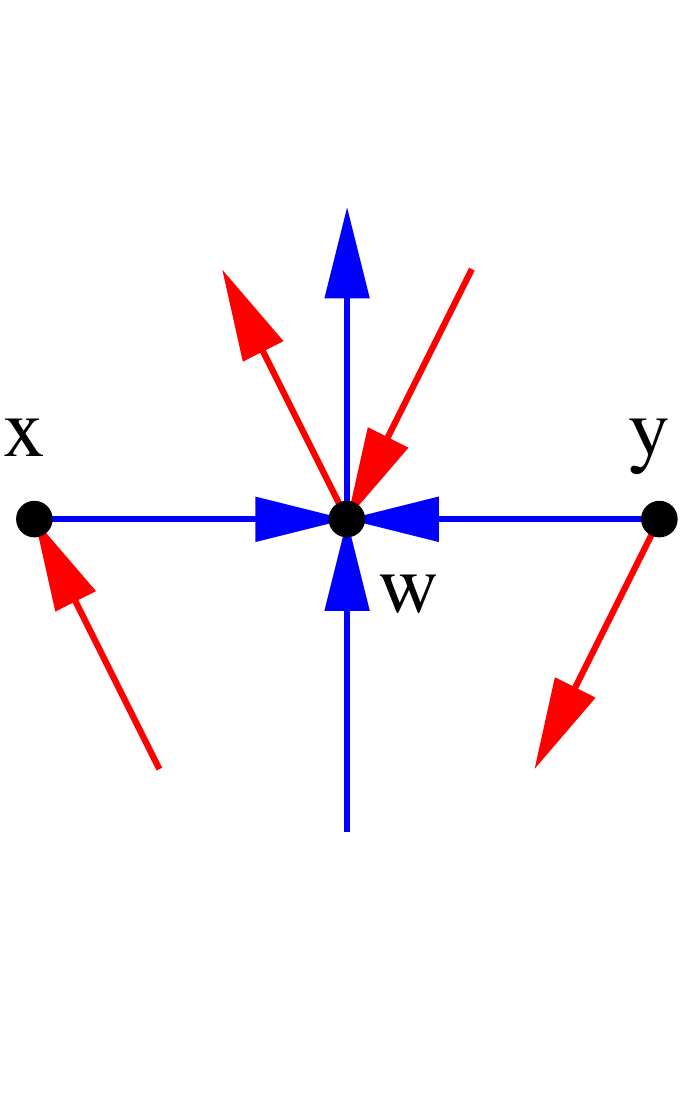} \ \ \ \ \ \
\includegraphics[scale=0.4]{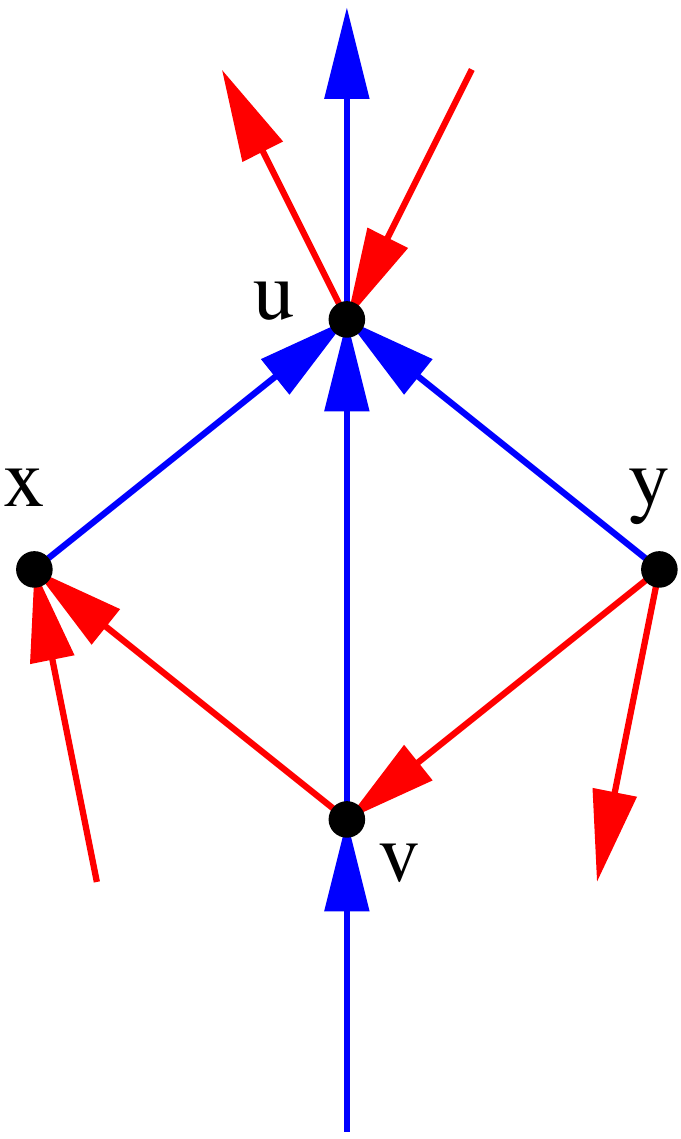}
\caption{Decontraction of case b.0 when there is a blue
  edge leaving $w$ by above.}
\label{fig:b-0-more}
\end{figure}

Similarly as the previous case the balanced property is preserved (see
Figure~\ref{fig:b-0-angle} for an example). So we obtain a balanced transversal
structure of $G$.

\begin{figure}[!ht]
\center
\includegraphics[scale=0.4]{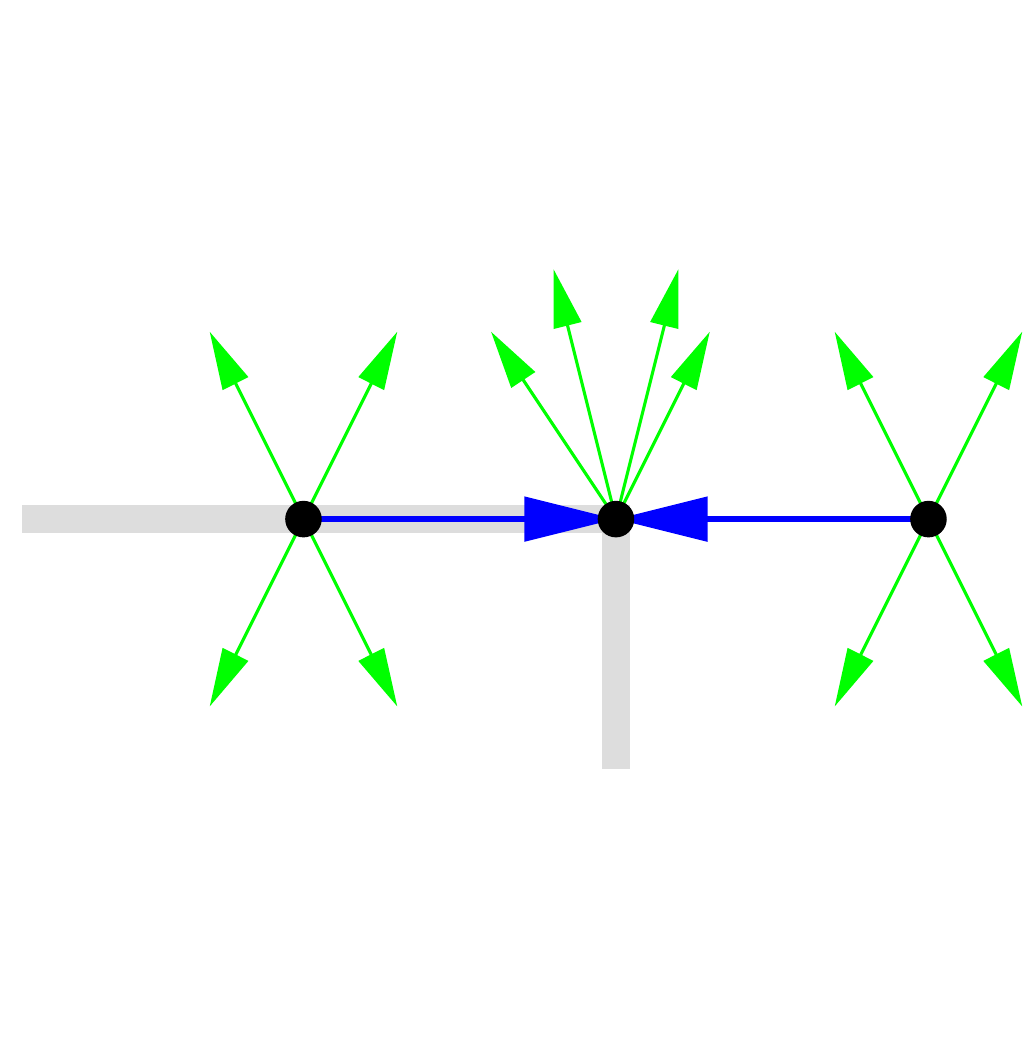} \ \ \ \ \ \
\includegraphics[scale=0.4]{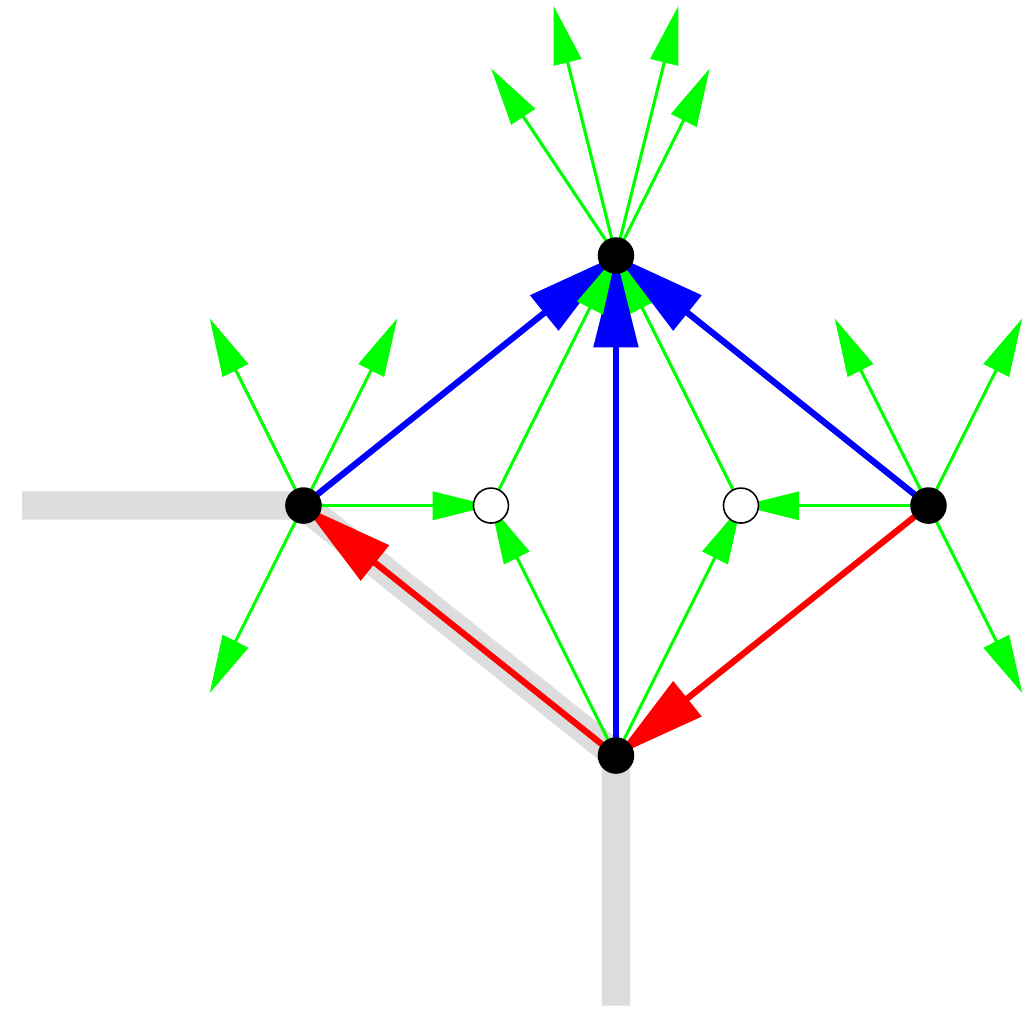}
\caption{Decontraction of case b.0 preserving $\gamma$.}
\label{fig:b-0-angle}
\end{figure}

\item \emph{$e_{wx}$ and $e_{wy}$ are not identified and  in
    opposite intervals:}

  W.l.o.g., we might assume that we are in case c.0 of
  Figure~\ref{fig:decontractrule}, i.e. $e_{wx}$ (resp. $e_{wy}$) is
  entering (resp. leaving) $w$ in color blue.  We apply coloring c.1
  to $G$. One can easily check that the local property is satisfied
  around every vertex of $G$ thus we obtain a transversal structure of
  $G$. Similarly as before the balanced property is preserved and we
  obtain a balanced transversal structure of $G$.

\item \emph{$e_{wx}$ and $e_{wy}$ are identified:}

  W.l.o.g., we might assume that we are in case d.0 of
  Figure~\ref{fig:decontractrule}.  We apply coloring d.1
  to $G$. One can easily check that the local property is satisfied
  around every vertex of $G$ thus we obtain a transversal structure of
  $G$. Similarly as before the balanced property is preserved and we
  obtain a balanced transversal structure of $G$.

\end{itemize}

For each different possible orientations and
colorings of edges $e_{wx}$ and $e_{wy}$ we are able to extend the
balanced transversal structure of $G'$ to $G$ and thus obtain the result.
\end{proof}

\subsection{Existence theorem}
\label{sec:existencemainproof}

We are now able to prove the  existence of balanced transversal
structure:

\begin{theorem}
\label{th:existence}
A toroidal triangulation admits a balanced transversal structure if
and only if it is essentially 4-connected.
\end{theorem}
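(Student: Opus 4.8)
The plan is to prove the two implications separately; the forward direction is immediate, and the converse goes by induction on the number of vertices, using edge contraction.

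For the ``only if'' direction, observe that a balanced toroidal transversal structure is in particular a toroidal transversal structure. Hence Lemma~\ref{lem:e4c} applies directly and the triangulation is essentially $4$-connected.

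For the ``if'' direction, I would induct on the number $n$ of vertices of an essentially $4$-connected toroidal triangulation $G$. The base case $n=1$ is handled by the explicit balanced transversal structure of the toroidal triangulation on one vertex displayed in Figure~\ref{fig:1vertex} (and, since a toroidal triangulation on one vertex with no contractible loop nor homotopic multiple edges is unique, this is the only base case to treat). For the inductive step, suppose $n\geq 2$. First I would record that every vertex of $G$ has degree at least $4$: since $G$ is essentially $4$-connected, $G^\infty$ is $4$-connected, hence has minimum degree at least $4$ (removing the at-most-three neighbours of a vertex of smaller degree would isolate it from the rest of the infinite graph $G^\infty$); moreover, the covering map $G^\infty\to G$ being a local homeomorphism, each vertex of $G$ has the same degree as any of its copies in $G^\infty$. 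Next, by Lemma~\ref{lem:contraction}, $G$ has a contractible edge $e$, so $G/e$ is again an essentially $4$-connected toroidal triangulation, now on $n-1$ vertices; by the induction hypothesis $G/e$ admits a balanced transversal structure. Finally, $e$ is a non-loop edge whose two extremities have degree at least $4$ in $G$, so Lemma~\ref{lem:decontraction} applies and produces a balanced transversal structure of $G$, which closes the induction.

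The genuinely hard work — Lemma~\ref{lem:contraction}, that an essentially $4$-connected toroidal triangulation on at least two vertices always has a contractible edge, and Lemma~\ref{lem:decontraction}, that a balanced transversal structure survives a decontraction whenever the contracted edge has endpoints of degree at least $4$ — is already in place, so the only point that needs care in this final assembly is verifying the degree hypothesis of Lemma~\ref{lem:decontraction}; this is exactly what the minimum-degree observation on $G^\infty$ supplies. One also relies on Lemma~\ref{lem:contraction} to guarantee that the induction never stalls before reaching $n=1$, and on Figure~\ref{fig:1vertex} to guarantee that the $n=1$ case is genuinely settled.
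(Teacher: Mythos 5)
Your proposal is correct and follows essentially the same route as the paper: the forward direction via Lemma~\ref{lem:e4c}, and the converse by contracting down to the one-vertex triangulation of Figure~\ref{fig:1vertex} using Lemma~\ref{lem:contraction} and lifting the balanced transversal structure back through Lemma~\ref{lem:decontraction}, with the degree-at-least-$4$ hypothesis justified by the $4$-connectedness of $G^\infty$. Phrasing it as an induction on $n$ rather than as a contraction sequence followed by successive decontractions is only a cosmetic difference.
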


\begin{proof}
($\Longrightarrow$) Clear by Lemma~\ref{lem:e4c}.

($\Longleftarrow$) Let $G$ be an essentially 4-connected toroidal
triangulation.  By Lemma~\ref{lem:contraction}, it can be contracted
to a map on one vertex by keeping the map an essentially 4-connected
toroidal triangulation. Since during the contraction process, the
universal cover is 4-connected, all the vertices have degree at least
$4$.  The toroidal triangulation on one vertex admits a transversal
structure. One such example is given on Figure~\ref{fig:1vertex} where
one can check that all non-contractible cycles (the three loops) have
value $\gamma=0$, and so the transversal structure is balanced. Then, by
Lemma~\ref{lem:decontraction} applied successively, one can decontract
this balanced transversal structure to obtain a balanced transversal
structure of $G$.
\end{proof}

\section{Distributive lattice of balanced 4-orientations}
\label{sec:latticemain}

\subsection{Transformations between balanced 4-orientations}

Consider an essentially 4-connected toroidal triangulation $G$ and its
angle map $A(G)$.  In~\cite[Theorem~4.7]{GKL15}, it is proved that the
set of homologous orientations of a given map on an orientable surface
carries a structure of distributive lattice. We want to use such a
result for the set of balanced $4$-orientations of $A(G)$. For that
purpose we prove in this section that balanced $4$-orientations are
homologous to each other, i.e. the set of edges that have to be
reversed to transform one balanced $4$-orientation into another is a
0-homologous oriented subgraph.

If $D,D'$ are two orientations of $A(G)$, let $D\setminus D'$ denote
the subgraph of $D$ induced by the edges that are not oriented as in
$D'$.
We have the following:

\begin{lemma}
\label{lem:balancediffhomolog}
Let $D$ be a balanced $4$-orientation of $A(G)$. An orientation $D'$ of
$A(G)$ is a balanced $4$-orientation if and only if $D,D'$ are homologous
(i.e. $D\setminus D'$ is $0$-homologous).
\end{lemma}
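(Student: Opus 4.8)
The plan is to prove the two directions separately, with the substance lying in the "only if" direction. For the "if" direction, suppose $D$ is a balanced $4$-orientation and $D'$ is homologous to $D$. First I would check that $D'$ is a $4$-orientation: reversing a $0$-homologous oriented subgraph preserves all outdegrees, because being $0$-homologous means $D\setminus D'$ (equivalently its reverse) is a $\mathbb{Z}$-combination of counterclockwise facial walks of $A(G)$, and each facial walk contributes $0$ to the net outdegree of every vertex; hence every primal-vertex keeps outdegree $4$ and every dual-vertex outdegree $1$. Then I would show $D'$ is balanced. Fix a non-contractible cycle $C$ of $G$ with a direction of traversal; I want $\gamma_{D'}(C)=0$. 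The quantity $\gamma(C)$ counts edges of $A(G)$ leaving $C$ on the right minus those leaving on the left, which by Lemma~\ref{lem:gammaequaldelta} equals $\delta(W_L(C))+\delta(W_R(C))$, and $\delta$ is linear in the orientation-dependent flow. The change $\gamma_{D'}(C)-\gamma_D(C)$ depends only on which edges of $A(G)$ crossing the cycles $W_L(C),W_R(C)$ of $\PDC{A(G)}^*$ got reversed; since $D\setminus D'$ is $0$-homologous in $A(G)$, the net signed count of reversed edges crossing any closed walk of $\PDC{A(G)}^*$ is zero — this is exactly an instance of Lemma~\ref{lm:homologous} (or its $\beta$-bilinearity), applied to the dual. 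So $\gamma_{D'}(C)=\gamma_D(C)=0$, and $D'$ is balanced.

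For the "only if" direction, suppose $D$ and $D'$ are both balanced $4$-orientations; I must show $D\setminus D'$ is $0$-homologous. By Lemma~\ref{lm:homologous} it suffices to show that $\beta(D\setminus D', F)=0$ for every counterclockwise facial walk $F$ of $(A(G))^*$ and that $\beta(D\setminus D', B_i^*)=0$ for the two elements of a homology basis of $(A(G))^*$. The facial-walk conditions hold automatically because both $D$ and $D'$ are $4$-orientations: around a primal-vertex the net outdegree is $4$ in each, around a dual-vertex it is $1$ in each, so the reversed set $D\setminus D'$ has zero net flux through every face of $A(G)$, i.e. $D\setminus D'$ is a cycle in the flow sense (divergence-free). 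It remains to handle the two homology-basis directions. Here is where I would use that the angle map $A(G)$ of a toroidal triangulation lives on the torus, so $H((A(G)))\cong\mathbb{Z}^2$; I can take as dual homology-basis cycles $W_L(B_1),W_L(B_2)$ where $\{B_1,B_2\}$ is a homology basis of $G$ by non-contractible cycles — these were shown to be cycles of $\PDC{A(G)}^*$ hence of $(A(G))^*$, and they form a basis for the homology of the torus. Then $\beta(D, W_L(B_i)) - \beta(D', W_L(B_i))$ can be read off from $\delta$ and hence from $\gamma$: by Lemma~\ref{lem:gammaequaldelta}, $\delta(W_L(B_i))$ is an affine function of $\gamma(B_i)$ with a term that depends only on the combinatorics of $B_i$ (namely $2k$ where $k$ is its length), not on the orientation. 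Since $D,D'$ are both balanced, $\gamma_D(B_i)=\gamma_{D'}(B_i)=0$ for $i=1,2$, so $\delta_D(W_L(B_i))=\delta_{D'}(W_L(B_i))$. Because the only orientation-dependent contribution to $\delta$ along $W_L(B_i)$ comes from the out-edges (the dual-edge term $2\beta(\Dual,\cdot)$ is orientation-independent in $A(G)$), equality of the $\delta$'s forces $\beta(D\setminus D', W_L(B_i))=0$. With both facial conditions and both basis conditions verified, Lemma~\ref{lm:homologous} gives that $D\setminus D'$ is $0$-homologous.

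The main obstacle I anticipate is bookkeeping the precise relationship between "$\beta(D\setminus D', \cdot)$ vanishing on a dual cycle" and "$\gamma$ or $\delta$ agreeing on the corresponding primal cycle," because $\delta$ mixes the out-edge flow with the (orientation-irrelevant in $A(G)$) dual-edge flow of the completion $\PDC{A(G)}$, and one must be careful that the dual-edge part genuinely cancels when comparing $D$ and $D'$, which only change angle-edge orientations. A clean way to avoid dragging in $\PDC{A(G)}$ at all is to work directly in $A(G)$: reinterpret $\gamma(C)$ as (twice) a flux of $D$ through $W_L(C)$ up to an orientation-independent correction term — essentially the content of Lemma~\ref{lem:gammaequaldelta} — and then the whole argument reduces to: "$D,D'$ are homologous $4$-orientations iff they agree on the two homology-basis fluxes iff $\gamma_D=\gamma_{D'}$ on a homology basis of $G$ iff (by balancedness) both are identically $0$ there." I would present it that way to keep the proof short, citing Lemma~\ref{lem:gammaequaldelta} and Lemma~\ref{lm:homologous} for the technical identities and leaving the facial-walk cancellation as the immediate consequence of both orientations being $4$-orientations.
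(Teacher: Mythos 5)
Your proof is correct and follows essentially the paper's own route: the facial conditions come from $D\setminus D'$ being Eulerian (both are $4$-orientations), the two homology-basis conditions are checked on $W_L(B_1),W_L(B_2)$ via Lemma~\ref{lm:homologous}, and the transfer between $\beta(D\setminus D',\cdot)$, $\delta$ and $\gamma$ goes through Lemma~\ref{lem:gammaequaldelta}, exactly as in the paper. The only (harmless) deviations: in the direction ``homologous $\Rightarrow$ balanced'' you apply the invariance of $\gamma$ to every non-contractible cycle, which lets you skip the paper's appeal to Lemma~\ref{lem:gamma0all}; and in the direction ``balanced $\Rightarrow$ homologous'' you use the exact relation $\delta(W_L(C))=\gamma(C)/2-4k$ (your constant ``$2k$'' should be $-4k$), which is established inside the proof of Lemma~\ref{lem:gammaequaldelta} rather than in its statement, whereas the paper reaches $\beta(D\setminus D',W_L(B_i))=0$ by combining $\beta(T,W_L(B_i))=\beta(T,W_R(B_i))$ (an Eulerian annulus argument) with $\beta(T,W_L(B_i))=-\beta(T,W_R(B_i))$.
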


\begin{proof}  
  Let $T=D\setminus D'$. Let $\Out$ (resp. $\Out'$) be the set of
  edges of $D$ (resp. $D'$) which are going from a primal-vertex to a
  dual-vertex.  Note that an edge of $T$ is either in $\Out$ or in
  $\Out'$, so $\phi(T)=\phi(\Out)- \phi(\Out')$. Consider two cycles
  $B_1,B_2$ of $G$, given with a direction of traversal, that form a
  basis for the homology. We also denote by $D,D'$ the orientation of
  $\PDC{A(G)}$ corresponding to $D,D'$. Then we denote $\gamma_D$,
  $\delta_D$, $\gamma_{D'}$, $\delta_{D'}$, the function $\gamma$ and
  $\delta$ computed in $D$ and $D'$ respectively (see terminology of
  Section~\ref{sec:char}).

  $(\Longrightarrow)$ Suppose $D'$ is a balanced $4$-orientation of
  $A(G)$. Since $D,D'$ are both $4$-orientations of $A(G)$, they have
  the same outdegree for every vertex of $A(G)$. So we have that $T$
  is Eulerian.  Let $\widehat{\mc{F}}^*$ be the set of counterclockwise facial
  walks of $\PDC{A(G)}^*$, so for any $F\in \widehat{\mc{F^*}}$, we have
  $\beta(T,F)=0$.  Moreover, for $i\in\{1,2\}$, consider the region
  $R_i$ between $W_L(B_i)$ and $W_R(B_i)$ containing $B_i$. Since $T$
  is Eulerian, it is going in and out of $R_i$ the same number of
  times. So $\beta(T,W_L(B_i)-W_R(B_i))=0$ and by linearity of
  function $\beta$ we obtain $\beta(T,W_L(B_i))=\beta(T,W_R(B_i))$.
  Since $D,D'$ are balanced, we have
  $\gamma_D(B_i)=\gamma_{D'}(B_i)=0$. So by
  Lemma~\ref{lem:gammaequaldelta},
  $\delta_D(W_L(B_i))+\delta_D(W_R(B_i))=\delta_{D'}(W_L(B_i))+\delta_{D'}(W_R(B_i))$. Thus
  $\beta(T,W_L(B_i)+W_R(B_i))=\beta(Out-Out',W_L(B_i)+W_R(B_i))=\delta_D(W_L(B_i))+\delta_D(W_R(B_i))-\delta_{D'}(W_L(B_i))-\delta_{D'}(W_R(B_i))=0$.
  By linearity of function $\beta$ we obtain
  $\beta(T,W_L(B_i)=-\beta(T,W_R(B_i))$.  By combining this with the
  above equality, we obtain $\beta(T,W_L(B_i))=\beta(T,W_R(B_i))=0$
  for $i\in\{1,2\}$.  Since $\{W_L(B_1),W_L(B_2)\}$ form a basis for the
  homology of $\PDC{A(G)}^*$, we obtain, by Lemma~\ref{lm:homologous}, that
  $T$ is 0-homologous and so $D,D'$ are homologous to each other.

  $(\Longleftarrow)$ Suppose that $D,D'$ are homologous, i.e. $T$ is
  $0$-homologous.  Then $T$ is in particular Eulerian, so $D'$ as the
  same outdegrees as $D$. So $D'$ is a $4$-orientation of $A(G)$.  By
  Lemma~\ref{lm:homologous}, for $i\in\{1,2\}$, we have
  $\beta(T,W_L(B_i))=\beta(T,W_R(B_i))=0$. Thus
  $\delta_{D}(W_L(B_i))=\beta(Out,W_L(B_i)) +2\beta(\Dual,W_L(B_i))=
  \beta(Out',W_L(B_i))+2\beta(\Dual,W_L(B_i))=\delta_{D'}(W_L(B_i))$
  and
  $\delta_{D}(W_R(B_i))=\beta(Out,W_R(B_i)) +2\beta(\Dual,W_R(B_i))=
  \beta(Out',W_R(B_i))+2\beta(\Dual,W_R(B_i))=\delta_{D'}(W_R(B_i))$.
  So by Lemma~\ref{lem:gammaequaldelta},
  $\gamma_{D}(B_i)=\delta_{D}(W_L(B_i))+\delta_{D}(W_R(B_i))=\delta_{D'}(W_L(B_i))+\delta_{D'}(W_R(B_i))=\gamma_{D'}(B_i)$. Since
  $D$ is balanced, we have $\gamma_D(B_i)=0$ and so
  $\gamma_{D'}(B_i)=0$.  Then, by Lemma~\ref{lem:gamma0all}, we have
  $D'$ is a balanced $4$-orientation of $A(G)$.
\end{proof}

\subsection{Distributive lattice of homologous orientations}
\label{sec:lattice}

Consider a partial order $\leq$ on a set $S$.  Given two elements
$x,y$ of $S$, let $m(x,y)$ (resp. $M(x,y)$) be the set of elements $z$
of $S$ such that $z\leq x$ and $z\leq y$ (resp. $z\geq x$ and
$z\geq y$).  If $m(x,y)$ (resp. $M(x,y)$) is not empty and admits a
unique maximal (resp. minimal) element, we say that $x$ and $y$ admit
a \emph{meet} (resp. a \emph{join}), noted $x\vee y$ (resp.
$x\wedge y$).  Then $(S,\leq)$ is a \emph{lattice} if any pair of
elements of $S$ admits a meet and a join. Thus in particular a lattice
has a unique maximal (resp. minimal) element.  A lattice is
\emph{distributive} if the two operators $\vee$ and $\wedge$ are
distributive on each other.

Consider an essentially 4-connected toroidal triangulation $G$ and its
angle map $A(G)$.  By Theorem~\ref{th:existence}, $G$ admits a
balanced transversal structure, thus $A(G)$ admits a balanced
$4$-orientation.  Let $D_0$ be a particular balanced $4$-orientation
of $A(G)$.  Let $\mathcal B(A(G),D_0)$ be the set of all the
orientations of $A(G)$ homologous to $D_0$.  A general result
of~\cite[Theorem~4.7]{GKL15} concerning the lattice structure of
homologous orientations implies that $\mathcal B(A(G),D_0)$ carries a
structure of a distributive lattice.

Note that by Lemma~\ref{lem:balancediffhomolog}, the set
$\mathcal B(A(G),D_0)$ is exactly the set of all balanced
$4$-orientations of $A(G)$. Thus we can simplify the notations and
denote $\mathcal B(A(G))$ the set $\mathcal B(A(G),D_0)$ as it does
not depend on the choice of $D_0$.

We give below some terminology and results from~\cite{GKL15} adapted
to our settings in order to describe the lattice properly. We need to
define an order on $\mathcal B(A(G))$ for that purpose. Fix an
arbitrary face $f_0$ of $A(G)$ and let $F_0$ be its counterclockwise
facial walk. Note that fixing a face $f_0$ of $A(G)$ corresponds to
fixing an edge $e_0$ of $G$.  Let $\mc{F}$ be the set of
counterclockwise facial walks of $A(G)$ and
$\mc{F}'=\mc{F}\setminus \{F_0\}$.  Note that
$\phi(F_0)=-\sum_{F\in\mc{F}'}\phi(F)$. Since the characteristic flows
of $\mc{F}'$ are linearly independent, any oriented subgraph of $A(G)$
has at most one representation as a combination of characteristic
flows of $\mc{F}'$. Moreover the $0$-homologous oriented subgraphs of
$A(G)$ are precisely the oriented subgraph that have such a
representation.  We say that a $0$-homologous oriented subgraph $T$ of
$A(G)$ is \emph{counterclockwise} (resp. \emph{clockwise})
w.r.t.~$f_0$ if its characteristic flow can be written as a
combination with positive (resp. negative) coefficients of
characteristic flows of $\mc{F}'$,
i.e. $\phi(T)=\sum_{F\in\mc{F}'}\lambda_F\phi(F)$, with
$\lambda\in\mathbb{N}^{|\mc{F}'|}$
(resp. $-\lambda\in\mathbb{N}^{|\mc{F}'|}$).  Given two orientations
$D,D'$, of $A(G)$ we set $D\leq_{f_0} D'$ if and only if
$D\setminus D'$ is counterclockwise.  Then we have the following
theorem:

\begin{theorem}[\cite{GKL15}]
\label{th:lattice}
$(\mathcal B(A(G)),\leq_{f_0})$ is a distributive lattice.
\end{theorem}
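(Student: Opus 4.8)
The statement to prove is Theorem~\ref{th:lattice}: that $(\mathcal B(A(G)),\leq_{f_0})$ is a distributive lattice. The strategy is simply to invoke the general result of~\cite[Theorem~4.7]{GKL15} that the set of orientations of a map on an orientable surface that are homologous to a fixed orientation $D_0$, ordered by $\leq_{f_0}$, forms a distributive lattice. All the work needed to make this applicable has already been done: Theorem~\ref{th:existence} guarantees that $A(G)$ has at least one balanced $4$-orientation $D_0$, so $\mathcal B(A(G))$ is nonempty; and Lemma~\ref{lem:balancediffhomolog} establishes that the set of orientations homologous to $D_0$ is \emph{exactly} the set of balanced $4$-orientations of $A(G)$, so the abstract object $\mathcal B(A(G),D_0)$ from~\cite{GKL15} coincides with the combinatorially meaningful set $\mathcal B(A(G))$ and is independent of the choice of $D_0$.

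\begin{proof}
By Theorem~\ref{th:existence}, the essentially $4$-connected toroidal triangulation $G$ admits a balanced transversal structure, hence $A(G)$ admits a balanced $4$-orientation; fix such a $D_0$. Consider the set $\mathcal B(A(G),D_0)$ of all orientations of $A(G)$ homologous to $D_0$, equipped with the order $\leq_{f_0}$ defined above (for the fixed face $f_0$ of $A(G)$). Since $A(G)$ is a map on the torus, which is an orientable surface, the general result~\cite[Theorem~4.7]{GKL15} on the lattice structure of homologous orientations applies directly and yields that $(\mathcal B(A(G),D_0),\leq_{f_0})$ is a distributive lattice. By Lemma~\ref{lem:balancediffhomolog}, an orientation $D'$ of $A(G)$ is a balanced $4$-orientation if and only if it is homologous to $D_0$; therefore $\mathcal B(A(G),D_0)$ is precisely the set $\mathcal B(A(G))$ of all balanced $4$-orientations of $A(G)$, and in particular this set does not depend on the choice of $D_0$. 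Hence $(\mathcal B(A(G)),\leq_{f_0})$ is a distributive lattice.
\end{proof}

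**Where the difficulty lies.** There is essentially no obstacle in the proof of this particular theorem: it is a bookkeeping statement that glues together the existence result (Theorem~\ref{th:existence}) and the homology characterization (Lemma~\ref{lem:balancediffhomolog}) with the black-box lattice theorem from~\cite{GKL15}. The real content — showing that ``balanced'' is equivalent to ``homologous to a fixed balanced orientation'', which is what makes the class of balanced $4$-orientations closed under the lattice operations — has already been dispatched in Lemma~\ref{lem:balancediffhomolog}, whose proof is the genuinely substantive one (it uses the $\gamma/\delta$ correspondence of Lemma~\ref{lem:gammaequaldelta}, the homology criterion Lemma~\ref{lm:homologous}, and Lemma~\ref{lem:gamma0all}). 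So the only thing to be careful about here is to cite the correct auxiliary results and to make explicit that the order $\leq_{f_0}$ used in the abstract theorem is the same one defined on $\mathcal B(A(G))$, and that well-definedness of the whole setup rests on the nonemptiness provided by Theorem~\ref{th:existence}.
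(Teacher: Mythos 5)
Your proposal is correct and matches the paper's treatment exactly: the paper also obtains this statement by combining the nonemptiness from Theorem~\ref{th:existence}, the identification of balanced $4$-orientations with the orientations homologous to a fixed $D_0$ via Lemma~\ref{lem:balancediffhomolog}, and the black-box distributive-lattice result \cite[Theorem~4.7]{GKL15} for homologous orientations. No gap to report.
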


To define the elementary flips that generates the lattice.  We start
by reducing the graph $A(G)$. We call an edge of $A(G)$ \emph{rigid
  with respect to $\mathcal B(A(G))$} if it has the same orientation
in all elements of $\mathcal B(A(G))$. Rigid edges do not play a role
for the structure of $\mathcal B(A(G))$. We delete them from $A(G)$
and call the obtained embedded graph the \emph{reduced angle graph},
noted $\widetilde{A(G)}$.  Note that, this graph is embedded but it is
not necessarily a map, as some faces may not be homeomorphic to open
disks. Note that if all the edges are rigid, then
$|\mathcal B(A(G))|=1$ and $\widetilde{A(G)}$ has no edges. We have
the following lemma concerning rigid edges:

\begin{lemma}[{\cite[Lemma~4.8]{GKL15}}]
\label{lem:non-rigid}
Given an edge $e$ of $A(G)$, the following are equivalent:
\begin{enumerate}
\item $e$ is non-rigid
\item  $e$ is contained in a
$0$-homologous oriented subgraph of $D_0$
\item  $e$ is contained in a
$0$-homologous oriented subgraph of any element of $\mathcal B(A(G))$
\end{enumerate}
\end{lemma}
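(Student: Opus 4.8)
The plan is to prove the cyclic chain of implications $(3)\Rightarrow(2)\Rightarrow(1)\Rightarrow(3)$, using throughout the identification (just established before the lemma) of $\mathcal{B}(A(G))$ with the set of \emph{all} orientations of $A(G)$ homologous to $D_0$, together with Lemma~\ref{lem:balancediffhomolog}. The implication $(3)\Rightarrow(2)$ is immediate, since $D_0$ is itself an element of $\mathcal{B}(A(G))$.

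For $(2)\Rightarrow(1)$ I would argue as follows. Suppose $e$ lies in a $0$-homologous oriented subgraph $T$ of $D_0$. Form the orientation $D'$ of $A(G)$ obtained from $D_0$ by reversing exactly the edges of $T$. Then, by construction, $D_0\setminus D'=T$, which is $0$-homologous, so $D_0$ and $D'$ are homologous; hence by Lemma~\ref{lem:balancediffhomolog} the orientation $D'$ is again a balanced $4$-orientation, i.e. $D'\in\mathcal{B}(A(G))$. Since $e\in T$, the edge $e$ is oriented oppositely in $D_0$ and $D'$, so $e$ is non-rigid.

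For $(1)\Rightarrow(3)$, assume $e$ is non-rigid, so there exist $D_1,D_2\in\mathcal{B}(A(G))$ in which $e$ is oriented in opposite directions. Let $D$ be an arbitrary element of $\mathcal{B}(A(G))$. The edge $e$ has one of its two possible orientations in $D$, so it agrees with exactly one of $D_1,D_2$; say it agrees with $D_j$ and disagrees with $D_k$, where $\{j,k\}=\{1,2\}$. Since $D$ and $D_k$ are both balanced $4$-orientations, Lemma~\ref{lem:balancediffhomolog} gives that $D\setminus D_k$ is a $0$-homologous oriented subgraph of $D$; and it contains $e$, because $e$ is oriented differently in $D$ and $D_k$. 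This is the desired $0$-homologous oriented subgraph of $D$, and since $D$ was arbitrary this establishes $(3)$, closing the cycle.

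I do not expect a real obstacle here beyond careful bookkeeping of orientations and of the operation $D\setminus D'$. The one point that deserves a word of care is that reversing a $0$-homologous oriented subgraph of a $4$-orientation again yields a $4$-orientation (a $0$-homologous flow is a circulation, so it enters and leaves each vertex the same number of times, leaving all outdegrees unchanged); but this is exactly what is already packaged inside Lemma~\ref{lem:balancediffhomolog} via the equivalence ``homologous $\iff$ $D\setminus D'$ is $0$-homologous'', so no separate verification is needed, and the whole argument reduces to that lemma.
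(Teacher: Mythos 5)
Your proof is correct. Note, however, that this paper does not prove the statement at all: it is imported verbatim as Lemma~4.8 of~\cite{GKL15}, where it is established for the general setting of homologous $\alpha$-orientations of a map on a surface, so there is no in-paper proof to compare against. Your argument is essentially the standard one for that general fact, rephrased through the lens of this paper: you use Lemma~\ref{lem:balancediffhomolog} (balanced $\iff$ homologous to $D_0$) as a black box so that ``element of $\mathcal B(A(G))$'' and ``orientation whose difference with $D_0$ is $0$-homologous'' become interchangeable, and then the three implications follow exactly as in the generic argument: reversing a $0$-homologous oriented subgraph of $D_0$ stays in the class and flips $e$ (giving $(2)\Rightarrow(1)$), and the difference $D\setminus D_k$ of two class members is $0$-homologous and contains $e$ whenever they disagree on $e$ (giving $(1)\Rightarrow(3)$), with $(3)\Rightarrow(2)$ trivial. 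The only point worth flagging is the one you already flag yourself: that reversing a circulation preserves outdegrees, hence $4$-orientation-ness, is indeed subsumed in Lemma~\ref{lem:balancediffhomolog}, so no gap remains; your derivation is self-contained within this paper and does not create any circularity, since Lemma~\ref{lem:balancediffhomolog} is proved independently of the rigidity lemma.
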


By Lemma~\ref{lem:non-rigid}, one can build $\widetilde{A(G)}$ by
keeping only the edges that are contained in a $0$-homologous oriented
subgraph of $D_0$.  Note that this implies that all the edges of
$\widetilde{A(G)}$ are incident to two distinct faces of
$\widetilde{A(G)}$.  Denote by $\widetilde{\mathcal{F}}$ the set of
oriented subgraphs of $\widetilde{A(G)}$ corresponding to the
boundaries of faces of $\widetilde{A(G)}$ considered counterclockwise.
Note that any $\widetilde{F}\in \widetilde{\mathcal{F}}$ is
$0$-homologous and so its characteristic flow has a unique way to be
written as a combination of characteristic flows of
$\mc{F}'$. Moreover this combination can be written
$\phi(\widetilde{F})=\sum_{F\in X_{\widetilde{F}}}\phi(F)$, for
$X_{\widetilde{F}}\subseteq\mc{F}'$. Let $\widetilde{f}_0$ be the face
of $\widetilde{A(G)}$ containing $f_0$ and $\widetilde{F}_0$ be the
element of $\widetilde{\mathcal{F}}$ corresponding to the boundary of
$\widetilde{f}_0$.  Let
$\widetilde{\mathcal{F}}'=\widetilde{\mathcal{F}}\setminus
\{\widetilde{F}_0\}$.
The elements of $\widetilde{\mathcal{F}}'$ are precisely the
elementary flips which suffice to generate the entire distributive
lattice $(\mathcal B(A(G)),\leq_{f_0})$, i.e.
the  Hasse diagram $\mathcal{H}$ of the lattice has
vertex set
$\mathcal B(A(G))$ and there is an oriented edge from $D_1$ to $D_2$ in
$\mathcal{H}$ (with $D_1\leq_{f_0}D_2$) if and only if
$D_1\setminus D_2\in \widetilde{\mathcal{F}}'$.  





Moreover, we have:

\begin{lemma}[{\cite[Proposition~4.13]{GKL15}}]
\label{lem:necessary}
For every element $ \widetilde{F}\in \widetilde{\mathcal{F}}$, there
exists $D$ in $\mathcal B(A(G))$ such that $\widetilde{F}$ is an oriented
subgraph of $D$.
\end{lemma}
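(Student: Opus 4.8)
The plan is to single out, among all balanced $4$-orientations, the one that agrees with $\widetilde{F}$ on as many edges as possible, and to show that it must in fact agree on all of them. So let $\widetilde{F}$ be the counterclockwise boundary of a reduced face $\widetilde{f}$ of $\widetilde{A(G)}$ (if $\widetilde{A(G)}$ has no edge the statement is vacuous, so assume $\widetilde{F}$ has at least one edge), and let $D^{*}\in\mathcal{B}(A(G))$ maximize the number of edges of $\widetilde{F}$ that are oriented as in $\widetilde{F}$; such a $D^{*}$ exists since $\mathcal{B}(A(G))$ is finite. I claim this maximum is $|\widetilde{F}|$, i.e. that $\widetilde{F}$ is an oriented subgraph of $D^{*}$.

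Suppose not, and pick an edge $e$ of $\widetilde{F}$ oriented the other way in $D^{*}$, i.e. with $\widetilde{f}$ on its right. Since $e$ lies in $\widetilde{A(G)}$ it is non-rigid, so by Lemma~\ref{lem:non-rigid} there is a $0$-homologous oriented subgraph $T_{0}$ of $D^{*}$ containing $e$; and, again by Lemma~\ref{lem:non-rigid}, every edge of $T_{0}$ is non-rigid, so $T_{0}$ lies in $\widetilde{A(G)}$. Write $\phi(T_{0})=\sum_{F\in\mc{F}}\lambda_{F}\,\phi(F)$ with $\lambda$ normalized so that $\min_{F}\lambda_{F}=0$. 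Since $T_{0}$ uses no edge interior to a reduced face, $\lambda$ is constant across every such edge, hence constant on the faces of $A(G)$ inside each reduced face, so $\lambda$ descends to an integer-valued function on the reduced faces of $\widetilde{A(G)}$.

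The heart of the argument is a superlevel-set decomposition of $T_{0}$. For each integer $c\ge 1$ let $X_{c}$ be the union of the reduced faces on which $\lambda\ge c$, and let $\partial X_{c}$ be its boundary oriented with $X_{c}$ on the left. Because every edge of $\widetilde{A(G)}$ borders exactly two distinct reduced faces, $\partial X_{c}$ is a genuine oriented subgraph, and it is $0$-homologous (its characteristic flow is the sum of the $\phi(F)$ over the faces $F$ lying inside $X_{c}$). A short edge-by-edge comparison of $\phi(\partial X_{c})$ with $\phi(T_{0})$ — every edge of $\partial X_{c}$ occurs in $T_{0}$ with the same orientation, and $T_{0}\subseteq D^{*}$ — shows that each $\partial X_{c}$ is actually an oriented subgraph of $D^{*}$. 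Now let $c_{0}$ be the larger of the two values of $\lambda$ on the two reduced faces bordering $e$, so that $e\in\partial X_{c_{0}}$. Since $e$ has $\widetilde{f}$ on its right in $D^{*}$ and $X_{c_{0}}$ on its left in $\partial X_{c_{0}}$, and these orientations agree, we conclude $\widetilde{f}\notin X_{c_{0}}$. Hence every edge of $\widetilde{F}$ that lies in $\partial X_{c_{0}}$ has $\widetilde{f}$ — a non-$X_{c_{0}}$ face — on its right in $D^{*}$, i.e. is oriented against $\widetilde{F}$, while the edges of $\widetilde{F}$ outside $\partial X_{c_{0}}$ are untouched when $\partial X_{c_{0}}$ is flipped. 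Flipping $\partial X_{c_{0}}$ keeps us inside $\mathcal{B}(A(G))$ by Lemma~\ref{lem:balancediffhomolog} (the result is homologous to $D^{*}$), and it strictly increases the agreement with $\widetilde{F}$, since $e\in\widetilde{F}\cap\partial X_{c_{0}}$. This contradicts the maximality of $D^{*}$, so $\widetilde{F}$ is an oriented subgraph of $D^{*}$.

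The parts glossed over are purely bookkeeping: checking that each $\partial X_{c}$ is an oriented subgraph of $D^{*}$, and checking the left/right compatibility that forces $\widetilde{f}\notin X_{c_{0}}$ and hence that all of $\widetilde{F}\cap\partial X_{c_{0}}$ is oriented against $\widetilde{F}$ in $D^{*}$. I expect the main obstacle to be keeping the orientation conventions consistent — counterclockwise around $\widetilde{f}$ versus around $X_{c_{0}}$, ``on the left'' versus ``on the right'' — since the whole contradiction hinges on the flip improving, rather than worsening, the agreement with $\widetilde{F}$. Everything else is immediate from the excerpt: that a $0$-homologous oriented subgraph of $A(G)$ uses only non-rigid edges (Lemma~\ref{lem:non-rigid}), and that flipping a $0$-homologous oriented subgraph preserves membership in $\mathcal{B}(A(G))$ (Lemma~\ref{lem:balancediffhomolog}).
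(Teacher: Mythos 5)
Your argument is correct, but it is worth pointing out that the paper does not prove this statement at all: it is imported verbatim from \cite[Proposition~4.13]{GKL15}, where it is established in the general framework of the distributive lattice of homologous orientations of a map on a surface. So what you have produced is a genuinely different (self-contained) route: an extremal argument on the element $D^{*}$ of $\mathcal B(A(G))$ maximizing agreement with $\widetilde{F}$, combined with a superlevel-set decomposition of a $0$-homologous oriented subgraph $T_{0}\ni e$ of $D^{*}$. The decomposition is sound: writing $\phi(T_{0})=\sum_{F}\lambda_{F}\phi(F)$ with $\min\lambda_{F}=0$, the coefficients $\lambda$ are indeed constant inside each face of $\widetilde{A(G)}$ (rigid edges carry zero flow of $T_{0}$), differ by at most $1$ across any edge (because $T_{0}$ is an oriented subgraph), and the level boundary $\partial X_{c_{0}}$ is an oriented subgraph of $D^{*}$ whose reversal stays in $\mathcal B(A(G))$ by Lemma~\ref{lem:balancediffhomolog}; your left/right bookkeeping forcing $\widetilde{f}\notin X_{c_{0}}$, hence that every edge of $\widetilde{F}\cap\partial X_{c_{0}}$ disagrees with $\widetilde{F}$ in $D^{*}$, is exactly right, and the flip then strictly improves agreement, giving the contradiction. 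Two small points deserve a word: (i) $\mathcal B(A(G))$ must be nonempty, which holds in this section because a balanced $4$-orientation $D_{0}$ is assumed (Theorem~\ref{th:existence}); (ii) when you say ``again by Lemma~\ref{lem:non-rigid}, every edge of $T_{0}$ is non-rigid'', the cleanest justification is not the literal statement of that lemma (whose item (3) is phrased relative to all elements) but the one-line observation that reversing $T_{0}$ in $D^{*}$ yields, by Lemma~\ref{lem:balancediffhomolog}, another element of $\mathcal B(A(G))$ in which each edge of $T_{0}$ is reversed, so none of them is rigid. With these glosses your proof is complete; what the citation to \cite{GKL15} buys the paper is that the statement holds for arbitrary $\alpha$-orientations homologous to a fixed one (no balancedness needed), whereas your argument, as written, leans on Lemma~\ref{lem:balancediffhomolog} only as a convenient certificate that flips stay in the class, and would work just as well with ``homologous to $D_{0}$'' in place of ``balanced''.
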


By Lemma~\ref{lem:necessary}, for every element
$ \widetilde{F}\in \widetilde{\mathcal{F}}'$ there exists $D$ in
$\mathcal B(A(G))$ such that $\widetilde{F}$ is an oriented subgraph of
$D$. Thus there exists $D'$ such that $\widetilde{F}=D\setminus D'$
and $D,D'$ are linked in $\mathcal{H}$.  Thus
$\widetilde{\mathcal{F}}'$ is a minimal set that generates the
lattice.

Let $D_{\max}$ (resp. $D_{\min}$) be the maximal (resp. minimal)
element of the lattice $(\mathcal B(A(G)),\leq_{f_0})$. Then we have
the following lemmas:

\begin{lemma}[{\cite[Proposition~4.14]{GKL15}}]
\label{lem:maxtilde}
  $\widetilde{F}_0$ (resp. $-\widetilde{F}_0$) is an oriented subgraph
  of $D_{\max}$ (resp. $D_{\min}$). 
\end{lemma}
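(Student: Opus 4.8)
The plan is to prove Lemma~\ref{lem:maxtilde} by exploiting the defining property of the order $\leq_{f_0}$ together with the fact that $\widetilde{\mathcal{F}}'$ generates the lattice (Theorem~\ref{th:lattice} and the discussion following it). Recall that $D_1 \leq_{f_0} D_2$ means that $D_1 \setminus D_2$ is counterclockwise w.r.t.~$f_0$, i.e.\ $\phi(D_1\setminus D_2) = \sum_{F\in\mc{F}'}\lambda_F\phi(F)$ with all $\lambda_F \geq 0$. The key observation is that the maximal element $D_{\max}$ is characterized by the property that there is no element strictly above it, equivalently no nonempty $\widetilde{F}\in\widetilde{\mathcal{F}}'$ is an oriented subgraph of $D_{\max}$ (otherwise one could flip it and move up in the lattice). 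By symmetry, at $D_{\min}$ no $-\widetilde{F}$ with $\widetilde{F}\in\widetilde{\mathcal{F}}'$ is an oriented subgraph. So I need to show that, at $D_{\max}$, the face $\widetilde{F}_0$ itself \emph{is} an oriented subgraph, and dually that $-\widetilde{F}_0$ is an oriented subgraph of $D_{\min}$.

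First I would establish the following dichotomy at $D_{\max}$: for each face $\widetilde{f}$ of $\widetilde{A(G)}$, the boundary is traversed \emph{entirely} one way in $D_{\max}$, i.e.\ either $\widetilde{F}$ or $-\widetilde{F}$ is an oriented subgraph of $D_{\max}$. This should follow from the structure of $\widetilde{A(G)}$: every edge of $\widetilde{A(G)}$ is non-rigid, hence (Lemma~\ref{lem:non-rigid}) contained in some $0$-homologous oriented subgraph; and in the reduced graph, the minimal such oriented subgraphs through an edge are exactly (unions of) face boundaries. Concretely, if some face $\widetilde{f}\neq\widetilde{f}_0$ had its boundary traversed in ``mixed'' fashion in $D_{\max}$, then a vertex on that boundary would have both boundary edges oriented in a way that, combined with the degree constraints, forces a further flip available, contradicting maximality; alternatively, one argues directly with the coefficient description. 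Once the dichotomy holds, maximality forbids any $\widetilde{F}$ (for $\widetilde{F}\in\widetilde{\mathcal{F}}'$) from being an oriented subgraph — because flipping it produces $D'$ with $D_{\max}\setminus D' = \widetilde{F}$ counterclockwise, so $D' >_{f_0} D_{\max}$, impossible. Therefore every $\widetilde{F}\in\widetilde{\mathcal{F}}'$ has $-\widetilde{F}$ as an oriented subgraph of $D_{\max}$.

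Now I would use the relation $\phi(\widetilde{F}_0) = -\sum_{\widetilde{F}\in\widetilde{\mathcal{F}}'}\phi(\widetilde{F})$, which follows from $\sum_{\widetilde{F}\in\widetilde{\mathcal{F}}}\phi(\widetilde{F}) = 0$ (every edge of $\widetilde{A(G)}$ lies on exactly two faces, once in each direction, as noted after Lemma~\ref{lem:non-rigid}). Thus an edge $e$ is traversed by $\widetilde{F}_0$ in a given direction exactly when it is traversed by exactly one $\widetilde{F}\in\widetilde{\mathcal{F}}'$ in the opposite direction (and by no other, since each edge is on two faces total). Since $-\widetilde{F}$ is an oriented subgraph of $D_{\max}$ for every $\widetilde{F}\in\widetilde{\mathcal{F}}'$, each such $e$ is oriented in $D_{\max}$ opposite to its direction in that unique $\widetilde{F}\in\widetilde{\mathcal{F}}'$, i.e.\ in its direction in $\widetilde{F}_0$. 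Hence $\widetilde{F}_0$ is an oriented subgraph of $D_{\max}$. The statement for $D_{\min}$ is entirely symmetric: reversing all orientations exchanges $D_{\max}$ and $D_{\min}$ and replaces $\leq_{f_0}$ by its opposite, turning ``$\widetilde{F}_0$ is an oriented subgraph of $D_{\max}$'' into ``$-\widetilde{F}_0$ is an oriented subgraph of $D_{\min}$''.

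The main obstacle I anticipate is the dichotomy claim — proving cleanly that at the extreme elements every face boundary of the reduced graph is traversed consistently. In the planar/$\alpha$-orientation literature this is a standard fact about the minimal (resp.\ maximal) element: it has no clockwise (resp.\ counterclockwise) essential cycle, and in the reduced graph the ``essential cycles'' one can still flip are precisely the face boundaries $\widetilde{F}\in\widetilde{\mathcal{F}}'$. I would either cite the corresponding statement from \cite{GKL15} directly (the minimal element of the lattice of homologous orientations is flip-minimal, which is essentially the definition of $D_{\min}$) and deduce the dichotomy from the generation result, or give the short local argument: if a face of $\widetilde{A(G)}$ is traversed in mixed fashion by $D_{\max}$, track the coefficients $\lambda_F$ in the representation of $D_{\max}\setminus D_{\min}$ and show a face can be added, contradicting maximality. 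Everything else is bookkeeping with the identity $\sum_{\widetilde{F}\in\widetilde{\mathcal{F}}}\phi(\widetilde{F})=0$.
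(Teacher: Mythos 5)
The paper does not actually prove this lemma — it imports it verbatim from \cite{GKL15} (Proposition~4.14) — so the only question is whether your argument stands on its own, and it does not: the ``dichotomy'' you build everything on is not just unproven, it is false in general. Maximality only gives you (via the Hasse-diagram description, or Lemma~\ref{prop:maximal}) that no $\widetilde{F}\in\widetilde{\mathcal{F}}'$ is ccw-directed in $D_{\max}$; it does not force each face boundary of $\widetilde{A(G)}$ to be a directed cycle, and a face can perfectly well be ``mixed'' in $D_{\max}$. Concretely, suppose the lattice contains a cover chain $D\lessdot D'\lessdot D_{\max}$ in which the first flipped face $\widetilde{F}_1$ and the second flipped face $\widetilde{F}_2$ share an edge — exactly the configuration that forces the two flips to be ordered rather than to form a diamond. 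After the second flip, the edges of $\widetilde{F}_1$ shared with $\widetilde{F}_2$ are again aligned with $\widetilde{F}_1$'s ccw traversal while its private edges are not, so $\widetilde{F}_1$ is mixed in $D_{\max}$. Since your deduction that $-\widetilde{F}$ is an oriented subgraph of $D_{\max}$ for \emph{every} $\widetilde{F}\in\widetilde{\mathcal{F}}'$, and the subsequent bookkeeping with $\sum_{\widetilde{F}\in\widetilde{\mathcal{F}}}\phi(\widetilde{F})=0$, rest entirely on the dichotomy, the proof collapses there; note that what you actually need is edge-by-edge control along $\widetilde{F}_0$, and ``the neighbouring face is not ccw-directed'' says nothing about the particular shared edge. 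Your fallback of citing \cite{GKL15} for this step is circular, since the fact you need is essentially Proposition~4.14 itself; the ingredients that are available (Lemma~\ref{lem:necessary}, Lemma~\ref{prop:maximal}) do not imply the dichotomy.

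Two further points. The closing symmetry is also not available as stated: reversing all edges of a $4$-orientation is not a $4$-orientation (primal outdegrees become $\deg-4$, dual outdegrees $2$), so reversal does not exchange $D_{\max}$ and $D_{\min}$; you must run the mirrored argument at $D_{\min}$ directly. A repair that avoids the dichotomy altogether: by Lemma~\ref{lem:necessary} there is $D\in\mathcal{B}(A(G))$ having $\widetilde{F}_0$ as an oriented subgraph; follow any ascending chain of elementary flips from $D$ to $D_{\max}$. Each up-flip reverses a ccw-directed $\widetilde{F}'\in\widetilde{\mathcal{F}}'$, and as long as $\widetilde{F}_0$ is ccw-directed no such $\widetilde{F}'$ can share an edge with it (a shared edge is traversed oppositely by the two ccw boundaries, so it cannot be aligned with both), hence the flips never touch $\widetilde{F}_0$ and it survives into $D_{\max}$. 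For $D_{\min}$, first reverse $\widetilde{F}_0$ in $D$ (this stays in $\mathcal{B}(A(G))$ because $\widetilde{F}_0$ is $0$-homologous) and propagate downwards the same way; this is the propagation mechanism the paper itself uses inside the proof of Lemma~\ref{lem:maxdiskroot}.
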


\begin{lemma}[{\cite[Proposition~4.15]{GKL15}}]
\label{prop:maximal}
$D_{\max}$ (resp. $D_{\min}$) contains no counterclockwise
(resp. clockwise) non-empty $0$-homologous oriented subgraph w.r.t.~$f_0$.
\end{lemma}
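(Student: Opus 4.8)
The plan is a short argument by contradiction: if $D_{\max}$ contained a non-empty counterclockwise $0$-homologous oriented subgraph, then reversing it would produce an element of $\mathcal{B}(A(G))$ lying above $D_{\max}$ in the order $\leq_{f_0}$ and distinct from it, contradicting that $D_{\max}$ is the greatest element of the lattice $(\mathcal{B}(A(G)),\leq_{f_0})$ (and symmetrically for $D_{\min}$).

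So suppose $T$ is a non-empty counterclockwise $0$-homologous oriented subgraph of $D_{\max}$ (w.r.t.\ $f_0$), and let $D'$ be the orientation of $A(G)$ obtained from $D_{\max}$ by reversing every edge of $T$. The edges on which $D_{\max}$ and $D'$ disagree are exactly the edges of $T$, oriented in $D_{\max}$ as in $T$; hence $D_{\max}\setminus D' = T$, which is $0$-homologous. Since $D_{\max}$ is a balanced $4$-orientation (it lies in $\mathcal{B}(A(G))$), Lemma~\ref{lem:balancediffhomolog} gives that $D'$ is a balanced $4$-orientation as well, i.e.\ $D'\in\mathcal{B}(A(G))$. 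Now $T$ is non-empty, so $D'\ne D_{\max}$; and $D_{\max}\setminus D' = T$ is counterclockwise, so $D_{\max}\leq_{f_0}D'$. As $D_{\max}$ is the greatest element of the lattice we also have $D'\leq_{f_0}D_{\max}$, and antisymmetry of $\leq_{f_0}$ then forces $D' = D_{\max}$, a contradiction. The case of $D_{\min}$ is entirely symmetric: a non-empty clockwise $0$-homologous oriented subgraph $T$ of $D_{\min}$ yields, after reversal, an orientation $D'\in\mathcal{B}(A(G))$ with $D'\setminus D_{\min} = -T$, which is counterclockwise and non-empty; hence $D'\leq_{f_0}D_{\min}$ and $D'\ne D_{\min}$, contradicting that $D_{\min}$ is the least element.

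This is essentially all there is to it; there is no genuinely difficult step. The two points worth a second glance are (i) that reversing a $0$-homologous oriented subgraph of an element of $\mathcal{B}(A(G))$ lands back in $\mathcal{B}(A(G))$ — which is precisely Lemma~\ref{lem:balancediffhomolog}, expressing that $\mathcal{B}(A(G))$ is a single homology class of orientations — and (ii) that the property of being counterclockwise, resp.\ clockwise, depends only on the oriented subgraph and not on a choice of expansion, since the characteristic flows of $\mathcal{F}'$ are linearly independent; in particular a non-empty counterclockwise oriented subgraph reverses to a non-empty clockwise one, which is what is used in the $D_{\min}$ case and is what makes the derived comparability a real contradiction rather than a vacuous statement.
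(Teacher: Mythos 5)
Your argument is correct: reversing a non-empty counterclockwise (resp.\ clockwise) $0$-homologous oriented subgraph of $D_{\max}$ (resp.\ $D_{\min}$) stays inside $\mathcal B(A(G))$ by Lemma~\ref{lem:balancediffhomolog} and yields a distinct element comparable to the extremal one, which contradicts maximality (resp.\ minimality) once one uses antisymmetry of $\leq_{f_0}$, itself guaranteed by the linear independence of the characteristic flows of $\mc{F}'$. Note that the paper offers no proof of this lemma — it is imported verbatim from~\cite[Proposition~4.15]{GKL15} — and your flip argument is precisely the standard proof of that statement, so there is no divergence to report.
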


Note that in the definition of counterclockwise (resp. clockwise)
non-empty $0$-homologous oriented subgraph, used in
Lemma~\ref{prop:maximal}, the sum is taken over elements of $\mc{F}'$
and thus does not use $F_0$. In particular, $D_{\max}$
(resp. $D_{\min}$) may contain regions whose boundary is oriented
counterclockwise (resp. clockwise) according to the interior of the
region but then such a region contains $f_0$ in its interior.

Note that, assuming that an element of $\mathcal B(A(G))$ is given,
there is a generic method to compute in linear time the minimal
balanced element $D_{\min}$ of $(\mathcal B(A(G)),\leq_{f_0})$
(see~\cite[last paragraph of Section 8]{DGL15}.  This minimal element
plays the role of a canonical orientation and is particularly
interesting for bijection purpose as shown in
Section~\ref{sec:bijmain}.

\subsection{Faces of the reduced angle graph}

Previous section is about the general situation of the lattice
structure of homologous orientations and is more or less a copy/paste
from~\cite{GKL15} of the terminology and results that we need
here. Now we study in more detail this lattice w.r.t.~the balanced
property as done in~\cite[Section~10]{DGL15} for Schnyder woods and
$3$-orientations (see also~\cite[Section~8.4]{LevHDR}).

Consider an essentially 4-connected toroidal triangulation $G$ and its
angle map $A(G)$.  We consider the terminology of the previous
section and assume that there exists a balanced $4$-orientation $D_0$ of
$A(G)$.

We say that a walk $W$ of $A(G)$ is a \emph{$4$-disk} if it is a face
of $A(G)$ (see the left of Figure~\ref{fig:48disk}).  We say that a
walk $W$ of $A(G)$ is a \emph{$8$-disk} if it has size $8$, encloses a
region $R$ homeomorphic to an open disk and the dual-vertices of $W$
have their edge not on $W$ that is inside $R$ (see the right of
Figure~\ref{fig:48disk}). Finally, we say that a walk $W$ of $A(G)$ is
a \emph{$\{4,8\}$-disk} if it is either a $4$-disk or a $8$-disk.

\begin{figure}[!ht]
\center
\begin{tabular}{cc}
\includegraphics[scale=0.4]{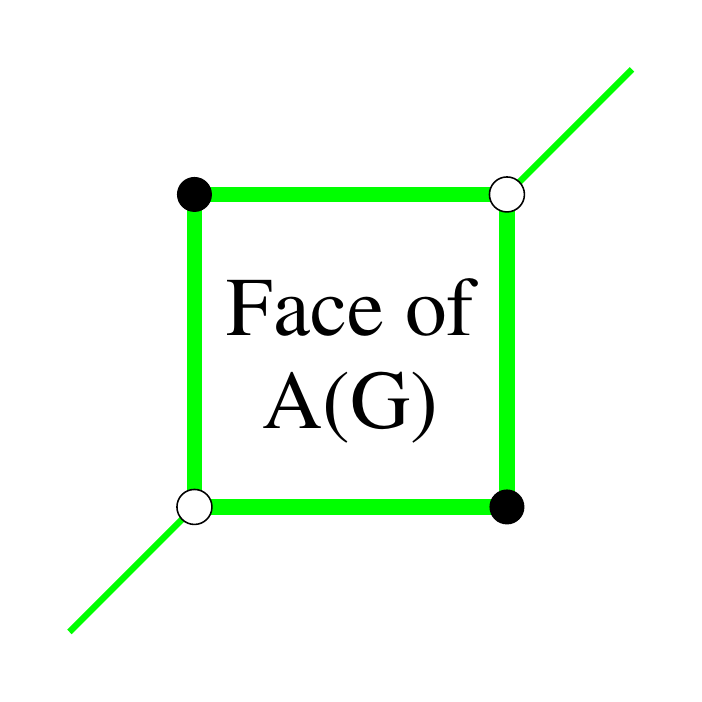} \ \ \ \ &\ \ \ \ 
\includegraphics[scale=0.4]{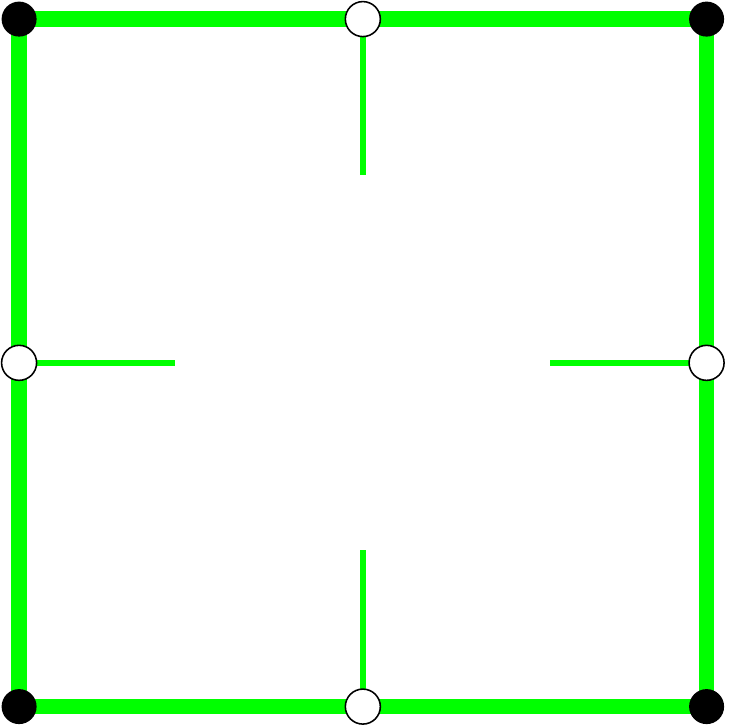} \\
$4$-disk \ \ \ \ &\ \ \ \ $8$-disk \\
\end{tabular}
\caption{The $\{4,8\}$-disks of $A(G)$.}
\label{fig:48disk}
\end{figure}

Suppose that $A(G)$ is given with a 4-orientation.  For an edge $e_0$
of $A(G)$ we define the \emph{left walk} (resp. \emph{right walk})
from $e_0$ as the sequence of edges $W=(e_i)_{i\geq 0}$ of $A(G)$
obtained by the following: if $e_i$ is entering a primal-vertex $v$,
then $e_{i+1}$ is the first outgoing edge while going \cw (resp. \ccw)
around $v$ from $e_i$, and if $e_i$ is entering a dual-vertex $v^*$,
then $e_{i+1}$ is the only outgoing edge of $v^*$. A closed left/right
walk is a left/right walk that is repeating periodically on itself,
i.e. a finite sequence of edges $W=(e_i)_{0\leq i\leq 2k-1}$, with
$k> 0$, such that its repetition is a left/right walk.  We have the
following lemma concerning closed left/right walks in balanced
$4$-orientations:

\begin{lemma}
\label{lem:facialwalktilde}
In a balanced $4$-orientation of $A(G)$, a closed left (resp. right)
walk $W$ of $A(G)$ encloses a region homeomorphic to an open disk on
its left (resp. right) side. Moreover, the border of this region is a
$\{4,8\}$-disk.
\end{lemma}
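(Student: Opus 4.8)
The plan is to treat closed left walks; the case of a closed right walk follows by reflecting the torus, which exchanges left and right and turns left walks into right walks. So let $W=(e_i)_{0\le i\le 2k-1}$ be a closed left walk of $A(G)$; recall that $W$ alternates between edges entering a primal-vertex and edges entering a dual-vertex, that the edge of $W$ leaving a primal-vertex is the first outgoing edge clockwise, and that the edge leaving a dual-vertex is forced.

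The main step is to show that $W$ bounds a disk on its left. I would lift $W$ to a periodic left walk $W^\infty$ of $A(G)^\infty$ and first check that $W^\infty$ is non-self-crossing: since the outgoing edge at a dual-vertex is prescribed, $W^\infty$ visits each dual-vertex of $A(G)^\infty$ at most once, and at a primal-vertex the ``first outgoing edge clockwise'' rule forces any two passages to occupy disjoint sectors, hence to be non-crossing. It then remains to prove that $W$ is contractible, and this is where the balanced hypothesis is crucial. A non-contractible, periodic, non-self-crossing left walk would, through the transversal structure that the balanced $4$-orientation corresponds to (Corollary~\ref{cor:bal4orTS}), produce an oriented cycle in $G^\infty_B\cup G^\infty_R$ or in $G^\infty_B\cup (G^\infty_R)^{-1}$, contradicting Lemma~\ref{lem:nodirectedcycle}; alternatively, it would make $\gamma$ nonzero on a non-contractible cycle of $G$ in its homotopy class, contradicting balancedness. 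Hence $W$ is contractible, bounds on its left a region $R$ homeomorphic to an open disk, and $\partial R$ is a closed sub-walk of $W$.

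Next I would analyse $\partial R$ locally. Let $M$ be the finite planar map made of all vertices, edges and faces of $A(G)$ contained in $\overline R$, so that all interior faces of $M$ are quadrangles. The left-walk rule yields: every dual-vertex on $\partial R$ has its unique outgoing edge on $\partial R$ (otherwise $W$ would leave $R$ there), so its third edge, when inside $R$, is incoming; around every primal-vertex on $\partial R$ the edges on the $R$-side are incoming, so exactly one of its four outgoing edges lies in $\overline R$, namely the one carried by $\partial R$. Therefore every boundary vertex of $M$ has out-degree $1$ in $M$, interior primal- and dual-vertices keep out-degrees $4$ and $1$, there is no chord of $\partial R$ inside $R$ (a chord would connect a boundary primal- and a boundary dual-vertex and be simultaneously incoming at the primal-vertex and outgoing at the dual-vertex, which is impossible), an interior primal-vertex receives edges only from interior dual-vertices, and an interior dual-vertex has at most one edge to the boundary, hence at least two edges to interior primal-vertices.

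Finally, a short Euler-characteristic computation for the disk $M$, combined with the out-degree count above, gives a linear relation between the number $b$ of primal-vertices (equivalently dual-vertices) on $\partial R$ and the numbers $i_p,i_d$ of interior primal- and dual-vertices; together with the structural constraints of the previous paragraph this leaves exactly two possibilities: either $i_p=i_d=0$ and $b=2$, so that $R$ is a single quadrangular face of $A(G)$ and $\partial R$ is a $4$-disk, or $i_p=1$, $i_d=0$ and $b=4$, the interior primal-vertex being joined to the four dual-vertices of $\partial R$, each of which sends its third edge inward, so that $\partial R$ is an $8$-disk. In both cases $\partial R$ is a $\{4,8\}$-disk. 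I expect the main obstacle to be the contractibility in the main step --- excluding non-contractible closed left walks genuinely uses the balanced property --- while the case analysis finishing the argument, though a little intricate, is routine.
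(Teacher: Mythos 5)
There are two genuine gaps, both at load-bearing points. First, the contractibility step — the only place where the balanced hypothesis enters — is asserted rather than proved. Your first suggested mechanism (a non-contractible closed left walk would yield a directed cycle in $G^\infty_B\cup G^\infty_R$ or $G^\infty_B\cup (G^\infty_R)^{-1}$, contradicting Lemma~\ref{lem:nodirectedcycle}) cannot work as stated: that lemma holds for \emph{every} toroidal transversal structure, balanced or not, whereas the present lemma fails for non-balanced $4$-orientations (for which closed left walks can wind around the torus); so any argument routed only through Lemma~\ref{lem:nodirectedcycle} proves too much. Your second suggestion ("it would make $\gamma$ nonzero on a non-contractible cycle of $G$ in its homotopy class") is the right idea, but it is exactly the statement that has to be established, and you give no construction. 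The paper does this concretely: for each dual-vertex of $W$ it takes the edge of $G$ joining its two neighbours on $W$, obtaining a cycle $C$ of $G$; then $\gamma(C)=0$ forces exactly $2k$ outgoing edges of $A(G)$ on the left of $C$, and since the left-walk rule leaves no outgoing edge of $A(G)$ on the left of $W$, these must be the $2k$ edges of $W$ themselves, which forces every dual-vertex of $W$ to have an outgoing edge on the left — a contradiction. You also skip the preliminary fact (proved in the paper by a separate counting) that no oriented subwalk of $W$ encloses a disk on its \emph{right} side; this is what handles tangential self-intersections and what guarantees that a contractible $W$ bounds the disk on its left rather than its right, so "non-self-crossing in the universal cover" alone does not finish the disk claim.

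Second, your final classification is incorrect, which signals that the Euler-characteristic computation you have in mind does not do what you claim. The counting (boundary of length $2k$, boundary vertices of out-degree $1$, interior primal/dual out-degrees $4$/$1$, dual-vertices of degree $\le 3$) only yields $x=2k-4$, where $x$ is the number of boundary dual-vertices of degree $3$ in the enclosed map; it does \emph{not} bound the number of interior vertices. In particular your claim that the $8$-disk case has exactly one interior primal-vertex and no interior dual-vertex is false: $8$-disks with arbitrarily large interiors occur (this is precisely the situation exploited later, e.g.\ the $8$-disk inside a root quadrangle with nonempty interior in Theorem~\ref{th:unicellular} and Figure~\ref{fig:disk8mobile}). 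Moreover the numerically admissible case $k=3$, $x=2$ (boundary of length six) is not excluded by your "structural constraints"; it corresponds to a separating triangle and must be ruled out using essential $4$-connectedness via Lemma~\ref{lem:e4ciffnst}, a hypothesis your proof never invokes. So the concluding case analysis needs to be redone along the paper's lines: derive $x=2k-4$ with $0\le x\le k$, obtain $(k,x)\in\{(2,0),(3,2),(4,4)\}$, kill $(3,2)$ with the no-separating-triangle property (and $(2,0)$ with the absence of homotopic multiple edges), and accept that the $(4,4)$ case is an $8$-disk with an unconstrained interior.
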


\begin{proof} 
  Consider a closed left walk $W=(e_i)_{0\leq i\leq 2k-1}$ of $A(G)$, with
  $k> 0$. W.l.o.g., we may assume that all the $e_i$ are distinct,
  i.e. there is no strict subwalk of $W$ that is also a closed left
  walk.  Note that $W$ cannot cross itself otherwise it is not a left
  walk. However $W$ may have repeated vertices but in that case it
  intersects itself tangentially on the right side.

  Suppose by contradiction that there is an oriented subwalk $W'$ of
  $W$, that forms a cycle $C$ enclosing a region $R$ on its right side
  that is homeomorphic to an open disk.  Let $v$ be the starting and
  ending vertex of $W'$. Note that we do not consider that $W'$ is a
  strict subwalk of $W$, so we might have $W'=W$.  Consider the graph
  $H$ obtained from $A(G)$ by keeping all the vertices and edges that
  lie in the region $R$, including $W'$. Since $W$ can intersect
  itself only tangentially on the right side, we have that $H$ is a
  bipartite planar map whose outer face boundary is $W'$.  The inner
  faces of $H$ are quadrangles. Let $2k'$ be the length of $W'$.  Let
  $n',m',f'$ be the number of vertices, edges and faces of $H$.  By
  Euler's formula, $n'-m'+f'=2$. All the inner faces have size $4$ and
  the outer face has size $2k'$, so $2m'=4(f'-1)+2k'$.  Combining the
  two equalities gives $m'=2n'-k'-2$.  Let $n_p'$ (resp. $n'_d$) be
  the number of inner primal-vertices (resp. inner dual-vertices) of
  $H$. So $n'=2k'+n_p'+n_d'$ and thus $m'=2n_p'+2n_d'+3k'-2$.  Since
  $W'$ is a subwalk of a left walk, all primal-vertices of $H$, except
  $v$ (if it is a primal-vertex), have their incident outer edges in
  $H$. Since $W'$ is following oriented edges, if $v$ is a primal
  vertex, it has at least one outgoing edge in $H$.  Since we are
  considering a $4$-orientation of $A(G)$, we have
  $m'\geq 4(k'-1)+1+4n_p'+n_d'$. By counting the edges of $H$ incident
  to dual-vertices, we have $m'\geq 2k'+3n_d'$.  Combining the three
  (in)equalities of $m'$, gives
  $2(2n_p'+2n_d'+3k'-2)\geq (4(k'-1)+1+ 4n_p'+n_d')+(2k'+3n_d')$, a
  contradiction. So there is no oriented subwalk of $W$, that forms a
  cycle enclosing an open disk on its right side.

We now claim the following:

\begin{claim}
\label{cl:leftsidedisk}
  The left side of $W$ encloses a region  homeomorphic to an open disk
\end{claim}

\begin{proofclaim}
  We consider two cases depending on the fact that $W$ is a cycle
  (i.e. with no repetition of vertices) or not.

\begin{itemize}
\item \emph{$W$ is a cycle} 

  Suppose by contradiction that $W$ is a non-contractible cycle. For
  each dual-vertex of $W$, there is an edge of $G$ between its
  neighbors in $W$. This edge might be either on the left or right side
  of $W$.  Consider the cycle $C$ of $G$ made of all these
  edges. Since we are considering a balanced $4$-orientation of
  $A(G)$, we have $\gamma(C)=0$ and thus there is exactly $2k$
  outgoing edges of $A(G)$ that are incident to the left side of
  $C$. There is no incident outgoing edge of $A(G)$ on the left side
  of $W$. So the outgoing edges that are on the left side of $C$ are
  exactly the $2k$ edges of $W$. So $C$ is completely on the right
  side of $W$ and all the edges of $W$ are outgoing for
  primal-vertices. Thus all dual-vertices have an outgoing edge on the
  left side of $W$, a contradiction.  Thus $W$ is a contractible
  cycle. 

  As explained above, the contractible cycle $W$ does not enclose a
  region homeomorphic to an open disk on its right side. So $W$
  encloses a region homeomorphic to an open disk on its left side, as
  claimed.

\item \emph{$W$ is not a cycle} 

  Since $W$ cannot cross itself nor intersect itself tangentially on
  the left side, it has to intersect tangentially on the right side.
  Such an intersection on a vertex $v$ is depicted on
  Figure~\ref{fig:righttangent}.(a). The edges of $W$ incident to $v$
  are noted as on the figure, $a,b,c,d$, where $W$ is going
  periodically through $a,b,c,d$ in this order.  The (green) subwalk of
  $W$ from $a$ to $b$ does not enclose a region homeomorphic to an
  open disk on its right side. So we are not in the case depicted on
  Figure~\ref{fig:righttangent}.(b). Moreover if this (green) subwalk
  encloses a region homeomorphic to an open disk on its left side,
  then this region contains the (red) subwalk of $W$ from $c$ to $d$,
  see Figure~\ref{fig:righttangent}.(c). Since $W$ cannot cross
  itself, this (red) subwalk necessarily encloses a region
  homeomorphic to an open disk on its right side, a contradiction. So
  the (green) subwalk of $W$ starting from $a$ has to form a
  non-contractible curve before reaching $b$. Similarly for the (red)
  subwalk starting from $c$ and reaching $d$. Since $W$ is a left-walk
  and cannot cross itself, we are, w.l.o.g., in the situation of
  Figure~\ref{fig:righttangent}.(d) (with possibly more tangent
  intersections on the right side). In any case, the left side of $W$
  encloses a region homeomorphic to an open disk.

 \begin{figure}[!ht]
 \center
 \begin{tabular}{cc}
 \includegraphics[scale=0.3]{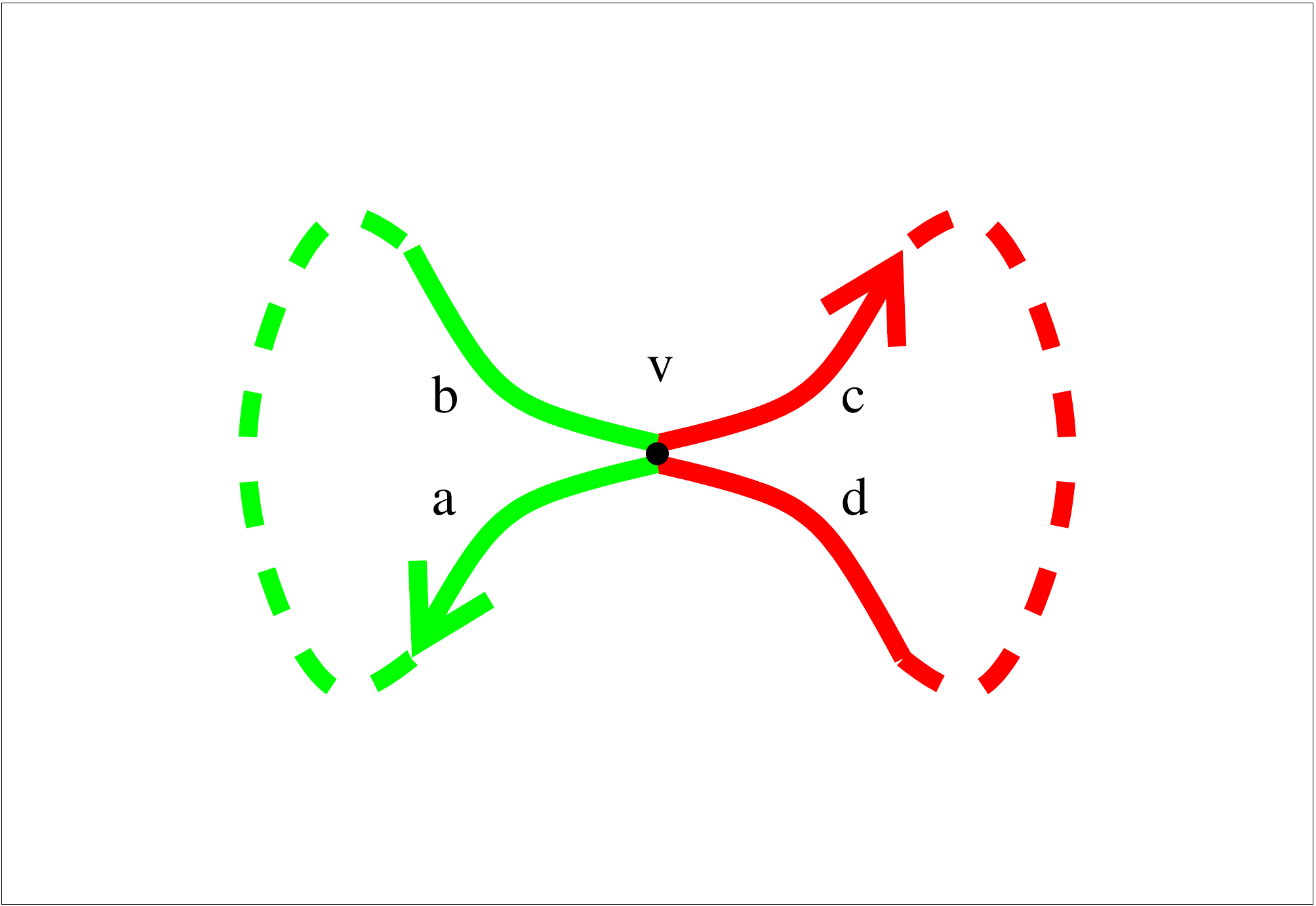} \ \ & \ \ 
 \includegraphics[scale=0.3]{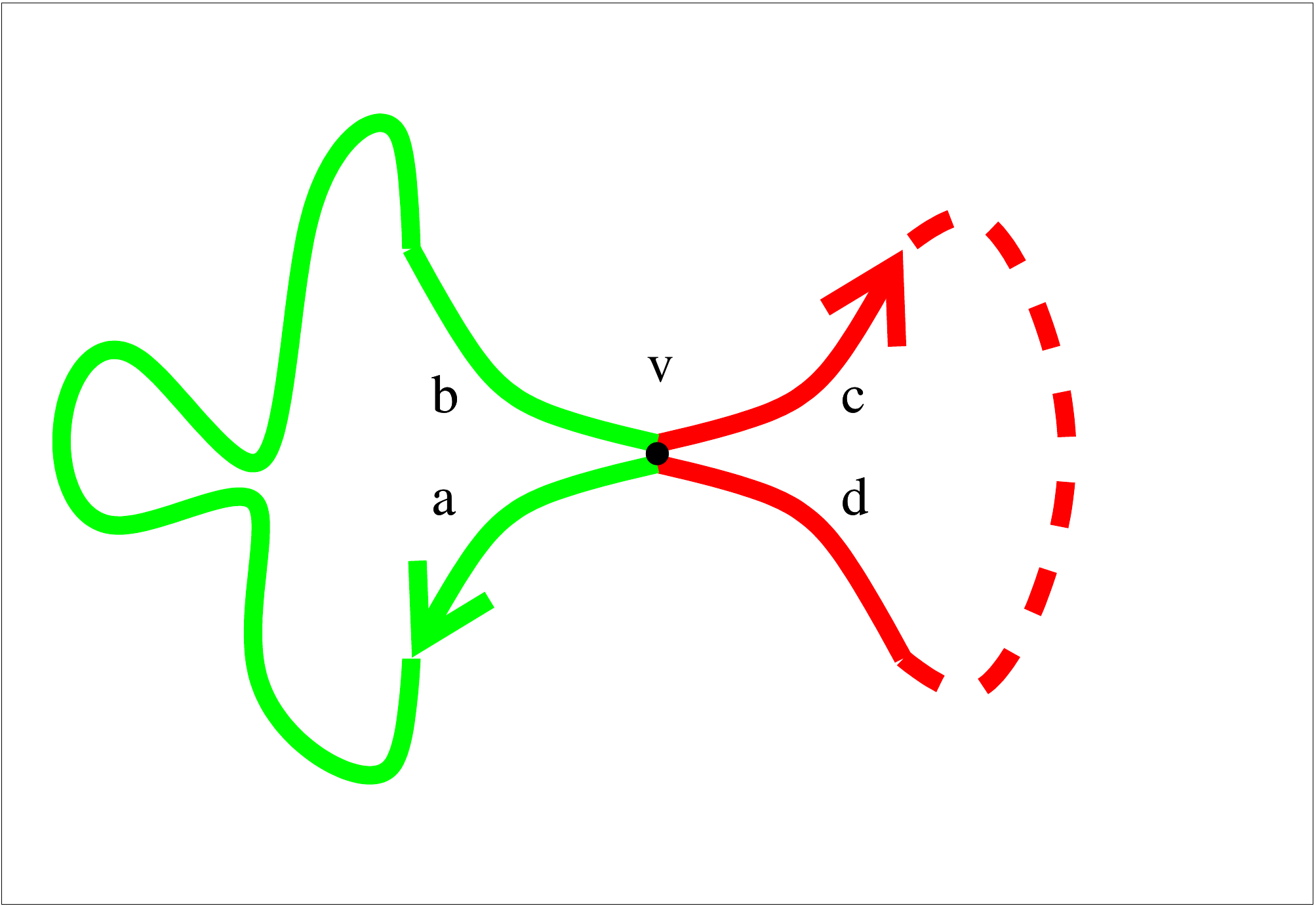}\\
(a) \ \ & \ \ (b)\\
& \\
 \includegraphics[scale=0.3]{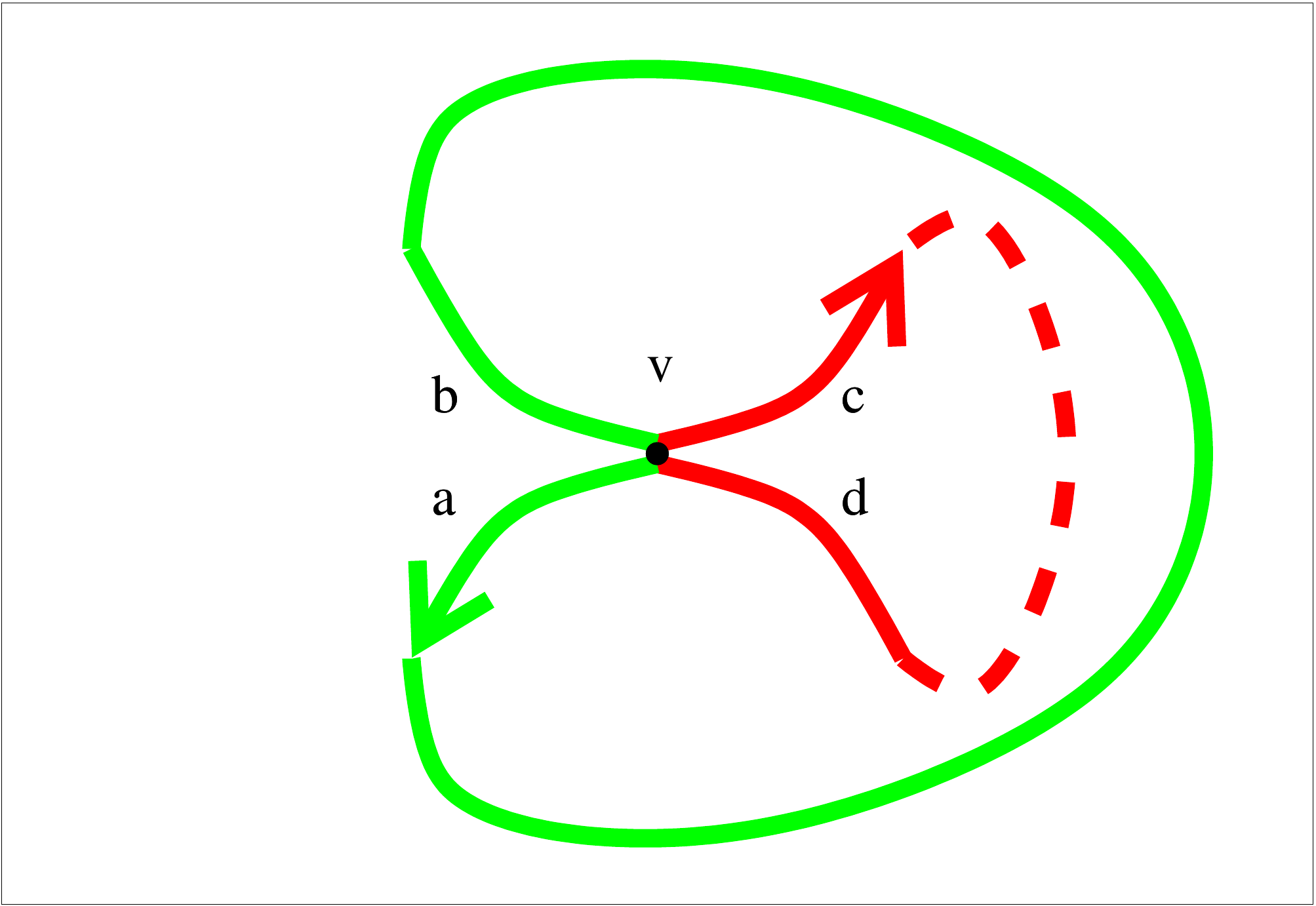}\ \ & \ \ 
 \includegraphics[scale=0.3]{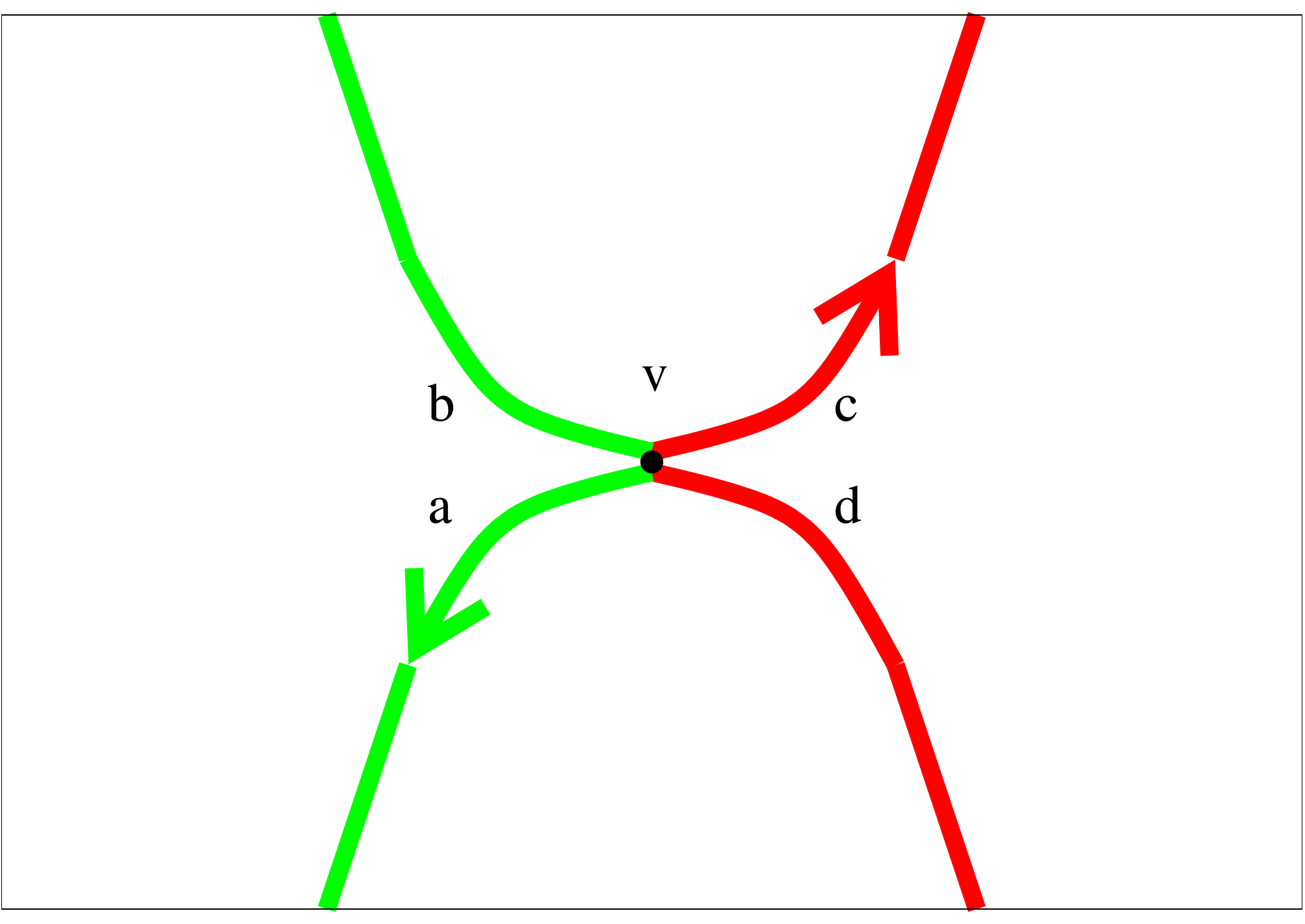}\\
(c) \ \ & \ \ (d)\\
 \end{tabular}
\caption{Case analysis for the proof of Claim~\ref{cl:leftsidedisk}.}
 \label{fig:righttangent}
 \end{figure}

\end{itemize}

\end{proofclaim}

By Claim~\ref{cl:leftsidedisk}, the left side of $W$ encloses a region
$R$ homeomorphic to an open disk.  Consider the graph $H$ obtained
from $A(G)$ by keeping only the vertices and edges that lie in $R$,
including $W$. The vertices of $W$ appearing several times on the
border of $R$ are duplicated, so $H$ is a bipartite planar map.  The
inner faces of $H$ are quadrangles.   As above, let $n',m',f'$ be the number of
vertices, edges and faces of $H$ and $n_p'$ (resp. $n'_d$) its number
of inner primal-vertices (resp. inner dual-vertices). So as above, one
obtain the first equality $m'=2n_p'+2n_d'+3k-2$.  There is no incident
outgoing edge of $A(G)$ on the left side of $W$. So all inner edges of
$H$ are outgoing for inner vertices of $H$.  Since we are considering
a $4$-orientation of $A(G)$, we have $m'=2k + 4n_p'+n_d'$.  Outer
dual-vertices of $H$ might be of degree $2$ or $3$ in $H$. Let $x$ be
the number of outer dual-vertices of $H$ of degree $3$ in $H$, so by
counting the edges of $H$ incident to dual-vertices we have
$m'=2k+x+3n_d'$.  Combining the three equalities of $m'$, gives
$2(2n_p'+2n_d'+3k-2)=(2k + 4n_p'+n_d')+(2k+x+3n_d')$, so $x=2k-4$.
Since $k\geq x \geq 0$, the only possible values are
$(k,x)\in\{(2,0),(3,2),(4,4)\}$.

If $(k,x)=(2,0)$, then $W$ has size four, its two dual-vertices are of
degree $2$ in $H$ so they have their edge not on $W$ that is outside
$R$. If $W$ is not a face of $A(G)$, then there are two distinct edges
between the primal-vertices of $W$ inside $R$, forming a pair of
homotopic multiple edges, a contradiction. So $W$ is a face of $A(G)$
and thus a $4$-disk.  If $(k,x)=(3,2)$, then $W$ has size six, with
two dual-vertices of degree $3$ in $H$ and $G^\infty$ has a separating
triangle, a contradiction to Lemma~\ref{lem:e4ciffnst}.  If
$(k,x)=(4,4)$, then $W$ has size eight, with its four dual-vertices of
degree $3$ in $H$. So $W$ is a $8$-disk.

The proof is similar for closed right walks.
\end{proof}

The boundary of a face of $\widetilde{A(G)}$ may be composed of several
closed walks. Let us call \emph{quasi-contractible} the faces of
$\widetilde{A(G)}$ that are homeomorphic to an open disk or to an open disk with
punctures.  Note that such a face may have several boundaries (if
there is some punctures), but exactly one of these boundaries encloses
the face. Let us call \emph{outer facial walk} this special
boundary. Then we have the following:

\begin{lemma}
\label{lem:contractibletilde}
All the faces of $\widetilde{A(G)}$ are quasi-contractible and their
outer facial walk is a $\{4,8\}$-disk.
\end{lemma}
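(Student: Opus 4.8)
### Proof plan

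The plan is to analyze the boundary structure of a face $\widetilde{f}$ of $\widetilde{A(G)}$ by relating it to closed left/right walks of $A(G)$, thereby reducing the statement to Lemma~\ref{lem:facialwalktilde}. First I would fix a balanced $4$-orientation $D$ of $A(G)$ (one exists by Theorem~\ref{th:existence} and Corollary~\ref{cor:bal4orTS}). The key observation is that every edge of $\widetilde{A(G)}$ is non-rigid, so by Lemma~\ref{lem:non-rigid} it is contained in a $0$-homologous oriented subgraph; consequently each edge of $\widetilde{A(G)}$ separates two distinct faces of $\widetilde{A(G)}$. Now walk along a connected component $W$ of the boundary of $\widetilde{f}$, keeping $\widetilde{f}$ always on (say) the left. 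Whenever we reach a primal-vertex $v$ of $A(G)$ lying on this boundary, all the edges of $A(G)$ that were deleted around $v$ (the rigid ones) are either entirely inside $\widetilde{f}$ or entirely on the other side; I would argue that following the boundary of $\widetilde{f}$ corresponds exactly to taking, at each primal-vertex, the first non-deleted outgoing edge while turning clockwise (and at each dual-vertex the unique outgoing edge) — i.e., the boundary walk of $\widetilde{f}$, lifted back to $A(G)$, is a closed left walk of $A(G)$ in the sense of Section~\ref{sec:lattice}. This is the technical heart: one must check that rigidity of the deleted edges forces the turning rule to match the left-walk rule, using that a rigid outgoing edge at $v$ cannot have $\widetilde{f}$ on both sides.

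Granting that, Lemma~\ref{lem:facialwalktilde} immediately tells us that each such boundary component $W$ is a closed left walk enclosing, on its left side, a region homeomorphic to an open disk whose border is a $\{4,8\}$-disk. The region enclosed on the left of $W$ is then a union of faces of $A(G)$ together with rigid edges; collapsing those rigid edges, this region becomes precisely one face of $\widetilde{A(G)}$, namely $\widetilde{f}$. Hence $\widetilde{f}$ is contained in (indeed equals) a region homeomorphic to an open disk, so $\widetilde{f}$ is homeomorphic to an open disk with punctures — i.e. quasi-contractible — where the punctures correspond to any other faces of $\widetilde{A(G)}$ that happen to lie inside the disk enclosed by $W$ but are separated from $\widetilde{f}$ by further boundary walks. (If $\widetilde{f}$ has no internal holes it is just an open disk.) In all cases the distinguished boundary of $\widetilde{f}$ that encloses it is exactly $W$, which is the outer facial walk, and it is a $\{4,8\}$-disk by Lemma~\ref{lem:facialwalktilde}.

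Finally I would double-check the only subtle point in identifying the outer facial walk: among the several boundary components of $\widetilde{f}$, the one enclosing $\widetilde{f}$ is the one whose left side (with the traversal orientation induced as above) contains $\widetilde{f}$; the others enclose $\widetilde{f}$ on their right side and bound the punctures. By construction the enclosing one is a closed left walk, so Lemma~\ref{lem:facialwalktilde} applies to it and gives that it is a $\{4,8\}$-disk. This completes the argument.

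I expect the main obstacle to be the careful verification that traversing the boundary of a face of the reduced graph $\widetilde{A(G)}$ corresponds, after lifting back to $A(G)$, exactly to a closed left walk — in particular handling primal-vertices on the boundary that are incident to several deleted (rigid) edges, and confirming the turning rule ``first non-deleted outgoing edge clockwise'' is the right one. The bookkeeping with tangential self-intersections (already present in Lemma~\ref{lem:facialwalktilde}) and with the identification of the interior region as a single face of $\widetilde{A(G)}$ modulo contracting rigid edges also needs a short but careful treatment.
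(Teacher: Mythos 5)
There is a genuine gap, and it sits exactly where you locate ``the technical heart.'' First, you fix an arbitrary balanced $4$-orientation $D$ and then speak of the boundary of $\widetilde{f}$, ``lifted back to $A(G)$,'' as a closed left walk of $D$. But the edges on the boundary of $\widetilde{f}$ are precisely the non-rigid ones, so their orientations change from one element of $\mathcal B(A(G))$ to another; in a generic balanced $D$ the counterclockwise boundary $\widetilde{F}$ is not even a coherently oriented closed walk, so it cannot be a left walk of $D$. This is why the paper's proof does not take an arbitrary $D$: it invokes Lemma~\ref{lem:necessary} to choose $D$ so that $\widetilde{F}$ is an oriented subgraph of $D$. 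Your proposal never uses Lemma~\ref{lem:necessary}, and without it the starting point of your argument is not available.

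Second, even after choosing $D$ this way, the identification of the outer facial walk of $\widetilde{f}$ with a closed left walk of $A(G)$ requires knowing that, at every boundary vertex, no \emph{rigid} edge lying in the interior of $\widetilde{f}$ and situated clockwise between the incoming and the next boundary edge is outgoing in $D$ (otherwise the left walk, which is defined over all edges of $A(G)$ including the deleted ones, would turn into the interior instead of following the boundary); the same issue arises at dual-vertices whose unique outgoing edge could be an interior rigid edge. Your suggested justification --- ``a rigid outgoing edge at $v$ cannot have $\widetilde{f}$ on both sides'' --- is the wrong fact: rigid edges inside $\widetilde{f}$ have $\widetilde{f}$ on both sides by construction, so it rules nothing out. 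The statement you actually need (interior edges incident to the boundary are entering it) is essentially a consequence of the lemma being proved, via Lemma~\ref{lem:8diskin}, not a premise. The paper sidesteps this circularity: it starts the left walk at an edge of $A(G)$ lying strictly inside $\widetilde{f}$, shows that the counterclockwise orientation of $\widetilde{F}$ in $D$ traps the walk in $\widetilde{f}$ and its border, extracts the closed left walk $W'$, applies Lemma~\ref{lem:facialwalktilde} to get the $\{4,8\}$-disk, and only then uses $0$-homology together with Lemma~\ref{lem:non-rigid} to conclude that $W'$ consists of boundary edges and encloses $\widetilde{f}$. To repair your proof you would either have to adopt this ``start inside and trap the walk'' argument, or supply an independent proof that interior rigid edges incident to the boundary of $\widetilde{f}$ are all incoming, which your current sketch does not provide.
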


\begin{proof}
Consider a face $\widetilde{f}$ of
  $\widetilde{A(G)}$. Let $\widetilde{F}$ be the
  element of $\widetilde{\mathcal{F}}$ corresponding to the boundary
  of $\widetilde{f}$. By Lemma~\ref{lem:necessary}, there exists
  $D\in \mathcal B(A(G))$ such that $\widetilde{F}$ is an oriented subgraph
  of $D$. 

  All the faces of $A(G)$ form a $4$-disk. Thus either $\widetilde{f}$ is
 a face of $A(G)$ and we are done or $\widetilde{f}$ contains in its interior at least one edge
  of $A(G)$.  Start from  such edge $e_0$ and consider the
  left-walk $W=(e_i)_{i\geq 0}$ of $D$ from $e_0$.  Suppose that for
  $i\geq 0$, edge $e_i$ is entering a vertex $v$ that is on the border
  of $\widetilde{f}$.  Recall that by definition $\widetilde{F}$ is
  oriented \ccw according to its interior, so either $e_{i+1}$ is in
  the interior of $\widetilde{f}$ or $e_{i+1}$ is on the border of
  $\widetilde{f}$. Thus $W$ cannot leave $\widetilde{f}$ and its border.

  Since $A(G)$ has a finite number of edges, some edges are used
  several times in $W$.  Consider a minimal subsequence
  $W'=e_k, \ldots, e_\ell$ such that no edge appears twice and
  $e_k=e_{\ell+1}$.  Thus $W$ ends periodically on $W'$ that is a
  closed left walk. By Lemma~\ref{lem:facialwalktilde}, $W'$ encloses
  a region $R$ homeomorphic to an open disk on its left
  side. Moreover, the border of this region is a $\{4,8\}$-disk.  Thus
  $W'$ is a $0$-homologous oriented subgraph of $D$. So all its edges
  are non-rigid by Lemma~\ref{lem:non-rigid}. So all the edges of $W'$
  are part of the border of $\widetilde{f}$. Since $\widetilde{F}$ is
  oriented \ccw according to its interior, the region $R$ contains
  $\widetilde{f}$. So $\widetilde{f}$ is quasi-contractible and $W'$
  is its outer facial walk and a $\{4,8\}$-disk.
\end{proof}




A simple counting argument gives the
following lemma (see Figure~\ref{fig:8diskin}):

\begin{figure}[!ht]
\center
\includegraphics[scale=0.4]{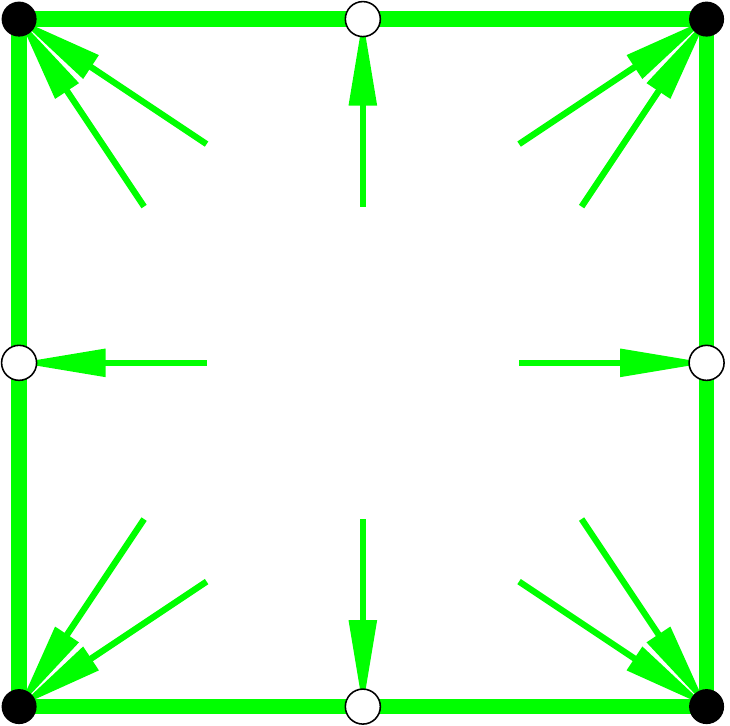}
\caption{Orientation of inner edges incident to a $8$-disk.}
\label{fig:8diskin}
\end{figure}

\begin{lemma}
\label{lem:8diskin}
In a $4$-orientation of $A(G)$, the edges that are
in the interior of a $8$-disk  and incident to it are
entering it. 
\end{lemma}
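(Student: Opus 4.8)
The plan is to mimic the counting in the proof of Lemma~\ref{lem:facialwalktilde}. Let $W$ be an $8$-disk enclosing a region $R$ homeomorphic to an open disk, and let $H$ be the planar bipartite map obtained from $A(G)$ by keeping all vertices and edges lying in $R\cup W$, duplicating any vertex that occurs more than once along $W$ so that the outer boundary of $H$ becomes a cycle of length $8$. First I would record the combinatorial data of $H$: by definition of an $8$-disk each of the four dual-vertices on $W$ has degree $3$ in $H$, each inner dual-vertex also has degree $3$ (it keeps all of its $A(G)$-edges, which cannot cross $W$), and every inner face of $H$ is a quadrangle of $A(G)$. Writing $n',m',f'$ for the numbers of vertices, edges and faces of $H$ and $n_p',n_d'$ for the numbers of its inner primal- and dual-vertices, Euler's formula together with $2m'=4(f'-1)+8$ gives $m'=2n'-6$, while counting the edges of $H$ through their dual endpoint (every edge of $A(G)$ joins a primal- to a dual-vertex) gives $m'=3n_d'+12$; combined with $n'=8+n_p'+n_d'$ these yield $n_d'=2n_p'-2$ and $m'=10+2n_p'+2n_d'$.

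Then I would bring in the $4$-orientation. Each inner primal-vertex keeps all of its $A(G)$-edges and so has out-degree $4$ in $H$, and each inner dual-vertex has out-degree $1$; since the total out-degree in $H$ equals $m'$, this gives $\sum_{v\in W}d_H^+(v)=m'-4n_p'-n_d'=(10+2n_p'+2n_d')-4n_p'-n_d'=10-2n_p'+n_d'=8$ using $n_d'=2n_p'-2$. On the other hand, the $8$ edges of the boundary cycle of $H$ each contribute $1$ to the out-degree of their tail, which is a vertex of $W$, so these already account for the full total $8$. Hence no vertex of $W$ has an outgoing edge other than along $W$: there is no chord of the octagon inside $R$, and every edge of $A(G)$ that lies in the interior of the $8$-disk and is incident to it is incoming at its endpoint on $W$, i.e.\ it enters the $8$-disk.

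The Euler-formula bookkeeping is routine; the only point requiring some care is the passage to a planar map when $W$ is not a simple cycle. As in the proof of Lemma~\ref{lem:facialwalktilde}, $W$ can then only touch itself tangentially, and only at primal-vertices; the duplication splits such a vertex into copies whose $H$-out-degrees add up to its $A(G)$-out-degree, so the computation above is unaffected, since it uses only that the eight boundary edges have their tails on $W$ and that the inner primal- and dual-vertices have out-degrees $4$ and $1$ respectively.
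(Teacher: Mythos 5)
Your proof is correct and follows essentially the same route as the paper: cut out the planar bipartite map $H$ bounded by the $8$-disk, use Euler's formula with the quadrangular inner faces and the degree-$3$ dual-vertices, and compare the edge count with the total out-degree forced by the $4$-orientation (the paper phrases this as solving three expressions for $m'$ to get $x=0$, while you deduce that the boundary vertices' total out-degree is $8$ and is exhausted by the eight boundary edges — the same counting). The only nitpick is the parenthetical claim that the duplicated copies' out-degrees sum to the full $A(G)$-out-degree of the vertex (boundary vertices may have edges outside the disk), but as you note yourself the computation never uses this, so nothing is affected.
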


\begin{proof}
  Consider a $8$-disk $W$ of $A(G)$.  Consider the graph $H$ obtained
  from $A(G)$ by keeping only the vertices and edges that lie in $W$
  and its interior. The vertices of $W$ appearing several times on $W$
  are duplicated, so $H$ is a bipartite planar map.  
Let $x$ be the number of inner-edges of $H$ that are incident to
its outer-face and directed toward the interior. We want to prove that
$x=0$.

Let $n',m',f'$ be the number of vertices, edges and faces of $H$. By
Euler's formula, $n'-m'+f'=2$. All the inner faces have size $4$ and
the outer face has size $8$, so $2m'=4(f'-1)+8$.  Combining the two
equalities gives $m'=2n'-6$.  Let $n_p'$ (resp. $n'_d$) be the number
of inner primal-vertices (resp. inner dual-vertices) of $H$. So
$n'=n_p'+n_d'+8$ and thus $m'=2n_p'+2n_d'+10$.  Since we are
considering a $4$-orientation of $A(G)$, we have $m'=4n_p'+n_d'+x+8$.
By counting the edges of $H$ incident to dual-vertices we have
$m'=3n_d'+12$.  Combining the three equalities of $m'$, gives
$2(2n_p'+2n_d'+10)=(4n_p'+n_d'+x+8)+(3n_d'+12)$, so $x=0$.

\end{proof}

We say that a $\{4,8\}$-disk of $A(G)$ is \emph{maximal} (by
inclusion) if its interior is not strictly contained in the interior
of another $\{4,8\}$-disk of $A(G)$.  

\begin{lemma}
\label{lem:maxdiskroot}
There is a unique maximal $\{4,8\}$-disk of $A(G)$ containing
$f_0$ and it is oriented \ccw (resp. \cww) in $D_{\max}$
(resp. $D_{\min}$).


\end{lemma}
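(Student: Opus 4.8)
\emph{Existence and uniqueness.} Since $f_0$ is a face of $A(G)$ it is a $4$-disk, hence a $\{4,8\}$-disk; as the interior of a $\{4,8\}$-disk encloses only finitely many faces of $A(G)$, there is a $\{4,8\}$-disk $\mathcal D$ containing $f_0$ whose interior $R$ is inclusion-maximal, and such a $\mathcal D$ is automatically a maximal $\{4,8\}$-disk (if $R$ were strictly inside the interior of another $\{4,8\}$-disk $W'$, then $W'$ would contain $f_0$ too, contradicting the choice of $\mathcal D$). For uniqueness I would set up a correspondence between $\{4,8\}$-disks of $A(G)$ and quadrangles of $G$: a $4$-disk, i.e. the face of $A(G)$ sitting at an edge $e$, is matched with the quadrangle $Q_e$ of $G$ bounded by the two faces incident to $e$; an $8$-disk $a\,f_1\,b\,f_2\,c\,f_3\,d\,f_4$ is matched with the closed $4$-walk $abcd$ of $G$, which is a quadrangle with no inner chord (an inner chord would split its interior into two triangles, which are faces since, by Lemma~\ref{lem:e4ciffnst}, $G^\infty$ has no separating triangle, so it would be a chorded quadrangle of the first type). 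Conversely $Q_e$ is matched with a $4$-disk and a chord-free quadrangle with $e$ strictly inside its interior produces an $8$-disk. This correspondence is monotone for inclusion of interiors and sends ``$\{4,8\}$-disks whose interior contains $f_0$'' to ``quadrangles whose interior contains $e_0$'', and maximal ones to maximal ones. Hence the unique maximal $\{4,8\}$-disk containing $f_0$ comes from the unique maximal quadrangle of $G$ whose interior contains $e_0$, which exists by Lemma~\ref{lem:uniqueQuadrangle}. (Alternatively, one can argue directly in the spirit of the proof of Lemma~\ref{lem:uniqueQuadrangle}: given two maximal $\{4,8\}$-disks with interiors $R_1,R_2$ both containing $f_0$, one analyses the border of $R_1\cup R_2$ using Lemma~\ref{lem:8diskin} and the absence of separating triangles to see its border has size at most eight, forcing $R_1=R_2$.)

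\emph{Orientation in $D_{\max}$ and $D_{\min}$.} Write $\mathcal D$ for the unique maximal $\{4,8\}$-disk containing $f_0$ and $R$ for its interior; recall $f_0\subseteq R$. First, a computation with characteristic flows: since every inner edge of $R$ has both incident faces inside $R$, the characteristic flow of $\mathcal D$ oriented \ccw according to $R$ is supported on the edges of $\mathcal D$ and equals $\sum_{g\subseteq R}\phi(G)$ over the \ccw facial walks $G$ of the faces $g$ of $A(G)$ inside $R$; as this sum contains $F_0$ and $\phi(F_0)=-\sum_{F\in\mc F'}\phi(F)$, it rewrites with coefficients in $\{-1,0\}$ over $\mc F'$. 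Thus $\mathcal D$ oriented clockwise according to $R$ is a \cww $0$-homologous oriented subgraph with respect to $f_0$, while $\mathcal D$ oriented \ccw according to $R$ is a \cw one. The plan is then: (i) show that, in $D_{\max}$, $\mathcal D$ is a directed cycle; (ii) conclude it is oriented \ccw according to $R$. For (i) I would use that, by Lemma~\ref{lem:8diskin} (in the $8$-disk case; the $4$-disk case reduces, as there $\mathcal D=\widetilde f_0$ has all edges non-rigid, to Lemma~\ref{lem:maxtilde} directly), every edge in the interior of $R$ incident to $\mathcal D$ enters $\mathcal D$; in particular each dual-vertex of $\mathcal D$ has its non-$\mathcal D$ edge incoming, so its unique outgoing edge lies on $\mathcal D$, and each primal-vertex of $\mathcal D$ has all its interior edges incoming. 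Running the left walk of $D_{\max}$ starting at the outgoing edge of a dual-vertex of $\mathcal D$, the interior edges encountered while turning are all incoming and hence skipped, so the walk follows $\mathcal D$ through each dual-vertex and, at each primal-vertex, continues along $\mathcal D$; combined with the fact (obtained exactly as in the proof of Lemma~\ref{lem:facialwalktilde}, via Lemma~\ref{prop:maximal}) that every closed left walk of $D_{\max}$ encloses $f_0$ on its left and every closed right walk of $D_{\max}$ does not enclose $f_0$ on its right, this pins the walk to trace $\mathcal D$ exactly, so $\mathcal D$ is a closed left walk of $D_{\max}$ and hence a directed cycle, with $R$ on its left. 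For (ii), being a left walk it is oriented \ccw according to $R$; equivalently: it cannot be oriented clockwise according to $R$ in $D_{\max}$, since that would be a nonempty \cww $0$-homologous oriented subgraph, contradicting Lemma~\ref{prop:maximal}. Symmetrically, using right walks and $-\,$Lemma~\ref{prop:maximal} (no \cw $0$-homologous oriented subgraph), $\mathcal D$ is oriented \cw according to $R$ in $D_{\min}$.

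\emph{Main obstacle.} The substantive step is (i): proving that $\mathcal D$ is actually traced as a whole closed left walk of $D_{\max}$ (equivalently, that none of the edges of $\mathcal D$ is oriented ``the wrong way'' and that the left-walk continuation never escapes into the exterior of $R$). Lemma~\ref{lem:8diskin} controls the orientation of the edges incident to $\mathcal D$ from the inside, but to force the primal-vertices of $\mathcal D$ into the ``good'' local configuration one must combine this with the global information about $D_{\max}$ coming from Lemma~\ref{prop:maximal} (closed left walks enclose $f_0$, closed right walks do not) and, if needed, an Euler-formula count on the cylinder squeezed between $\mathcal D$ and an inner closed left walk of $D_{\max}$, very much like the counting in the proofs of Lemmas~\ref{lem:facialwalktilde} and~\ref{lem:8diskin}; the fact that $\mathcal D$ need not be a facial walk of the reduced angle graph $\widetilde{A(G)}$ is precisely what makes a direct appeal to Lemma~\ref{lem:maxtilde} insufficient and forces this more careful left/right-walk analysis.
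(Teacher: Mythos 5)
Your existence--uniqueness step and your reduction of the direction question are fine, and the first is a genuinely different route from the paper: you transfer the problem to quadrangles of $G$ and invoke Lemma~\ref{lem:uniqueQuadrangle}, whereas the paper never leaves $A(G)$ -- it uses Lemma~\ref{lem:8diskin} to show all edges strictly inside the maximal disk and incident to it are rigid, identifies the disk with the outer facial walk of a face $\widetilde f$ of $\widetilde{A(G)}$ (Lemma~\ref{lem:contractibletilde}), and gets uniqueness from non-rigidity of the disk's own edges. Your homology computation is also correct: once one knows that $\mathcal D$ is a directed cycle in $D_{\max}$, Lemma~\ref{prop:maximal} forces it to be counterclockwise with respect to its interior, and symmetrically for $D_{\min}$.

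The genuine gap is your step (i), and it is the heart of the lemma. Lemma~\ref{lem:8diskin} only constrains the edges \emph{inside} $R$ that are incident to $\mathcal D$; it says nothing about the edges of $\mathcal D$ itself nor about the exterior edges at its vertices. Consequently your left walk is not pinned to $\mathcal D$: the unique outgoing edge of a dual-vertex of $\mathcal D$ may point along $\mathcal D$ in either direction, and when it arrives at a primal-vertex $v$ of $\mathcal D$ the clockwise sweep may meet outgoing exterior edges before the next boundary edge (this happens whenever $R$ lies on the right of the traversed boundary edge), and even when the interior sector is swept first, the next edge of $\mathcal D$ at $v$ may be incoming, in which case the walk escapes into the exterior. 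That no local argument can close this is shown by the intermediate elements of the balanced lattice: they are balanced $4$-orientations, they satisfy Lemma~\ref{lem:8diskin} just as well, and in them $\mathcal D$ is in general \emph{not} coherently oriented (this is visible in the example of Figure~\ref{fig:lattice}); coherence of $\mathcal D$ is a global property special to $D_{\max}$ and $D_{\min}$. The global facts you invoke do not rescue the walk either: a closed left walk of $D_{\max}$ must enclose $f_0$ in the disk on its left, but a left walk started on $\mathcal D$ can leave $R$, wander, and terminate on a closed left walk bounding a strictly smaller $\{4,8\}$-disk inside $R$ that still contains $f_0$, with no contradiction. This is precisely where the paper deploys its heavier machinery: having shown that the edges of $W$ are non-rigid and that $W$ is the outer facial walk of a face $\widetilde f$ of $\widetilde{A(G)}$, it either applies Lemma~\ref{lem:maxtilde} (when $\widetilde f=\widetilde f_0$) or chooses an orientation $D$, highest in the Hasse diagram among those containing $\widetilde F$, reverses the puncture boundaries to climb to some $D'\geq_{f_0}D$, and then shows by induction along any path from $D'$ to $D_{\max}$ that no elementary flip can reverse an edge of $W$ while $W$ is counterclockwise. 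Your sketch offers no substitute for that induction, and the ``Euler-formula count on a cylinder'' you mention does not supply one, so the orientation statement remains unproved in your proposal.
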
 

\begin{proof}
  By Lemma~\ref{lem:contractibletilde}, $\widetilde{f}_0$ is
  quasi-contractible and its outer facial walk is a $\{4,8\}$-disk. So
  there is a $\{4,8\}$-disk containing $f_0$.  Let $W$ be a maximal
  $\{4,8\}$-disk containing $f_0$.  By Lemma~\ref{lem:8diskin}, if $W$
  is a $8$-disk, then, for any $4$-orientation of $A(G)$, the edges of
  $A(G)$ that are in the interior of $W$ and incident to it are
  entering it. If $W$ is a $4$-disk, then there is no edge of $A(G)$
  in the interior of $W$.  So all the edges in the interior of $W$ and
  incident to it are rigid edges, i.e. these edges are not in
  $\widetilde{A(G)}$.  So there is a face $\widetilde{f}$ of
  $\widetilde{A(G)}$ containing all the faces $F_W$ of $A(G)$ that are
  in the interior of $W$ and incident to it. Note that there might be
  some punctures in $\widetilde{f}$, so $\widetilde{f}$ does not
  necessarily contain all the faces of $A(G)$ that are in the interior
  of $W$.  By Lemma~\ref{lem:contractibletilde}, $\widetilde{f}$ is
  quasi-contractible and its outer facial walk is a $\{4,8\}$-disk. By
  maximality of $W$, the $\{4,8\}$-disk $W$ is the only $\{4,8\}$-disk
  of $A(G)$ containing the faces $F_W$. So the outer facial walk of
  $\widetilde{f}$ is $W$ and all the edges of $W$ are non-rigid,
  i.e. these edges are in $\widetilde{A(G)}$.

  Suppose by contradiction that there exists another maximal
    $\{4,8\}$-disk $W'$ containing $f_0$ that is distinct from
  $W$. As for $W$, all the edges of $W'$ are in $\widetilde{A(G)}$.
  The interiors of $W$ and $W'$ have to be distinct, not included one
  into each other, but intersecting. Then at least one edge of $W'$
  has to be in the interior of $W$ and incident to it, a contradiction
  to the fact that these edges are not in $\widetilde{A(G)}$. So $W$
  is the unique maximal $\{4,8\}$-disk containing $f_0$.

  We now prove the second part of the lemma for $D_{\max}$ (the proof
  is similar for $D_{\min}$).  Let $\widetilde{F}$ be the element of
  $\widetilde{\mathcal{F}}$ corresponding to the boundary of
  $\widetilde{f}$. We consider two cases depending on the fact that
  $\widetilde{f}$ is equal to $\widetilde{f}_0$ or not, i.e.
  $\widetilde{f}=\widetilde{f}_0$ or $\widetilde{f}$ has some
  punctures, one of which contains $\widetilde{f}_0$.

  \begin{itemize}
  \item \emph{$\widetilde{f}=\widetilde{f}_0$:} 
By Lemma~\ref{lem:maxtilde},
  we have $\widetilde{F}=\widetilde{F}_0$ is an oriented subgraph of
  $D_{\max}$ and thus $W$ is oriented \ccw w.r.t.~its interior.

\item \emph{$\widetilde{f}\neq\widetilde{f}_0$:} 
By Lemma~\ref{lem:necessary},
  there exists an element of $\mathcal B(A(G))$ for which
  $\widetilde{F}$ is an oriented subgraph.
Let $D$ be such an element, chosen such that
  $\widetilde{F}$ is not an oriented subgraph of any orientation,
  distinct from $D$, that is on  oriented paths from $D$ to
  $D_{\max}$ in the Hasse diagram of $(\mathcal B(A(G)),\leq_{f_0})$.
The $\{4,8\}$-disk $W$ is
  oriented \ccw in $D$.
Recall that $\widetilde{f}$ has at least one
  puncture containing $\widetilde{f}_0$. 

Let $D'$ be the orientation obtained from $D$ by reversing all the
  edges of $\widetilde{f}$ that are not on its outer facial walk,
  i.e. obtained by reversing the border of all the punctures. 
So the $\{4,8\}$-disk $W$ is still
  oriented \ccw in $D'$.
  We claim that $D, D'$ are such that $D\leq_{f_0} D'$.  Indeed, let
  $T=D\setminus D'$ and $X$ denote the set of all the elements of
  $\mathcal{F}'$ that corresponds to faces of $A(G)$ that are not in
  the punctures of $\widetilde{f}$.
 Then we have
  $\phi(T)= \sum_{{F}\in X}\phi({F})$ and $X$ is a subset of
  ${\mathcal{F}}'$. So $D\leq_{f_0} D'$.
  Consider $D'=D_0,\ldots,D_k=D_{\max}$, with $k\geq 0$, the elements
  of $\mathcal B(A(G))$ on an oriented path from $D'$ to $D_{\max}$ in the Hasse
  diagram of $(\mathcal B(A(G)),\leq_{f_0})$. 

  Suppose by contradiction that the $\{4,8\}$-disk $W$ is not oriented
  \ccw in $D_{\max}$. Let $1\leq i \leq k$ be the minimal integer such
  that the $\{4,8\}$-disk $W$ is not oriented \ccw in $D_i$. Thus the
  $\{4,8\}$-disk $W$ is oriented \ccw in $D_{i-1}$ but not in
  $D_{i}$. Since $D_{i-1}$ and $D_{i}$ are linked in the Hasse
  diagram, we have
  $D_{i-1}\setminus D_{i} \in \widetilde{\mathcal{F}}'$. Let
  $\widetilde{F}'\in\widetilde{\mathcal{F}}'$ be such that
  $\widetilde{F}'=D_{i-1}\setminus D_{i}$. By assumption on $D$ we
  have $\widetilde{F}'$ is distinct from $\widetilde{F}$. Moreover,
  since $W$ is oriented \ccw in $D_{i-1}$, we have that
  $\widetilde{F}'$ is distinct from all the elements of
  $\widetilde{\mathcal{F}}'$ corresponding to faces that are incident
  to $W$ and not in its interior. So $\widetilde{F}'$ is disjoint from
  $W$ and $W$ has the same orientation in $D_{i-1}$ and $D_{i}$, a
  contradiction.  So $W$ is oriented \ccw in $D_{\max}$.
  \end{itemize}
\end{proof}
\subsection{Example of  a balanced lattice}

Consider the essentially $4$-connected toroidal triangulation $G$ of
Figure~\ref{fig:anglegraph} and its angle map $A(G)$.  One example of
a balanced $4$-orientation of $A(G)$ is given on the right of
Figure~\ref{fig:4orbalanced}, we call it $D_0$ in this section.  By
Lemma~\ref{lem:non-rigid}, an edge of $A(G)$ is non-rigid if and only
if if is contained in a $0$-homologous oriented subgraph of $D_0$. So
with this rule, one can build the reduced angle graph
$\widetilde{A(G)}$ depicted on Figure~\ref{fig:latticereduced}.  One
can check that Lemma~\ref{lem:contractibletilde} is satisfied since
 the faces are made of one 8-disk and some 4-disks.
We choose arbitrarily a special face $f_0$ of $\widetilde{A(G)}$ as
depicted on the figure. 

\begin{figure}[!ht]
  \center
\includegraphics[scale=0.4]{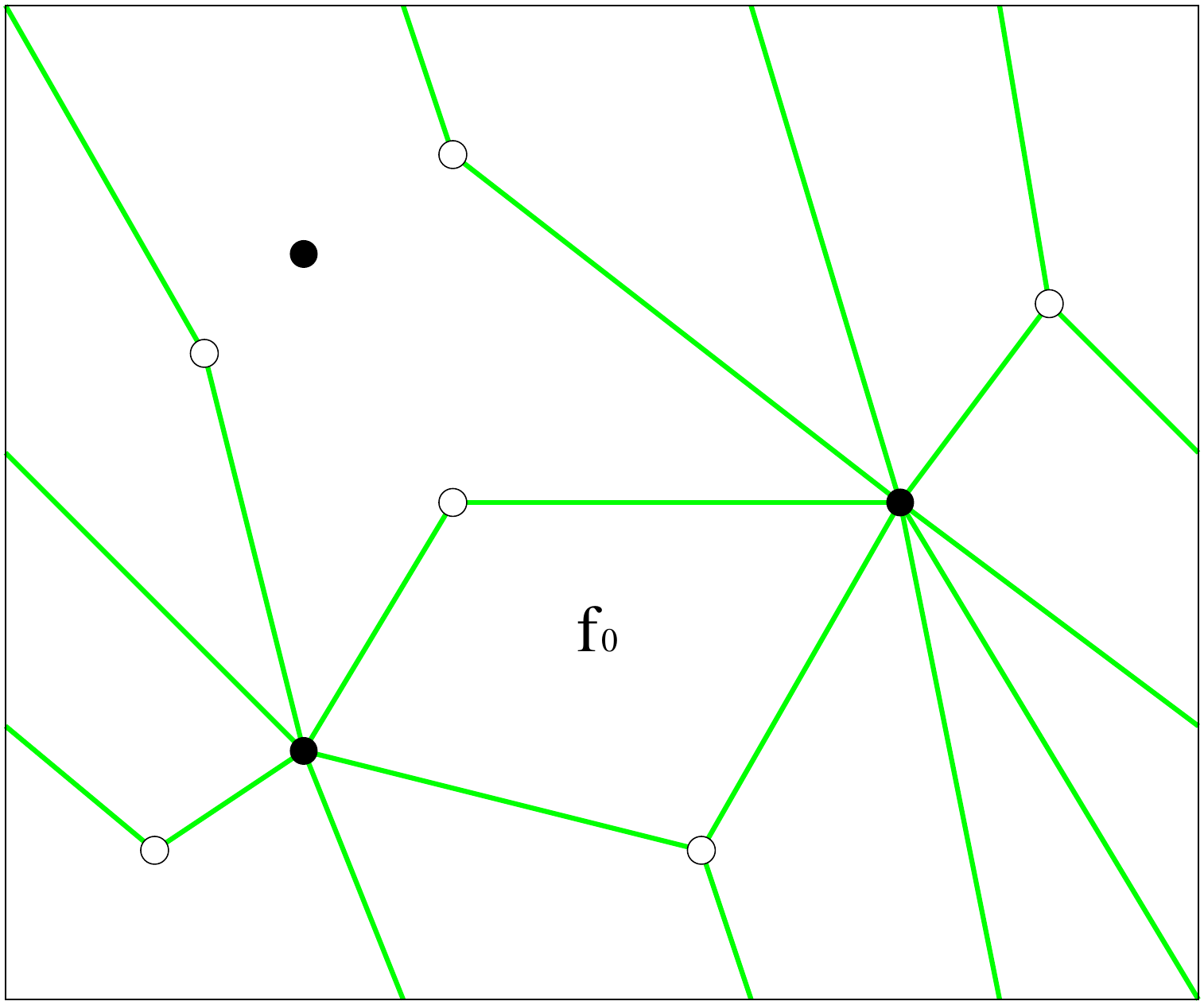}
  \caption{The reduced angle graph of the triangulation of Figure~\ref{fig:anglegraph}.}
  \label{fig:latticereduced}
\end{figure}

The set of all
orientations of $A(G)$ that are homologous to $D_0$ is exactly the set
$\mathcal B(A(G))$ of all balanced $4$-orientations of $A(G)$ by
Lemma~\ref{lem:balancediffhomolog}.  Moreover, we have that
$(\mathcal B(A(G)),\leq_{f_0})$ is a distributive lattice by
Theorem~\ref{th:lattice}. The Hasse diagram of this lattice is
represented on the left of Figure~\ref{fig:lattice}.  Each node of the diagram is
a balanced $4$-orientation of $A(G)$ and  black edges are the edges of
the diagram. 

The orientation on the
left of Figure~\ref{fig:4orbalanced} is not in the diagram since it is
not balanced.
The orientation $D_0$, on the
right of Figure~\ref{fig:4orbalanced}, is the
second one starting from the top. 
The other orientations of the diagram
are obtained from $D_0$ by flipping oriented faces of the reduced
angle graph $\widetilde{A(G)}$,
except $f_0$.  

When a face of the reduced angle graph is oriented this is represented by a
circle. The circle is black when it corresponds to the face containing
$f_0$. The circle is magenta if the boundary of the corresponding face
of $\widetilde{A(G)}$ is oriented \ccw and cyan otherwise. For the
face of $\widetilde{A(G)}$ that is a 8-disk, we represent the circle
around the unique vertex that is in the interior of this 8-disk.

An edge in the Hasse diagram from $D$ to $D'$ (with $D\leq D'$)
corresponds to a face of $\widetilde{A(G)}$ oriented \ccw in $D$ whose
edges are reversed to form a face oriented \cw in $D'$, i.e. a magenta
circle replaced by a cyan circle. The outdegree of a node is its
number of magenta circle and its indegree is its number of cyan
circle.  By Lemma~\ref{lem:necessary}, all the faces of
$\widetilde{A(G)}$ have a circle at least once. The special face is not
allowed to be flipped and, by Lemma~\ref{lem:maxtilde}, it is oriented
\ccw in the maximal element of the lattice and \cw in the minimal
element.  By Lemma~\ref{prop:maximal}, the maximal (resp. minimal)
element contains no other faces of $\widetilde{A(G)}$ oriented \ccw
(resp. \cww), indeed it contains only cyan (resp. magenta) circles and
one black.  One can play with the black circle and see which are the
orientations of the lattice that are in correspondence by flipping the
face $f_0$.

\begin{figure}[!ht]
  \centering
\includegraphics[scale=0.175]{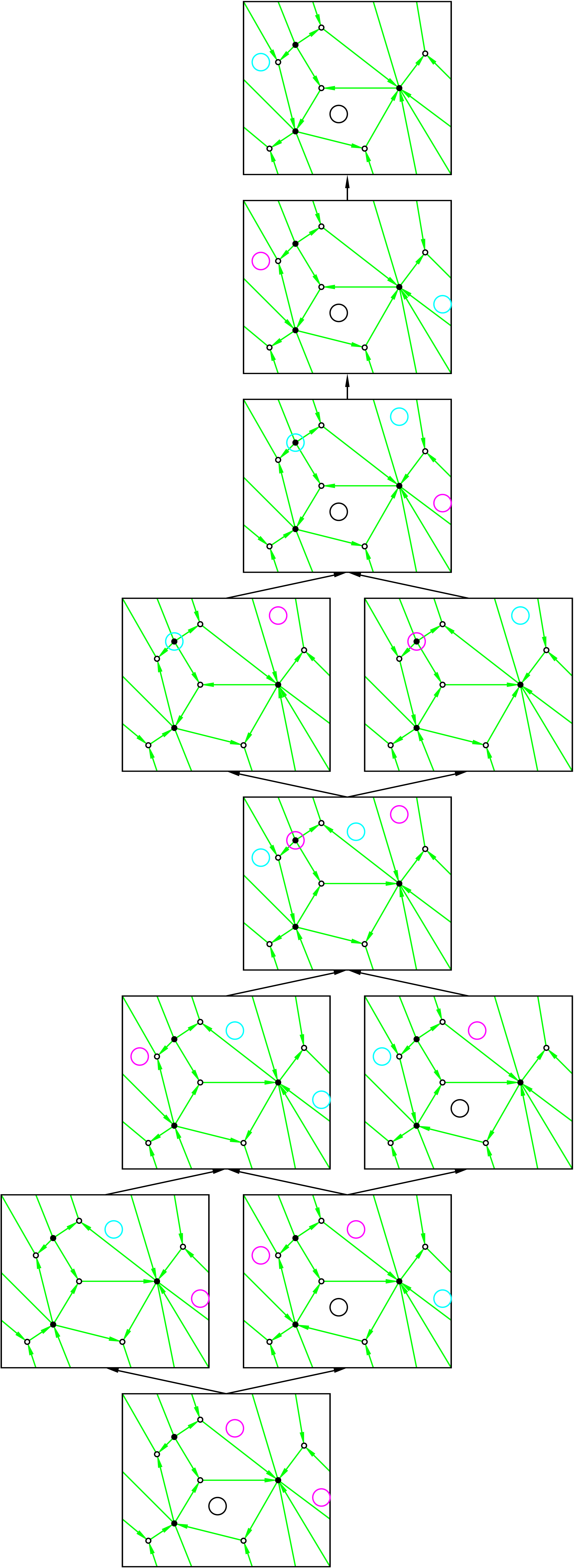}
\includegraphics[scale=0.175]{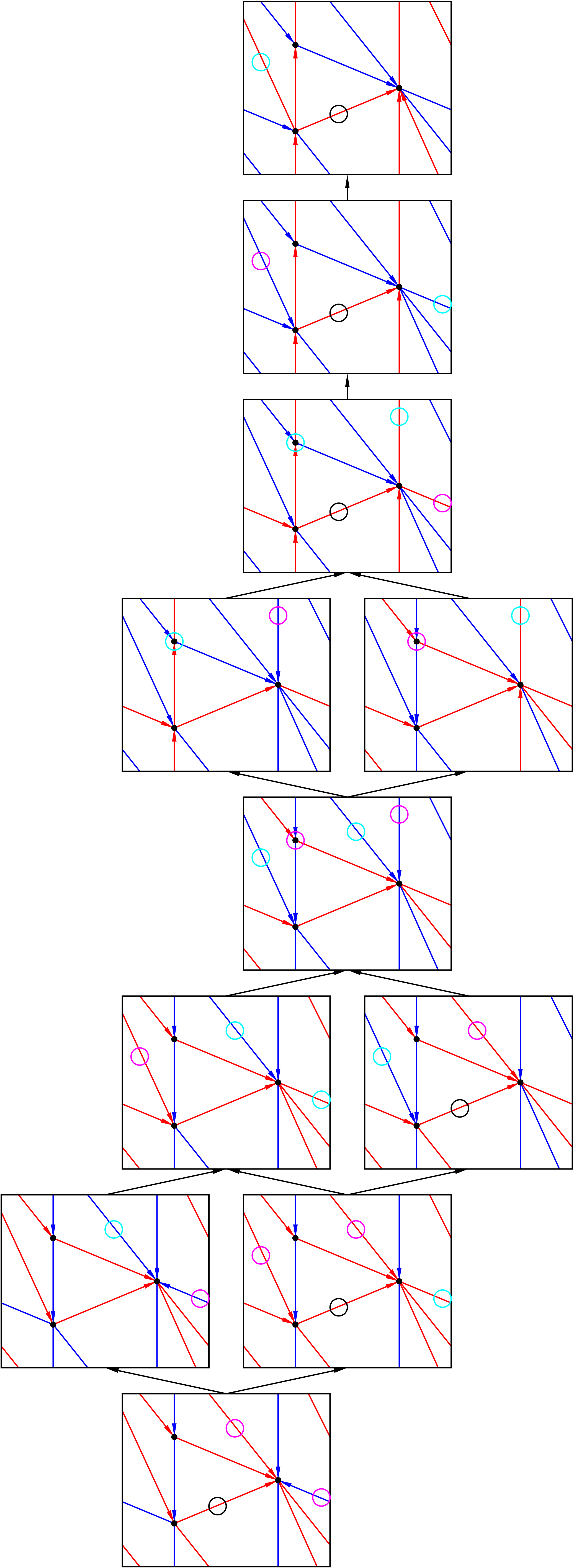}
\caption{The Hasse diagram of the distributive lattice of
  the balanced 4-orientations of the angle map of an essentially
  $4$-connected toroidal triangulation.}
  \label{fig:lattice}
\end{figure}

All the $4$-orientations of the diagram are balanced so they
correspond to transversal structures by
Corollary~\ref{cor:bal4orTS}. These transversal structures are
represented on the right of Figure~\ref{fig:lattice}. The lattice may
have been defined directly on the transversal structures using the
same transformations as in the planar case (see~\cite[Figure~6 and
Theorem~2]{Fus09}). But we prefer to present this by considering
$\alpha$-orientations here since it is a more general framework that
also enables to use directly results from~\cite{GKL15} without
re-proving the lattice structure.

\section{Bijection with unicellular mobiles}
\label{sec:bijmain}
\subsection{From essentially 4-connected toroidal triangulations to  mobiles}
\label{sec:e4c2mob}

Consider an essentially 4-connected toroidal triangulation $G$ and its
angle map $A(G)$. In order to use the lattice structure on the
(non-empty) set
$\mathcal B(A(G))$ we need to choose a particular face of
$A(G)$, i.e. a particular edge of $G$.  This choice has to be done
appropriately so that the minimal element of the lattice have some
interesting properties for the bijection. For that purpose, we
have to consider quadrangles of $G$.
 We choose a half-edge $h_0$ of $G$ that is in the
interior and incident to a maximal quadrangle of $G$. We call $h_0$
the \emph{root half-edge} of $G$.  The vertex $v_0$ of $G$ incident to $h_0$
is called the \emph{root vertex}.  The face  $f_0$ of
$A(G)$ containing $h_0$ is called the \emph{root face}. Consider the order
$\leq_{f_0}$ define on $\mathcal B(A(G))$ in
Section~\ref{sec:lattice}. By Theorem~\ref{th:lattice},
$(\mathcal B(A(G)),\leq_{f_0})$ forms a distributive lattice. Thus we
can consider the minimal balanced $4$-orientation $D_{\min}$ of
$\mathcal B(A(G))$ w.r.t.~$f_0$. Since there is no ambiguity, we may
also say that $D_{\min}$ is minimal w.r.t.~$h_0$.

By
Corollary~\ref{cor:bal4orTS}, $D_{\min}$ corresponds to a transversal
structure of $G$ and admits a TTS-labeling (see
Section~\ref{sec:tslab}). By convention, the transversal structure of
$G$ associated to $D_{\min}$ that we consider is the one where $h_0$
is an outgoing half-edge of color blue, i.e. in the TTS-labeling the
half-edge is labeled $0$.

Let us associate to any $4$-orientation $D$ of $A(G)$ a particular
graph $M$ embedded on the torus, called \emph{mobile associated to}
$D$. The vertex set of $M$ is the same as $G$. Its edge set is
composed of some edges of $G$ plus some half-edges that are incident
to only one vertex. We sometimes call the edges of $M$ full-edges to
avoid confusion with half-edges of $M$. Moreover we see the full-edges
of $M$ as two half-edges of $M$ that meet at the middle of the
edge. Then the set of half-edges of $M$ is defined by the following
rule: a half-edge $h$ of $G$, incident to a vertex $v$ of $G$, is an
half-edge of $M$ if and only if the edge of $D$ just after $h$ in \cw
order around $v$ is outgoing (see Figure~\ref{fig:mobile-rule}). If
$G$ is rooted on a particular half-edge $h_0$, then the \emph{extended
  mobile} $M^+$ is obtained from $M$ by adding the root half-edge
$h_0$ if not already in $M$. If the two half-edges of the same edge of
$G$ are in $M$ (resp. $M^+$), then they meet in order to form a
full-edge of $M$ (resp. $M^+$). The half-edges of $M$ (resp. $M^+$)
that are not part of a full-edge of $M$ (resp. $M^+$) are called
\emph{stems} and they are presented by an arrow on the figures.

\begin{figure}[!ht]
\center
\includegraphics[scale=0.5]{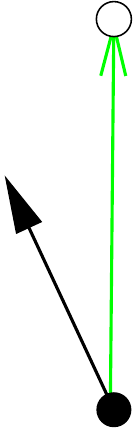}
\caption{Rule for  half-edges of the mobile.}
\label{fig:mobile-rule}
\end{figure}

A balanced transversal structure of $K_7$ is given on
Figure~\ref{fig:k7-mobile} with the corresponding balanced $4$-orientation
of its angle map that is minimal w.r.t.~the barred half-edge. The
extended mobile associated to this orientation is represented twice,
once with the angle map and once alone as a hexagon whose opposite
sides are identified to form a toroidal  map. The vertices are labeled
from $1$ to $7$, and the root half-edge is represented in magenta on
the figures.

\begin{figure}[!ht]
\center
\includegraphics[scale=0.34]{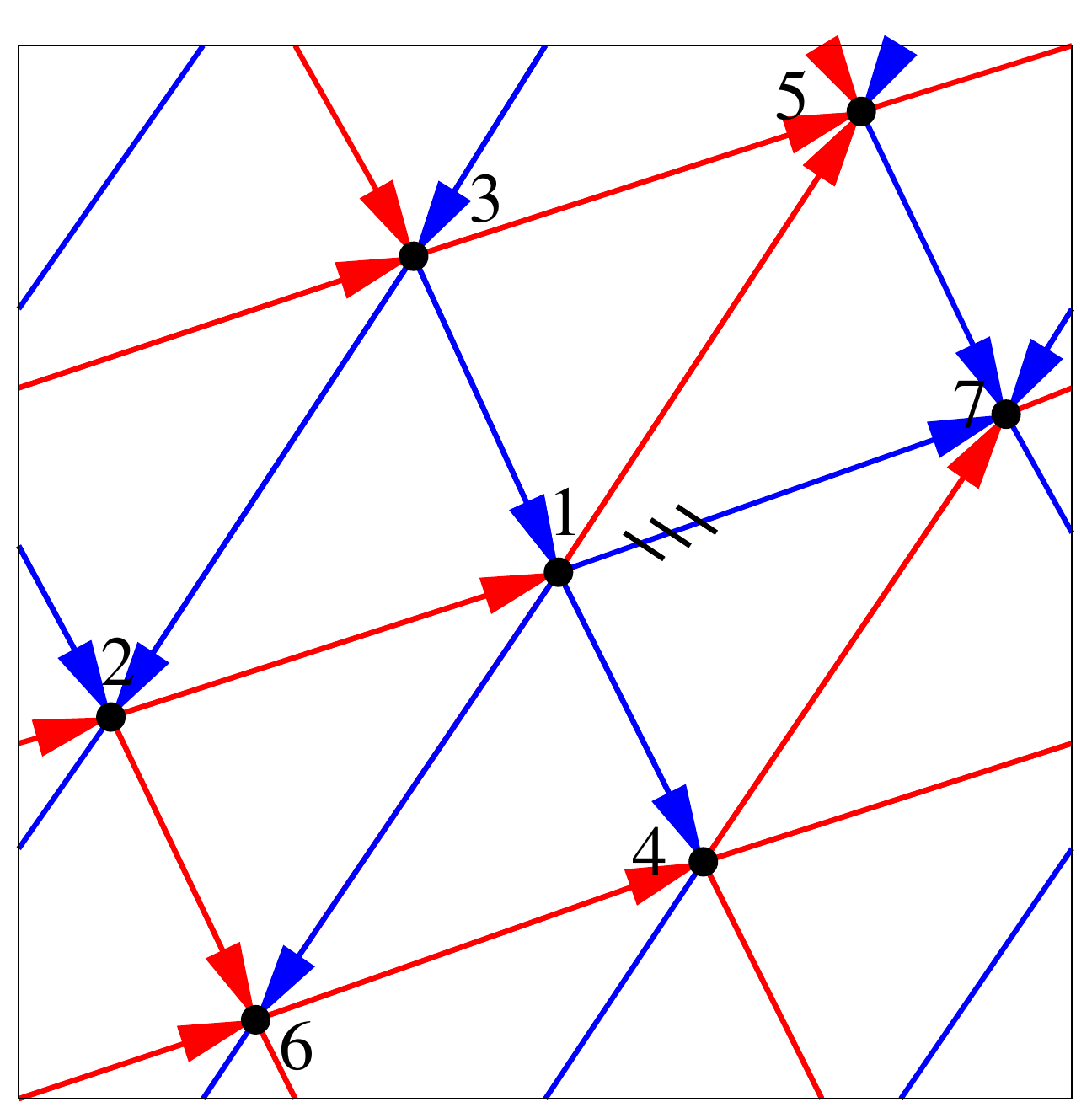} \ \
\includegraphics[scale=0.34]{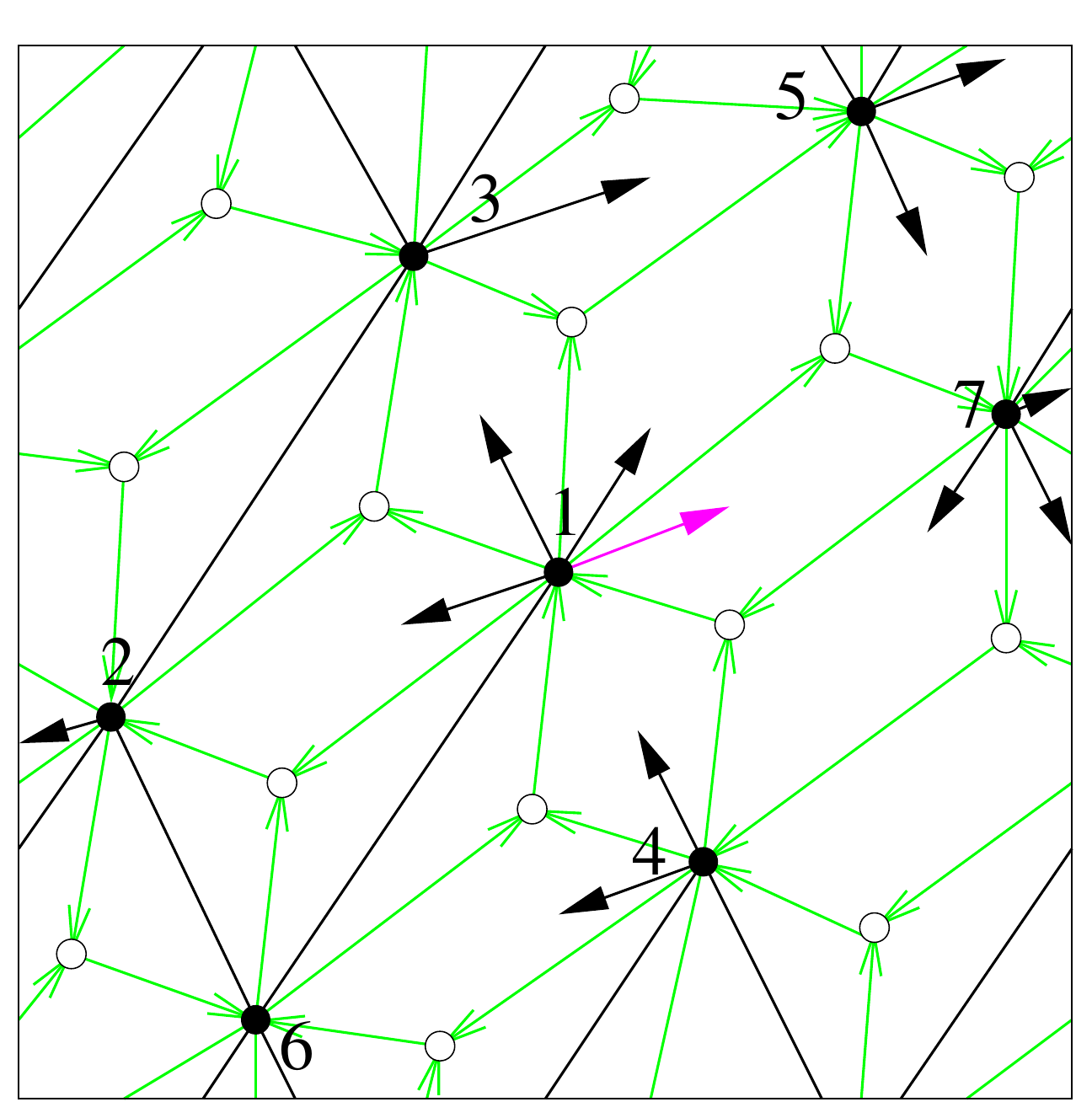} \ \
\includegraphics[scale=0.34]{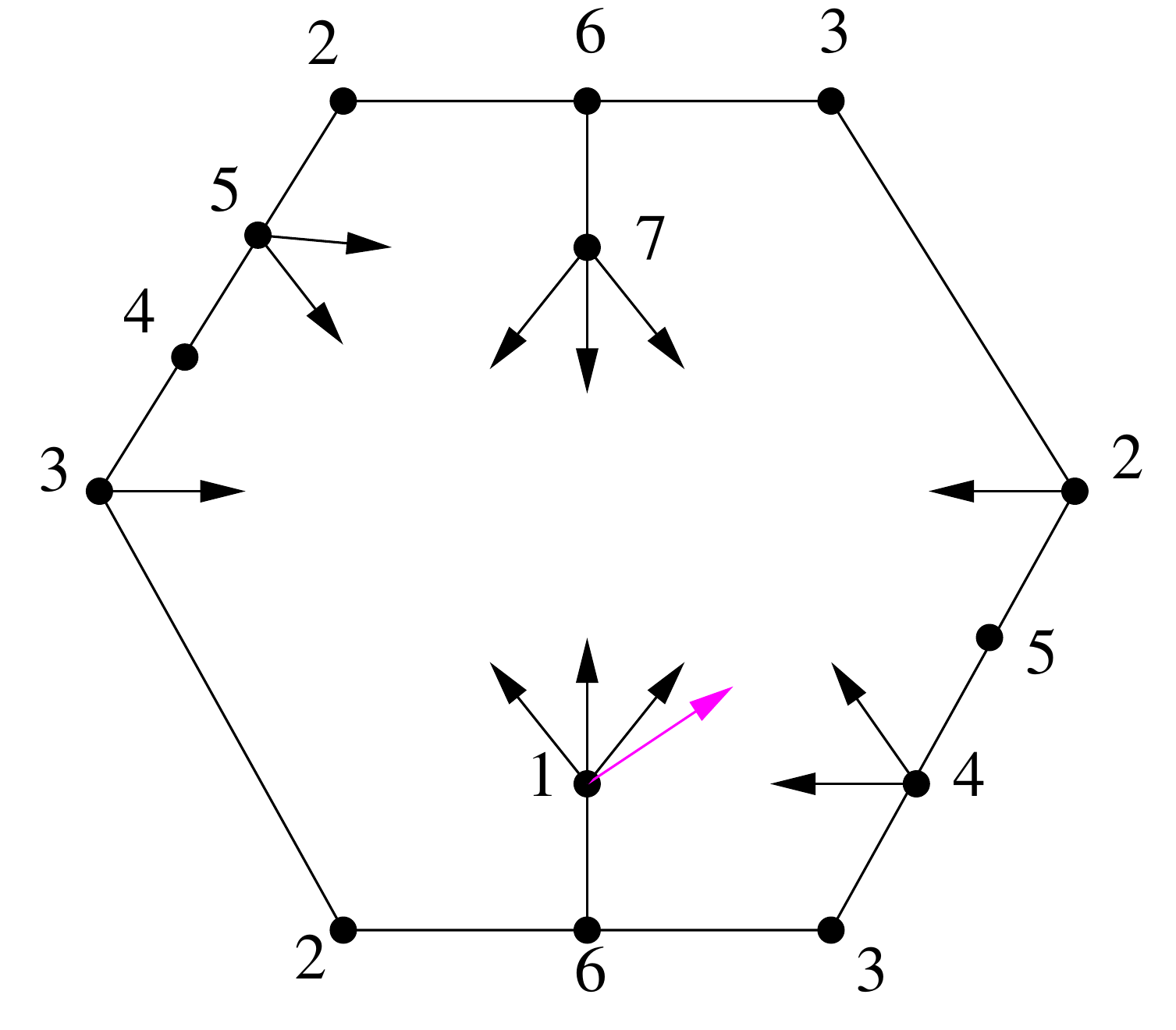}
\caption{Balanced transversal structure of $K_7$, given with the
  corresponding balanced $4$-orientation of its angle map, that is
  minimal w.r.t.~the barred half-edge, and the corresponding
  extended mobile, represented as an hexagon whose opposite sides are
  identified.}
\label{fig:k7-mobile}
\end{figure}

Note that the mobile can be computed directly from the transversal
structure by considering the following set of half-edges for $M$: an
half-edge $h$ of $G$, incident to a vertex $v$ of $G$, is a half-edge
of $M$ if and only if it is the last edge of an interval (outgoing
blue, outgoing red, incoming blue, incoming red) of $v$ in \cw order
around $v$. This point of view corresponds more to the
planar study of transversal structure from~\cite{Fus09}. But the rule
of Figure~\ref{fig:mobile-rule} corresponds to a more general
framework to construct so-called ``mobile'' that can be applied to any
orientation (see~\cite{BF12,BF12b,BC11}) and not only to transversal
structure. 

Part of the TTS-labeling of $D_{min}$ can be represented on the mobile
$M$ by keeping only the labels that are on half-edges of $M$. By the
mobile rule, one can note that a mobile-labeling satisfies the
following properties: the four labels that appear around each vertex
are exactly $0,1,2,3$ in \ccw order and the two labels that appear on
each edge differ exactly by $(2\bmod 4)$ (see
Figure~\ref{fig:labelmobile} where the TTS-labeling is represented on
the transversal structure and on the corresponding mobile).

\begin{figure}[!ht]
\center
\includegraphics[scale=0.34]{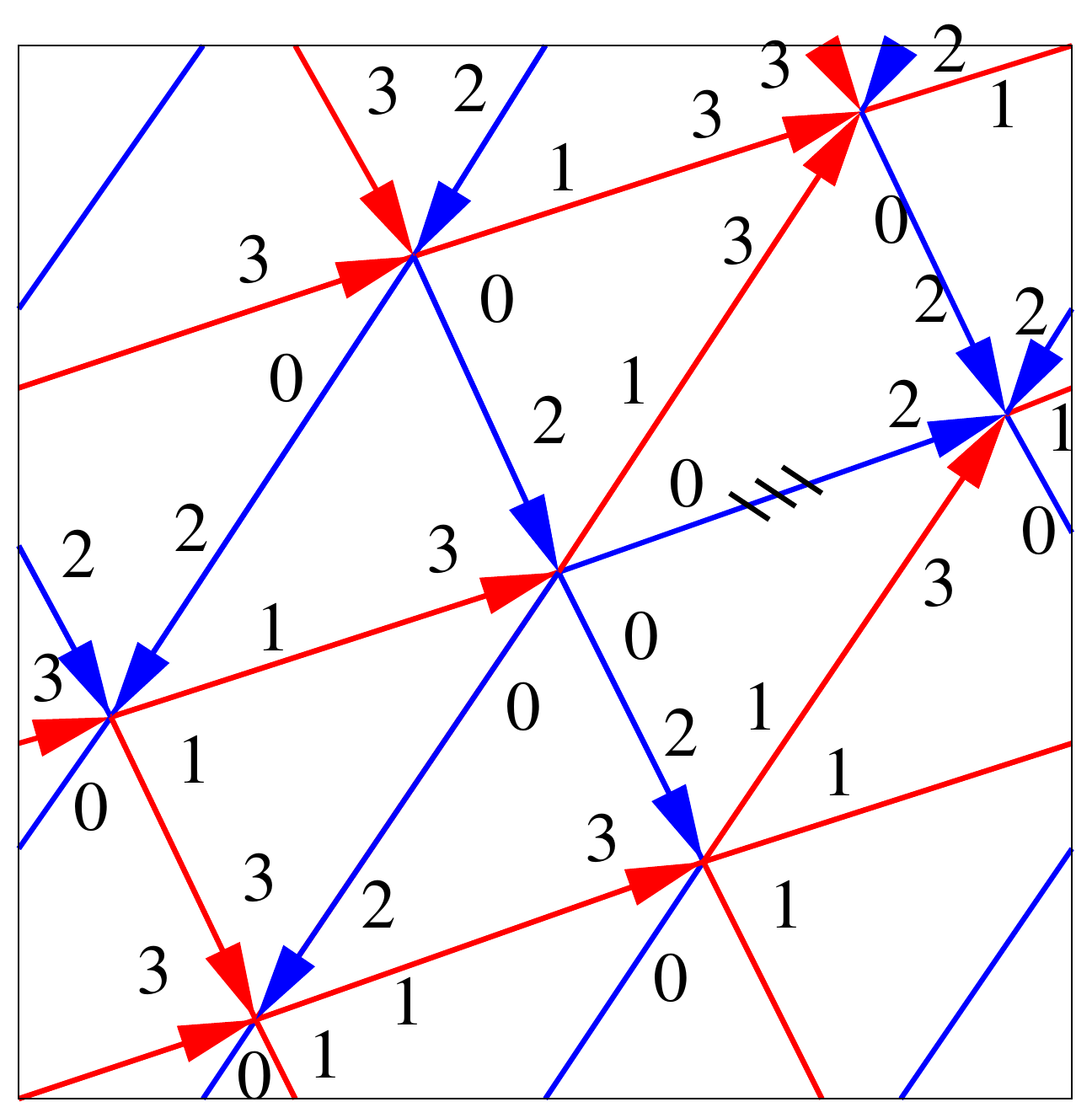} \ \ \ \ 
\includegraphics[scale=0.34]{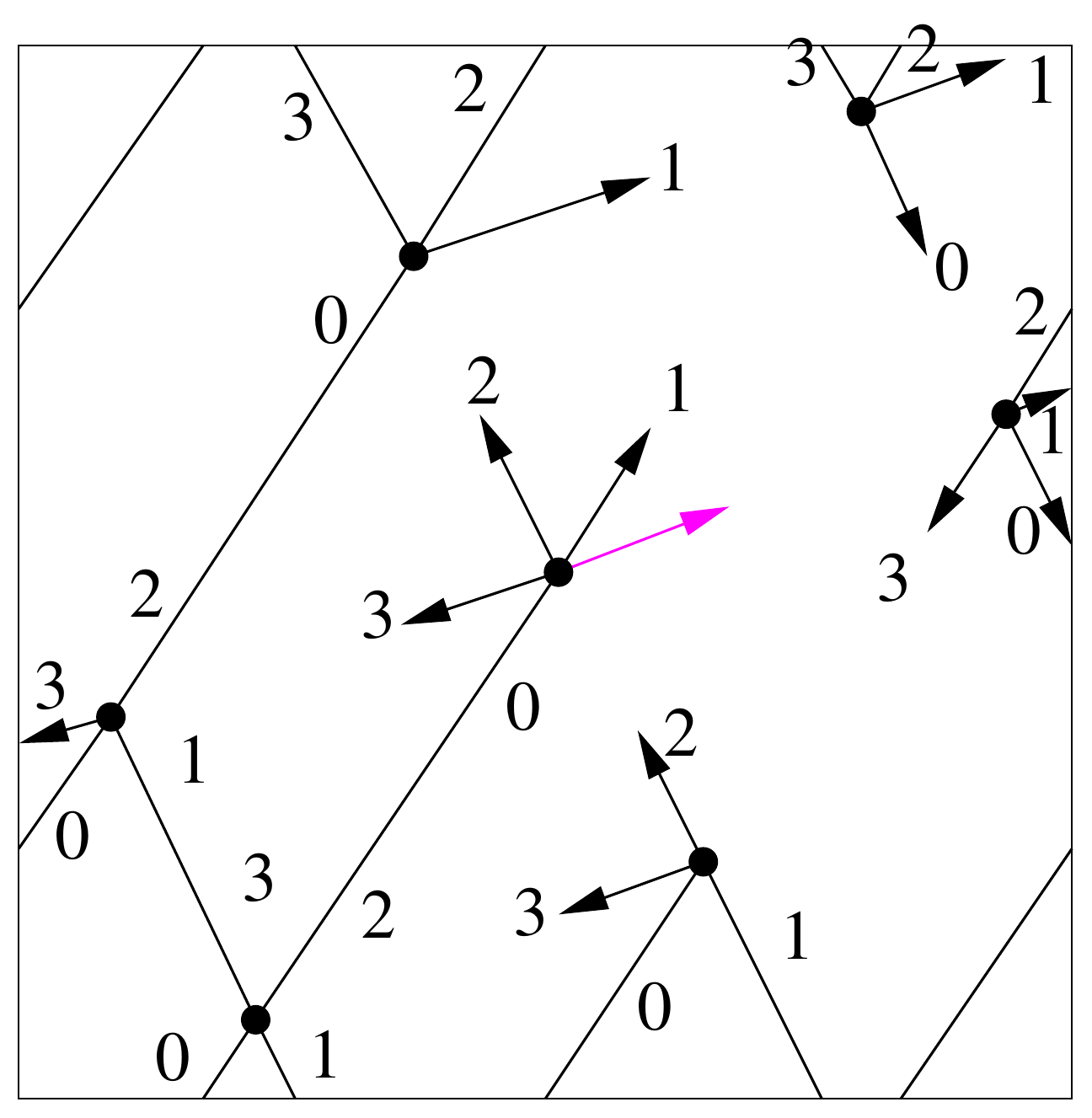}
\caption{TTS-labeling represented on the mobile.}
\label{fig:labelmobile}
\end{figure}

We say that an edge $e$ of $G$ is \emph{covered} by $M$ (resp. $M^+$)
if there is at least one half-edge of $e$ in $M$ (resp. $M^+$). We say that a
vertex $v$ of $G$ is \emph{covered} by $M$ (resp. $M^+$) if there is
at least one  half-edge incident to $v$ in $M$ (resp. $M^+$). 

The main result of
this section is the following theorem:

\begin{theorem}
\label{th:unicellular}
Consider an essentially 4-connected toroidal triangulation $G$, and a
root half-edge $h_0$ of $G$ that is in the interior and incident to a
maximal quadrangle of $G$. Then the extended mobile $M^+$ associated
to the minimal balanced $4$-orientation of $A(G)$ w.r.t.~$h_0$ is a
toroidal unicellular map covering all the vertices and edges of $G$.
Moreover, either $h_0$ is a stem of $M^+$ or its removal creates two
connected components, one of which is a tree.
\end{theorem}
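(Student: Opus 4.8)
The plan is to prove, in turn, that $M^+$ covers every vertex and edge of $G$, that it is a toroidal unicellular map, and the dichotomy on $h_0$; the unicellularity is where the real work lies. I start with a local count: around a primal-vertex $v$ the half-edges of $G$ at $v$ and the edges of $A(G)$ at $v$ alternate cyclically, and the map sending $h$ to the edge of $A(G)$ immediately clockwise after $h$ around $v$ is a bijection between these two sets. By the rule of Figure~\ref{fig:mobile-rule}, $h$ lies in $M$ exactly when this edge is outgoing at $v$, and since $D_{\min}$ is a $4$-orientation exactly four edges of $A(G)$ at $v$ are outgoing. Hence every vertex of $G$ carries exactly four half-edges of $M$: thus $M$ has $4n$ half-edges, every vertex of $G$ is covered, and $M^+$ has $4n$ or $4n+1$ half-edges. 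Globally, the same bijection identifies the half-edges of $G$ not in $M$ with the $2n$ edges of $A(G)$ that are incoming at their primal endpoint, so an edge of $G$ is a full-edge, carries exactly one stem, or is uncovered according as $0$, $1$, or $2$ of its two half-edges are of this type.

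\emph{The root.} Since $h_0$ lies in the interior of a maximal quadrangle $Q_0$ of $G$ — the unique such quadrangle for the edge $e_0$ of $G$ carrying $h_0$, by Lemma~\ref{lem:uniqueQuadrangle} — the root face $f_0$ sits inside the unique maximal $\{4,8\}$-disk $W_0$ of $A(G)$ containing it, and $W_0$ is oriented clockwise in $D_{\min}$ by Lemma~\ref{lem:maxdiskroot}. Together with the normalization that $h_0$ is an outgoing blue half-edge at $v_0$ (i.e.\ TTS-label $0$), an inspection of the edges of $A(G)$ at $v_0$ bordering $f_0$ shows that the edge immediately clockwise after $h_0$ is incoming at $v_0$, so $h_0\notin M$; hence $M^+$ has exactly $4n+1$ half-edges. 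The same local inspection records whether the partner half-edge of $h_0$ belongs to $M$, which is what distinguishes the two cases of the last sentence.

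\emph{One face — the main step.} I would then show that $M^+$ has exactly one face and that it is homeomorphic to an open disk, by exploiting the minimality of $D_{\min}$. By Lemma~\ref{lem:contractibletilde} the faces of the reduced angle graph $\widetilde{A(G)}$ are $\{4,8\}$-disks, so by Lemma~\ref{prop:maximal} no such face is oriented counterclockwise in $D_{\min}$ except the one containing $f_0$, which is oriented clockwise by Lemma~\ref{lem:maxdiskroot}; and Lemma~\ref{lem:8diskin} pins down the inward orientation of the inner edges incident to an $8$-disk. Using these facts, I would match the closed walks obtained by tracing the faces of $M^+$ with these oriented $\{4,8\}$-disk regions of $D_{\min}$: a face of $M^+$ other than the one surrounding $Q_0$ would produce a nonempty clockwise $0$-homologous oriented subgraph of $D_{\min}$ not enclosing $f_0$, contradicting Lemma~\ref{prop:maximal}, while the face surrounding $Q_0$ traces out the forced clockwise disk $W_0$ and closes up as a polygon. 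Reconciling the face-tracing of the mobile — with its stems and with the right-side tangencies already encountered in Lemma~\ref{lem:facialwalktilde} — with the clockwise/counterclockwise structure of $D_{\min}$, and keeping the book-keeping at the root exact, is the step I expect to be the main obstacle.

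\emph{Conclusion.} Once $M^+$ has a single face that is a disk it is automatically a connected cellular embedding, and Euler's formula on the torus then forces it to have exactly $n+1$ full-edges (counting each stem as an edge to a leaf), so $M^+$ is a toroidal unicellular map. Combined with the count of $4n+1$ half-edges from the second step, this gives that $M^+$ covers $(4n+1)-(n+1)=3n$ edges of $G$, i.e.\ all of them, and all $n$ vertices are covered as already noted. Finally, if the partner of $h_0$ is not in $M$ then $h_0$ is a stem of $M^+$; otherwise $e_0$ is a full-edge of $M^+$, and since $M^+$ is unicellular the part of $M^+$ lying in the disk region bounded by $W_0$ carries no cycle, hence is a tree attached to the rest of $M^+$ only through $e_0$, so removing $h_0$ (equivalently $e_0$) splits $M^+$ into two components one of which is that tree.
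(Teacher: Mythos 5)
Your peripheral steps are sound and essentially coincide with the paper's: the count of $4n$ half-edges of $M$ (one per outgoing edge of the $4$-orientation), hence $4n+1$ for $M^+$ once $h_0\notin M$ is known; the Euler-type count forcing exactly $n+1$ full-edges and hence coverage of all $3n$ edges; and the final dichotomy via the absence of mobile half-edges inside the root quadrangle. Your treatment of the root via Lemma~\ref{lem:maxdiskroot} and Lemma~\ref{lem:8diskin} is a legitimate variant (the paper obtains $h_0\notin M$ as a by-product of its face analysis instead), though it silently identifies the maximal $\{4,8\}$-disk containing $f_0$ with the disk sitting inside the root quadrangle, a point that deserves a sentence.

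The genuine gap is the step you yourself flag as ``the main obstacle'': you never prove that $M^+$ has a single face, and that is where all the substance of the theorem lies. Saying that a second face ``would produce a nonempty clockwise $0$-homologous oriented subgraph not enclosing $f_0$'' is the right target, but the mechanism producing that subgraph is missing. The paper's proof supplies it as follows: by the mobile rule, every occurrence of a vertex on the border of a face $F$ of $M$ has an outgoing edge of $D_{\min}$ pointing into the interior of $F$, with no other edge of $D_{\min}$ at that vertex between it and the border in counterclockwise order; consequently a \emph{right-walk} of $D_{\min}$ started at any edge inside $F$ can never leave $F$, and by finiteness it ends up repeating a closed right walk $W'$. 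Lemma~\ref{lem:facialwalktilde} (balancedness) then says $W'$ encloses a region homeomorphic to an open disk on its right and is a $\{4,8\}$-disk, i.e.\ it is a clockwise $0$-homologous subgraph whose enclosed region lies inside $F$; Lemma~\ref{prop:maximal} (minimality w.r.t.\ $f_0$) forces that region to contain $f_0$, and Lemma~\ref{lem:8diskin} plus the mobile rule then place $h_0$ in the strict interior of $F$. Since this holds for \emph{every} face of $M$ and $h_0$ lies in only one face, $M$ has a unique face. Without this right-walk argument (or a substitute for it), the claim that each face of the mobile yields a clockwise disk avoiding $f_0$ is unsupported, so the proposal does not yet constitute a proof of the unicellularity statement, and everything downstream (the Euler count giving $n+1$ edges, full coverage, and the tree dichotomy) rests on it.
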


\begin{proof}
  Consider the minimal balanced $4$-orientation $D_{\min}$ of $A(G)$
  w.r.t.~$h_0$, the associated mobile $M$ and the extended mobile
  $M^+$. We consider the superposition of $D_{\min}$ and $M^+$ (see
  the middle of Figure~\ref{fig:k7-mobile}).

  Let us first prove that $M$ has a unique face.  Consider a
  particular face $F$ of $M$. Note that this face is not necessarily
  homeomorphic to an open disk (it can be homeomorphic to a torus, a
  cylinder, a disk, with punctures) so the border of $F$ can be made
  of several closed walk of $M$.  By definition of the mobile $M$,
  each occurrence of a vertex $v$ on the border of $F$ has an incident
  edge $e$ of $D_{\min}$ that is outgoing in the interior of $F$ and
  such that there is no other edge of $D_{\min}$ incident to $v$
  between $e$ and the border of $F$ while going \ccw around $v$ from
  $e$ (see rule of Figure~\ref{fig:mobile-rule}).

  Similarly as in the proof of Lemma~\ref{lem:contractibletilde},
  start from any edge $e_0$ of $D_{\min}$ inside $F$ and consider the
  right-walk $W=(e_i)_{i\geq 0}$ of $D_{\min}$. By previous paragraph,
  each time a vertex of the border of $F$ is reached by $W$, the
  ``right'' outgoing edge puts $W$ back inside $F$, so $W$ cannot
  leave $F$.  Since $A(G)$ has a finite number of edges, some edges
  are used several times in $W$.  Consider a minimal subsequence
  $W'=e_k, \ldots, e_\ell$ such that no edge appears twice and
  $e_k=e_{\ell+1}$.  Thus $W$ ends periodically on the sequence of
  edges $e_k, \ldots, e_\ell$.  So, by
  Lemma~\ref{lem:facialwalktilde}, the right side of $W'$ encloses a
  region $R$ homeomorphic to an open disk and $W'$ is a
  $\{4,8\}$-disk.  Let $f_0$ be the root face of $A(G)$, i.e. the face
  of $A(G)$ containing $h_0$.  By Lemma~\ref{prop:maximal}, $D_{\min}$
  contains no clockwise non-empty $0$-homologous oriented subgraph
  w.r.t.~$f_0$ (see definition in Section~\ref{sec:lattice}). Since
  $W'$ is going \cw around $R$ according to the interior of $R$, we
  have that $R$ contains $f_0$.

  The edges of $G$ ``around'' a $\{4,8\}$-disk of $A(G)$ form a
  quadrangle as depicted by the bold black edges of
  Figure~\ref{fig:48disk-sq}.  Let $Q$ be the quadrangle of $G$
  ``around'' the $\{4,8\}$-disk $W'$.  Recall that the root half-edge
  $h_0$ is in the interior and incident to a maximal quadrangle. Thus
  the interior of this maximal quadrangle contains $Q$ and $h_0$ is
  one of the thin black half-edges of Figure~\ref{fig:48disk-sq}.


\begin{figure}[!ht]
\center
\begin{tabular}{cc}
\includegraphics[scale=0.4]{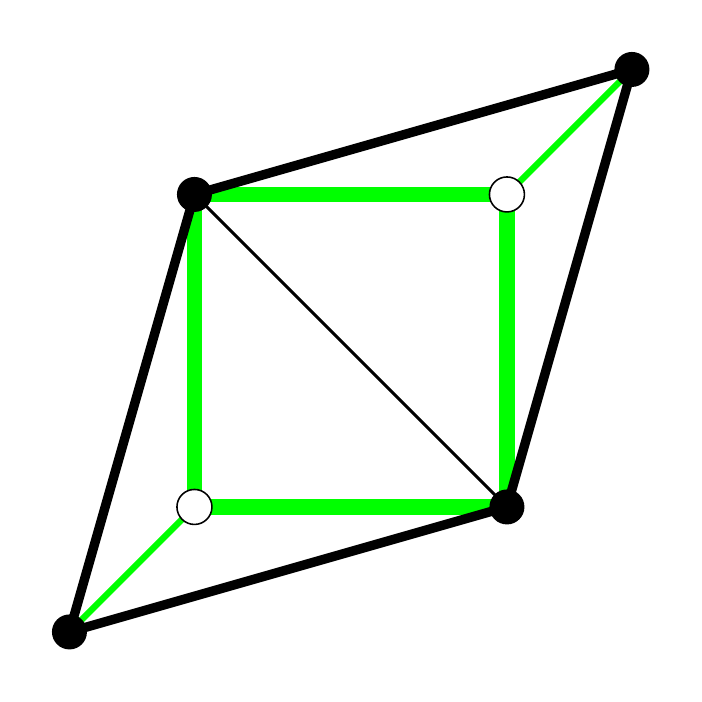} \ \ \ \ &\ \ \ \ 
\includegraphics[scale=0.4]{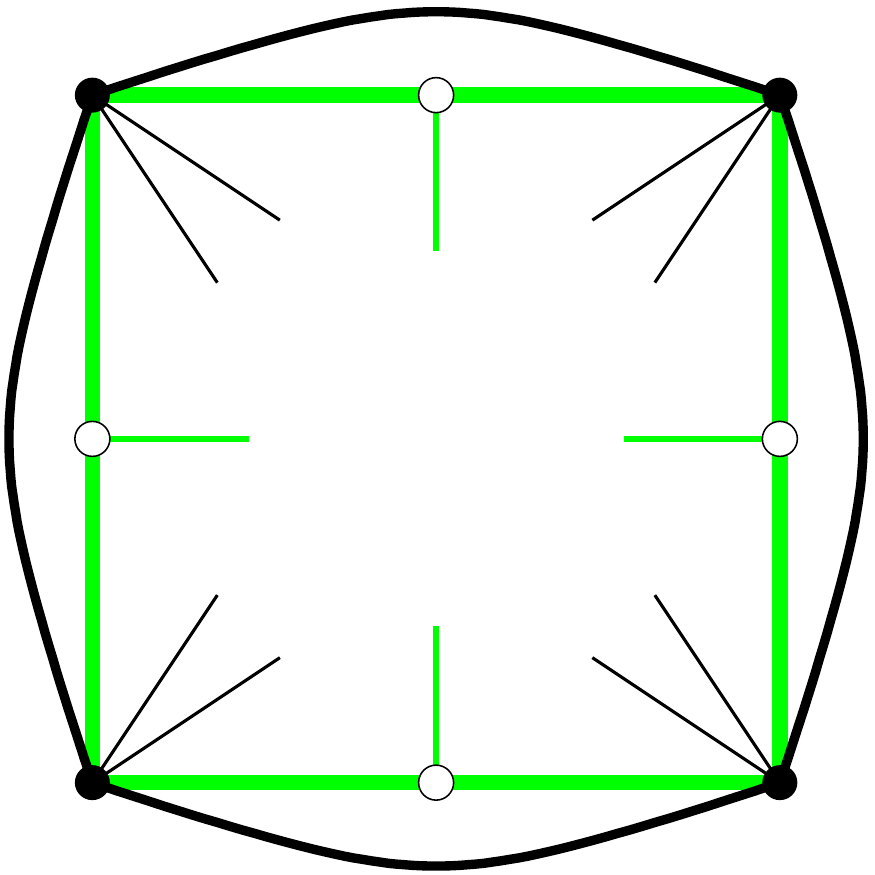} \\
$4$-disk \ \ \ \ &\ \ \ \ $8$-disk \\
\end{tabular}
\caption{The quadrangle of $G$ around a $\{4,8\}$-disk of $A(G)$.}
\label{fig:48disk-sq}
\end{figure}

Lemma~\ref{lem:8diskin} shows that all the edges of $A(G)$ that are in
the interior of a $8$-disk of $A(G)$ and incident to it are entering
it. Thus the orientation of the $\{4,8\}$-disk $W'$ and of the edges
in its interior and incident to it are as depicted on
Figure~\ref{fig:48disk-mobile}. Then by the definition of the mobile
$M$ (see rule of Figure~\ref{fig:mobile-rule}), there is no half-edge of $M$
in the interior of $Q$ and incident to $Q$. Thus $h_0$ is not in
$M$. So $h_0$ is in the strict interior of $F$ and $F$ is the unique
face of $M$. Moreover $M^+$ has strictly one more half-edge than $M$.

\begin{figure}[!ht]
\center
\begin{tabular}{cc}
\includegraphics[scale=0.4]{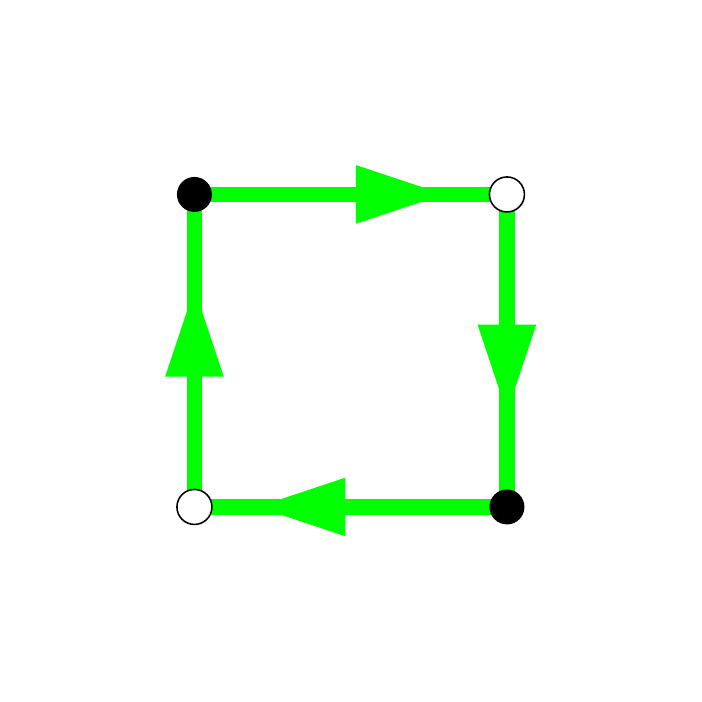} \ \ \ \ &\ \ \ \ 
\includegraphics[scale=0.4]{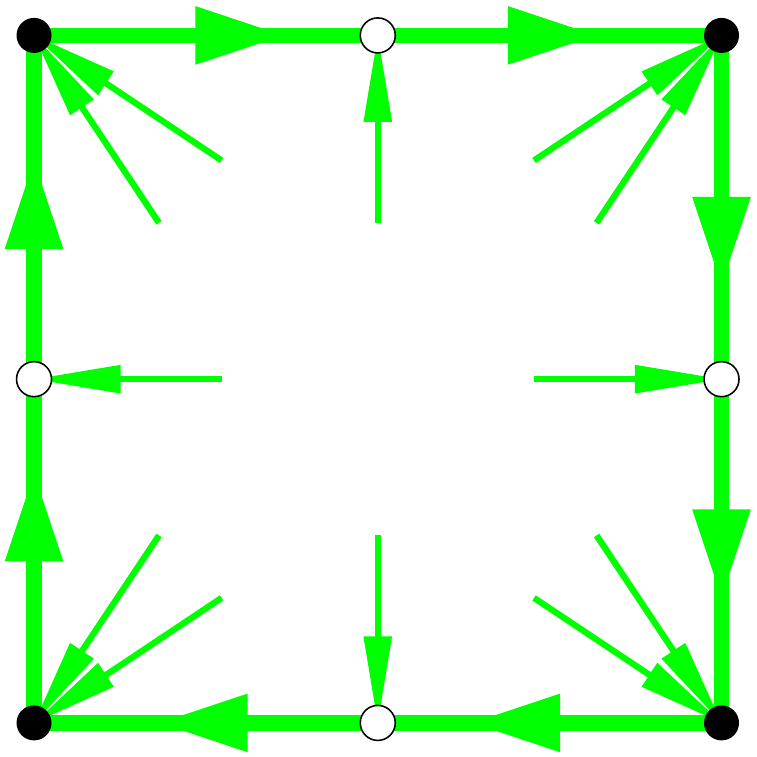} \\
$4$-disk \ \ \ \ &\ \ \ \ $8$-disk \\
\end{tabular}
\caption{Orientation of the $\{4,8\}$-disk.}
\label{fig:48disk-mobile}
\end{figure}

The number of half-edges of $M$ is equal to $4n$ (one half-edge for
each outgoing edge of the $4$-orientation $D_{\min}$ of $A(G)$).  Thus
the number of half-edges of $M^+$ is equal to $4n+1$. The toroidal
triangulation $G$ has exactly $3n$ edges. So $M^+$ has at least $n+1$
full-edges. Since $M^+$ is a graph embedded on the torus with $n$
vertices, if it has strictly more than $n+1$ edges, then it does not
have a unique face. So $M^+$ has exactly $n+1$ edges and it is a
unicellular map covering all the vertices. The number of distinct
edges of $G$ covered by $M^+$ is exactly $(4n+1)-(n+1)=3n$. So $M^+$
is covering all the edges of $G$.

Since there is no half-edge of $M$ in the interior of $Q$ and incident
to $Q$. We have that either $W'$ is a $4$-disk and $h_0$ is a stem of
$M^+$ or $W'$ is a $8$-disk and the removal of $h_0$ from $M^+$
creates two connected components, one of which is a tree.
\end{proof}

By Lemma~\ref{lem:uniqueQuadrangle}, there is a unique maximal
quadrangle containing the root half-edge, that we call the \emph{root
  quadrangle}.
 
The example of $K_7$ of Figure~\ref{fig:k7-mobile}, is an example
where the $\{4,8\}$-disk inside the root quadrangle is a
$4$-disk. There is no vertices in the strict interior of the root
quadrangle and the root half-edge $h_0$ of $M^+$ (in magenta) is not
part of a full-edge of $M^+$. 

When the root quadrangle has some vertices in its interior, then the
$\{4,8\}$-disk  inside the root quadrangle is in fact a $8$-disk
and the part of the mobile $M$ inside this root quadrangle is a tree
(exactly like in the planar case, see~\cite{Fus09}). In $M^+$ this
tree is connected to the ``toroidal'' part of $M$ that is external to
the root quadrangle with the addition of the half-edge $h_0$ added to
$M^+$.

Figure~\ref{fig:disk8mobile} is an example of an essentially
$4$-connected toroidal triangulation with some nested quadrangles. The
barred half-edge is the root half-edge. It is chosen inside a non
empty root quadrangle. There are also non empty quadrangles outside
the root quadrangle.  The triangulation is given with a balanced
transversal structure whose corresponding orientation of the angle
graph (not represented) is the minimal balanced $4$-orientation w.r.t.~the
barred half-edge. The corresponding extended mobile is
given. One can see that Theorem~\ref{th:unicellular} is
satisfied, i.e. the extended mobile is a unicellular map covering all
the vertices and edges. The magenta half-edge, corresponds to the root
half-edge and links the two connected part of the mobile, one of which
is a tree.

\begin{figure}[!ht]
\center
\includegraphics[scale=0.34]{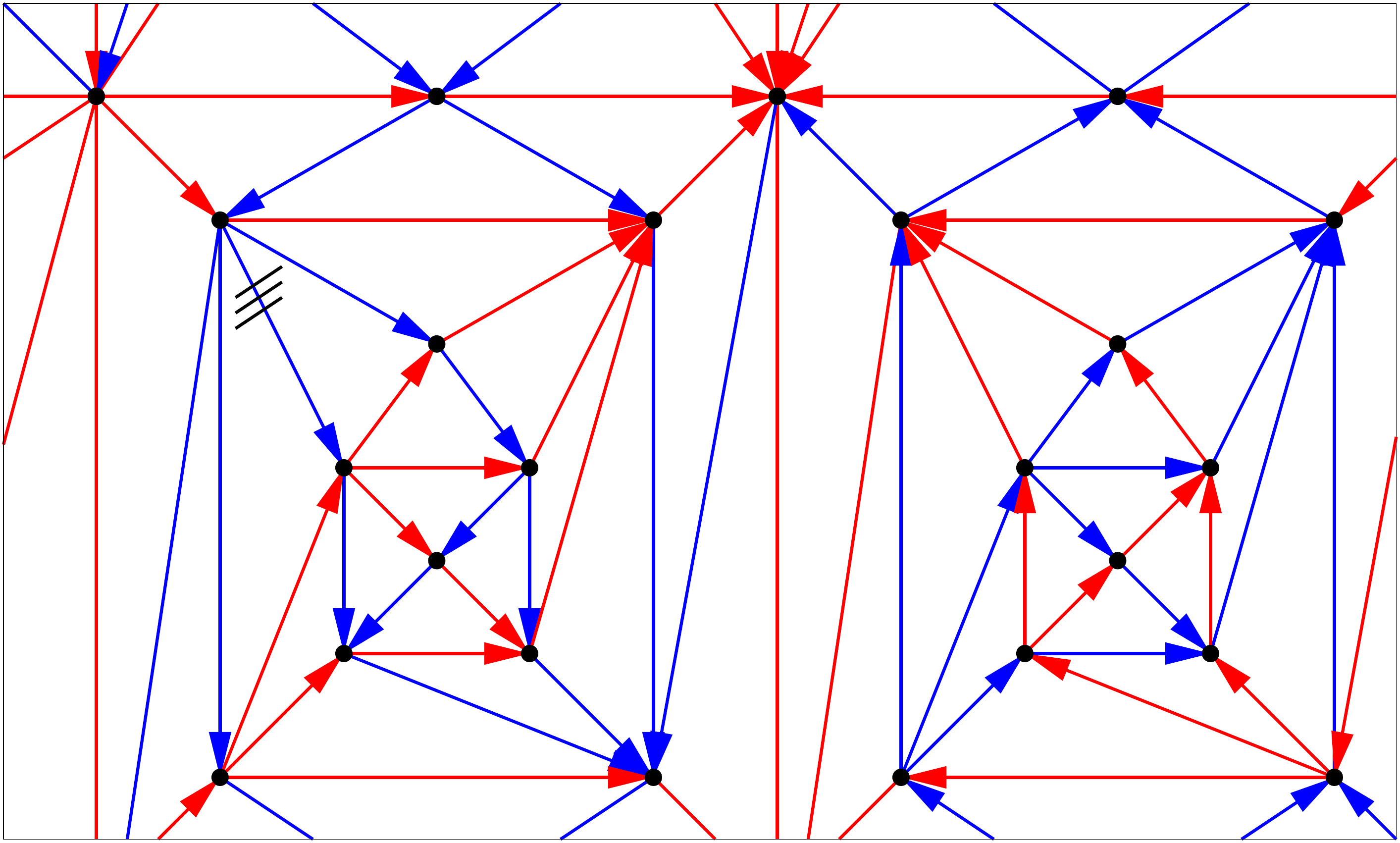} 

\ \\

\includegraphics[scale=0.34]{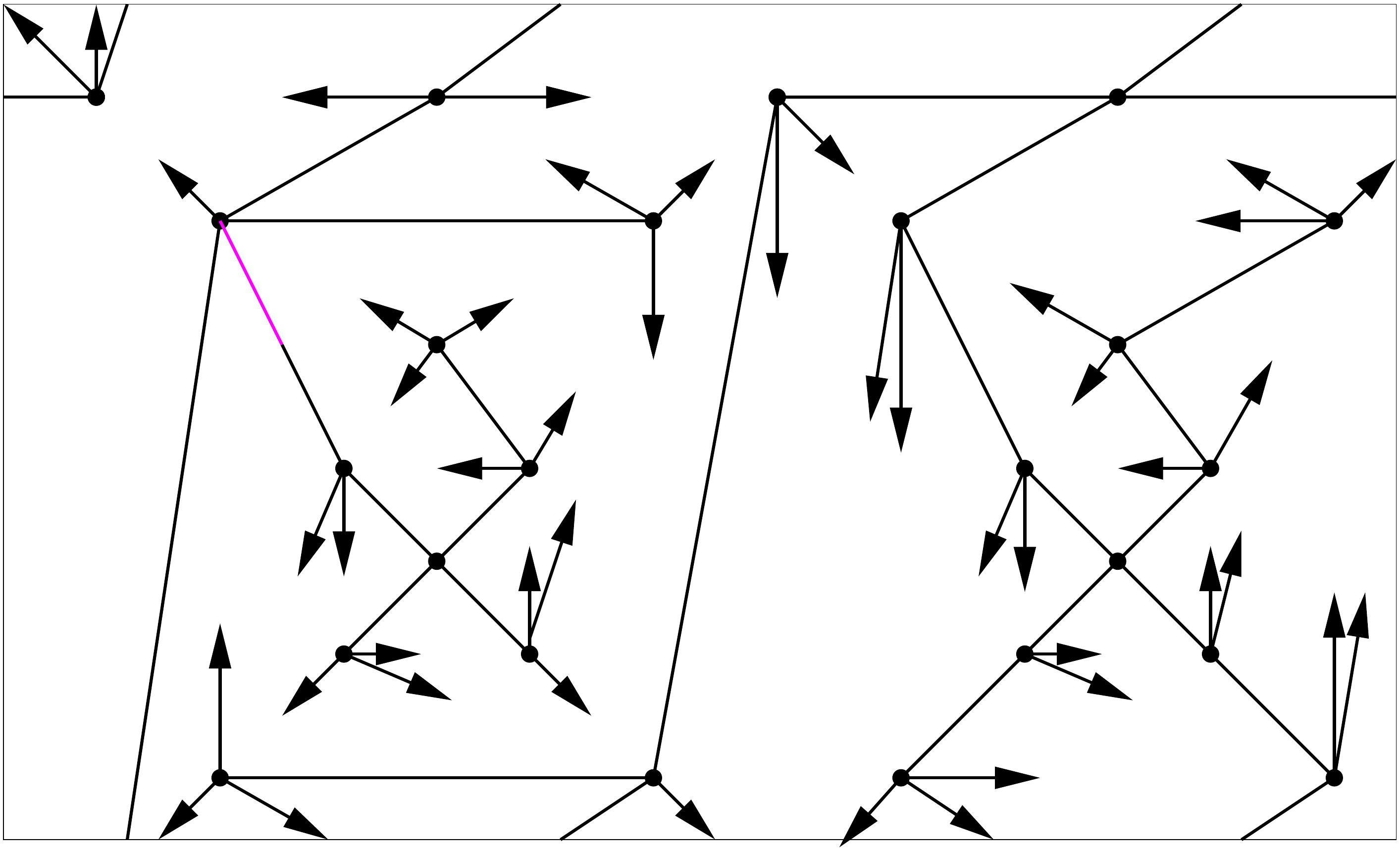}
\caption{Example of a balanced transversal structure of an essentially
  $4$-connected toroidal triangulation with some nested quadrangles
  and the corresponding extended mobile.}
\label{fig:disk8mobile}
\end{figure}


A toroidal unicellular map on $n$ vertices has exactly $n+1$ edges.
Since the total number of edges of a triangulation on $n$ vertices is
$3n$, a consequence of Theorem~\ref{th:unicellular} is that the
extended mobile $M^+$ has exactly $n$ vertices, $n+1$ edges and $2n-1$
stems. In total, $M^+$ has $2(n+1)+2n-1=4n+1$ half-edges. So the root
half-edge is not part of the mobile $M$ and is added to $M$ to obtain
$M^+$. So all the vertices of $M^+$ have degree $4$, except the root
vertex that has degree $5$.

\subsection{Recovering the original triangulation}
\label{sec:recover}

This section is dedicated to showing how to recover the original
triangulation from the extended mobile.  The recovering process is
described by the following theorem.

\begin{theorem}
\label{th:recover}
Consider an essentially 4-connected toroidal triangulation $G$, and a
root half-edge $h_0$ of $G$, incident to a vertex $v_0$, such that
$h_0$ is in the interior and incident to a maximal quadrangle of $G$.
From the extended mobile $M^+$ associated to the minimal balanced
$4$-orientation of $A(G)$ w.r.t.~$h_0$, one can reattach all the stems
of $M^+$ to obtain $G$ by starting from the angle of $v_0$ just after
$h_0$ in \cw order around $v_0$ and walking along the face of $M^+$ in
\ccw order (according to the interior of this face): each time a stem is met, it is
reattached in order to create a triangular face on its left side.
\end{theorem}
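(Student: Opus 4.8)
The plan is to show that the stem-reattachment procedure described in the statement is exactly the inverse of the mobile construction of Figure~\ref{fig:mobile-rule}, so that closing $M^+$ recreates every edge and every triangular face of $G$; the three ingredients are the unicellularity of $M^+$ (Theorem~\ref{th:unicellular}), the local description of the mobile around a vertex coming from the transversal structure local property, and the TTS-labeling carried by $D_{\min}$. First I would fix a common drawing superposing $G$, $D_{\min}$, the transversal structure, its TTS-labeling, and $M^+$, and record the local picture at a vertex $v$: by the mobile rule, the half-edges of $M$ at $v$ are the last half-edges, in clockwise order, of the four intervals (outgoing blue, outgoing red, incoming blue, incoming red) of the transversal structure around $v$, so in counterclockwise order they carry the four labels $0,1,2,3$, while every other half-edge of $G$ at $v$ lies strictly interior to one of these four intervals and is not in $M$. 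Dually, an edge $e=\{a,b\}$ of $G$ either has both half-edges in $M^+$ (these $n+1$ full-edges are traversed as is by the closure) or it has exactly one half-edge in $M^+$, necessarily a stem, located at the endpoint of $e$ where $e$ ends one of the four intervals; its other half-edge is interior to an interval at the other endpoint and must be recreated by the closure. By Theorem~\ref{th:unicellular} these two types account for all $3n$ edges of $G$.

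The core of the argument is an induction along the counterclockwise walk around the unique face $F$ of $M^+$, started at the prescribed angle of $v_0$. I would show that when the walk first reaches a stem, that stem can be, and must be, reattached so that its left side becomes a genuine triangular face of $G$, namely the face of $G$ incident to the corner of $G$ sitting at that position; the reason is label-consistency: the stem carries some label $i$, the closure attaches it across $F$ to the half-edge carrying the complementary label $i+2\bmod 4$, and the resulting triangle then matches one of the four admissible face patterns of Figure~\ref{fig:ruleface}, which both forces the reattachment point and identifies the created face as a face of $G$. After this reattachment the region bounded by the remaining walk is again the unique face of a mobile-like structure of the same kind (a ``partial closure''), so the induction proceeds. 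Since $M^+$ has $2n-1$ stems, the process creates $1+(2n-1)=2n$ faces, all triangular faces of $G$, and the total number of corners involved, $3\cdot 2n=6n$, matches the length $2(n+1)+2(2n-1)=6n$ of the boundary walk of $F$; hence the closed map is exactly $G$.

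The main obstacle I anticipate is the bookkeeping at the torus level and at the root. The face $F$ of $M^+$ need not be homeomorphic to a disk, so I must check that the walk is well-defined and closes up correctly when it passes around the non-contractible part of $M$, and I must control the first and last steps near the root quadrangle using Lemmas~\ref{lem:8diskin} and~\ref{lem:maxdiskroot} to describe the $\{4,8\}$-disk of $A(G)$ sitting inside the root quadrangle. Concretely, I would justify that starting ``at the angle of $v_0$ just after $h_0$ in clockwise order'' corresponds, in both cases of Theorem~\ref{th:unicellular} (whether $h_0$ is a stem of $M^+$ or its removal splits off a tree), to starting inside the root quadrangle, so that the triangles around and inside that quadrangle are closed consistently and the last stem reattached closes the final triangle in agreement with all earlier ones.
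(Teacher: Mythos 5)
There is a genuine gap at the heart of your argument, namely the step where you claim that label-consistency ``both forces the reattachment point and identifies the created face as a face of $G$''. The TTS-labels are too weak for this: the paper's Lemma~\ref{lem:anglemobile} shows that the mod-$4$ labeling propagates consistently under \emph{any} closure of admissible triples of \emph{any} balanced mobile, so a reattachment can be perfectly label-consistent (and match one of the patterns of Figure~\ref{fig:ruleface}) while the triangle it creates is not a face of $G$ -- for instance when the true face of $G$ at that corner involves another vertex whose connecting half-edge is a stem that has not yet been reattached. What actually rules this out in the paper's proof (Lemma~\ref{lem:stemstep}) is not the labeling but the $4$-orientation itself: if the created triangle $(u,v,w)$ were not a face of $G$, there would be a nearest stem $s'$ that ought to attach at one of the two intermediate angles, and then the two half-edges of the true triangular face $T$ of $G$ lying in the same edges as $s$ and $s'$ are both absent from $M^+$; by the mobile rule these correspond to two distinct outgoing angle-edges at the dual-vertex of $T$, contradicting that dual-vertices have outdegree exactly $1$. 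Your proposal never invokes this outdegree-$1$ constraint at faces, and without it the claim that each closed triangle is a face of $G$ is unsubstantiated; the final counting (``$2n$ triangles, hence the closed map is $G$'') therefore does not go through.

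A second, smaller omission: the statement's walk-based procedure presupposes that the first stem met after the root angle is preceded by at least two edges, so that ``create a triangular face on its left side'' makes sense and does not wrap over the root half-edge. You gesture at controlling the start via Lemmas~\ref{lem:8diskin} and~\ref{lem:maxdiskroot}, but in the paper this is Lemma~\ref{lem:firststem}, and its proof again uses the same dual-vertex outdegree argument (the triangle formed would contain $h_0$ on its border, and $h_0$ together with the stem's missing opposite half-edge would give two outgoing angle-edges at one dual-vertex), not the root-quadrangle geometry. Also, as a minor point, the closure attaches a stem to a \emph{vertex} (creating a new half-edge labeled $i+2\bmod 4$), not ``to the half-edge carrying the complementary label'' across the face, so the mechanism you describe should be restated in those terms. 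To repair the proof you would essentially have to import the paper's contradiction argument based on dual-vertex outdegree $1$, at which point your route coincides with the paper's (which first proves that closing admissible triples in \emph{any} order recovers $G$, and then checks that the prescribed walk only meets admissible triples).
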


Theorem~\ref{th:recover} is illustrated on
Figure~\ref{fig:k7-mobile-recover} to recover $K_7$ from the extended
mobile of Figure~\ref{fig:k7-mobile}. We have represented only the
first and last two steps of the method. One can also play with the
extended mobile of Figure~\ref{fig:disk8mobile} to recover the
corresponding triangulation.

\begin{figure}[!ht]
\center
\includegraphics[scale=0.34]{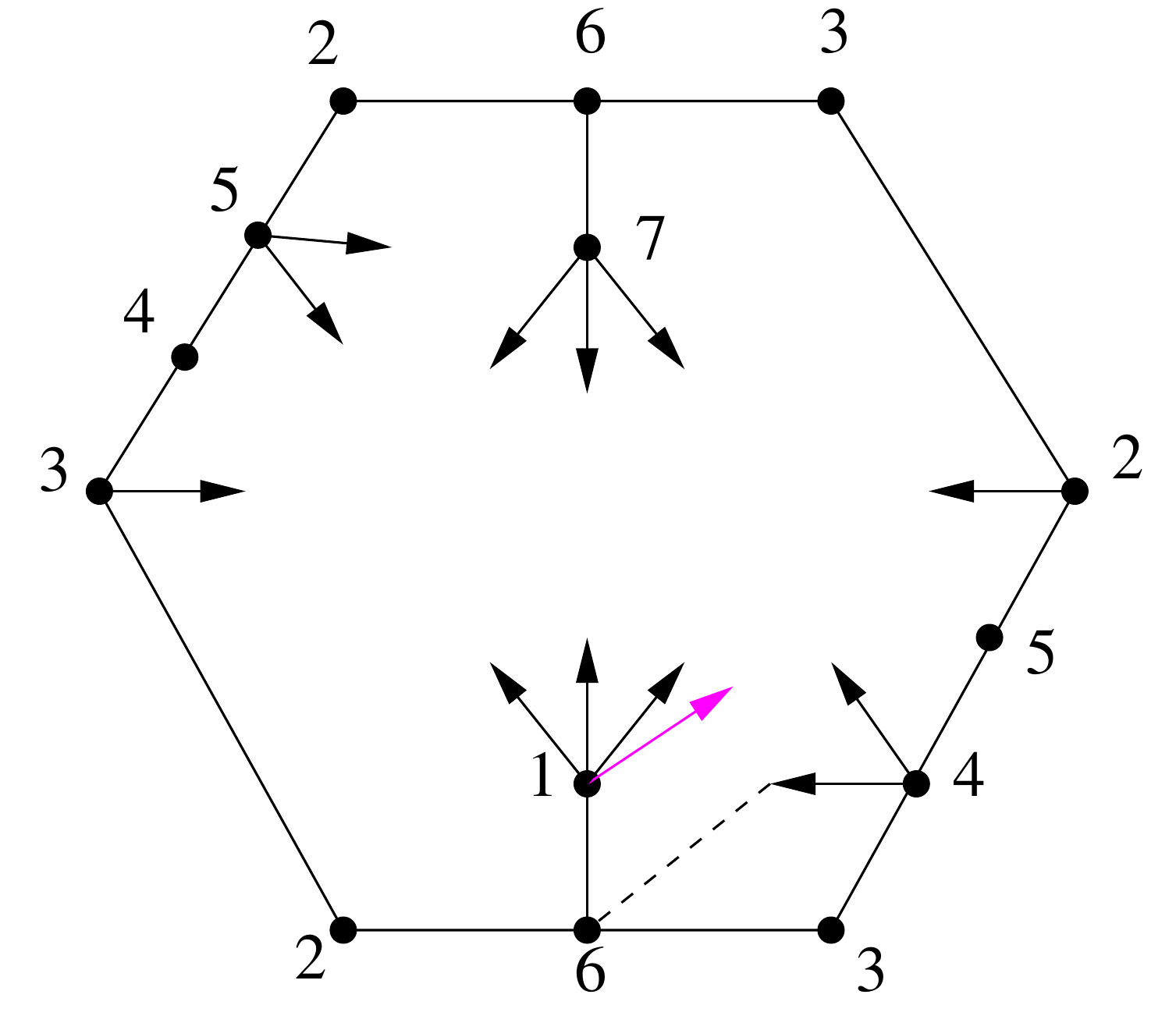} \ \
\includegraphics[scale=0.34]{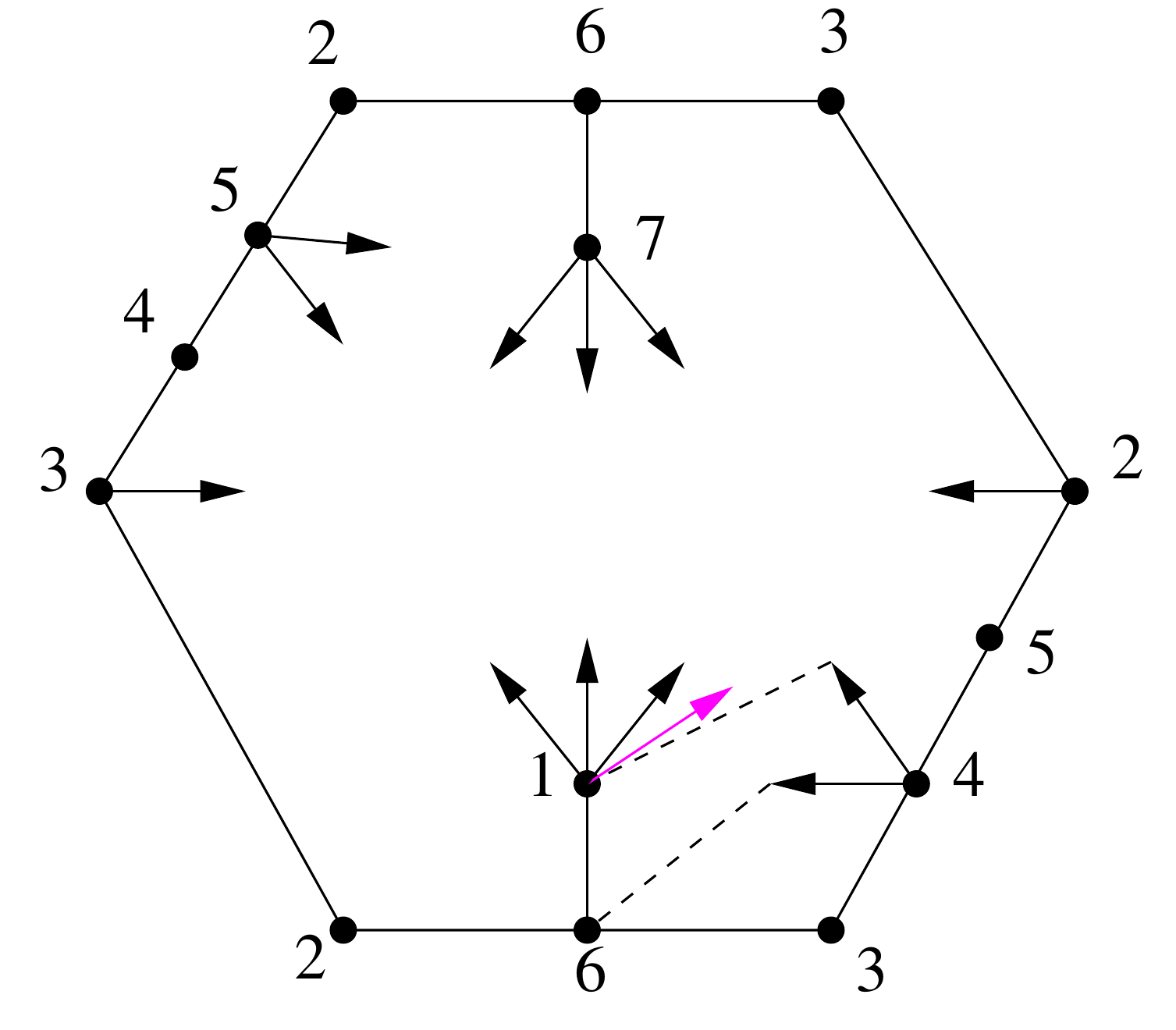} \\ 
\includegraphics[scale=0.34]{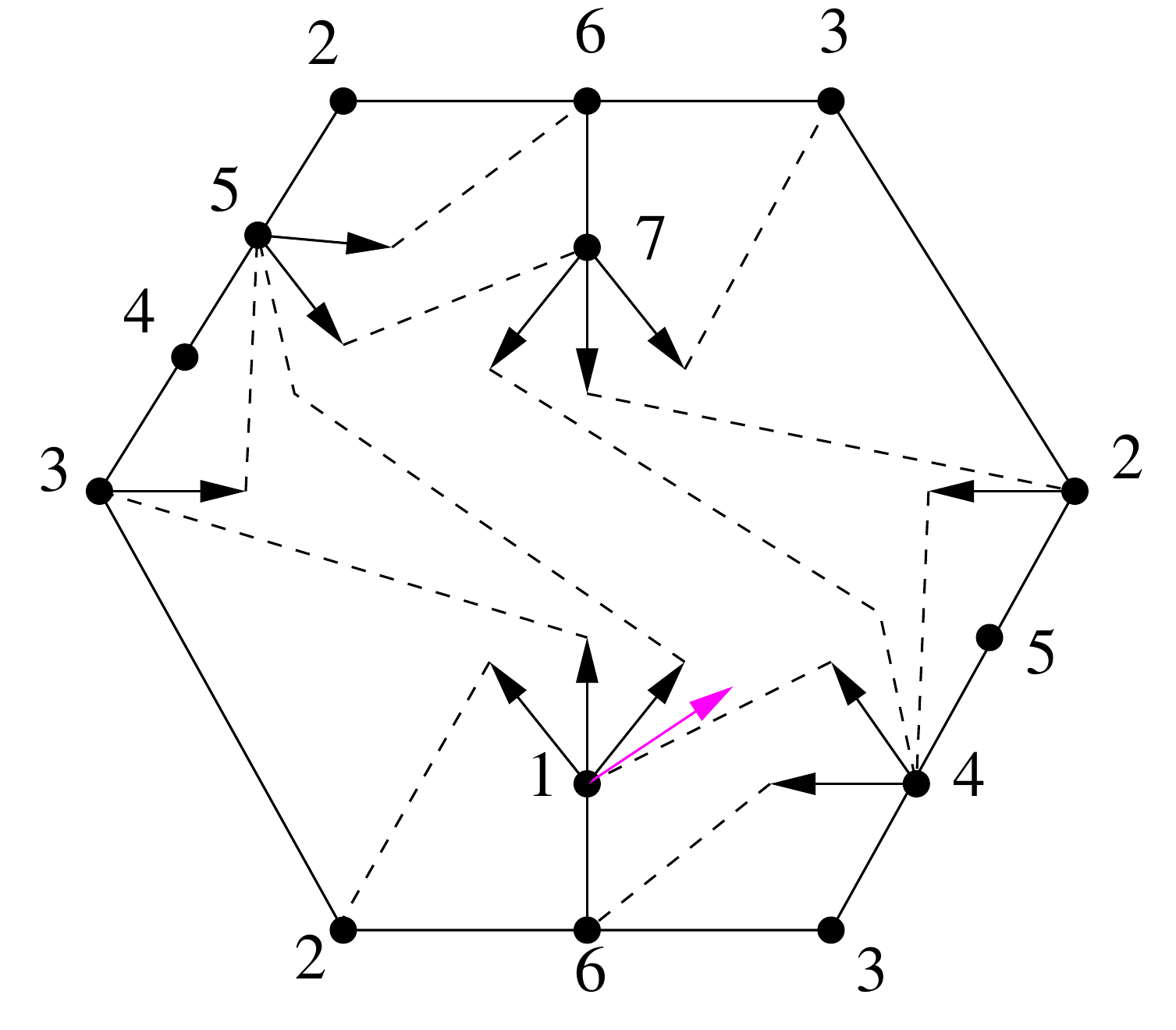} \ \
\includegraphics[scale=0.34]{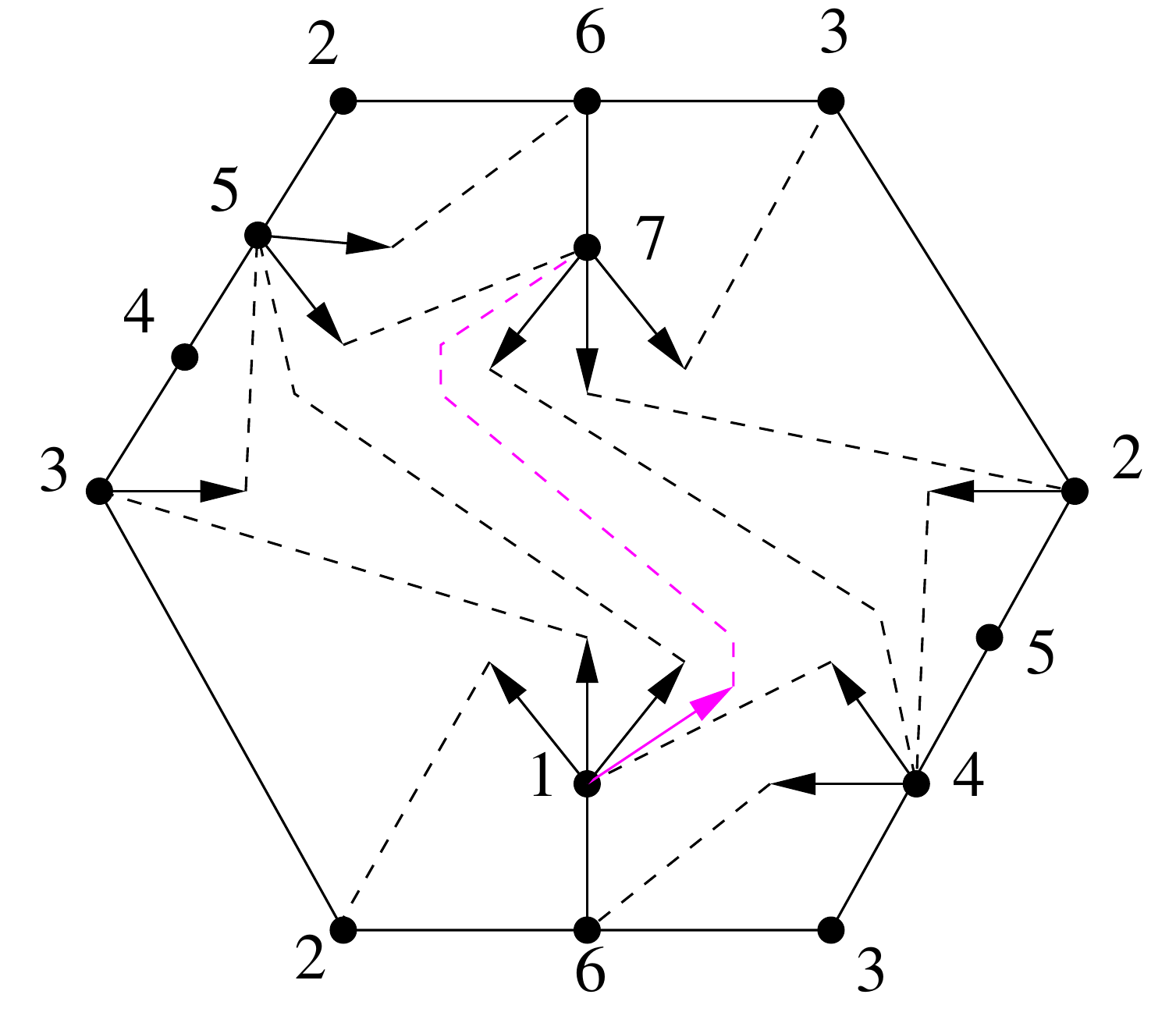}
\caption{Recovering $K_7$ from the extended mobile.}
\label{fig:k7-mobile-recover}
\end{figure}

In fact in this section we define a method, more general than the one
described in Theorem~\ref{th:recover}, that is useful for
Sections~\ref{sec:enumeration}.

Let $\mathcal M_{r}(n)$ denote the set of toroidal unicellular maps
with exactly $n$ vertices, $n+1$ edges and $2n-1$ stems such that all
vertices have degree $4$, except one vertex (called root vertex) that
has degree $5$, moreover the root vertex has a marked incident
half-edge (called the root half-edge) that is either a stem or whose
removal creates two connected components, one of which is a tree.
Note that the extended mobile $M^+$ given  
 by Theorem~\ref{th:unicellular} is an element of
$\mathcal M_r(n)$.

We use the classical closure procedure (see~\cite{Fus09}) to reattach
step by step all the stems of an element $M^+$ of $\mathcal M_r(n)$.
Let $M_0=M^+$, and, for $1\leq k \leq 2n-1$, let $M_{k}$ be the map
obtained from $M_{k-1}$ by reattaching one of its stem (we explicit
below which stem is reattached and how). The \emph{special face of
  $M_0$} is its only face. For $1\leq k \leq 2n-1$, the \emph{special
  face of $M_{k}$} is the face on the right of the stem of $M_{k-1}$
that is reattached to obtain $M_{k}$.  For $0\leq k\leq 2n-1$, the
border of the special face of $M_k$ consists of a sequence of edges
and stems. We define an \emph{admissible triple} as a sequence
$(e_1,e_2,s)$, appearing in \ccw order along the border of the special
face of $M_k$, such that $e_1=\{u,v\}$ and $e_2=\{v,w\}$ are edges of
$M_k$ and $s$ is a stem attached to $w$. The \emph{closure} of the
admissible triple consists in attaching $s$ to $u$, so that it creates
an edge $\{w,u\}$ and so that it creates a triangular face $(u,v,w)$
on its left side (when oriented from $w$ to $u$).  The \emph{complete
  closure} of $U$ consists in closing a sequence of admissible triples,
i.e.  for $1\leq k \leq 2n-1$, the map $M_{k}$ is obtained from
$M_{k-1}$ by closing any admissible triple.

Note that, for $0\leq k\leq 2n-1$, the special face of $M_k$ contains
all the stems of $M_k$. The closure of a stem reduces the number of
edges on the border of the special face and the number of stems by
$1$. At the beginning, the unicellular map $M_0$ has $n+1$ edges and
$2n-1$ stems. So along the border of its special face, there are
$2n+2$ edges and $2n-1$ stems. Thus there is exactly three more edges
than stems on the border of the special face of $M_0$ and this is
preserved while closing stems. So at each step there is necessarily at
least one admissible triple and the sequence $M_k$ is well defined.
Since the difference of three is preserved, the special face of
$M_{2n-2}$ is a quadrangle with exactly one stem. So the reattachment
of the last stem creates two faces that have length three and at the end
$M_{2n-1}$ is a toroidal triangulation.  Note that at a given step
there might be several admissible triples but their closure are
independent and the order in which they are performed does not modify
the obtained triangulation $M_{2n-1}$.

We now apply the closure method to our particular case.  Consider an
essentially 4-connected toroidal triangulation $G$, a root half-edge
$h_0$ of $G$ that is in the interior and incident to a maximal
quadrangle of $G$, and the extended mobile $M^+$ associated to the
minimal balanced $4$-orientation of $A(G)$ w.r.t.~$h_0$.
Recall that $M^+$ is an element of $\mathcal M_r(n)$ so we can apply on
$M^+$ the complete closure procedure described above.  We use the same
notation as before, i.e. let $M_0=M^+$ and for $1\leq k \leq 2n-1$, the
map $M_{k}$ is obtained from $M_{k-1}$ by closing any admissible
triple.  The following lemma shows that the triangulation obtained by
this method is $G$:

\begin{lemma}
\label{lem:stemstep}
  The complete closure of $M^+$ is $G$,  i.e. $M_{2n-1}=G$.
\end{lemma}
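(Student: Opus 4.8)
The goal is to show that the complete closure of the extended mobile $M^+$ reconstructs exactly the original triangulation $G$. The key idea is that the closure procedure described abstractly in this section is, in fact, the inverse of the procedure that built $M^+$ from $G$ and $D_{\min}$. So I would prove the lemma by exhibiting, at each step, a natural embedding of the partially closed map $M_k$ into $G$ (superimposed on $G$), and showing that this embedding is preserved by the closure of an admissible triple.

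First I would set up the superposition: draw $M^+ = M_0$ on top of $G$ using the embedding coming from the mobile construction of Section~\ref{sec:e4c2mob} (vertices of $M^+$ are vertices of $G$, full-edges of $M^+$ are edges of $G$, stems of $M^+$ point into angles of $G$). The crucial structural fact I would establish is the following invariant: for each $k$, the map $M_k$ can be drawn on the torus so that (i) its vertices and edges are vertices and edges of $G$, (ii) every stem of $M_k$ is a half-edge lying in some face of $M_k$, and (iii) the faces of $M_k$ that are triangular faces already coincide with triangular faces of $G$, while the special face of $M_k$, when we look at it inside $G$, is exactly the region of $G$ not yet ``triangulated'' — i.e. its interior in $G$ is subdivided into triangles of $G$ none of whose edges are yet edges of $M_k$, and whose missing edges are precisely the stems of $M_k$ together with the edges that the remaining closures will add. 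More precisely I would argue that the border of the special face of $M_k$, read in $G$, is a closed walk bounding a disk-with-punctures region whose internal structure in $G$ is governed by the orientation $D_{\min}$ restricted there.

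The induction step: given an admissible triple $(e_1, e_2, s)$ with $e_1 = \{u,v\}$, $e_2 = \{v,w\}$ consecutive on the border of the special face of $M_k$ and $s$ a stem at $w$, I would show that in $G$ the triangle $u,v,w$ is actually a face of $G$ and that $s$, reattached to $u$, lies along the edge $\{w,u\}$ of $G$. This is where the connection to the mobile rule (Figure~\ref{fig:mobile-rule}) and the $4$-orientation $D_{\min}$ is used: the definition of stems says a stem at $w$ sits just clockwise of an outgoing edge of $D_{\min}$ at the primal-vertex $w$, and walking \ccw along the border of the special face corresponds exactly to following right-walks in $D_{\min}$ (as in the proof of Theorem~\ref{th:unicellular}). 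Consecutiveness of $e_1, e_2, s$ on the border forces, via the fact that every face of $A(G)$ is a quadrangle and every dual-vertex has outdegree $1$, that $u,v,w$ bound a triangular face of $G$ and that the half-edge $s$ is the $G$-half-edge of the edge $\{u,w\}$ at $w$. Closing the triple thus adds to $M_k$ exactly an edge of $G$ and carves off exactly a face of $G$, which is the induction step. Starting the walk at the angle of $v_0$ just after $h_0$ (as in Theorem~\ref{th:recover}) makes the correspondence of the first step unambiguous, using that $h_0$ is incident to the root quadrangle and the orientation of the $\{4,8\}$-disk inside it established in the proof of Theorem~\ref{th:unicellular}.

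The main obstacle I anticipate is verifying that the closure order is irrelevant and that the abstract admissibility condition (three more edges than stems on the special face, so some triple is always admissible — already noted in the text) really coincides, step by step, with ``the next face of $G$ to be peeled off''; in other words, ruling out that a closure could ever attach a stem to the ``wrong'' vertex and produce an edge not in $G$, or prematurely close off a region. I would handle this by maintaining the stronger invariant that the border walk of the special face, read inside $G$, together with $D_{\min}$ restricted to the enclosed region, is literally the data from which the mobile construction inside that region would recreate the remaining stems — so that each closure is forced. Since at the end $M_{2n-1}$ is a toroidal triangulation on the same $n$ vertices with the same $3n$ edges, all of which are edges of $G$ by the invariant, we conclude $M_{2n-1} = G$.
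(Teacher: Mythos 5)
Your overall strategy coincides with the paper's: both proceed by induction on $k$ with the invariant that every non-special face of $M_k$ is a face of $G$, and both ultimately rest on the fact that dual-vertices have outdegree $1$ in the $4$-orientation, via the mobile rule of Figure~\ref{fig:mobile-rule}. But the decisive step of the induction is missing from your plan. You assert that consecutiveness of $e_1=\{u,v\}$, $e_2=\{v,w\}$ and the stem $s$ on the border of the special face ``forces'' $(u,v,w)$ to be a face of $G$ and $s$ to be the $G$-half-edge of $\{w,u\}$. It does not, at least not directly: a priori some edge of $G$ could protrude into the angle $a_v$ (between $e_1$ and $e_2$) or the angle $a_w$ (between $e_2$ and $s$), being represented in $M_k$ only by a stem attached at another vertex of the special face, in which case the abstract closure would create an edge that is not an edge of $G$. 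Ruling this out is precisely the content of the paper's proof: supposing such an intruding edge exists, it picks the corresponding stem $s'$ nearest to $s$, observes that the $G$-edges of $s$ and $s'$ must then be incident to a common triangular face $T$ of $G$, and notes that the two half-edges of $T$ lying on the same edges as $s$ and $s'$ are not in $M^+$, so by the mobile rule the dual-vertex of $T$ would have two distinct outgoing edges in $A(G)$, contradicting outdegree $1$. Your proposal names the right ingredients (mobile rule, quadrangular faces of $A(G)$, outdegree $1$) but never gives this argument, and your fallback --- maintaining the ``stronger invariant'' that the border walk together with $D_{\min}$ restricted to the enclosed region is literally the data of the mobile construction, ``so that each closure is forced'' --- essentially restates the conclusion to be proved rather than proving it.

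A secondary point: Lemma~\ref{lem:stemstep} concerns closing admissible triples in an \emph{arbitrary} order, so the induction step must handle any admissible triple of $M_k$; the walk starting at the root angle is only relevant to Lemma~\ref{lem:firststem} and Theorem~\ref{th:recover}, so invoking it to make ``the first step unambiguous'' is neither needed nor helpful for this lemma.
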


\begin{proof}
  We prove by induction on $k$ that every face of $M_k$ is a face of
  $G$, except for the special face.  This is true for $k=0$ since
  $M_0=M^+$ has only one face, the special face.  Let
  $0\leq k\leq 2n-2$, and suppose by induction that every non-special
  face of $M_k$ is a face of $G$.  Let $(e_1,e_2,s)$ be the admissible
  triple of $M_k$ such that its closure leads to $M_{k+1}$, with
  $e_1=\{u,v\}$ and $e_2=\{v,w\}$. The closure of this triple leads to a
  triangular face $(u,v,w)$ of $M_{k+1}$. This face is the only
  ``new'' non-special face while going from $M_k$ to $M_{k+1}$.

  Suppose, by contradiction, that this face $(u,v,w)$ is not a face of
  $G$.  Let $a_v$ (resp. $a_w$) be the angle of $M_k$ at the special
  face, between $e_1$ and $e_2$ (resp. $e_2$ and $s$).  Since $G$ is a
  triangulation, and $(u,v,w)$ is not a face of $G$, there exists at
  least one stem of $M_k$ that should be attached to $a_v$ or $a_w$ to
  form a proper edge of $G$. Let $s'$ be such a stem that is the
  nearest from $s$. In $G$ the edges corresponding so $s$ and $s'$
  should be incident to the same triangular face $T$. Let $x$ be the
  vertex incident to $s'$.  Let $z\in \{v,w\}$ such that $s'$ should
  be reattached to $z$.  If $z=v$, then $s$ should be reattached to
  $x$ to form a triangular face of $G$. If $z=w$, then $s$ should be
  reattached to a common neighbor of $w$ and $x$ located on the border
  of the special face of $M_k$ in \ccw order between $w$ and $x$. So
  in both cases $s$ should be reattached to a vertex $y$ located on
  the border of the special face of $M_k$ in \ccw order between $w$
  and $x$ (with possibly $y=x$). To summarize $s$ goes from $w$ to $y$
  and $s'$ from $x$ to $z$, and $z,w,y,x$ appear in \ccw around $T$
  with $z=w$ or $y=x$.  The two half-edges $h,h'$ of $T$ that are in
  the same edges with $s,s'$ are not in $M^+$. By the mobile rule (see
  Figure~\ref{fig:mobile-rule}), the two half-edges $h,h'$ that are
  not in $M^+$ corresponds in the orientation of $A(G)$ to two
  distinct outgoing edges for the dual-vertex corresponding to
  $T$. This contradicts the fact that the considered orientation of
  $A(G)$ is a $4$-orientation and that dual-vertices should have
  outdegree $1$.

  So for $0\leq k\leq 2n-2$, all the non-special faces of $M_k$ are
  faces of $G$. In particular every face of $M_{2n-1}$ except one is a
  face of $G$. Then clearly the (triangular) special face of
  $M_{2n-1}$ is also a face of $G$, hence $M_{2n-1}=G$.
\end{proof}

Lemma~\ref{lem:stemstep} shows that one can recover the original
triangulation from $M^+$ with any sequence of admissible triples that
are closed successively. This does not explain how to find the
admissible triples efficiently. In fact the root half-edge $h_0$ can
be used to find a particular admissible triple of $M_k$. We define the
root angle $a_0$ of $G$ as the angle of $v_0$ just after $h_0$ in \cw
order around $v_0$. This definition of $a_0$
naturally extends to $M^+$ or when some admissible triples are
 reattached.

 \begin{lemma}
 \label{lem:firststem}
 For $0\leq k\leq 2n-2$, let $s$ be the first stem met while walking
 \ccw from $a_0$ in the special face of $M_k$. Then before $s$, at
least two edges are met and  the last two of these edges form an
admissible triple with $s$.
\end{lemma}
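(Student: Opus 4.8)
The plan is to read the boundary of the special face of $M_k$ as a cyclic word in full-edges and stems. First I would note that the special face of $M_0=M^+$ is homeomorphic to an open disk (as $M^+$ is a toroidal unicellular map), and that the closure of an admissible triple cuts a triangular disk off a disk; hence by induction the special face of every $M_k$ is a disk and its border is a single closed walk. Reading this walk \ccw from $a_0$ gives a cyclic sequence of full-edges and stems in which, as recalled just before Lemma~\ref{lem:stemstep}, the full-edges outnumber the stems by exactly $3$; in particular, for $k\le 2n-2$ at least one stem occurs, so the first stem $s$ after $a_0$ is well defined, and by minimality every item strictly between $a_0$ and $s$ is a full-edge. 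Write them $f_1,\dots,f_t$ in order. If $t\ge 2$, then $(f_{t-1},f_t,s)$ is admissible: two consecutive full-edges of a facial walk meet at the vertex that is the corner between them, so $f_{t-1}=\{u,v\}$, $f_t=\{v,w\}$, and the corner immediately after $f_t$, where $s$ is attached, is $w$. So the whole statement reduces to showing $t\ge 2$, i.e.\ that the first two items of the \ccw walk from $a_0$ are full-edges.

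For this I would first settle the base case $M_0=M^+$ from the local picture around $v_0$ used in the proof of Theorem~\ref{th:unicellular}. Recall that there a maximal $\{4,8\}$-disk $W'$ of $A(G)$ is produced whose surrounding quadrangle $Q$ of $G$ lies inside the root quadrangle, with $h_0$ one of the half-edges in the interior of $Q$ incident to it (the thin black half-edges of Figure~\ref{fig:48disk-sq}). Since $h_0$ is the root half-edge it belongs to $M^+$; let $\beta$ be the first half-edge of $M^+$ met turning \cw around $v_0$ from $h_0$, so the $M^+$-corner containing $a_0$ is flanked by $h_0$ and $\beta$, and the \ccw facial walk leaves $v_0$ along $\beta$ rather than $h_0$ (leaving along $h_0$ would make $h_0$ the first item, impossible when $h_0$ is a stem, and excluded by the orientation around $W'$ otherwise). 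Combining Lemma~\ref{lem:8diskin}, the orientations of Figures~\ref{fig:48disk-mobile} and~\ref{fig:48disk-sq}, and the mobile rule (Figure~\ref{fig:mobile-rule}), one checks that $\beta$ and the next half-edge met along the walk are halves of full-edges of $M^+$ — in fact edges of $Q$ — so no stem occurs among the first two items and $t\ge 2$ for $M_0$.

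The inductive step $M_k\to M_{k+1}$ splits into two cases. Because the only admissible triple of $M_k$ whose items meet the prefix $f_1,\dots,f_t,s$ is $(f_{t-1},f_t,s)$, closing any other admissible triple acts entirely past $s$ along the walk from $a_0$, leaving that prefix untouched, and we are done by induction. Otherwise we close $(f_{t-1},f_t,s)$; this replaces the three items $f_{t-1},f_t,s$ by a single new full-edge $\{u,w\}$, cutting off a triangle $(u,v,w)$ which by Lemma~\ref{lem:stemstep} is a face of $G$, and leaves $f_1,\dots,f_{t-2}$ and the items past $s$ in place, so the walk from $a_0$ in $M_{k+1}$ begins with the block $f_1,\dots,f_{t-2},\{u,w\}$ of $t-1$ full-edges. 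The one remaining point is to exclude that the next item is a stem when $t=2$ (which would leave only one full-edge in front of it); I would rule this out by a local argument at $w$ in the spirit of Lemma~\ref{lem:stemstep}: two stems reattached to consecutive angles of a single triangular face of $G$ would force a dual-vertex of $A(G)$ to have out-degree $\ge 2$, a contradiction.

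The main obstacle is precisely this propagation — keeping control of the first two items of the walk issuing from the fixed corner $a_0$, uniformly in $k$ and for an arbitrary legal closure order. The base case is a finite check against Figures~\ref{fig:48disk-mobile} and~\ref{fig:48disk-sq}, but to make the induction close I expect one needs an invariant stronger than ``$t\ge 2$'' — most naturally, that the prefix of the walk from $a_0$ up to the first stem always consists of the current images of the edges of the root quadrangle $Q$, so that each closure either does not touch this prefix or merges two of its edges into one while keeping it a block of full-edges of length at least two. The TTS-labeling carried by $M^+$ (labels $0,1,2,3$ around each vertex, with $h_0$ labelled $0$) should provide a convenient alternative bookkeeping for tracking the corner $a_0$ and its neighbours through the closures.
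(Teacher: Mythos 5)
Your reduction to ``$t\ge 2$'' and your observation that closures of other admissible triples cannot disturb the prefix between $a_0$ and the first stem are fine, but the proof as proposed has a genuine gap that you yourself flag: the induction does not close. The base case is left as ``a finite check against Figures~\ref{fig:48disk-mobile} and~\ref{fig:48disk-sq}'' together with the claim that the first two items of the walk are edges of the root quadrangle $Q$; this is not carried out, and it is doubtful as stated (in the $8$-disk case a tree of $M$ hangs inside $Q$ and is attached at $v_0$ via $h_0$, so the items following $a_0$ need not be edges of $Q$ at all). In the inductive step, when $t=2$ and the triple $(f_1,f_2,s)$ itself is closed, you must exclude that the item following $s$ is a stem; the local argument you sketch (``two stems reattached to consecutive angles of a single triangular face'') does not match this configuration without further work, and you then concede that a stronger invariant would be needed, proposing one (the prefix consists of images of the edges of $Q$) that is neither proved nor clearly true. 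So the proposal, as written, does not establish the lemma.

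The paper avoids induction entirely with a one-step argument that you actually have all the ingredients for. Since (by Lemma~\ref{lem:stemstep} and the fact that other closures never alter the prefix from $a_0$) the stem $s$ is eventually reattached to form an edge of $G$ whose left side is a triangular face $T$ of $G$ bounded by the items currently preceding $s$, having only zero or one edge before $s$ forces $T$ to contain the root angle $a_0$, hence the root half-edge $h_0$, on its border. Now apply exactly the dual-outdegree argument you used for two stems, but to the pair consisting of $h$ (the half-edge opposite $s$, not in $M^+$) and $h_0$ (in $M^+$ but not in $M$): by the mobile rule each of them corresponds to an outgoing edge of the $4$-orientation at the dual-vertex of $T$, giving that dual-vertex outdegree at least $2$, a contradiction. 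In short, your key tool is deployed in the wrong place; pointed at $h_0$ it settles the statement for every $k$ directly, with no invariant to propagate.
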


\begin{proof}
  Since $s$ is the first stem met, there are only edges that are met
  before $s$. Suppose by contradiction that there is only zero or one
  edge met before $s$. Then the reattachment of $s$ to form the
  corresponding edge of $G$ is necessarily such that the triangular
  face $T$ that is formed on the left side of the stem contains the
  root half-edge $h_0$ on its border.  Let $h$ be the half-edge of $T$
  that  is in the same edge with $s$ and not in $M^+$.  Then the
  two half-edges $h,h_0$ that are not in $M$ corresponds in the
  orientation of $A(G)$ to two distinct outgoing edges for the
  dual-vertex corresponding to $T$. This contradicts the fact that the
  considered orientation of $A(G)$ is a $4$-orientation and that
  dual-vertices should have outdegree $1$.
\end{proof}

Lemma~\ref{lem:firststem} shows that one can reattach all the stems by
walking once along the face of $M^+$ in \ccw order starting from
$a_0$. Thus we obtain Theorem~\ref{th:recover}.

Note that $M^+$ is such that the complete closure procedure described
here never \emph{wraps over the root angle}, i.e. when a stem is
reattached, the root angle is always in the face that is on its right
side. The property of never wrapping over the root angle is called
\emph{safe} here. Note that sometimes this property is called ``balanced''
in the literature and here the
word ``balanced'' is already used  with a completely different
meaning. Let $\mathcal M_{r,s}(n)$ denote the set of elements of
$\mathcal M_r(n)$ that are safe. So the extended mobile given by
Theorem~\ref{th:unicellular} is an element of $\mathcal M_{r,s}(n)$.

 We exhibit in Section~\ref{sec:bijection} a
bijection between appropriately rooted essentially 4-connected toroidal triangulations and a
particular subset of $\mathcal M_{r,s}(n)$.

The possibility to close admissible triples in any order to recover
the original triangulation is interesting compared to the simpler
method of Theorem~\ref{th:recover} since it enables to recover the
triangulation even if the root half-edge is not given. 
Indeed, when the root angle is not given, then one can simply start
from any angle of $M^+$, walk twice around the face of $M^+$ in \ccw
order and reattach all the admissible triples that are encountered
along this walk. Walking twice ensures that at least one complete
round is done from the root angle. Since only admissible triples are
considered, we are sure that no unwanted reattachment is done during
the process and that the final map is $G$. This enables us to reconstruct
$G$ in linear time even if the root angle is not known.  This property
will also be used in Section~\ref{sec:enumeration} for enumeration
purpose.

\subsection{Asymptotically optimal encoding}
\label{sec:coding}

A $4$-connected planar triangulation on $n$ vertices, can be encoded
with a binary word of length
$\sim n\, \log_2(\frac{27}{4})\approx 2.7549\,n$
(see~\cite[Theorem~4.2]{FusThesis}).  This is asymptotically optimal
since, by results of Tutte, the number $P_n$ of $4$-connected planar
triangulations on $n$ vertices satisfies
$\log_2(P_n)\sim n\, \log_2(\frac{27}{4})$.  The results of previous
sections allow us to generalize this optimal encoding to the toroidal
case.

A \emph{ternary tree} is a
  plane tree, rooted at a leaf, such that every inner vertex has
  degree exactly four.  
A ternary tree $T$ on $n$ inner vertices can easily be encoded using a
binary word on $3n$ bits by the following: walk in \ccw order around
$T$ from the root angle, write a ``1'' when an inner vertex is
discovered for the first time, and a ``0'' when a leaf is traversed. A
ternary tree on $n$ inner vertices has $n$ inner vertices and $2n+2$
leaves. So we obtain a binary word of length $3n+2$ with $n$
bits $1$.  Using \cite[Lemma~7]{BGH03}, this word can then be encoded
with a binary word of length
$\log_2\binom{3n+2}{n}+o(n)\sim n\, \log_2(\frac{27}{4})\approx
2.7549\,n$ bits.

Consider an essentially 4-connected toroidal triangulation $G$, a root
half-edge $h_0$ of $G$ that is in the interior and incident to a
maximal quadrangle of $G$, and the extended mobile $M^+$ associated to
the minimal balanced $4$-orientation of $A(G)$ w.r.t.~$h_0$.  By
Theorem~\ref{th:recover} one can retrieve the triangulation $G$ from
$M^+$.  Hence to encode $G$, one just has to encode $M^+$.  The
extended mobile $M^+$ is a toroidal unicellular map with $n$ vertices,
$n+1$ edges, $2n-1$ stems. All its vertices have degree $4$, except
the root vertex that has degree $5$. Either $h_0$ is a stem of $M^+$
or its removal creates two connected components, one of which is a
tree.

Let $k\geq 0$ be the number of (inner) vertices of the tree part
attached to $h_0$, with $k=0$ if $h_0$ is a stem.  We remove the
half-edge $h_0$ from $M^+$, and obtain : a toroidal component $G_1$,
that we root at the angle where $h_0$ is attached, and a tree
component $T_2$, that we root at the half-edge opposite to $h_0$.  So
$G_1$ is a toroidal unicellular map with $n-k$ vertices, $n-k+1$
edges, $2n-2k-2$ stems. And $T_2$ is a planar tree with $k$ vertices,
$k-1$ edges, $2k+2$ stems. Moreover, all the vertices of $G_1$ and
$T_2$ have degree $4$.  Now we choose two edges $e_1,e_2$ of $G_1$,
such that $G_1 \setminus \{e_1,e_2\}$ is acyclic. We transform $G_1$
into a planar tree $T_1$ by cutting $e_1,e_2$ and transforming each of
$e_1,e_2$ into two special stems of $T_1$.  We root $T_1$ on a stem by
keeping the information of where are the special stems, which pairs
should be reattached together, and where is the angle attached to
$h_0$.  This information can be stored with $O(\log(n))$ bits.  One
can recover $G_1$ from $T_1$ by reattaching the special stems in order
to form non-contractible cycles and changing the root.  Thus we are
left to encode two ternary trees $T_1$ and $T_2$ with $n-k$ and $k$
inner vertices, respectively.  

By applying the ternary tree encoding
method on $T_1$ and $T_2$ we obtain the following theorem :

\begin{theorem}
\label{th:encoding}
Any essentially $4$-connected toroidal triangulation on $n$ vertices,
can be encoded with a binary word of length
$\sim n\, \log_2(\frac{27}{4})\approx
2.7549\,n$ and this is asymptotically optimal.
\end{theorem}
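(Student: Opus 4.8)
The plan is to establish the two halves of the statement separately: first that the encoding described just above has length $\sim n\log_2(\tfrac{27}{4})$, and then that no encoding of essentially $4$-connected toroidal triangulations can do asymptotically better.

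For the \emph{length bound}, recall that encoding a triangulation $G$ rooted at $h_0$ reduces, by Theorem~\ref{th:recover}, to encoding the extended mobile $M^+$; and $M^+$ is recorded by splitting it at $h_0$ into the toroidal piece $G_1$ and the tree piece $T_2$, cutting two edges of $G_1$ to turn it into a ternary tree $T_1$, and storing $T_1$, $T_2$ together with $O(\log n)$ bits of bookkeeping (the value $k$, the positions and pairing of the two special stems, the angle carrying $h_0$, the choice of new root). Here $T_1$ is a ternary tree on $n-k$ inner vertices and $T_2$ one on $k$ inner vertices, so by the ternary-tree encoding recalled above they are stored with $\log_2\binom{3(n-k)+2}{n-k}+o(n)$ and $\log_2\binom{3k+2}{k}+o(n)$ bits respectively. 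Since $\binom{3m+2}{m}$ differs from $\binom{3m}{m}$ only by a $\mathrm{poly}(m)$ factor and, by Stirling, $\log_2\binom{3m}{m}=m\log_2(\tfrac{27}{4})-\Theta(\log m)$, we get $\log_2\binom{3m+2}{m}=m\log_2(\tfrac{27}{4})+O(\log m)$. Summing the two contributions and using $(n-k)+k=n$, the total length is $n\log_2(\tfrac{27}{4})+O(\log n)+o(n)=n\log_2(\tfrac{27}{4})+o(n)$; the degenerate case $k=0$, where $T_2$ is empty and $h_0$ is a stem, is handled identically.

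For \emph{optimality} it suffices to show that the number $T_n$ of essentially $4$-connected toroidal triangulations on $n$ vertices satisfies $\log_2 T_n\geq n\log_2(\tfrac{27}{4})-o(n)$, since any scheme encoding an $N$-element set into binary words must use at least $\log_2 N$ bits on some instance, whereas the scheme above uses at most $n\log_2(\tfrac{27}{4})+o(n)$. I would obtain this lower bound by exhibiting a map, at most $\mathrm{poly}(n)$-to-one, from $4$-connected planar triangulations on $n-c$ vertices (for a fixed constant $c$) to essentially $4$-connected toroidal triangulations on $n$ vertices: excise a bounded patch around a fixed edge of the planar triangulation and replace it by a fixed toroidal ``handle'' gadget, keeping the map a triangulation, and then check via Lemma~\ref{lem:e4ciffnst} that the universal cover of the result has no separating triangle. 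Combined with Tutte's result $\log_2 P_N\sim N\log_2(\tfrac{27}{4})$ (with $N=n-c$) this gives the claim; alternatively, the bijection of Section~\ref{sec:bijmain} and the enumeration of Section~\ref{sec:enumeration} yield $\log_2 T_n\sim n\log_2(\tfrac{27}{4})$ directly. Note that the first half of the proof already supplies the matching upper bound: every edge of $G$ lies in a maximal quadrangle (Lemma~\ref{lem:uniqueQuadrangle}), so every triangulation admits a valid root, and the encoding is injective on rooted triangulations by Theorem~\ref{th:recover}, whence $\log_2 T_n\leq n\log_2(\tfrac{27}{4})+o(n)$ as well.

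The length computation is routine Stirling asymptotics. The real work is the optimality half, and specifically the verification that a handle gadget can be glued into a $4$-connected planar triangulation while preserving essential $4$-connectedness: attaching a handle is precisely the operation that tends to create short non-contractible cycles, hence separating triangles, in the universal cover, so the gadget and its attachment must be designed and checked with care (the enumeration route of Section~\ref{sec:enumeration} sidesteps this entirely, at the cost of relying on the bijective machinery developed later).
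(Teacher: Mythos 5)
Your first half (the length bound) is exactly the paper's argument: split the extended mobile at $h_0$ into $T_1$ and $T_2$, store $O(\log n)$ bits of bookkeeping, encode each ternary tree in $\log_2\binom{3m+2}{m}+o(n)$ bits, and sum; your Stirling computation just makes explicit what the paper leaves implicit, and it is correct. Where you diverge is optimality. The paper settles it in one sentence: the number of essentially $4$-connected toroidal triangulations is at least the number of $4$-connected planar triangulations, and Tutte's asymptotics do the rest. Your primary route --- excising a patch of a $4$-connected planar triangulation and gluing in a toroidal handle gadget, then checking via Lemma~\ref{lem:e4ciffnst} that no separating triangle appears in the universal cover --- is the one step of your proposal that is not actually carried out: you neither exhibit the gadget nor perform the verification, and you yourself flag this as ``the real work.'' If this were your only argument there would be a genuine gap. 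It is also unnecessary work: the injection you need is already supplied by the paper's planar/toroidal decomposition. Fixing any single toroidal part (for instance the $k=1$ term, $|\mathcal T_c^t(1)|=1$) and filling its root quadrangle with an arbitrary element of $\mathcal{T}_{h,c}^p(n-1)$ produces, by Lemma~\ref{lem:e4cquad} and Lemma~\ref{lem:bijrootcorner}, distinct essentially $4$-connected toroidal triangulations up to the polynomial multiplicity of rootings and marked corners; since $|\mathcal{T}_{h,c}^p(m)|=\frac{4}{m+1}\binom{3m+1}{m}$ by Lemma~\ref{lem:planarpart2}, this gives $\log_2 T_n\geq n\log_2(\tfrac{27}{4})-O(\log n)$ with no new construction and no delicate connectivity check. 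Your fallback of citing the enumeration machinery (Theorem~\ref{thm:enumeration}) is logically sound, since those results do not rely on the encoding theorem, but it is far heavier than what is needed; with either of these replacements for the gadget surgery your proof is complete and matches the paper's in substance.
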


The optimality of Theorem~\ref{th:encoding} is due to the fact that the number of essentially
$4$-connected toroidal triangulations is at least the number of 
$4$-connected planar triangulations.

Here is a remark on the complexity of the encoding part. All the
encoding and decoding process is linear as soon as a balanced
transversal structure is given. But even if the proof of
Theorem~\ref{th:existence} is constructive and gives a polynomial
algorithm to find a balanced transversal structure, the obtained algorithm is not
linear. The difficulty is to be able to find contractible edges, and
contract all the graph to a single vertex, in linear time. Currently
this has to be done by Lemma~\ref{lem:contraction} that does not give
linear complexity. So the question to be able to find in linear time a
balanced transversal structure of an essentially 4-connected
triangulation is an interesting and open problem.

In the plane, the proof of the existence of such objects is usually done
quite easily by using a so-called shelling order (or canonical
order). This method consists in starting from the outer face and
removing the vertices one by one. It leads to simple linear time
algorithms.  We do not see how to generalize this kind of method here
since the toroidal objects that we considered are too homogeneous and
there is no special face (and thus no particular starting point)
playing the role of the outer face.

\subsection{Bijective consequences}
\label{sec:bijection}

Consider an essentially 4-connected toroidal triangulation $G$, a root
half-edge $h_0$ of $G$ that is in the interior and incident to a
maximal quadrangle, and the extended mobile $M^+$ associated to the
minimal balanced $4$-orientation of $A(G)$ w.r.t.~$h_0$.
Theorems~\ref{th:unicellular} and~\ref{th:recover} show that $M^+$
gives a toroidal unicellular map with stems from which one can recover
the original triangulation. Thus there is a bijection between
essentially 4-connected toroidal triangulations rooted from an
appropriate half-edge and their corresponding set of extended mobiles. The
goal of this section is to describe exactly the set of these extended
mobiles.

Recall from Section~\ref{sec:recover} that the obtained extended
mobiles are elements of $\mathcal M_{r,s}(n)$.  One may hope that
there is a bijection between essentially 4-connected toroidal
triangulations appropriately rooted and $\mathcal M_{r,s}(n)$. This is
the classic behavior in the planar case since there is a unique
lattice associated to the set of $\alpha$-orientations of a planar map
(for a fixed $\alpha$). But here, things are different since the set
of $4$-orientations of the angle map is now partitioned into several
lattices and there might be several minimal elements, some of which
behave well w.r.t.~the mobile rule.  Indeed, there exists examples of
minimal non-balanced $4$-orientations of angle maps of essentially
4-connected toroidal triangulations appropriately rooted such that the
corresponding extended mobile is in $\mathcal M_{r,s}(n)$. The
balanced property is the property that defines uniquely our considered
minimal element and thus we have to translate this property on the set
of mobiles.

Note that there are two types
of toroidal unicellular maps.
Two cycles of a unicellular map may intersect either on a single
vertex (square case) or on a path (hexagon case). We call such maps
\emph{square unicellular maps} or \emph{hexagon unicellular maps},
respectively.  The square can be seen as a particular case of the
hexagon where one side has length zero and thus the two corners of the
hexagon are identified.  In the square case (resp. hexagon case), the
unicellular map has exactly $2$ (resp. $3$) distinct cycles that are
moreover non-contractible and not weakly homologous to each other.

Recall that given a cycle $C$ of $G$ with a direction of traversal, we
have $\gamma (C)$ equals the number of edges of $A(G)$ leaving $C$ on
its right minus the number of edges of $A(G)$ leaving $C$ on its left.
Recall that the root angle of $G$ is the angle just after the root
half-edge in \cw order.  In each angle of the extended mobile $M^+$,
except the root angle, there is an outgoing edge of $A(G)$ (see rule
of Figure~\ref{fig:mobile-rule}). So for a cycle $C$ of the extended
mobile, one can compute $\gamma (C)$ by considering the angles of
$M^+$ on the left and right side of $C$, except the root angle.  Then,
since we are considering balanced $4$-orientations of the angle map,
for any (non-contractible) cycle $C$ of the extended mobile obtained
by Theorem~\ref{th:unicellular}, we have $\gamma (C)=0$.

Consider an element $M^+$ of $\mathcal M_{r}(n)$.  We say that an
unicellular map of $\mathcal M_{r}(n)$ is \emph{balanced} if every
cycle of the unicellular map has the same number of angles on the left and right
sides, with the special rule that the root angle does not count.  Let
$\mathcal M_{r,s,b}(n)$ denote the subset of elements of
$\mathcal M_{r,s}(n)$ that are balanced.

Let us recall, for the sake of clarity, the complete definition of
$\mathcal M_{r,s,b}(n)$ that is the set of toroidal unicellular maps
with exactly $n$ vertices, $n+1$ edges and $2n-1$ stems such that:
\begin{itemize}
\item \emph{``r'' for root:} All
vertices have degree $4$, except one vertex (called root vertex) that
has degree $5$, moreover the root vertex has a marked incident
half-edge (called the root half-edge) that is either a stem or whose
removal creates two connected components, one of which is a tree.
\item \emph{``s'' for safe:} When the stems of admissible triples are
  reattached (in any order), the angle just after the root half-edge
  in \cw order (called the root angle) is always in the face that is
  on the right side of the stems.
\item \emph{``b'' for balanced:} Every (non-contractible) cycle of the
  map has the same number of angles on its left and right sides, with
  the special rule that the root angle does not count.
\end{itemize}

Let $\mathcal T_r(n)$ be the set of essentially 4-connected toroidal
triangulations on $n$ vertices rooted at a half-edge that is in the
interior and incident to a maximal quadrangle.

We have the following bijection:

\begin{theorem}
\label{th:bij1}
  There is a bijection between $\mathcal T_r(n)$ and
  $\mathcal M_{r,s,b}(n)$. 
\end{theorem}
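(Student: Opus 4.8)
The plan is to exhibit the two directions of the bijection explicitly and check that they are mutually inverse. In the forward direction, given $G\in\mathcal T_r(n)$, rooted at a half-edge $h_0$ that is in the interior of and incident to a maximal quadrangle, I associate to it the extended mobile $M^+$ of the minimal balanced $4$-orientation $D_{\min}$ of $A(G)$ with respect to $h_0$. By Theorem~\ref{th:unicellular}, $M^+$ is a toroidal unicellular map covering all vertices and edges of $G$, with $n$ vertices, $n+1$ edges and $2n-1$ stems, all vertices of degree $4$ except the root vertex of degree $5$, and $h_0$ is either a stem of $M^+$ or its removal creates two components one of which is a tree; hence $M^+\in\mathcal M_r(n)$. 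The safe property (that no reattachment wraps over the root angle) is exactly what is observed in Section~\ref{sec:recover} (Lemma~\ref{lem:firststem} and the remark following it), so $M^+\in\mathcal M_{r,s}(n)$. Finally, since $D_{\min}$ is balanced, Corollary~\ref{cor:bal4orTS} and the discussion preceding the definition of $\mathcal M_{r,s,b}(n)$ give that every non-contractible cycle $C$ of $M^+$ has $\gamma(C)=0$, i.e.\ the same number of non-root angles on each side; thus $M^+\in\mathcal M_{r,s,b}(n)$. This defines a map $\Phi:\mathcal T_r(n)\to\mathcal M_{r,s,b}(n)$.

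For the reverse direction, given $M^+\in\mathcal M_{r,s,b}(n)$, I apply the complete closure procedure of Section~\ref{sec:recover}: since $M^+\in\mathcal M_r(n)$, at each of the $2n-1$ steps there is an admissible triple (there are always exactly three more edges than stems on the border of the special face), and the resulting map $M_{2n-1}$ is a toroidal triangulation on $n$ vertices, which I root at $h_0$. Call this map $\Psi(M^+)$. I must check three things: (i) $\Psi(M^+)$ is essentially $4$-connected; (ii) $h_0$ is in the interior of and incident to a maximal quadrangle of $\Psi(M^+)$, so that $\Psi(M^+)\in\mathcal T_r(n)$; (iii) $\Phi$ and $\Psi$ are inverse to each other. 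For (iii) in one direction, $\Psi\circ\Phi=\mathrm{id}$ is precisely Lemma~\ref{lem:stemstep}, which says the complete closure of the extended mobile $M^+$ associated to $D_{\min}$ recovers $G$, together with the fact that the root half-edge is preserved by construction.

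The harder inclusion is $\Phi\circ\Psi=\mathrm{id}$, and with it points (i) and (ii). The key idea is that the closure procedure produces not only the triangulation $H=\Psi(M^+)$ but simultaneously a $4$-orientation of $A(H)$: by the mobile rule (Figure~\ref{fig:mobile-rule}) each angle of $M^+$ other than the root angle carries one outgoing edge of $A(H)$, and the reattachments place the dual-vertices so that each triangular face created has outdegree exactly $1$; the balanced ("b") condition on $M^+$ translates, via $\gamma$, into $\gamma(C)=0$ for the (two or three) non-contractible non-weakly-homologous cycles of $M^+$, hence for a homology basis, so by Lemma~\ref{lem:gamma0all} the reconstructed orientation is a \emph{balanced} $4$-orientation of $A(H)$, and in particular (Corollary~\ref{cor:bal4orTS}) $A(H)$ admits balanced $4$-orientations. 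By Theorem~\ref{th:existence} (or directly Lemma~\ref{lem:e4c}) the existence of a balanced transversal structure forces $H$ to be essentially $4$-connected, giving (i). For (ii), the safe ("s") condition guarantees the root half-edge lands inside a $\{4,8\}$-disk as in the proof of Theorem~\ref{th:unicellular}, whose surrounding quadrangle (Figure~\ref{fig:48disk-sq}) is maximal by Lemma~\ref{lem:uniqueQuadrangle}, so $h_0$ is in the interior of a maximal quadrangle. Finally one argues that the reconstructed balanced orientation is the \emph{minimal} one with respect to $f_0$: the safe condition says that no stem-closure wraps over the root angle, which forces the orientation to contain no clockwise non-empty $0$-homologous oriented subgraph with respect to $f_0$, and by Lemma~\ref{prop:maximal} (characterizing $D_{\min}$ as the unique balanced $4$-orientation with this property) it must equal $D_{\min}$; running $\Phi$ then returns $M^+$. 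The main obstacle is this last identification --- showing that "safe" is equivalent to "the reconstructed orientation is the minimal element of the balanced lattice" --- which requires carefully matching the combinatorics of the closure wrap-arounds with the clockwise-$0$-homologous-subgraph criterion of Lemma~\ref{prop:maximal}; everything else is bookkeeping built from the already-established Theorems~\ref{th:unicellular}, \ref{th:recover} and Lemma~\ref{lem:stemstep}.
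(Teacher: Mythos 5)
Your forward direction and the overall skeleton (closure, building a $4$-orientation along the way, balanced $\Rightarrow\gamma=0$ on a basis $\Rightarrow$ Lemma~\ref{lem:gamma0all}, then minimality and root placement) match the paper, and $\Psi\circ\Phi=\mathrm{id}$ via Lemma~\ref{lem:stemstep} is fine. But the two steps you yourself flag as the crux are not proved, and the mechanism you propose for them is not the right one. You claim that the safe condition ``forces the orientation to contain no clockwise non-empty $0$-homologous oriented subgraph w.r.t.\ $f_0$'' and then invoke Lemma~\ref{prop:maximal} as a characterization of $D_{\min}$. Lemma~\ref{prop:maximal} only gives one direction (the converse does hold, but via a short Hasse-diagram argument you would still have to supply), and, more importantly, no argument is given that ``safe'' yields the no-clockwise-subgraph property --- and this is not how the paper establishes minimality. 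In the paper, the safe property is used earlier, to make the construction of the $4$-orientation during closure well defined (it guarantees that the two angles inside each newly created triangle that must send edges to the new dual-vertex are never the root angle, which has no outgoing half-edge); minimality is then obtained by contradiction from the structure of the reduced angle graph: a non-minimal $D$ would have a face of $\widetilde{A(G)}$ not containing $f_0$ oriented clockwise, whose outer boundary is a $\{4,8\}$-disk (Lemma~\ref{lem:contractibletilde}); by Lemma~\ref{lem:8diskin} and the mobile rule no half-edge of $M^+$ lies inside it, so either some edges of $G$ are uncovered or $M^+$ is disconnected, contradicting that $M^+$ is a unicellular map covering everything. So the decisive ingredient is the coverage/connectivity of $M^+$, not the safe property, and your proposal leaves this gap open.

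The root-placement step (ii) has the same problem. Lemma~\ref{lem:uniqueQuadrangle} does not say that the quadrangle surrounding a $\{4,8\}$-disk is maximal, and what must be excluded is precisely that $h_0$ lies in the \emph{strict} interior of the maximal quadrangle (the condition is ``in the interior \emph{and incident to}''). The paper's argument uses the already-established minimality of $D$ together with Lemma~\ref{lem:maxdiskroot} (the maximal $\{4,8\}$-disk containing $f_0$ is oriented clockwise in $D_{\min}$), rules out the $4$-disk case, and again derives a coverage/connectivity contradiction in the $8$-disk case via Lemma~\ref{lem:8diskin} and the mobile rule. Your one-line appeal to ``safe'' does not substitute for this. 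Finally, your claim that essential $4$-connectedness follows from Corollary~\ref{cor:bal4orTS} plus Lemma~\ref{lem:e4c} is a reasonable alternative route, but it presupposes that the closed map is a legitimate toroidal triangulation (no contractible loops, no homotopic multiple edges), which the paper disposes of by a counting argument with the constructed $4$-orientation and which you do not address.
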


\begin{proof}
  Consider the mapping $g$ that associates to an element of
  $\mc T_r(n)$, the extended mobile $M^+$ obtained by
  Theorem~\ref{th:unicellular}. By the above discussion the image of
  $g$ is in $\mathcal M_{r,s,b}(n)$ and $g$ is injective since one can
  recover the original triangulation from its image by
  Theorem~\ref{th:recover}.

  Conversely, given an element $M^+$ of $\mathcal M_{r,s,b}(n)$ with
  root angle $a_0$ (just after the root half-edge in \cw order around
  the root vertex), one can build a toroidal map $G$ by the complete
  closure procedure described in Section~\ref{sec:recover}. The number
  of stems and edges of $M^+$ implies that all faces of $G$ are
  triangles. We explain later why $G$ has no contractible loop nor
  multiple edges and that it is essentially 4-connected.

  While making the complete closure, one can create a 4-orientation
  $D$ of $A(G)$ with the following method.  For each half-edge $h$ of
  $M^+$ distinct from $h_0$, such that $h$ is incident to  vertex
  $v$, add to $D$ an outgoing half-edge incident to $v$ and just after
  $h$ in \cw order around $v$. Note that this is done not only for
  stems of $M^+$ but for all the half-edges of $M^+$, including those
  that are part of full-edges of $M^+$, except $h_0$.

  Consider the moment when an admissible triple $(e_1,e_2,s)$ of $M_k$
  is closed in order to obtain $M_{k+1}$, with $0\leq k\leq 2n-2$. Let
  $e_1=(u,v)$, $e_2=(v,w)$ and $s$ is a stem attached to $w$.  When
  the stem $s$ is reattached to $u$ to form a triangular face $T$ on
  its left side, it is reattached to $u$ in order to leave the
  half-edge of $D$ leaving $u$ (if any) on the right side (see
  Figure~\ref{fig:closestem}). Note that if the angle at $u$ is the
  root angle, then there is no half-edge of $D$ leaving $u$. By doing
  so we maintain the property that for all the angles of the face
  containing the root angle (called the special face in
  Section~\ref{sec:recover}), there is an outgoing half-edge of $D$,
  except for the root angle.

\begin{figure}[!ht]
\center
\includegraphics[scale=0.4]{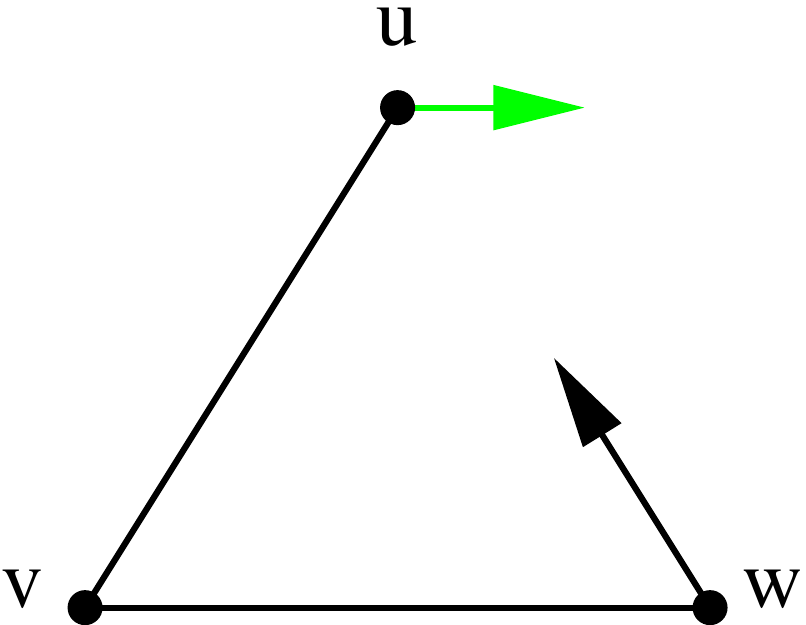} \ \ \ \ 
\includegraphics[scale=0.4]{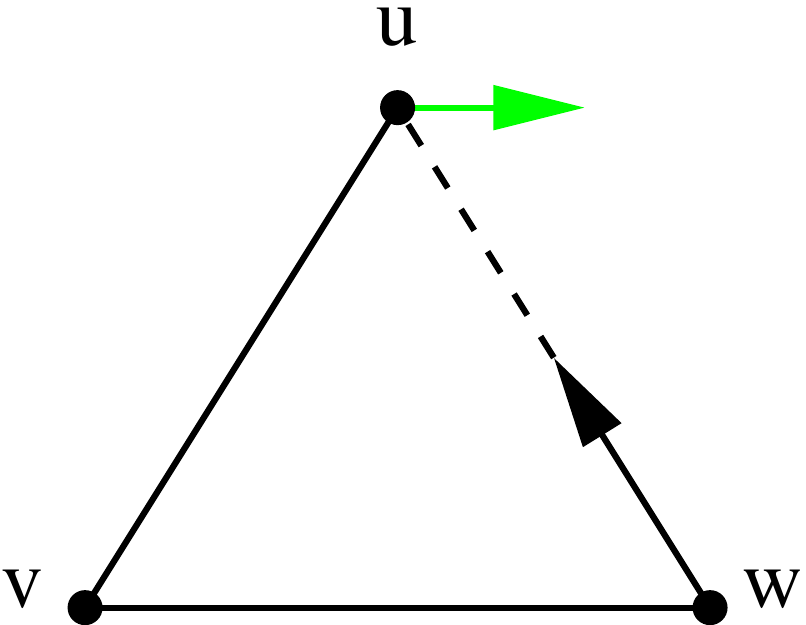}
\caption{Reattachment of a stem.}
\label{fig:closestem}
\end{figure}

In order to describe the edges of $D$ completely, we consider two
cases whether $s$ is the last stem that is reattached or not.

\begin{itemize}
\item \emph{$s$ is not the last reattached stem} 

By the  safe
  property, $s$ has the root angle on its right side when it is
  reattached. So the angle at $v$ (resp. $w$) between $e_1,e_2$
  (resp. $e_2,s$) in \cw order is not the root angle.  So inside the
  triangle $T$, we can reattached the two half-edges of $D$
  incident to $v,w$ to the dual-vertex $f$ of $A(G)$ corresponding to
  $T$ and add an additional edge to $D$ from $f$ to $u$ (see
  Figure~\ref{fig:closestemangle}).

\begin{figure}[!ht]
\center
\includegraphics[scale=0.4]{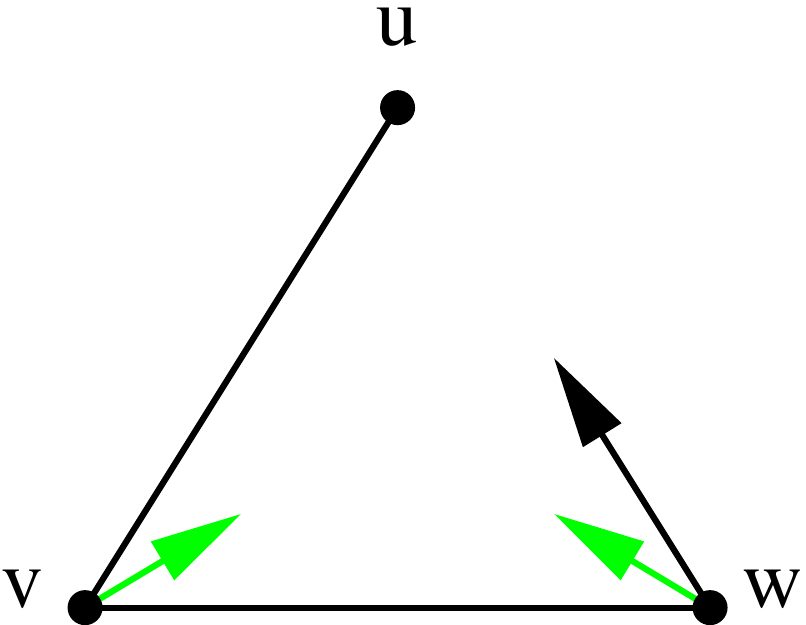} \ \ \ \ 
\includegraphics[scale=0.4]{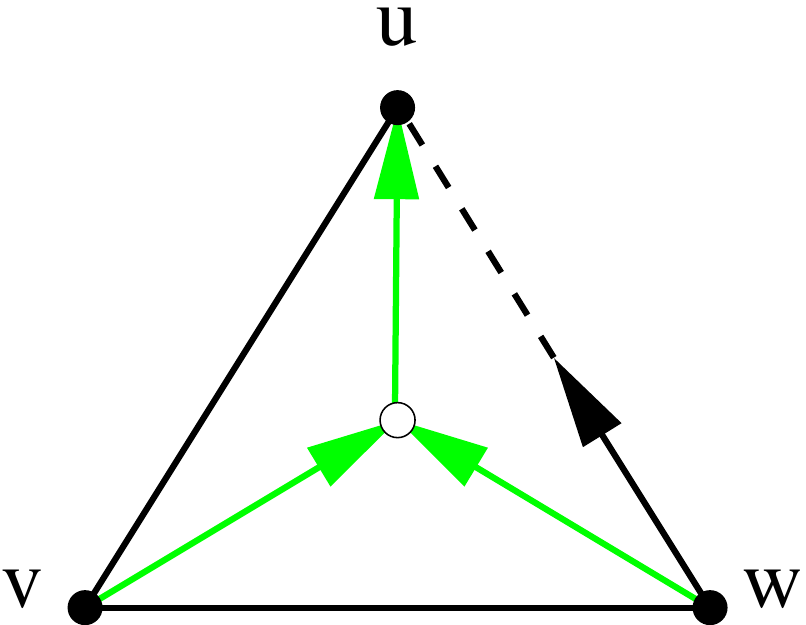}
\caption{Reattachment of a stem and orientation of the angle map.}
\label{fig:closestemangle}
\end{figure}

\item \emph{$s$ is the last reattached stem} 

  By the safe property, the root angle is in the face on the right
  side of $s$.  Thus we are in one of the three case of
  Figure~\ref{fig:finalcloseangle} depending on the position of the
  root half-edge according to $s$ (the root half-edge is represented
  in magenta). In each case we reattached the four depicted half-edges
  of $D$ and add two additional edges to $D$ that are outgoing for
  dual-vertices of $D$ as described on
  Figure~\ref{fig:finalcloseangle}.

\begin{figure}[!ht]
\center
\includegraphics[scale=0.4]{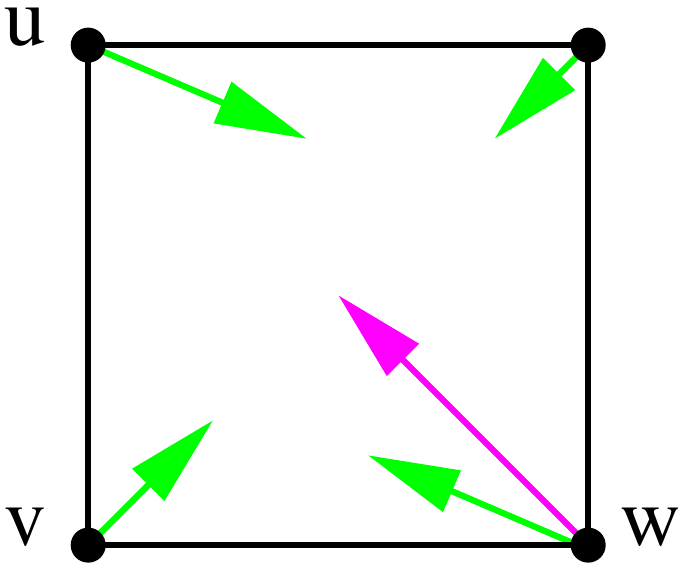} \ \ \ \ 
\includegraphics[scale=0.4]{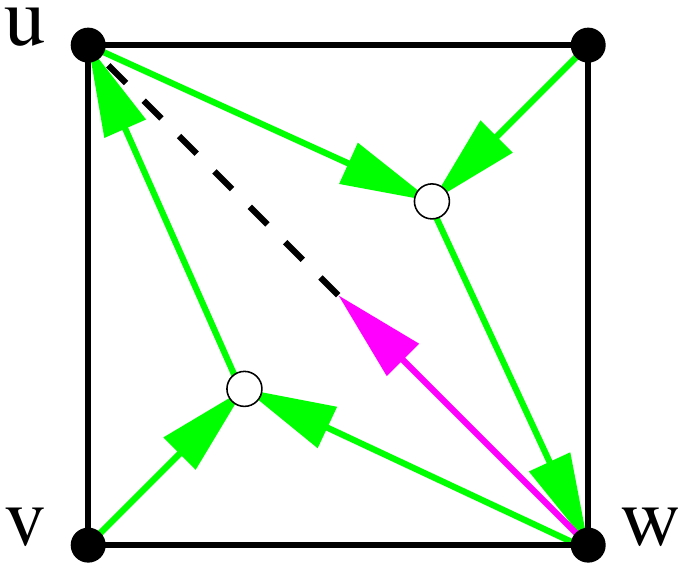} 

\

\includegraphics[scale=0.4]{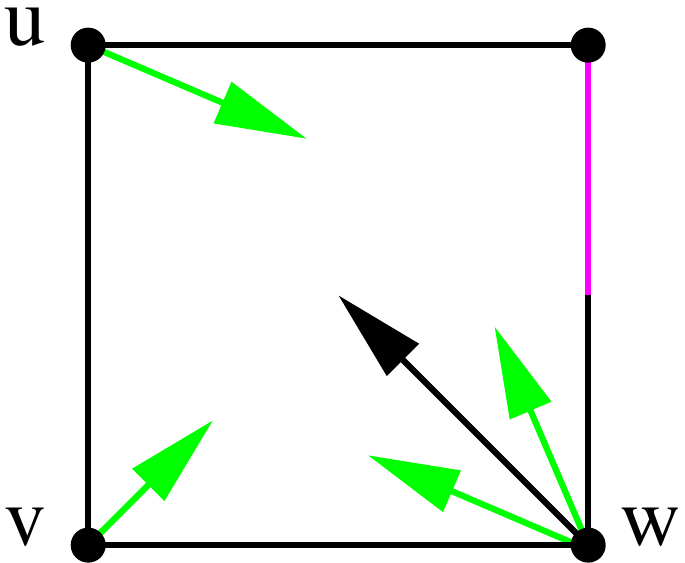} \ \ \ \ 
\includegraphics[scale=0.4]{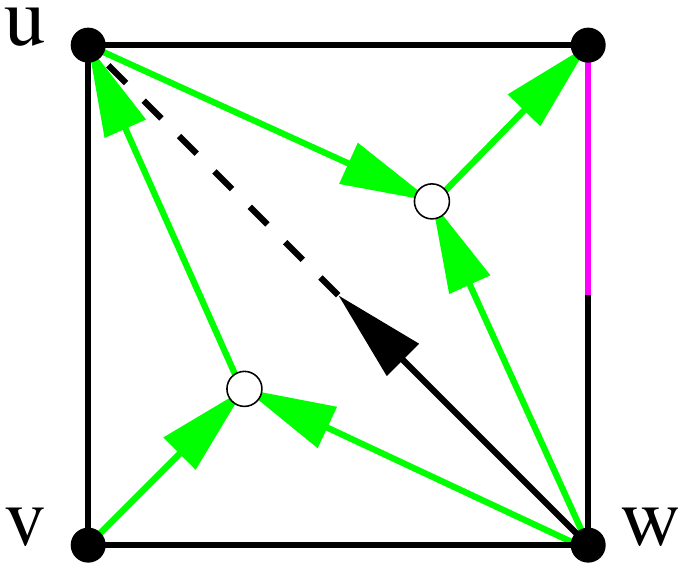} 

\

\includegraphics[scale=0.4]{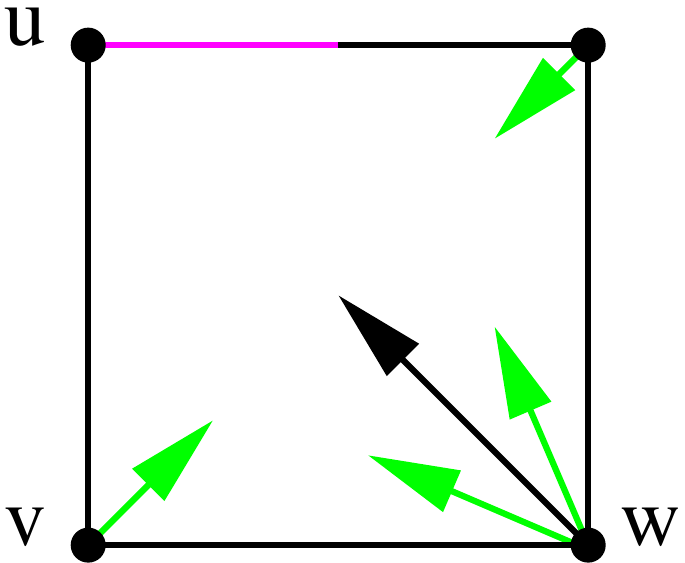} \ \ \ \ 
\includegraphics[scale=0.4]{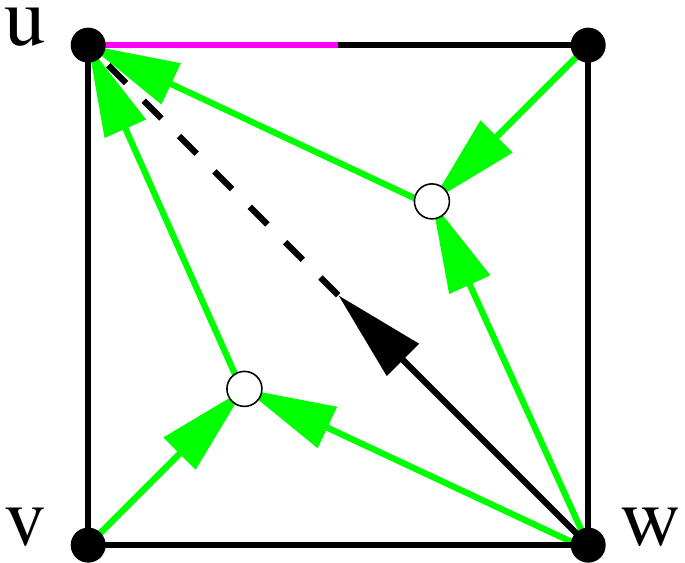}  
\caption{The three possible cases for the reattachment of the last stem.}
\label{fig:finalcloseangle}
\end{figure}

\end{itemize}

By doing so we are sure to reattach all the half-edges of $D$ to
dual-vertices of $A(G)$. In the end, all primal-vertices have
outdegree $4$ and all dual-vertices have outdegree $1$.  So we have
defined a $4$-orientation $D$ of $A(G)$ on which the mobile rule (see
Figure~\ref{fig:mobile-rule}) plus the addition of the root half-edge
gives $M^+$.  Since we are considering a 4-orientation of $A(G)$, the
map $G$ has no contractible loop nor multiple edges and it is
essentially 4-connected, otherwise, there will be a contradiction in a
region homeomorphic to an open disk by a simple counting argument.  It
remains to show that $G$ is appropriately rooted and that $D$
corresponds to the minimal balanced $4$-orientation w.r.t.~this root,
then $g$ will be surjective.

Since $M^+$ is balanced it has at least two non-contractible and not
weakly homologous cycles $C_1,C_2$ with the same number of angles on
their respective left and right sides, with the special rule that the
root angle does not count. All these angles corresponds to exactly one
outgoing edge of $D$ by construction of $D$. So the orientation $D$ of
$A(G)$ satisfies $\gamma (C_1) = \gamma (C_2)=0$.  So by
Lemma~\ref{lem:gamma0all}, the 4-orientation $D$ is balanced.

Suppose by contradiction that $D$ is not minimal w.r.t.~$h_0$. Let
$f_0$ be the face of $A(G)$ containing $h_0$.  We use the terminology
and notations of Section~\ref{sec:lattice}. Then in the Hasse diagram
of the lattice $(\mathcal B(A(G)),\leq_{f_0})$, there is an element
below $D$: Let $D'$ be a balanced $4$-orientation of $A(G)$
such that $D'\leq_{f_0}D$. By Section~\ref{sec:lattice}, we have
$D'\setminus D \in \widetilde{\mathcal{F}}'$.  Let
$\widetilde{F}=D'\setminus D$.  So $\widetilde{F}$ is the \ccw facial
walk of a face of $\widetilde{A(G)}$ not containing $f_0$. So this
facial walk is oriented \ccw (resp. \cww) according to its interior in
$D'$ (resp. $D$).  By Lemma~\ref{lem:contractibletilde},
$\widetilde{F}$ is quasi-contractible and its outer facial walk is a
$\{4,8\}$-disk $W$.  Then, by Lemma~\ref{lem:8diskin}, if $W$ is a
8-disk, the edges that are in the interior of $W$ and incident to it
are entering it.  So in $D$, the orientation of $W$ and of the edges
in its interior and incident to it are as depicted on
Figure~\ref{fig:48disk-mobile}.  Then, by definition of the mobile
(see Figure~\ref{fig:mobile-rule}), there is no half-edge of $M^+$ in
the interior of $W$ and incident to vertices of $W$.  If $W$ is a
$\{4\}$-disk, then the unique edge of $G$ inside $W$ is not covered by
$M^+$.  If $W$ is a $\{8\}$-disk, then either there are some edges of
$G$ inside $W$ that are not covered by $M^+$, or $M^+$ is made of
several connected components. In any cases, this contradicts the fact
that $M^+$ is an element of $\mathcal M_{r,s,b}(n)$ from which $G$ is
obtained by applying the complete closure procedure.  So $D$ is
minimal w.r.t.~$h_0$, and thus it is the minimal balanced
$4$-orientation w.r.t.~$h_0$

Suppose by contradiction that $h_0$ is not ``in the interior and
incident'' to a maximal quadrangle. Then by
Lemma~\ref{lem:uniqueQuadrangle}, there is a unique maximal quadrangle
$Q$ whose interior contains $h_0$. Since $h_0$ is not ``in the
interior and incident'' to $Q$, it is in the strict interior of $Q$.
The quadrangle $Q$ corresponds to a $\{4,8\}$-disk $W$ of $A(G)$ (see
Figure~\ref{fig:48disk-sq}). Note that $W$ is a maximal $\{4,8\}$-disk
containing $h_0$. So, by Lemma~\ref{lem:maxdiskroot}, in $D$, the
$\{4,8\}$-disk $W$ is oriented \cw w.r.t.~its interior.  It is not
possible that $W$ is a $4$-disk since then $h_0$ is not in the strict
interior of $Q$ but incident to it. So $W$ is a $8$-disk. By
Lemma~\ref{lem:8diskin}, the edges that are in the interior of $W$ and
incident to it are entering it.  Then the orientation of $W$ and of
the edges in its interior and incident to it are as depicted on
Figure~\ref{fig:48disk-mobile}.  Then by the definition of the mobile
(see rule of Figure~\ref{fig:mobile-rule}), there is no half-edge of $M^+$ in
the interior of $Q$ and incident to $Q$.  So either there are some
edges of $G$ that are not covered by $M^+$, or $M^+$ is made of
several connected components. In both cases, this contradicts the fact
that $M^+$ is an element of $\mathcal M_{r,s,b}(n)$ from which $G$ is
obtained by applying the complete closure procedure.  So $h_0$ is 
in the interior and incident to a maximal quadrangle of $G$.
\end{proof}





\section{Counting essentially 4-connected toroidal triangulations}
\label{sec:enumeration} 

Let $\mathcal{T}_{h}(n)$ be the set of essentially 4-connected
toroidal triangulations on $n$ vertices, rooted at any half-edge.  In
this section we show how to count $\mathcal{T}_{h}(n)$ (see
Theorem~\ref{thm:enumeration}).  The first values of
$|\mathcal{T}_{h}(n)|$, for $n\geq 0$, are
$0, 1, 6, 40, 268, 1801, 12120$ (sequence A289208 in
OEIS~\cite{oeis}).  Figure~\ref{fig:countingR-2} illustrates the six
elements of $\mathcal{T}_{h}(2)$.

\begin{figure}[!ht]
\center
\includegraphics[scale=0.4]{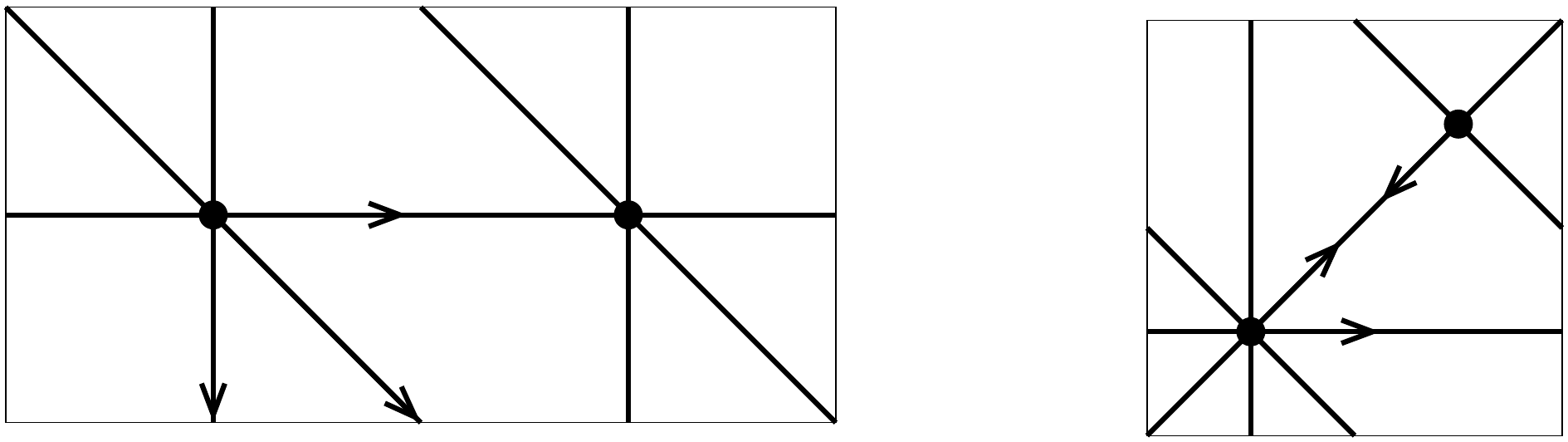}
\caption{The six elements of $\mathcal{T}_{h}(2)$: two different
  underlying graphs, each with three possible roots represented by an
  outgoing half-edge.}
\label{fig:countingR-2}
\end{figure}

\subsection{Decomposition into planar and toroidal parts}
\label{sec:decomp}



Consider an element of $\mathcal{T}_{h}(n)$.  Recall that by
Lemma~\ref{lem:uniqueQuadrangle}, there is a unique maximal quadrangle
containing the root half-edge, that we call the root quadrangle.  We
define the \emph{corners} of a quadrangle of a map as the four angles
that appear in the interior of this quadrangle when its interior is
removed (if non empty).  We define
$\mathcal{T}_{h,c}(n)$ as the set of elements of $\mathcal{T}_{h}(n)$
with a marked corner of the root quadrangle. 
The elements of $\mathcal{T}_{h,c}$ are decomposed into the toroidal
part that is outside the root quadrangle and the planar part that is
in the interior of the root quadrangle.

We first need the following lemma, which shows that
removing the interior of the root quadrangle does not change the
connectivity of the remaining part. Even if the statement has nothing
to do with transversal structures, the proof is using them as in
Section~\ref{sec:universal}.

\begin{lemma}
\label{lem:e4cquad}
If $G$ is an essentially 4-connected
toroidal triangulations given with  a
maximal quadrangle $Q$, then the map $G'$ obtained by removing all the
vertices and edges that lie in the interior of $Q$ is an essentially 4-connected
toroidal map.
\end{lemma}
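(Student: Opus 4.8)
The plan is to show that $G'$ is a toroidal map (i.e. all its faces are homeomorphic to open disks, with no contractible loops nor homotopic multiple edges) and that $G'^{\infty}$ is $4$-connected. First I would observe that $G'$ is obtained from $G$ by deleting exactly the interior of the maximal quadrangle $Q$; since $Q$ is a quadrangle, the region that was the interior of $Q$ becomes a single face of size four, and all other faces of $G'$ are faces of $G$, hence triangles. So $G'$ is a map with one quadrangular face and all other faces triangular. Being a submap of $G$, it still has no contractible loops nor homotopic multiple edges (deleting vertices and edges cannot create such configurations). The real content is essential $4$-connectedness of $G'$.

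**Using transversal structures, as in Section~\ref{sec:universal}.**

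The key idea, signalled by the remark before the statement, is to mimic the proof of Lemma~\ref{lem:e4c}. By Theorem~\ref{th:existence}, $G$ admits a balanced (in particular, a) toroidal transversal structure; consider its natural extension to $G^{\infty}$, and let $G^{\infty}_B, G^{\infty}_R$ be the blue and red directed subgraphs. By Lemma~\ref{lem:nodirectedcycle}, $G^{\infty}_B\cup G^{\infty}_R$ and $G^{\infty}_B\cup (G^{\infty}_R)^{-1}$ have no directed cycle, and by Lemma~\ref{lem:nocommongeneral} the four ``search graphs'' $P_0(v),\dots,P_3(v)$ pairwise intersect only in $v$. Now suppose for contradiction that $X=\{x,y,z\}$ is a set of three vertices of $G'^{\infty}$ disconnecting it. By Lemma~\ref{lem:finitecc} (applied inside $G'^{\infty}$, which is a subgraph of $G^{\infty}$ with the same property by Lemma~\ref{lem:finitecc} — or rather: a finite component must appear since $G'^{\infty}\setminus X$ is disconnected and $G'^{\infty}$ is ``almost all of'' $G^{\infty}$), there is a finite connected component $R$ of $G'^{\infty}\setminus X$. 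Pick $v\in R$. In $G^{\infty}$, each $P_i(v)$ is infinite and acyclic, so it cannot lie inside the finite region bounded by $X$ together with the (finitely many) copies of the quadrangle $Q$ that were removed; hence each $P_i(v)$ must either hit some vertex of $X$ or ``escape through'' a copy of $Q$. The quadrangle has only four vertices, so escaping through a copy of $\widehat Q$ forces $P_i(v)$ to pass through one of those four boundary vertices.

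**Handling the removed quadrangle.**

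The main obstacle is precisely this: in $G$ the paths $P_i(v)$ could pass through the interior of $Q$, which no longer exists in $G'$, so the clean argument of Lemma~\ref{lem:e4c} does not transfer verbatim. To deal with this I would argue: since $Q$ is a quadrangle, its boundary has at most four distinct vertices; any oriented path of $G^{\infty}$ that enters the interior of a copy of $\widehat Q$ must enter and leave through boundary vertices of that copy. Reroute each $P_i(v)$ so that, upon first reaching the boundary of a removed copy of $\widehat Q$, it stops there; the truncated object is still an infinite acyclic subgraph living in $G'^{\infty}$ provided $v$ itself is not separated from infinity inside a single copy of $\widehat Q$ — and it is not, because $R$ is a component of $G'^{\infty}\setminus X$, and $G'^{\infty}$ restricted to any copy of $\widehat Q$ is just a $4$-cycle, which cannot contain a finite component of the deletion of three vertices disjoint from it. Thus each of $P_0(v),\dots,P_3(v)$ reaches either $X$ or the boundary of a removed quadrangle copy; in the latter case it reaches one of at most four vertices $\{q_1,q_2,q_3,q_4\}$. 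If two of the $P_i(v)$ leave $R$ through the same vertex (of $X$ or of some $\widehat Q$-copy boundary other than through $v$), that contradicts Lemma~\ref{lem:nocommongeneral}. Since $R$ is bounded by $X$ plus the copies of $\widehat Q$ adjacent to $R$, and a careful count (there are four distinct directions $P_0,\dots,P_3$ but, roughly, only three ``exit vertices'' of $X$ plus the shared quadrangle boundary which by maximality of $Q$ and Lemma~\ref{lem:e4ciffnst} cannot absorb two of them disjointly) yields the contradiction. I expect the bookkeeping of exactly how many quadrangle copies border $R$, and why the four paths cannot be routed disjointly to $X\cup\partial$(copies of $\widehat Q)$, to be the delicate point; the maximality of $Q$ (no inner chord, no larger quadrangle, via Lemma~\ref{lem:uniqueQuadrangle} and Lemma~\ref{lem:e4ciffnst}) is what forces it through, exactly as it was used in the proof of Lemma~\ref{lem:uniqueQuadrangle}. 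Finally, $G'$ is a toroidal map because removing the interior of one quadrangle leaves Euler's formula and face structure intact, completing the proof.
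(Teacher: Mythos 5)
There is a genuine gap, and it sits exactly at the point you yourself flag as ``the delicate point.'' You take an \emph{arbitrary} transversal structure of $G$ (from Theorem~\ref{th:existence}) and then try to repair the fact that the search graphs $P_i(v)$ may use the deleted interior of copies of $Q$, by truncating/rerouting them at the quadrangle boundaries and finishing with an unexecuted ``careful count.'' Neither step works as sketched. First, with an arbitrary transversal structure nothing prevents a boundary vertex of $Q$ from having, say, \emph{all} of its outgoing blue edges inside $Q$; after deletion that vertex has no outgoing blue edge left in $G'$, so the truncated object you call $P'_i(v)$ can simply die there and need not be infinite --- your parenthetical justification (about a copy of $Q$ not containing a finite component of $G'^\infty\setminus X$) addresses a different issue and does not give infiniteness. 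Second, the counting argument cannot be completed as stated: a path $P_i(v)$ may enter the interior of a removed copy of $Q$ at one boundary vertex lying in $R$ and exit at a \emph{different} boundary vertex lying in another (possibly infinite) component of $G'^\infty\setminus X$, so two of the four graphs $P_i(v)$ can escape through distinct vertices of the same quadrangle copy without ever sharing a vertex, and no contradiction with Lemma~\ref{lem:nocommongeneral} arises. Maximality of $Q$ alone does not ``force it through.''

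The paper's proof removes the obstacle at the source rather than patching it afterwards: it does not take an arbitrary transversal structure, but the one coming from the \emph{minimal balanced $4$-orientation of $A(G)$ rooted at a half-edge $h_0$ inside $Q$}. By Lemma~\ref{lem:maxdiskroot} the $\{4,8\}$-disk of $A(G)$ inside the maximal quadrangle $Q$ is oriented clockwise in $D_{\min}$, and by Lemma~\ref{lem:8diskin} the edges inside an $8$-disk and incident to it enter it; hence the transversal structure restricted to $Q$ is one of the three configurations of Figure~\ref{fig:48disk-ts}, and in each of them deleting the interior of $Q$ leaves every remaining vertex with at least one outgoing blue, outgoing red, incoming blue and incoming red edge. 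Consequently the graphs $P'_i(v)$, defined directly in $G'^\infty$, are infinite, and they are acyclic because they are subgraphs of the $P_i(v)$. From there the argument of Lemma~\ref{lem:e4c} applies verbatim inside $G'^\infty$: each $P'_i(v)$ must leave the finite component $R$ and can only do so through $\{x,y,z\}$, so two of them meet at a vertex distinct from $v$, contradicting Lemma~\ref{lem:nocommongeneral}. This choice of orientation is the missing idea in your proposal; without it the restriction of the structure to $G'$ need not retain the local property, and no amount of rerouting or counting of exit vertices substitutes for it.
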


\begin{proof} Let $G$ be an essentially 4-connected toroidal
  triangulations given with a maximal quadrangle $Q$ and $G'$ obtained
  by removing all the vertices and edges that lie in the interior of
  $Q$.

  Consider a root half-edge $h_0$ of $G$ that is in the interior and
  incident to $Q$. Consider the minimal balanced $4$-orientation
  $D_{\min}$ of $A(G)$ w.r.t.~$h_0$. By Corollary~\ref{cor:bal4orTS},
  this $4$-orientation corresponds to a transversal structure of $G$.
  Consider the $\{4,8\}$-disk $W$ of $A(G)$ that is inside the maximal
  quadrangle $Q$.  By Lemma~\ref{lem:maxdiskroot}, in $D_{\min}$, the
  $\{4,8\}$-disk $W$ is oriented \cw w.r.t.~its interior.
  Lemma~\ref{lem:8diskin} shows that all the edges of $A(G)$ that are
  in the interior of a $8$-disk of $A(G)$ and incident to it are
  entering it. So the transversal structure of $G$ represented on the
  maximal quadrangle $Q$ is as depicted on one of the three cases of
  Figure~\ref{fig:48disk-ts} (where the outer edges represent the
  quadrangle $Q$).

\begin{figure}[!ht]
\center
\includegraphics[scale=0.4]{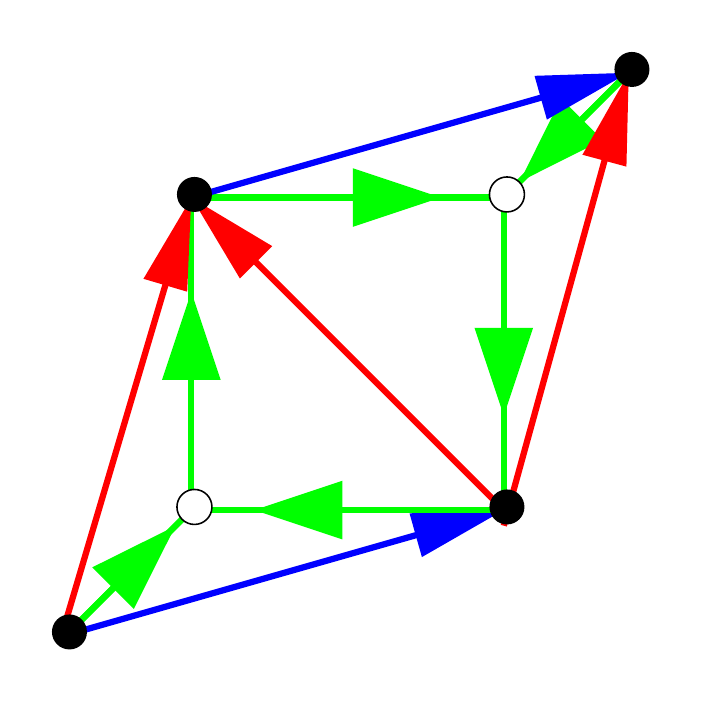} \ \ \ \ 
\includegraphics[scale=0.4]{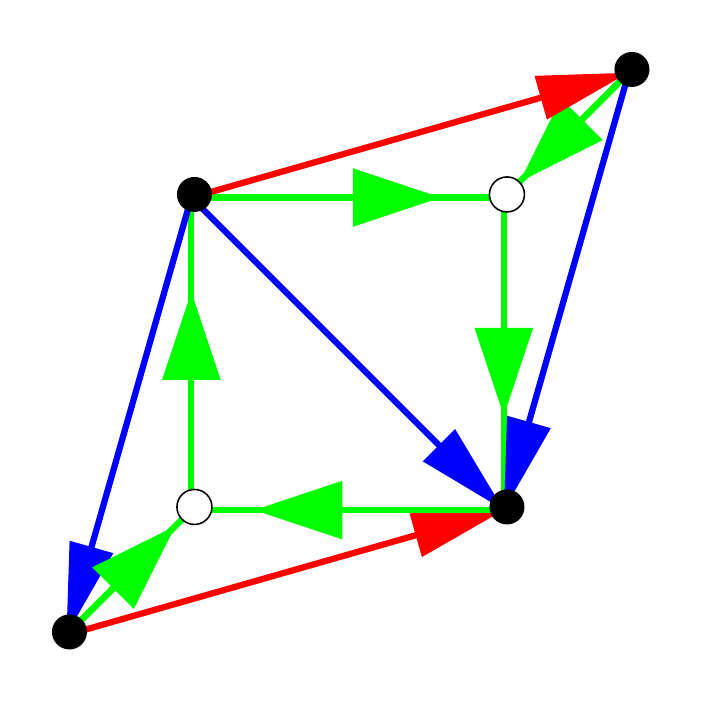} \ \ \ \ 
\includegraphics[scale=0.4]{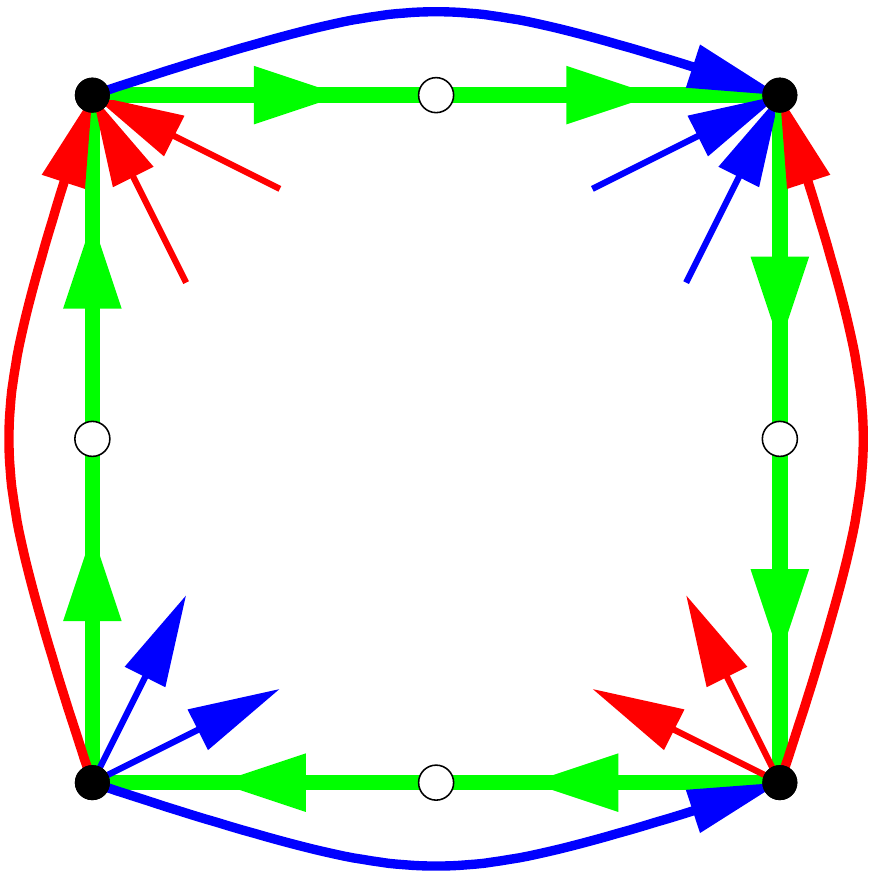}
\caption{Transversal structure on the maximal quadrangle $Q$.}
\label{fig:48disk-ts}
\end{figure}

Recall that from Section~\ref{sec:universal}, that for a vertex $v$ of
$G^\infty$, the subgraph $P_0(v)$ (resp. $P_1(v)$, $P_2(v)$, $P_3(v)$)
of $G^\infty$ is obtained by keeping all the edges that are on an
oriented path of $G^\infty_B$ (resp. $G^\infty_R$,
$(G^\infty_B)^{-1}$, $(G^\infty_R)^{-1}$) starting at $v$.  By
Lemma~\ref{lem:nodirectedcycle}, the subgraphs $P_i(v)$ are
acyclic. Let $P'_i(v)$ be defined similarly but in $G'^\infty$. So
$P'_i(v)$ is a subgraph of $P_i(v)$, thus it is also acyclic.  Note
that removing the interior of the quadrangle $Q$ on the three cases of
Figure~\ref{fig:48disk-ts}, does not change the fact that around every
vertex there are edges that are outgoing blue, outgoing red, incoming
blue and incoming red.  So the $P'_i(v)$ are infinite.

As in the proof of Lemma~\ref{lem:e4c}, suppose by contradiction that
there exists three vertices $x,y,z$ of $G'^\infty$ such that
$G''=G'^\infty\setminus\{x,y,z\}$ is not connected. Then, by
Lemma~\ref{lem:finitecc}, the graph $G''$ has a finite connected
component $R$. Let $v$ be a vertex of $R$. For $i\in\{0,1,2,3\}$, the
infinite and acyclic graph $P'_{i}(v)$ does not lie in $R$ so it
intersects one of $x,y,z$. So for two distinct $i,j$, the two graphs
$P'_{i}(v)$ and $P'_{j}(v)$ intersect in a vertex distinct from $v$.
Thus the two graphs $P_{i}(v)$ and $P_{j}(v)$ intersect in a vertex
distinct from $v$, a contradiction to Lemma~\ref{lem:nocommongeneral}.
\end{proof}

Let $\mathcal T_c^t(n)$ be the set of essentially 4-connected toroidal
maps on $n$ vertices, where all faces are triangles, except one that
is a maximal quadrangle, and, with a marked corner of this quadrangle.
In particular we have $|\mathcal T_c^t(0)|=0$ and
$|\mathcal T_c^t(1)|=1$.  

Let $\mathcal{T}_{h,c}^p(n)$ be the set of
4-connected planar maps on $n$ inner vertices, where all faces are
triangles, except the outer-face that is a quadrangle, with a marked
corner of this quadrangle, and rooted at an
inner half-edge. Note
that here, $n$ counts the number of inner vertices, so there are $n+4$
vertices in an element of $\mathcal{T}_{h,c}^p(n)$. In particular we
have $|\mathcal T_{h,c}^p(0)|=4$ and $|\mathcal
T_{h,c}^p(1)|=8$.

Then we
have the following bijection:

\begin{lemma}
\label{lem:bijrootcorner}
There is a bijection between $\mathcal{T}_{h}(n)\times\{1,2,3,4\}$ and 
$\bigcup_{1\leq k\leq n}(\mathcal{T}_{h,c}^p(n-k) \times \mathcal
T_c^t(k))$.
\end{lemma}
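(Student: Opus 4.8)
The plan is to obtain the bijection as a cut-and-glue operation along the root quadrangle. First, the root half-edge $h_0$ lies on a unique edge, which by Lemma~\ref{lem:uniqueQuadrangle} lies in the interior of a unique maximal quadrangle $Q$, the \emph{root quadrangle}; since a rooted map has trivial automorphism group, the four corners of $Q$ are distinguishable, so once a canonical cyclic enumeration of them is fixed, the set $\mathcal{T}_{h,c}(n)$ of pairs (element of $\mathcal{T}_h(n)$, marked corner of its root quadrangle) is in bijection with $\mathcal{T}_h(n)\times\{1,2,3,4\}$. It thus suffices to biject $\mathcal{T}_{h,c}(n)$ with $\bigcup_{1\le k\le n}\mathcal{T}_{h,c}^p(n-k)\times\mathcal{T}_c^t(k)$.

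I would define the forward map $\Phi$ as follows. Given $(G,h_0,c)\in\mathcal{T}_{h,c}(n)$ with root quadrangle $Q$ and interior $R$, let $G^t$ be $G$ with $R$ removed, keeping $Q$ as a quadrangular face carrying the corner $c$, and let $G^p$ be $R$ together with $Q$ (duplicating vertices and edges occurring twice on $Q$), with outer face $Q$ carrying the corner $c$ and rooted at $h_0$ (which lies on an interior edge of $G^p$, since its edge lies strictly inside $Q$). By Lemma~\ref{lem:e4cquad}, $G^t$ is an essentially $4$-connected toroidal map, all of whose faces are triangles except $Q$; moreover $Q$ is a maximal quadrangle of $G^t$, since a quadrangle of $G^t$ whose interior strictly contained the face $Q$ would, read back inside $G$, strictly contain $R$, contradicting maximality of $Q$ in $G$. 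Hence $G^t\in\mathcal{T}_c^t(k)$ with $k=|V(G^t)|\ge 1$. On the other side, $G^p$ is a planar triangulation with quadrangular outer face; it is $4$-connected because $Q$, being maximal, has no inner chord (a chord would split $R$ into two triangles, making $Q$ non-separating) and $G^\infty$ has no separating triangle (Lemma~\ref{lem:e4ciffnst}), so no cut of fewer than four vertices exists. Thus $G^p\in\mathcal{T}_{h,c}^p(n-k)$, and $\Phi(G,h_0,c)=(G^p,G^t)$ is well defined; note that the single corner $c$ of $Q$ is exactly what is recorded as the marked corner of the outer face of $G^p$ and of the quadrangular face of $G^t$.

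The inverse $\Psi$ reglues. Given $(G^p,G^t)\in\mathcal{T}_{h,c}^p(j)\times\mathcal{T}_c^t(k)$, identify the outer boundary of $G^p$ with the quadrangular face of $G^t$ so that their two marked corners coincide; together with the orientation this determines the gluing uniquely, producing a toroidal map $G$ on $n=j+k$ vertices, and put $h_0=$ the root half-edge of $G^p$ and $c=$ the common marked corner. An Euler/edge count (the pieces contribute $3k-1$ and $3j+1$ non-shared edges) gives $3n$ edges, forcing every face of $G$ to be a triangle. The triangulation $G$ is essentially $4$-connected: a separating triangle of $G^\infty$ is either a separating triangle of a copy of $G^p$ — impossible, since a $4$-connected planar triangulation has none — or has its interior meeting the toroidal part, in which case filling all lifts of $Q$ turns it into a $3$-cut of $(G^t)^\infty$, contradicting that $G^t$ is essentially $4$-connected. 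Finally $Q$ is the root quadrangle of $(G,h_0)$: $h_0$ lies in its interior, and $Q$ is maximal because a strictly larger quadrangle of $G$ would, by $4$-connectedness of $G^p$ (no outer chord), avoid the interior of $G^p$ and hence be a quadrangle of $G^t$ strictly containing its quadrangular face, contradicting maximality there. So $\Psi(G^p,G^t)\in\mathcal{T}_{h,c}(n)$, and $\Phi,\Psi$ are readily checked to be mutually inverse, completing the proof.

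The main obstacle is exactly the two connectivity verifications: that the planar piece is genuinely $4$-connected, and that regluing a $4$-connected planar piece into an essentially $4$-connected toroidal piece along a maximal quadrangle creates neither a separating triangle nor a quadrangle properly enlarging $Q$. Both hinge on the no-separating-triangle characterisation (Lemma~\ref{lem:e4ciffnst}), on Lemma~\ref{lem:e4cquad}, on Lemma~\ref{lem:uniqueQuadrangle}, and on the elementary facts that maximal quadrangles have no inner chord and $4$-connected planar triangulations have no separating triangle. The remaining ingredients — the edge count forcing triangularity, the synchronisation of the gluing by the marked corner, and the degenerate cases ($k=n$, when $R$ has no interior vertex, and $k=1$), absorbed by the stated conventions $|\mathcal{T}_c^t(1)|=1$ and $|\mathcal{T}_{h,c}^p(0)|=4$ — are routine.
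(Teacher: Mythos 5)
Your proposal is correct and takes essentially the same route as the paper: the paper's proof is just the composition of the bijection $\mathcal{T}_{h}(n)\times\{1,2,3,4\}\leftrightarrow\mathcal{T}_{h,c}(n)$ given by Lemma~\ref{lem:uniqueQuadrangle} with the cut-along-the-root-quadrangle decomposition justified by Lemma~\ref{lem:e4cquad}, exactly your $\Phi$ and $\Psi$. You simply spell out the connectivity and maximality verifications (4-connectedness of the planar piece, essential 4-connectedness and maximality of $Q$ after regluing) that the paper leaves implicit; just note that your parenthetical ``a maximal quadrangle has no inner chord'' fails precisely in the degenerate case where the interior of $Q$ contains no vertex (the root edge is then a chord), a case you already treat separately via the convention $|\mathcal{T}_{h,c}^p(0)|=4$.
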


\begin{proof}
By Lemma~\ref{lem:uniqueQuadrangle}, there is a bijection between $\mathcal{T}_{h}(n)\times\{1,2,3,4\}$  and $\mathcal{T}_{h,c}(n)$. 
By Lemma~\ref{lem:e4cquad}, there is a bijection between $\mathcal{T}_{h,c}(n)$ and $\bigcup_{1\leq k\leq n}(\mathcal{T}_{h,c}^p(n-k) \times \mathcal T_c^t(k))$. The composition of these two bijections gives the result.
\end{proof}

By Lemma~\ref{lem:bijrootcorner}, the enumeration of the elements of
$\mathcal{T}_{h}$ is reduced to the enumeration of their planar part
$\mathcal{T}_{h,c}^p$  and their toroidal part $\mathcal{T}_c^t$.

Recall from Section~\ref{sec:coding}, that a {ternary tree} is a plane
tree, rooted at a leaf, such that every inner vertex has degree
exactly four.  For $n\geq 1$, let $\mathcal A(n)$ denote the set of
ternary trees with $n$ inner vertices.  By convention we consider that
the tree composed of a single vertex is the unique element of
$\mathcal A(0)$.  The associated generating function satisfies:
\begin{equation} \label{eq:az3}
A(z)=\sum_n |\mathcal A(n)|z^n=1+zA(z)^3.
\end{equation}


The enumeration of $\mathcal{T}_{h,c}^p$ is given by the following lemma:

\begin{lemma}
\label{lem:planarpart2} 
$$|\mathcal{T}_{h,c}^p(n)| = \frac4{n+1}\binom{3n+1}{n}$$
and the associate generating function satisfies:
$$T_{h,c}^p(z) = \sum_{n\geq 0} |\mathcal{T}_{h,c}^p(n)| z^n= 4A(z)^2.$$
\end{lemma}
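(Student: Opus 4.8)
The plan is to reduce the enumeration to ternary trees via the planar specialization of the mobile/closure bijection, obtain the functional identity $T_{h,c}^p(z)=4A(z)^2$, and then read off the closed form by Lagrange inversion.

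First I would set up the bijection. An element of $\mathcal{T}_{h,c}^p(n)$ is a $4$-connected plane map whose outer face is a quadrangle $Q$, all of whose other faces are triangles, carrying $n$ inner vertices, with a marked corner $c_0$ of $Q$ and a root inner half-edge. One applies the plane theory of transversal structures (Fusy~\cite{Fus09}), which is the genus-$0$ restriction of the constructions of Sections~\ref{sec:bijmain} and~\ref{sec:recover}: in the plane the transversal structures of such a map form a single distributive lattice with a unique minimal element, and the mobile associated to that minimal element is a tree which is recovered by the complete closure procedure. Each of the $n$ inner vertices has outdegree $4$ in the underlying orientation and hence becomes a degree-$4$ vertex of the mobile, while the marked corner $c_0$ cuts the boundary of $Q$ and splits the mobile into an ordered pair $(T_1,T_2)$ of ternary trees (reflecting the two non-trivial directional trees of a transversal structure of the $4$-gon) whose inner vertices are exactly the $n$ inner vertices of the map, so $|T_1|+|T_2|=n$; conversely the complete closure procedure of Section~\ref{sec:recover}, run in the plane, reconstructs the map from $(c_0,T_1,T_2)$. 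This yields a bijection between $\mathcal{T}_{h,c}^p(n)$ and the set of triples $(c_0,T_1,T_2)$ with $c_0\in\{1,2,3,4\}$ and $T_1,T_2$ ternary trees having $|T_1|+|T_2|=n$ inner vertices in total.

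Translating to generating functions, since by \eqref{eq:az3} the series $A(z)=\sum_{k\ge0}|\mathcal A(k)|z^k$ enumerates ternary trees by number of inner vertices, the bijection gives immediately
$$T_{h,c}^p(z)=\sum_{n\ge 0}|\mathcal{T}_{h,c}^p(n)|\,z^n=4\,A(z)^2 ,$$
consistent with the values already recorded, $[z^0]4A(z)^2=4$ (the degenerate pair of one-vertex trees) and $[z^1]4A(z)^2=8$. It remains to extract the coefficients, which is routine. Writing $Y=A-1$, equation \eqref{eq:az3} becomes $Y=z(1+Y)^3$ with $Y(0)=0$, so Lagrange inversion with $\phi(u)=(1+u)^3$ and $H(u)=(1+u)^2$ gives, for $n\ge 1$,
$$[z^n]A(z)^2=[z^n](1+Y)^2=\frac1n[u^{n-1}]\,2(1+u)\,(1+u)^{3n}=\frac2n\binom{3n+1}{n-1}=\frac1{n+1}\binom{3n+1}{n},$$
while $[z^0]A(z)^2=1=\frac1{0+1}\binom10$. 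Hence $|\mathcal{T}_{h,c}^p(n)|=4\,[z^n]A(z)^2=\frac4{n+1}\binom{3n+1}{n}$ for all $n\ge 0$, as claimed.

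The main obstacle is the first step: pinning down the bijection precisely — checking that the mobile of the minimal transversal structure (in the chosen normalization) is a tree, that marking a corner of $Q$ splits it cleanly into \emph{two} ternary trees with the correct inner-vertex counts, that closure is the inverse map, and that the rooting and corner conventions are bookkept so that the total is exactly $4A(z)^2$ and not off by a bounded factor. This is Fusy's planar analysis~\cite{Fus09} transposed to the rooting used here, so the proof should invoke it rather than redo it; the generating-function step and the Lagrange inversion are mechanical.
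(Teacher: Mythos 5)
Your overall route coincides with the paper's: reduce the count to ternary trees via Fusy's planar bijection, obtain $T_{h,c}^p(z)=4A(z)^2$, and extract coefficients (your Lagrange inversion is correct and simply replaces the paper's citation of~\cite{GX06}). The genuine problem is in the central bijective step, where your bookkeeping does not add up. In the paper, the factor $A(z)^2$ is produced by the \emph{root inner half-edge}: by~\cite[Theorem~3]{Fus09} each inner edge of the map corresponds to an edge of the associated plane tree whose inner vertices all have degree four, so rooting the map at an inner half-edge amounts to orienting an edge of that tree, and cutting the tree at this oriented edge yields the ordered pair of leaf-rooted ternary trees; the marked corner of the outer quadrangle then contributes exactly the factor $4$ (legitimately, because the object is already rooted and hence rigid). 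In your triple $(c_0,T_1,T_2)$ the marked corner is used twice --- once as $c_0\in\{1,2,3,4\}$ and once as the device that ``splits the mobile into an ordered pair of ternary trees'' --- while the root inner half-edge, which is part of the data of an element of $\mathcal{T}_{h,c}^p(n)$, is recorded nowhere on the tree side. Since in the plane the minimal transversal structure, and hence the mobile, depends only on the map (not on a root half-edge), two elements of $\mathcal{T}_{h,c}^p(n)$ differing only in their root half-edge would be sent to the same triple, so the correspondence you describe cannot be injective; already for $n=1$ the unique underlying map carries two inequivalent root half-edges with the same marked corner.

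So the missing idea is precisely the identification: root inner half-edge $\leftrightarrow$ oriented edge of Fusy's tree $\leftrightarrow$ cut point producing the ordered pair $(T_1,T_2)$, with the corner supplying only the factor $4$. Invoking~\cite{Fus09} is the right move, but the verification you propose (``that marking a corner of $Q$ splits the mobile cleanly into two ternary trees'') is not the statement you need and is not the actual mechanism. With the rooting transferred correctly, the rest of your argument --- the identity $T_{h,c}^p(z)=4A(z)^2$ and the Lagrange inversion giving $\frac{4}{n+1}\binom{3n+1}{n}$ --- goes through and matches the paper's computation.
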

\begin{proof}
  \cite[Theorem~3]{Fus09} is a bijection between the set  $\mathcal{P}(n)$ of
  (unrooted) plane tree such that every inner vertex has degree
  exactly four and the set $\mathcal{T}^p(n)$ of (unrooted)
  4-connected planar maps on $n$ inner vertices, where all faces are
  triangles, except the outer-face that is a quadrangle.

  Let $\mathcal{T}_{h}^p(n)$ be the set of elements of
  $\mathcal{T}^p(n)$ rooted at an inner half-edge.  
Let $\mathcal{P}_o(n)$ be the elements of
  $\mathcal{P}(n)$ with one oriented edge.
In the bijection
  of~\cite[Theorem~3]{Fus09}, each inner edge of the map corresponds
  to an edge of the corresponding tree. So rooting the elements of
  $\mathcal{T}^p(n)$ on a particular inner half-edge, corresponds to
  orienting an edge of the tree.  Thus we have a bijection between
  $\mathcal{T}_{h}^p(n)$ and $\mathcal{P}_o(n)$.

  Cutting an element of $\mathcal{P}_o(n)$ at the oriented edge creates
  bijectively a couple of  ternary trees with respectively $k$
  and $n-k$ inner vertices, with $0 \leq k \leq n$. Hence:
  $$P_o(z)=\sum_n |\mathcal P_o (n)| = A(z)^2.$$

  As shown
  in~\cite[p~11]{GX06}, the coefficients of $A(z)^2$ admit a simple
  expression:
$$|\mathcal P_o (n)|= \frac1{n+1}\binom{3n+1}{n}.$$

An element of $\mathcal{T}_{h,c}^p(n)$ is obtained from an element of
$\mathcal{T}_{h}^p(n)$ by marking one corner of the outer face. There
are four such choice so
$\mathcal{T}_{h,c}^p(n)=4\mathcal{T}_{h}^p(n)=4\mathcal P_o (n)$ and we
obtain the lemma.
\end{proof}

\subsection{Bijection with square and
  hexagonal unicellular maps}

Given an element $M^+$ of $\mathcal M_{r,s,b}(n)$ (see
Section~\ref{sec:bijection} for the definition), we define the \emph{unrooted mobile $M$
  associated to $M^+$} as the toroidal unicellular map obtained from
$M^+$ by removing the root half-edge and the tree part attached to the
root half-edge (if any).  Figure~\ref{fig:disk8mobilenorootbij2} gives
the unrooted mobile associated to the extended mobile of
Figure~\ref{fig:disk8mobile}.

\begin{figure}[!ht]
\center
\includegraphics[scale=0.34]{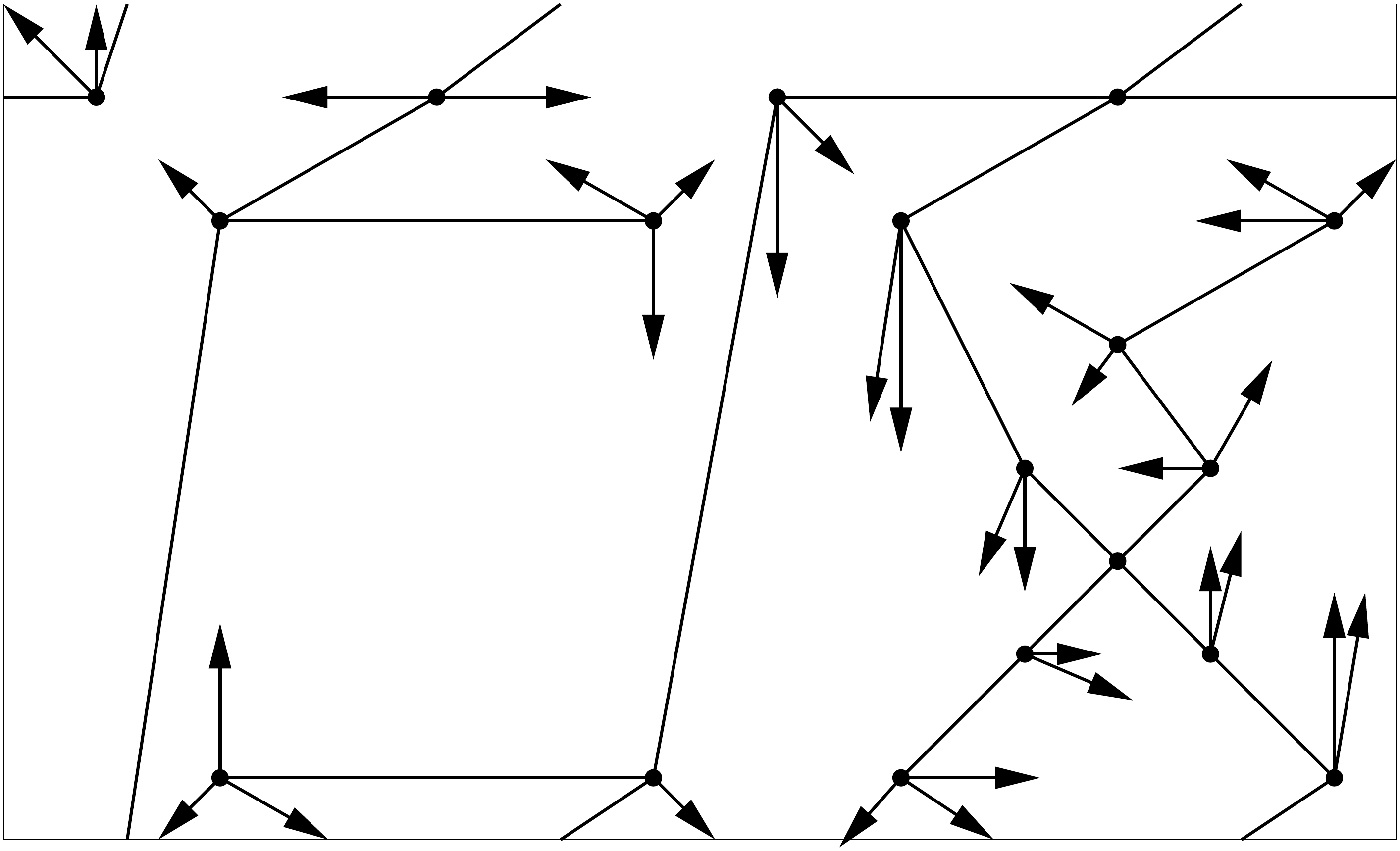}
\caption{The unrooted mobile obtained from the extended mobile of
  Figure~\ref{fig:disk8mobile}.}
\label{fig:disk8mobilenorootbij2}
\end{figure}

Let $\mathcal M_b(n)$ denote the set of (non-rooted) toroidal
unicellular maps with exactly $n$ vertices, $n+1$ edges and $2n-2$
stems such that all vertices have degree $4$, and every cycle of the
map has the same number of angles on its left and right sides
(balanced property).

Consider an element $M^+$ of $\mathcal M_{r,s,b}(n)$. Let $k\geq 0$ be
the number of vertices in the tree part attached to the root half-edge
(if any), with $k=0$ if the root half-edge is a stem.  Then one can
see that the unrooted mobile $M$ associated to $M^+$ is an element of
$\mathcal M_b(n-k)$.

Consider an element $M$ of $\mathcal M_{b}(n)$. A
\emph{mobile-labeling} of $M$ is a labeling $\ell$ of the half-edges
of $M$ with integers $0,1,2,3$ such that the labels that appear around
each vertex are exactly $0,1,2,3$ in \ccw order and the two labels
that appear on each edge differ exactly by $(2\bmod 4)$, see
right of Figure~\ref{fig:labelmobile} for an example.  Let $G$ be
the graph obtained from $M$ by closing all its admissible
triples. Since, $M$ has $2n-2$ stems, we have that $G$ is a ``toroidal
triangulation minus one edge'', i.e. a toroidal map whose all faces
are triangles except one that is a quadrangle. The \emph{extension} of
$\ell$ to $G$ is the labeling of all the half-edges of $G$ obtained
from $\ell$ by keeping the property that the two labels that appear on
each edge differs exactly by $2\bmod 4$. Next lemma shows that the
quadrangle of $G$ is labeled as on Figure~\ref{fig:labelquadrangle} in
the extension of the mobile-labeling.

\begin{figure}[!ht]
\center
\includegraphics[scale=0.4]{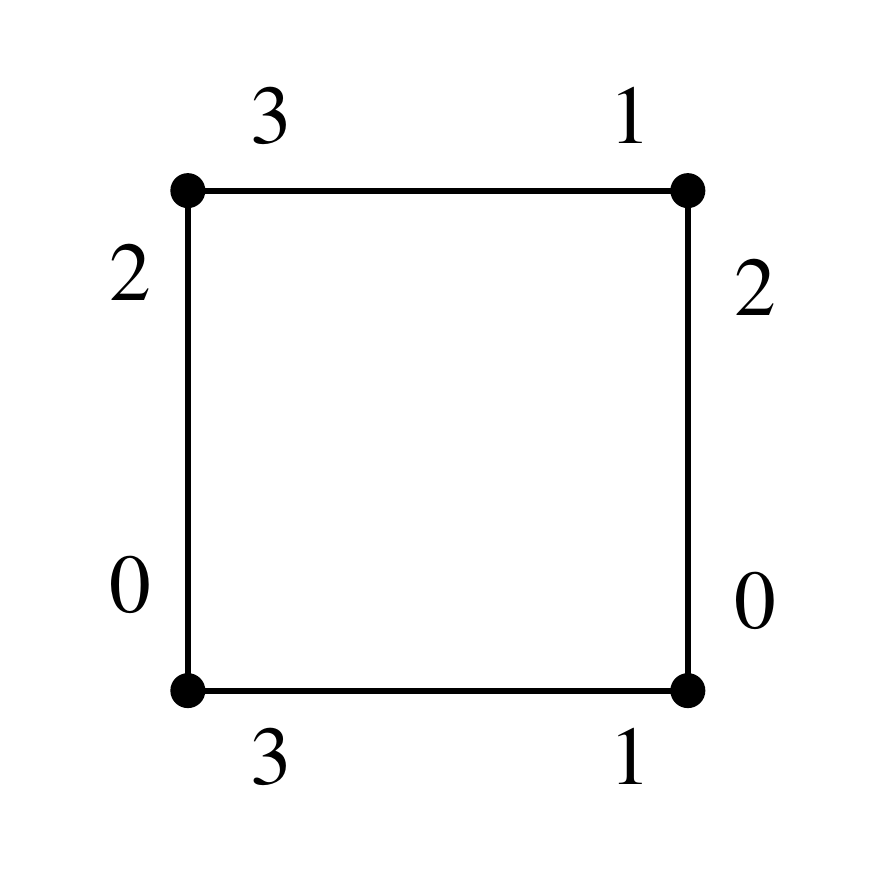}
\caption{Labeling of the remaining quadrangle after extending a mobile-labeling.}
\label{fig:labelquadrangle}
\end{figure}

Each angle of $M$ corresponds to consecutive angles of $G$
(reattaching a stem into an angle, splits this angle in
two). Conversely, to each angle of $G$ we can associate the unique
corresponding angle of $M$ from which it comes from.  Then we have the
following:

  \begin{lemma}
    \label{lem:anglemobile}
    Consider an element $M$ of $\mathcal M_{b}(n)$ given with a
    particular angle $\alpha$. Then $M$ admits a unique
    mobile-labeling, noted $\ell(\alpha)$, such that the angle
    $\alpha$ is between half-edges labeled $0$ and $1$. Moreover,
    after closing the admissible triples of $M$ to obtain $G$, the
    extension of $\ell(\alpha)$ to $G$ is such that the 
    quadrangle $Q$ of $G$ is labeled as on Figure~\ref{fig:labelquadrangle}.
    And the four angles of $M$ that corresponds to the angles of
    $Q$ are incident to half-edges with exactly the same labels in $M$
    and in $Q$.
  \end{lemma}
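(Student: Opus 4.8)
The plan is to establish the three assertions in order, treating the first two as a propagation-plus-homology exercise and reserving most of the work for the claim about the quadrangle $Q$.

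\textbf{Uniqueness.} First I would observe that a mobile-labeling is rigid once the label of a single half-edge is fixed: since all vertices have degree $4$, the definition forces the four half-edges around any vertex $v$ to carry the labels $0,1,2,3$ in \ccw order, so consecutive half-edges (in \ccw order) around $v$ differ by $(1\bmod 4)$; hence one half-edge label at $v$ determines all of them. Combined with the edge rule (the two half-edges of a full-edge differ by $(2\bmod 4)$, and $-2\equiv 2\bmod 4$, so each half-edge label of a full-edge determines the other), and the connectedness of $M$ (stems are pendant and impose no constraint), propagating along a spanning tree of the full-edge subgraph shows that $\ell$ is determined everywhere by its value at one half-edge of $v_0$. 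Since ``$\alpha$ lies between half-edges labeled $0$ and $1$'' pins down which half-edge of $v_0$ is labeled $0$ (the one preceding $\alpha$ in \ccw order, because labels increase by $1$ going \ccww), there is at most one such labeling.

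\textbf{Existence.} To build $\ell(\alpha)$ I would write it as $\ell(h)=\lambda_v+\rho(h)$, where $v$ is the vertex of $h$, $\rho(h)\in\mathbb{Z}/4$ is the \ccw rank of $h$ around $v$ relative to a fixed reference half-edge, and $\lambda_v\in\mathbb{Z}/4$. The vertex rule then holds automatically and the edge rule becomes a system of potential differences $\lambda_v-\lambda_u=w(e)$, one per full-edge $e=\{u,v\}$, with $w(e)$ depending only on $M$. Such a system is solvable iff its monodromy vanishes on the fundamental cycles of the full-edge subgraph $M_0$; since $M$ is unicellular, $M_0$ is a unicellular toroidal map with $n$ vertices and $n+1$ edges, so it has exactly two independent cycles and these form a basis for the homology of the torus, hence it suffices to check an arbitrary cycle $C$ of $M$. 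A direct count around $C$ gives monodromy $\equiv L_C+2\,|C|\pmod 4$, where $|C|$ is the length of $C$ and $L_C$ the number of angles of $M$ on the left of $C$: crossing each edge of $C$ contributes $2$, and turning at each vertex $w$ of $C$ contributes the number of angles of $w$ on the left of $C$. Now every vertex of the (simple) cycle $C$ has its four angles split between the two sides, so $L_C+R_C=4\,|C|$; balancedness of $M$ gives $L_C=R_C$, hence $L_C=2\,|C|$ and the monodromy is $\equiv 4\,|C|\equiv 0\pmod 4$. Thus the system is solvable, $\alpha$ fixes the free constant, and $\ell(\alpha)$ exists (uniquely, by the previous paragraph). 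This is precisely where the balanced hypothesis is used; a non-balanced $M$ has in general no mobile-labeling.

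\textbf{The quadrangle.} For the second and third assertions I would extend $\ell(\alpha)$ to $G$ by the edge rule: a full-edge of $M$ keeps its two labels, and a new edge created by closing a stem $s$ (a half-edge of $M$, already labeled) receives the label $\ell(s)+2\bmod 4$ on its new half-edge; so the extension is well defined with nothing further to check. It then remains to analyze the local picture around the unique quadrangular face $Q$ of $G$. This step parallels the $\{4,8\}$-disk analysis of Section~\ref{sec:latticemain} and the proof of Theorem~\ref{th:bij1}: one identifies which admissible triples are closed last, so that $Q$ is the quadrangle of $G$ ``around'' the corresponding $\{4,8\}$-disk (Figure~\ref{fig:48disk-sq}); using the mobile rule (Figure~\ref{fig:mobile-rule}) and the fact that the special face always has exactly three more incident edges than stems, one checks that the eight half-edges incident to $Q$ are half-edges of $M$ or the controlled new half-edges of the last closed stem (with the local orientation forced as in Figure~\ref{fig:48disk-mobile}); and then propagating $\ell(\alpha)$ across this configuration reads off exactly the labeling of Figure~\ref{fig:labelquadrangle}. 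In particular the four corners of $Q$ are bounded by half-edges lying in $M$, so these four angles, regarded as angles of $M$, carry the same labels in $M$ and in $Q$, which is the last assertion.

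\textbf{Main obstacle.} The first two parts are routine once one notices the identity ``monodromy $\equiv L_C+2\,|C|$'' and the angle-splitting equality $L_C+R_C=4\,|C|$. The real work is the third part: establishing how $Q$ sits inside $M$ after the complete closure and that its incident half-edges are exactly the claimed ones. I would handle this by re-using, essentially verbatim, the $\{4,8\}$-disk machinery already developed (the $\{4,8\}$-disk lemmas, Figures~\ref{fig:48disk-sq} and~\ref{fig:48disk-mobile}) together with the ``first stem'' argument of Lemma~\ref{lem:firststem}, which already rules out pathological behaviour of the closure near the distinguished angle.
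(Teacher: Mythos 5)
Your first two parts (uniqueness and existence of $\ell(\alpha)$) are correct and follow essentially the same route as the paper: propagate the labels and check consistency around cycles, where balancedness gives $L_C=R_C=2|C|$ so the monodromy is $4|C|\equiv 0 \bmod 4$. (Your phrase ``it suffices to check an arbitrary cycle'' should be ``it suffices to check the cycles generating the cycle space'', but since your computation applies to every cycle this is harmless.)

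The third part, however, has a genuine gap. You propose to read off the labeling of $Q$ from the $\{4,8\}$-disk machinery (Lemma~\ref{lem:8diskin}, Figures~\ref{fig:48disk-sq} and~\ref{fig:48disk-mobile}) and from Lemma~\ref{lem:firststem}. All of those statements are about a $4$-orientation of the angle map $A(G)$ of an essentially $4$-connected toroidal \emph{triangulation} (and, for Lemma~\ref{lem:firststem}, about the extended mobile of the \emph{minimal balanced} orientation). In the present lemma $M$ is an arbitrary element of $\mathcal M_{b}(n)$: its complete closure $G$ is only a toroidal map whose faces are triangles except one quadrangle, it is not known at this point to be essentially $4$-connected, and no $4$-orientation of any angle map is available — constructing such an orientation and identifying it as the minimal balanced one is precisely what is done later, in the proofs of Theorem~\ref{th:bij1} and Lemma~\ref{lem:bijsquarehexa}, and the latter uses the present lemma. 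So invoking that machinery here is either inapplicable or circular, and the statements you need from it (how $Q$ sits, which half-edges bound it, that its corner angles are angles of $M$ carrying the same labels) are exactly the assertions to be proved, not inputs.

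The paper avoids this entirely: it proves the quadrangle statement by induction over the closure steps, maintaining the invariant that every angle of the special face of $M_k$ is bounded by half-edges whose labels are distinct and equal to those of the corresponding angle of $M_{k-1}$, and that around every vertex the labels form four non-empty intervals of $0,1,2,3$ in counterclockwise order. Closing one admissible triple $(e_1,e_2,s)$ with $s$ labeled $i$ forces the labels of the surrounding half-edges ($(i+1)$, $(i+3)$, $i$, $(i+2)$, $(i+3)$ modulo $4$), the new half-edge gets $(i+2)\bmod 4$, and the invariant persists; at the end the special face is the quadrangle of Figure~\ref{fig:labelquadrangle} and its corner angles inherit their labels from $M$. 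This purely combinatorial bookkeeping on labels — with no reference to any orientation — is the content missing from your proposal; to complete your proof you would need to supply it (or an equivalent direct argument), rather than appeal to the orientation lemmas.
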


  \begin{proof}
The toroidal unicellular map  $M$ has $n$ vertices, $n+1$ edges and $2n-2$
stems such that all vertices have degree $4$, and every cycle of the
map has the same number of angles on its left and right sides.

Let $h$ be the half-edge of $M$ that is incident to $\alpha$ and just
after $\alpha$ in \cw order around its incident vertex. Let
$\ell(\alpha)$ be the labeling of the half-edges of $M$ with integers
$0,1,2,3$ obtained by the following: Label $h$ with $0$ and then
extend the labeling to all the half-edges of $M$ by keeping the
property that the labels that appear around each vertex are exactly
$0,1,2,3$ in \ccw order and the two labels that appear on each edge
differ exactly by $(2\bmod 4)$. This is possible and consistent since
every cycle of the map has the same number of angles on its left and
right sides. Indeed, given a cycle $C$ of length $k$, there are $2k$
angles on each sides, so the modification of the labels while starting
from a half-edge of $C$, walking along $C$ and going back to the
starting half-edge is the following: the number of edges of $C$ times
$(2\bmod 4)$, i.e. $(2k \bmod 4)$, plus the number of angles on the
right side of $C$, i.e. $(2k \bmod 4)$, so $(4k \bmod 4)=0$ in
total. Thus, this definition of the mobile-labeling $\ell(\alpha)$ is
consistent and moreover it is the unique such labeling. So we have the
first part of the lemma.

Let $M_0=M$.  For $1\leq k \leq 2n-2$, let $M_{k}$ be the map obtained
from $M_{k-1}$ by closing an admissible triple of $M_{k-1}$.  Extend
the labeling $\ell(\alpha)$ while closing admissible triples by
keeping the property that the two labels that appear on each edge
differs exactly by $2\bmod 4$.  We prove by induction on $k$, that
each map $M_{k}$, for $0\leq k \leq 2n-2$, satisfies the following:
each angle of the special face of $M_k$ is between half-edges whose
labels are distinct (and precisely the same as for the corresponding
angle of $M_{k-1}$ if $k\geq 1$), moreover the labels that appear in
\ccw order around each vertex of $M_k$ form four non-empty intervals
of $0,1,2,3$. Indeed, $M_0$ satisfies the property and suppose that
for $1\leq k \leq 2n-2$, we have $M_{k-1}$ that satisfies the
property.  Consider the admissible triple $(e_1,e_2,s)$ of $M_{k-1}$
that is closed to obtain $M_k$. Let $e_1=\{u,v\}$ and $e_2=\{v,w\}$ with
$s$ is a stem attached to $w$.  Let $i\in\{0,1,2,3\}$ such that $s$ is
labeled $i$. Then since $M_{k-1}$ satisfies the property on the labels
we have that:
\begin{itemize}
\item 
the half-edge of
$e_2$ incident to $w$ is labeled $(i+1) \bmod 4$
\item  the half-edge of
$e_2$ incident to $v$ is labeled $(i+3) \bmod 4$
\item the half-edge of
$e_1$ incident to $v$ is labeled $i$ 
\item the half-edge $h_u$ of $e_1$ incident
to $u$ is labeled $(i+2) \bmod 4$
\item  the half-edge $h'_u$ incident to $u$
and just after $h_u$ in \ccw order around $u$ is labeled $(i+3) \bmod 4$
\end{itemize}
When the admissible triple is closed, a half-edge $h_s$, opposite to
$s$ is created and receive the label $(i+2) \bmod 4$. So the
half-edges $h_u,h_s,h'_u$ appear consecutively in \ccw order around
$u$. Moreover they are labeled $(i+2) \bmod 4$, $(i+2) \bmod 4$ and
$(i+3) \bmod 4$ respectively. So  all the induction properties are
preserved.  Finally, $M_{2n-2}$ satisfies the property and its special
face, that is a quadrangle, is labeled as on
Figure~\ref{fig:labelquadrangle} and the four angles of $M$ that
corresponds to the angles of $Q$ are incident to half-edges with
exactly the same labels in $M$ and in $Q$.
  \end{proof}

  Recall that there are two types of toroidal unicellular maps.  Two
  distinct cycles of a toroidal unicellular map may intersect either
  on a single vertex (square case) or on a path (hexagon case).  In a
  square (resp. hexagon) unicellular map, there are exactly $2$
  (resp. $3$) distinct cycles. A vertex of a toroidal unicellular map
  is called \emph{special} if is contained in all the cycles of the
  map. Note that there is exactly one special vertex in a square
  unicellular map, and exactly two special vertices in a hexagon
  unicellular map.

Let $\mathcal M^s_{b}(n)$ (resp. $\mathcal M^h_{b}(n)$) denote the set
of elements of $\mathcal M_{b}(n)$ that are square (resp. hexagon)
unicellular maps. Moreover we denote the sets
$\mathcal M_{b,a}(n),\mathcal M^s_{b,a}(n), \mathcal
M^h_{b,a}(n)$
 the sets of elements of
$M_{b}(n),\mathcal M^s_{b}(n), \mathcal M^h_{b}(n)$, respectively,
that are rooted at an angle of a special vertex.

We have the following bijection:

\begin{lemma}
\label{lem:bijsquarehexa}
  There is a bijection between $\mathcal T_c^t(n) \times \{1,2\}$ and
  $\mathcal (M^s_{b,a}(n)\times\{1,2\}) \cup (\mathcal M^h_{b,a}(n)\times\{0\})$. 
\end{lemma}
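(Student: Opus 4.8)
The plan is to build the bijection by combining three ingredients already available: Lemma~\ref{lem:e4cquad} (removing the interior of a maximal quadrangle preserves essential $4$-connectedness), the mobile machinery of Section~\ref{sec:bijmain}, and Lemma~\ref{lem:anglemobile} (a mobile of $\mathcal M_b(n)$ with a chosen angle has a unique mobile-labeling, and this labeling agrees with the canonical labeling of the quadrangle of the associated ``triangulation minus an edge''). First I would take an element $(G,c)$ of $\mathcal T^t_c(n)$ together with a bit in $\{1,2\}$. Here $G$ is an essentially $4$-connected toroidal map, all faces triangles except one maximal quadrangle $Q$, with a marked corner $c$ of $Q$. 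The idea is: $G$ is essentially a toroidal triangulation with a distinguished edge removed, and the mobile construction of Theorem~\ref{th:unicellular} applied to the ``filled-in'' triangulation, rooted appropriately inside $Q$, produces an extended mobile; removing the root half-edge and the tree part attached to it (which lives inside $Q$ — but here $Q$ is already empty in $G$) yields the unrooted mobile $M$, which by the discussion preceding Lemma~\ref{lem:anglemobile} lies in $\mathcal M_b(n)$ (with $k=0$ since there is nothing inside $Q$). The marked corner $c$ of $Q$ corresponds, via the angle-correspondence of Lemma~\ref{lem:anglemobile}, to a distinguished angle of $M$ incident to a special vertex of $M$ — this is because the quadrangle $Q$ ``around'' the $\{4,8\}$-disk sits exactly where the closure process closes the last stems, and its corners are incident to the vertices of the unique non-contractible structure of $M$, i.e.\ the special vertices.

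The second step is to sort out the square/hexagon dichotomy and the counting factors $\{1,2\}$ versus $\{0\}$. A square unicellular map has a unique special vertex, so an angle of a special vertex is simply an angle of that vertex; a hexagon unicellular map has two special vertices joined by the common path of its three cycles. I would match the topology of $M$ to the shape of $Q$: in the ``$4$-disk'' case of Figure~\ref{fig:48disk-sq} the quadrangle has its four corners on four (not necessarily distinct) vertices of $G$, and one checks that the mobile $M$ is then a hexagon unicellular map, with the two special vertices being two opposite corners of $Q$; in the ``$8$-disk'' case $M$ is a square unicellular map and the special vertex is the single vertex in the interior of the $8$-disk (which becomes a corner-incident vertex after filling). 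The extra binary choice $\{1,2\}$ on the $\mathcal T^t_c$ side and on the $\mathcal M^s_{b,a}$ side records the ambiguity in ``which of two candidate root half-edges inside $Q$'' one uses to run the mobile construction — in the square case there are two, in the hexagon case the marked corner already pins it down, hence the factor $\{0\}$ (a single value) there. The precise bookkeeping of these factors is the combinatorial heart of the statement and I would carry it out by a careful case analysis on Figures~\ref{fig:48disk-sq} and~\ref{fig:48disk-mobile}, tracking which corners of $Q$ are special vertices of $M$.

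For the inverse map I would start from $M \in \mathcal M^s_{b,a}(n)$ or $\mathcal M^h_{b,a}(n)$ with its marked angle $\alpha$ at a special vertex (plus the appropriate extra bit), close all admissible triples by the complete closure procedure of Section~\ref{sec:recover} to obtain a toroidal map $G$ all of whose faces are triangles except one quadrangle $Q$, and use Lemma~\ref{lem:anglemobile} to see that $\alpha$ lands on a corner $c$ of $Q$; this gives back $(G,c)$. One must check: (i) $G$ is essentially $4$-connected — this follows by the counting/mobile-rule argument as in the proof of Theorem~\ref{th:bij1} (reconstruct a $4$-orientation of the angle map of the filled-in triangulation and invoke Lemma~\ref{lem:gamma0all} using the balanced property of $M$), together with Lemma~\ref{lem:e4cquad} run backwards; (ii) $Q$ is a \emph{maximal} quadrangle — else there would be a larger $\{4,8\}$-disk, contradicting that $M$ covers everything, exactly as in the last paragraph of the proof of Theorem~\ref{th:bij1}; (iii) the two composites are mutually inverse, which is immediate once the angle/corner correspondence of Lemma~\ref{lem:anglemobile} and the uniqueness of the mobile-labeling are in place.

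The main obstacle I expect is step two: getting the small multiplicative factors exactly right, i.e.\ proving that in the hexagon case the marked corner determines the rooting uniquely (factor $\{0\}$) while in the square case there is a genuine $2$-to-$1$ vs.\ $1$-to-$2$ phenomenon that exactly cancels. This requires being careful about the symmetry of the $8$-disk versus the asymmetry of the $4$-disk and about how ``marked corner of $Q$'' translates into ``angle of a special vertex of $M$'' — in particular that every corner of $Q$ does correspond to a special-vertex angle of $M$ and conversely, with the right multiplicity. Everything else is a routine transcription of the arguments already used for Theorems~\ref{th:unicellular}, \ref{th:recover} and~\ref{th:bij1}, specialized to the ``triangulation minus one edge'' setting.
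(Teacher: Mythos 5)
There is a genuine gap, and it sits exactly where you predicted: the translation of ``marked corner of $Q$'' into ``angle at a special vertex of $M$'' and the bookkeeping of the factors $\{1,2\}$ versus $\{0\}$. Your proposed mechanism for this step is wrong on three counts. First, the corners of $Q$ are not in general incident to special vertices of $M$: a special vertex is a vertex lying on \emph{all} cycles of the unicellular map, and nothing forces the four vertices of the quadrangle to lie on any cycle of $M$ at all, so there is no direct corner-to-special-vertex correspondence to exploit. Second, the square/hexagon dichotomy of $M$ has nothing to do with the $4$-disk/$8$-disk dichotomy: in this lemma the quadrangle of an element of $\mathcal T_c^t(n)$ is a face, so after adding the diagonal the disk inside the (maximal) quadrangle is always a $4$-disk and the root half-edge is always a stem, yet both square and hexagon mobiles occur; your claimed matching ``$4$-disk $\Rightarrow$ hexagon, $8$-disk $\Rightarrow$ square'' is both unavailable and false. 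Third, the extra bit does not record a choice among ``two candidate root half-edges inside $Q$'': the marked corner already determines uniquely the added diagonal and the root half-edge incident to that corner, so there is no such ambiguity to encode.

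The paper's resolution, which your plan does not supply, is the following. The map is a composition $r\circ(g',Id)\circ(a,Id)$, where $a$ adds the diagonal at the marked corner (a bijection onto the triangulations of $\mc T_r(n)$ whose root edge's two incident faces form a maximal quadrangle), $g'$ is the restriction of the bijection $g$ of Theorem~\ref{th:bij1} (landing on mobiles whose root half-edge is a stem, by Lemmas~\ref{lem:maxdiskroot} and~\ref{lem:8diskin}), and $r$ removes the stem $h_0$ attached at an angle $\alpha$ and re-roots $M$ using the \emph{unique mobile-labeling} $\ell(\alpha)$ of Lemma~\ref{lem:anglemobile}: each special vertex has a unique angle between half-edges labeled $0$ and $1$, and the root angle of $M$ is taken there. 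The bit is then carried through unchanged when $M$ is square (one special vertex), and is \emph{consumed} when $M$ is hexagon, where it selects which of the two special vertices (ordered by first encounter walking counterclockwise from $\alpha$ along the face) receives the root angle — the opposite of your reading, in which the hexagon case was the one pinned down for free. Invertibility also rests on Lemma~\ref{lem:anglemobile} in a way your sketch omits: after closing all admissible triples the remaining quadrangle is labeled as in Figure~\ref{fig:labelquadrangle}, hence has a unique angle between labels $0$ and $1$, and that is how $\alpha$ (and, in the hexagon case, the bit) is recovered from the rooted mobile. Your items (i)--(iii) for the inverse are fine and do follow the arguments of Theorem~\ref{th:bij1}, but without the labeling-based re-rooting and bit-transfer mechanism the central count $\mathcal T_c^t(n)\times\{1,2\}\leftrightarrow(\mathcal M^s_{b,a}(n)\times\{1,2\})\cup(\mathcal M^h_{b,a}(n)\times\{0\})$ is not established.
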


\begin{proof}
  We define a bijective function from
  $\mathcal T_c^t(n) \times \{1,2\}$ to
  $(\mathcal M^s_{b,a}(n)\times\{1,2\}) \cup (\mathcal
  M^h_{b,a}(n)\times\{0\})$.
  This function is defined via three intermediate functions $a$, $g'$ and
  $r$ defined below.

  Let $a$ (for ``add'') be the mapping defined on the elements $G$ of
  $\mathcal T_c^t(n)$ that adds to $G$ a diagonal $e_0$ in the
  interior of the (maximal) quadrangle $Q$ of $G$, incident to
  the marked corner $\alpha$ of $Q$, and returns the obtained map $Z$ rooted at
  the half-edge $h_0$ of $e_0$ incident to $\alpha$.  Let
  $\mc T'_r(n)$ be the subset of $\mc T_r(n)$ (see
  definition in Section~\ref{sec:bijection}) such that the two faces incident to
  the root half-edge form a maximal quadrangle. We claim the following:

  \begin{claim}
\label{cl:bija}
    $a$ is a bijection from $\mathcal T_c^t(n)$ to $\mc T'_r(n)$.
  \end{claim}
  \begin{proofclaim}
    Let $G$ be an element of $\mathcal T_c^t(n)$ and $Z$ its
    image by $a$. Consider the notations of the definition of $a$.  Since
    $G$ is essentially 4-connected, the added edge $e_0$ cannot
    create a contractible loop in $Z$.  If adding $e_0$ creates a pair
    of homotopic multiple edges in $Z$ with an edge $e'_0$, then there
    are two edges of the quadrangle $Q$ of $G$ plus edge $e'_0$ that form
    a separating triangle of $G^\infty$ contradicting the
    $4$-connectedness of $G^\infty$.  So the obtained map $Z$ is a
    toroidal triangulation with no contractible loop nor homotopic
    multiple edges.  Moreover since $Q$ is a maximal quadrangle, the
    edge $e_0$ cannot create a separating triangle of $Z^\infty$.  So
    by Lemma~\ref{lem:e4ciffnst}, the toroidal triangulation $Z$ is
    essentially $4$-connected.  Moreover $Z$ has the particularity
    that the two faces incident to the root half-edge $h_0$ form a
    maximal quadrangle of $Z$.  So $Z$ is is in $\mc T'_r(n)$.

    Let $\overline{a}$ be the mapping defined on the elements $Z$ of
    $\mc T'_r(n)$ that removes from $Z$ the edge containing the root
    half-edge and mark the obtained quadrangle at the corner incident
    to $h_0$. Then clearly $\overline{a}\circ a = Id$.

    Conversely, let $Z$ be an element of $\mc T'_r(n)$ and consider
    its image $G$ by $\overline{a}$.  We have $Z$ is an essentially
    4-connected toroidal triangulation on $n$ vertices rooted at a
    half-edge that is in the interior and incident to a maximal quadrangle, and
    such that the two faces incident to the root half-edge form a
    maximal quadrangle.  So $G$ is a toroidal map on $n$ vertices,
    where all faces are triangles, except one that is a maximal
    quadrangle and with a marked corner of this quadrangle.  The map $G$
    is obtained from $Z$ by removing the interior of a maximal
    quadrangle so, by Lemma~\ref{lem:e4cquad}, $G$ is essentially
    4-connected.  So $G$ is in $\mathcal T_c^t(n)$.

    We clearly have $a \circ \overline{a} = Id$.  So $a$ is a
    bijection from $\mathcal T_c^t(n)$ to $\mc T'_r(n)$.
  \end{proofclaim}

  Let $g'$ be the restriction of the bijection $g$, defined in the
  proof of Theorem~\ref{th:bij1}, to the elements of $\mc T'_r(n)$.
  Let $\mathcal M'_{r,s,b}(n)$ be the subset of
  $\mathcal M_{r,s,b}(n)$ (see definition in
  Section~\ref{sec:bijection}) such that the root half-edge is a
  stem. We claim the following:

 \begin{claim}
\label{cl:bijh}
    $g'$ is a bijection from $\mc T'_r(n)$ to $\mathcal M'_{r,s,b}(n)$.
  \end{claim}
  \begin{proofclaim}
    Let $Z$ be an element  of $\mc T'_r(n)$ and
    $M^+\in \mathcal M_{r,s,b}(n)$ its image by $g$.  By definition of
    $\mc T'_r(n)$, the two faces incident to the root half-edge $h_0$ form
    a maximal quadrangle $Q$. In the minimal balanced
    $4$-orientation of $A(Z)$ w.r.t.~$h_0$, the $4$-disk $W$ inside
    $Q$ is oriented \cw by Lemma~\ref{lem:maxdiskroot} (see left of
    Figure~\ref{fig:48disk-mobile}).  Then by the definition of the
    mobile (see rule of Figure~\ref{fig:mobile-rule}), there is no half-edge
    of $M^+$ in the interior of $Q$ except $h_0$. So $h_0$ is a stem
    of $M^+$. So $M^+$ is in $\mathcal M'_{r,s,b}(n)$.

    Let $\overline{g'}$ be the restriction of $g^{-1}$ to the elements
    of $\mathcal M'_{r,s,b}(n)$. Since $g$ is a bijection,
    we have $\overline{g'}\circ g'= Id$.

    Conversely, let $M^+$ be an element of $\mathcal M'_{r,s,b}(n)$
    and consider its image $Z$ by $g^{-1}$.  By the proof of
    Theorem~\ref{th:bij1}, the complete closure procedure on $M^+$
    gives an essentially 4-connected toroidal triangulation $Z$ of
    $\mc T_r(n)$ rooted at $h_0$ and such that $h_0$ is in the
    interior and incident to a maximal quadrangle $Q'$.  Moreover,
    $M^+$ is the extended mobile associated to the minimal balanced
    4-orientation $D_{\min}$ of $A(Z)$ w.r.t.~$h_0$.  The quadrangle $Q'$
    corresponds to a $\{4,8\}$-disk $W$ of $A(Z)$ (see
    Figure~\ref{fig:48disk-sq}). Note that $W$ is a maximal
    $\{4,8\}$-disk containing $h_0$. So, by
    Lemma~\ref{lem:maxdiskroot}, $W$ is oriented \cw w.r.t.~its
    interior in $D_{\min}$.  Then, by Lemma~\ref{lem:8diskin}, the
    orientation of $W$ and of the edges in its interior and incident
    to it are as depicted on Figure~\ref{fig:48disk-mobile}. Then by
    the definition of the mobile (see rule of Figure~\ref{fig:mobile-rule}),
    there is no half-edge of $M^+$, distinct from $h_0$, in the
    interior of $Q'$ and incident to $Q'$.  Since $M^+$ is covering
    all the edges of $Z$, we have that $Q'$ has no edges in its
    interior, except the one containing $h_0$.  So $Q'$ is the
    quadrangle formed by the two faces incident to $h_0$. So
    the two faces incident to $h_0$ form a maximal quadrangle and so $Z$
    is an element of $\mc T'_r(n)$.

    Since $g$ is a bijection, we have $g' \circ \overline{g'}= Id$. So
    $g'$ is a bijection from $\mc T'_r(n)$ to
    $\mathcal M'_{r,s,b}(n)$.
 \end{proofclaim}

 Let $r$ (for ``remove'') be the mapping defined on the elements
 $(M^+,x)$ of $\mathcal M'_{r,s,b}(n)\times \{1,2\}$, that removes the
 root half-edge $h_0$ of $M^+$ that is a stem and roots the obtained
 mobile $M$ at an angle of a special vertex by the following rule. Let
 $\alpha$ be the angle of $M$ such that $M^+$ is obtained from $M$ by
 adding $h_0$ in the angle $\alpha$.  Consider the unique
 mobile-labeling $\ell(\alpha)$ of $M$, given by
 Lemma~\ref{lem:anglemobile}.  If $M$ is square, let $\beta$ be the
 angle of the special vertex of $M$ that is between the half-edges
 labeled $0$ and $1$.  In this case, $r$ returns $(M,x)$ rooted at
 $\beta$. If $M$ is hexagon, let $v_1$ (resp. $v_2$) be the first
 (resp. second) special vertex of $M$ that is encountered while
 walking \ccw along the border of the unique face of $M$, starting
 from $\alpha$. For $i\in\{1,2\}$, let $\beta_i$ be the angle of $v_i$
 that is between half-edges labeled $0$ and $1$. In this case, $r$
 returns $(M,0)$ rooted at $\beta_x$. We claim the following:

 \begin{claim}
\label{cl:bijr}
    $r$ is a bijection from $\mathcal M'_{r,s,b}(n)\times \{1,2\}$ to 
$(\mathcal M^s_{b,a}(n)\times\{1,2\}) \cup (\mathcal
 M^h_{b,a}(n)\times\{0\})$.
  \end{claim}
  \begin{proofclaim}
    It is clear by the definition of $r$ that the image by $r$ of an
    element of $\mathcal M'_{r,s,b}(n)\times \{1,2\}$ is in
    $(\mathcal M^s_{b,a}(n)\times\{1,2\}) \cup (\mathcal
    M^h_{b,a}(n)\times\{0\})$.

    Let $\overline{r}$ be the mapping defined on the elements $(M,y)$
    of
    $(\mathcal M^s_{b,a}(n)\times\{1,2\}) \cup (\mathcal
    M^h_{b,a}(n)\times\{0\})$
    by the following.  Let $\beta$ be the root angle of $M$ and
    consider the unique labeling $\ell(\beta)$ of $M$, given by
    Lemma~\ref{lem:anglemobile}.  Close all the admissible triples of
    $M$ to obtain a map $G$ whose special face is a quadrangle
    $Q$. Propagate the labeling $\ell(\beta)$ to $G$ by keeping the
    property that the two labels that appear on an edge has to differ
    exactly by $(2\bmod 4)$.  Then by Lemma~\ref{lem:anglemobile} the
    quadrangle $Q$ of $G$ is labeled as on
    Figure~\ref{fig:labelquadrangle}. So $Q$ has a unique angle
    $\alpha$ between half-edges labeled $0$ and $1$.  We also denote
    $\alpha$ the angle of $M$ that corresponds to the angle $\alpha$
    of $Q$.  Let $M^+$ be the map obtained from $M$ by forgetting its
    root angle $\beta$ and adding a root half-edge $h_0$ incident to
    $\alpha$. If $M$ is square, then let $x=y$. If $M$ is hexagon, let
    $x$ be such that $\beta$ is an angle incident to the $x$-th
    special vertex of $M$ encountered while walking \ccw along the
    face of $M$, starting from $\alpha$.  Then $\overline{r}$ returns
    $(M^+,x)$.

We claim that:

    \begin{sclaim}
      $\overline{r}\circ r=Id$
    \end{sclaim}

    \begin{proofsclaim}
      Let $(M^+,x)$ be an element of
      $\mathcal M'_{r,s,b}(n)\times \{1,2\}$ and $(M,y)$ its image by
      $r$. We use the notation of the definition of $r$, i.e. the map
      $M$ is obtained from $M^+$ by removing the root half-edge $h_0$
      of $M^+$, incident to the angle $\alpha$ of $M$ and rooting $M$
      according to the labeling $\ell(\alpha)$ at an angle $\beta$
      between half-edges labeled $0$ and $1$. By
      Lemma~\ref{lem:anglemobile}, there is a unique mobile-labeling
      of $M$ such that $\beta$ is between half-edges
      labeled $0$ and $1$. So the labeling $\ell(\beta)$ used in the
      definition of $\overline{r}$ is exactly the same as
      $\ell(\alpha)$.  So $M^+$ is obtained from $M$ by adding a
      half-edge $h_0$ at an angle between half-edges labeled $0$ and
      $1$ of $\ell(\beta)$.

From $M^+$, one can build the graph
$Z=g^{-1}(M^+)\in \mc T'_r(n)$ by closing admissible triples in
any order.  The recovering method of Theorem~\ref{th:recover} says
that the root half-edge $h_0$ can be the last stem that is reattached
by this procedure.  
So the graph $G$ defined in the definition of $\overline{r}$ is
obtained from $Z$ by removing the edge containing the root half-edge.
So $M^+$ is obtained from $M$ by adding a half-edge $h_0$ at an angle
of $M$ corresponding to one of the angle of the quadrangle $Q$ of $G$.

By Lemma~\ref{lem:anglemobile}, the extension of the labeling
$\ell(\beta)$ to $G$ shows that the quadrangle $Q$ of $G$, is labeled
as on Figure~\ref{fig:labelquadrangle}. Moreover the angles of $M$
that corresponds to the angles of $Q$ are incident to half-edges with
exactly the same labels in $M$ and in $Q$. So there is a unique such
angle $\alpha$ of $M$ between half-edges labeled $0$ and $1$.  So
$M^+$ is obtained from $M$ by adding a half-edge $h_0$ at the angle of
$M$ corresponding to $\alpha$ and $\overline{r}\circ r=Id$.

    \end{proofsclaim}

Conversely, let  $(M,y)$ be an element of $(\mathcal M^s_{b,a}(n)\times\{1,2\}) \cup (\mathcal
    M^h_{b,a}(n)\times\{0\})$ and $(M^+,x)$ its image by
    $\overline{r}$. Since $M$ is balanced, we have that $M^+$ is also
    balanced. Moreover, the root half-edge $h_0$ is added
    to $M$ in an angle of the special face obtained after
    reattaching all the admissible triples of $M$. So $M^+$ is
    safe and $(M^+,x)$ is in $\mathcal M'_{r,s,b}(n) \times \{1,2\}$.

It is clear that
$r \circ \overline{r}=Id$, so $r$ is a bijection.
\end{proofclaim}

By Claim~\ref{cl:bija} to~\ref{cl:bijr}, we have $r\circ(g',Id)\circ (a,Id)$ is a bijection from $\mathcal T_c^t(n) \times \{1,2\}$ to
$(\mathcal M^s_{b,a}(n)\times\{1,2\}) \cup (\mathcal M^h_{b,a}(n)\times\{0\})$. 
\end{proof}

\subsection{Enumeration of skeletons}

A \emph{skeleton} is a toroidal unicellular map such that every inner
vertex, i.e. every vertex of degree at least two, belongs to its
cycles. A skeleton is \emph{balanced} if every cycle of the map has
the same number of angles on its left and right sides.  A skeleton is
\emph{square} (resp. \emph{hexagon}) if it is a square (resp. hexagon)
unicellular map.  Let $\mathcal{S}_a(n)$ be the set of skeletons
on $n$ inner vertices, such that all inner vertices have degree $4$,
and rooted at an angle of a special vertex. Let
$\mathcal{S}_{b,a}(n)$ be the set of balanced element of
$\mathcal{S}_a(n)$.  Let $\mathcal{S}_a^s(n)$,
$\mathcal{S}_a^h(n)$, $\mathcal{S}_{b,a}^s(n)$ and
$\mathcal{S}_{b,a}^h(n)$ be the sets of square and hexagon
elements of $\mathcal{S}_a(n)$ and $\mathcal{S}_{b,a}(n)$,
respectively.

Given an element $M$ of $\mathcal M_{b}(n)$, the \emph{skeleton} of
$M$ is obtained from $M$ by removing all the vertices that are not
vertices of the cycles of $M$ nor in their neighborhood. It is direct to see
that the skeletons of elements of  $\mathcal M_{b,a}(n)$
are in
$\mathcal{S}_{b,a}(n)$.

An element of $\mathcal M_{b,a}(n)$ can be uniquely decomposed
into an element of $\mathcal{S}_{b,a}(k)$, for some $k\geq 1$,
and a $(2k-2)$-uplet of  ternary trees (each ternary tree being
attached to a leaf of the skeleton) such that the total number of
inner vertices of the trees is $n-k$.

Let $\mathcal{F}(n,k)$ be the set of $k$-uplets of rooted ternary
trees with total number of inner vertices $n$. Its associated generating function satisfies 
$F(z,u)=\sum_{n,k} |\mathcal{F}(n,k)|z^nu^k=\sum_{k} A(z)^ku^k=
 \frac{1}{1-uA(z)}$. Moreover, it is known that
$|\mathcal{F}(n,k)| = \frac{k}{2n-k}\binom{3n-k-1}{n}$ (see~\cite[Theorem 5.3.10]{Sta99}).

Let $S^s_{b,a}$ and $S^h_{b,a}$ be the generating functions
associated to $\mathcal S^s_{b,a}$ and $\mathcal S^h_{b,a}$,
respectively, i.e. $S^s_{b,a}(z) =\sum_{n} |\mathcal
S^s_{b,a}(n)|z^n$ and $S^h_{b,a}(z) =\sum_{n}|\mathcal S^h_{b,a}(n)|z^n$.

 Then we have the following lemma:

\begin{lemma}
\label{lem:forestdecomp}
$$|\mathcal{M}_{b,a}^s(n)| =
\sum_{k=1}^n|\mathcal{S}_{b,a}^s(k)|.|\mathcal{F}(n-k,2k-2)|$$

$$|\mathcal{M}_{b,a}^h(n)| =  \sum_{k=1}^n|\mathcal{S}_{b,a}^h(k)|.|\mathcal{F}(n-k,2k-2)|$$

and the associated generating functions satisfy:

  $$M^s_{b,a}(z) =\sum_{n\geq 1} |\mathcal M^s_{b,a}(n)|z^n=
  S^s_{b,a}(zA(z)^2)/A(z)^2$$

  $$M^h_{b,a}(z) =\sum_{n\geq 1}|\mathcal M^h_{b,a}(n)|z^n=
  S^h_{b,a}(zA(z)^2)/A(z)^2.$$
\end{lemma}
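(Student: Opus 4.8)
The plan is to read off both the explicit formulas and the generating-function identities from the unique decomposition recalled just before the statement. The proof then splits into three steps: making that decomposition into a genuine decoration-preserving bijection, reading off the sums, and passing to generating functions. For the first step, given $M$ in $\mathcal M^s_{b,a}(n)$ (resp.\ $\mathcal M^h_{b,a}(n)$), its \emph{skeleton} $S$ is obtained by pruning every dangling subtree of $M$ down to a single stem at its attachment vertex. The inner vertices of $S$ are exactly the vertices of the cycles of $M$, so $S$ has the same non-contractible cycles as $M$; in particular $S$ is square exactly when $M$ is and hexagon exactly when $M$ is, and every special vertex of $M$ stays a special vertex of $S$. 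Since pruning only replaces some full edges incident to cycle vertices by stems occupying the same angular slot, and deletes vertices lying off all cycles, the number of angles of the map on each side of any non-contractible cycle is unchanged; hence $S$ inherits the balanced property, and the root angle of $M$ (at a special vertex) transports to an angle of the same special vertex of $S$. Thus $S\in\mathcal S^s_{b,a}(k)$ (resp.\ $\mathcal S^h_{b,a}(k)$) for some $k$. Conversely, grafting a ternary tree with $j_i$ inner vertices onto the $i$-th stem of such an $S$ re-creates a map with $k+\sum_i j_i$ vertices, all of degree $4$, and one checks that the two operations are mutually inverse.

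Second, I would fix the parameters. A skeleton $S$ with $k$ inner vertices is a unicellular map on the torus with all $k$ vertices of degree $4$, hence has $4k$ half-edges; Euler's formula gives $k+1$ full edges, which account for $2(k+1)$ half-edges, so $S$ has exactly $4k-2(k+1)=2k-2$ stems — this is the source of the $(2k-2)$-uplet. Because $S$ is rooted at an angle of a special vertex, walking \ccw along its unique face from the root angle lists its $2k-2$ stems in a canonical order, so the grafted trees form an \emph{ordered} tuple, and the elements of $\mathcal M^s_{b,a}(n)$ whose skeleton has $k$ inner vertices are in bijection with $\mathcal S^s_{b,a}(k)\times\mathcal F(n-k,2k-2)$. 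Summing over $k$ with $1\le k\le n$ gives
$$|\mathcal M^s_{b,a}(n)|=\sum_{k=1}^{n}|\mathcal S^s_{b,a}(k)|\,|\mathcal F(n-k,2k-2)|,$$
and the same argument with ``square'' replaced by ``hexagon'' gives the second formula.

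Third, I would pass to generating functions. A ternary tree has generating function $A(z)$ by~\eqref{eq:az3}, so a $j$-uplet of ternary trees with $m$ inner vertices in total is counted by $[z^m]A(z)^j$; equivalently $|\mathcal F(m,j)|=[z^m]A(z)^j$, consistently with $F(z,u)=1/(1-uA(z))$. Therefore
\begin{align*}
M^s_{b,a}(z)&=\sum_{k\ge 1}|\mathcal S^s_{b,a}(k)|\,z^{k}\sum_{m\ge 0}|\mathcal F(m,2k-2)|\,z^{m}
=\sum_{k\ge 1}|\mathcal S^s_{b,a}(k)|\,z^{k}A(z)^{2k-2}\\
&=\frac1{A(z)^{2}}\sum_{k\ge 1}|\mathcal S^s_{b,a}(k)|\,\bigl(zA(z)^{2}\bigr)^{k}
=\frac{S^s_{b,a}(zA(z)^{2})}{A(z)^{2}},
\end{align*}
and the hexagon computation is identical. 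I expect the only real obstacle to be the bookkeeping of the first step: one must check carefully that pruning and grafting leave the left/right angle count of every non-contractible cycle untouched, that the marked root angle and the square-versus-hexagon type transport correctly, and that the canonical ordering of the $2k-2$ stems is well defined; once the decomposition is a clean decoration-preserving bijection, the parameter count and the generating-function manipulation are routine.
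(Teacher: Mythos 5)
Your proof is correct and follows essentially the same route as the paper: the skeleton/ternary-tree decomposition, the Euler-formula count of the $2k-2$ attachment slots, and the substitution $z\mapsto zA(z)^2$ in the generating function are exactly the paper's argument, with the verifications the paper leaves as ``clear'' (preservation of the balanced property, of the square/hexagon type and of the root angle, and the canonical ordering of the slots along the unique face) spelled out. The only cosmetic difference is that you prune pendant subtrees down to stems whereas the paper keeps the depth-one neighbours as leaves of the skeleton; the two conventions are in trivial bijection and the counting is unchanged.
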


\begin{proof}
  The first two formulas are clear by above decomposition.  Moreover,
  each element of $\mathcal{M}_{b,a}^s(n)$ is obtained by substituting
  each of the $2k-2$ leaves of an element of $\mathcal{S}_{b,a}^s(k)$ by a ternary
  tree. So we have:
\begin{align*}
M^s_{b,a}(z) 
       &=& \sum_{k\geq 1} |\mathcal{S}_{b,a}^s(k)| A(z)^{2k-2} z^k\\
       &=&\frac{1}{A(z)^2} \sum_{k\geq 1} |\mathcal{S}_{b,a}^s(k)|(A(z)^2 z)^k\\
       &=&\frac{S^s_{b,a}(zA(z)^2)}{A(z)^2}.
\end{align*}

Similarly, we have $M^h_{b,a}(z)= \frac{S^h_{b,a}(zA(z)^2)}{A(z)^2}$.
\end{proof}

By Lemma~\ref{lem:forestdecomp}, we are reduced to the enumeration of
$\mathcal{S}_{b,a}^s$ and $\mathcal{S}_{b,a}^h$.

Consider an element $S$ of $\mathcal{S}_a(n)$. If $S$ is square,
consider the two edge-disjoint closed walks of $S$ started from the
special vertex, noted $W_1$ and $W_2$. We assume that $W_1$ and $W_2$
are chosen such that the two half-edges $h_1,h_2$ that are incident
to the root angle of $S$ are traversed from the special vertex in
$W_1$, $W_2$, respectively, and that $h_1$ and $h_2$ appear
consecutively in \ccw order around the special vertex.  If $S$ is
hexagon, then consider the three walks $W_1$, $W_2$ and $W_3$ of $S$
starting from the special vertex $v_1$ containing the root angle,
ending at the second special vertex $v_2$, such that the three paths
$W_1,W_2,W_3$ appears consecutively in \ccw order around $v_1$,
starting from the leaf attached to $v_1$.  Note that, for the square
or hexagonal case, the $W_i$ are uniquely defined and oriented.  Along
each walk $W_i$, the inner vertices that are encountered may have both
leaves on the right, both leaves on the left, or one leaf on each
side. In next two lemmas, we encode this by using Grand Motzkin
prefix/paths defined below.




A \emph{Grand
  Motzkin prefix} (or \emph{GM prefix} for short) of length $n$, is a
path in $\mathbb Z^2$,
starting at the point $(0,0)$, ending at the point
$(n,\delta)$, with $\delta\in\mathbb Z$, and composed of $k$ steps $(1,1)$, $(1,-1)$ and
$(1,0)$.  Let $gm(n,\delta)$ be the number of GM prefix of
length $n$ starting at $(0,0)$ and ending at $(n,\delta)$ and
$GM(z,u) = \sum_{n,\delta} gm(n,\delta) z^n u^\delta$. There is one GM
prefix of length $0$ and a GM prefix of length $n>0$ is obtained by
adding one step $(1,1)$, $(1,-1)$ or $(1,0)$ to a GM prefix of length
$(n-1)$. This decomposition leads to the following equation:

$$GM(z,u)=1 + z(u+1/u+1)GM(z,u) =\frac{1} {1-z(u+1/u+1)} = \sum_{n\geq0} (1+1/u+u)^n z^n.$$


Let $[z^n]f$ denote the coefficient of $z^n$ of function $f(z)$,
i.e. if $f=\sum_n f_n z^n$  then $[z^n] =f_n$.

A \emph{Grand Motzkin path} (or \emph{GM path} for short) of length
$k$ is a \emph{Grand Motzkin prefix} of length $k$, ending at the
point $(k,0)$.  The generating function associated to GM paths
satisfies $GM(z)=[u^0]GM(z,u)=\frac{1}{\sqrt{1-2z- 3z^2}}$
(see~\cite{FM14}).

The square skeletons satisfy:

\begin{lemma} \label{lem:Ss}
  $$|\mathcal{S}_{b,a}^s(n)| = \frac{3^n - (-1)^n}4$$
  $$S^s_{b,a}(z) = \frac{z}{1-2z-3z^2}.$$ 
\end{lemma}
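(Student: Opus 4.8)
The plan is to give a direct combinatorial encoding of the elements of $\mathcal{S}^s_{b,a}(n)$ by a single Grand Motzkin path whose length is $n-1$, and then read off the count. First I would unfold a square skeleton $S$ at its unique special vertex into the two edge-disjoint closed walks $W_1,W_2$, oriented as described just before the lemma (so that the two half-edges $h_1,h_2$ hit by the root angle are the first edges of $W_1,W_2$ respectively, consecutive in \ccw order at the special vertex). Each of the $n-1$ inner vertices other than the special vertex lies on exactly one of $W_1,W_2$, has degree $4$, and carries two leaves; relative to the direction of traversal of its walk, these two leaves are either both on the left, both on the right, or split one-and-one. I would record this, vertex by vertex in traversal order (first all of $W_1$ from the special vertex back to it, then all of $W_2$), as a sequence of steps: an up-step for a vertex with both leaves on one designated side, a down-step for the other side, a level step for the split case. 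This gives a word of length $n-1$ over $\{+1,-1,0\}$, i.e. a GM prefix.

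The key step is to identify exactly which GM prefixes arise. The balanced condition says each cycle of $S$ has equally many angles on its left and right; in the square case the two cycles are $W_1$ and $W_2$ (the full closed walks), and counting angles on each side of $W_i$ amounts to saying the partial ``left minus right'' leaf-imbalance returns to $0$ at the end of each walk — in other words the sub-word coming from $W_1$ is a GM \emph{path} and likewise for $W_2$. Conversely any pair of GM paths, one for $W_1$ and one for $W_2$, with chosen lengths $a+b=n-1$, reconstructs a unique square skeleton rooted at the prescribed angle, because the leaf placements plus the degree-$4$ condition plus the cyclic order at the special vertex (fixed by the rooting convention) determine the embedding. The factor handling the two walks, and the fact that the distinguished special vertex is already pinned by the root angle, means the generating function is $S^s_{b,a}(z)$ counts: one unit for the special vertex, then a GM path for the $W_1$-part and a GM path for the $W_2$-part. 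But one must be careful that the \emph{first} vertex after the special vertex on $W_1$ (and on $W_2$) is constrained by which side $h_1$ (resp.\ $h_2$) lies, so effectively the very first step of each walk's word is forced; this is what lowers the exponent so that the two parts together contribute length $n-1$ rather than $n+1$, and it is the bookkeeping here that I expect to be the main obstacle.

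Concretely I would argue $S^s_{b,a}(z) = z\bigl(\text{GM path part}\bigr)$ where the GM-path part, after incorporating the one forced step on each walk, evaluates to $\frac{1}{1-2z-3z^2}$: indeed $\sum_{k\ge 0}(\text{pair of GM paths of total length }k)z^k$ with the appropriate shift gives $\frac{z}{1-2z-3z^2}$ directly, since $[u^0]GM(z,u) = \frac{1}{\sqrt{1-2z-3z^2}}$ and the convolution/forced-step analysis squares things up to a rational function. Having established $S^s_{b,a}(z)=\dfrac{z}{1-2z-3z^2}$, the closed form follows by partial fractions: $\dfrac{z}{1-2z-3z^2}=\dfrac{z}{(1-3z)(1+z)}=\dfrac14\!\left(\dfrac{1}{1-3z}-\dfrac{1}{1+z}\right)$, whence $[z^n]S^s_{b,a}=\dfrac{3^n-(-1)^n}{4}$, as claimed. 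The routine verification is just the geometric-series expansion; the substance is the bijection with pairs of Grand Motzkin paths and the precise accounting of the root angle's effect at the special vertex.
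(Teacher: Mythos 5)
Your proposal follows essentially the same route as the paper: decompose a rooted square skeleton into its special vertex plus the two closed walks $W_1,W_2$ fixed by the root angle, encode the leaf placement at each non-special vertex as a Grand Motzkin step, observe that the balanced property forces each walk's word to end at height $0$ (i.e.\ to be a GM path), and conclude $S^s_{b,a}(z)=z\,GM(z)^2=\frac{z}{1-2z-3z^2}$; the only cosmetic difference is that you extract the coefficients by partial fractions while the paper uses the linear recurrence $|\mathcal{S}_{b,a}^s(n)|=2|\mathcal{S}_{b,a}^s(n-1)|+3|\mathcal{S}_{b,a}^s(n-2)|$ with the initial values $0,1$, which is equivalent. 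The one point to clean up is the ``forced first step'' you flag as the main obstacle: it is a phantom issue. The special vertex has no leaves at all (its four incident edge-ends are exactly the four walk-ends of $W_1$ and $W_2$, which must interleave around it), so it contributes the factor $z$ and no step, while each of the remaining $n-1$ inner vertices carries exactly two leaves and contributes exactly one step; the rooting convention only determines which walk is $W_1$, which is $W_2$, and their directions of traversal, and constrains no step of either word. Hence the pair of GM paths has total length exactly $n-1$ with no correction, and the convolution gives $z\,GM(z)^2$ directly — which is in fact the formula you end up writing, so your computation is right even though the justification you sketch for the length bookkeeping is not.
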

\begin{proof} 
  With above notations, an element of $\mathcal{S}_{b,a}^s(n)$ is
  uniquely decomposed into a special vertex and the two closed walks
  $W_1$ and $W_2$.  for $i\in\{1,2\}$, let $r_i$ (resp. $l_i$) be the
  number of leaves of $S$ incident to an inner vertex of $W_i$ that
  are on the right (resp. left) side of $W_i$. Let
  $\delta_i= (r_i - l_i)/2$.  Since the special node as no leaves
  attached to it, the balanced property implies that
  $\delta_1=\delta_2=0$.  So if $W_i$ contains $k_i$ inner vertices,
  then $W_i$ can be encoded by a GM path of length $k_i$.  This
  decomposition results in the product of respective generating
  series: $S^s_{b,a}(z) =z.GM(z)^2= \frac{z}{1-2z-3z^2}$.

  Now observe that $S^s_{b,a}(z) - 2zS^s_{b,a}(z) - 3 z^2 S^s_{b,a}(z) =z$. We deduce
  the following recurrence: for $n\geq 2$, we have
  $|\mathcal{S}_{b,a}^s(n)| = 2|\mathcal{S}_{b,a}^s(n-1)| +
  3|\mathcal{S}_{b,a}^s(n-2)|$.
  Moreover, we have $|\mathcal{S}_{b,a}^s(0)|=0$ and
  $|\mathcal{S}_{b,a}^s(1)|=1$. Since $\frac{3^n - (-1)^n}4$
  satisfies  the same conditions $|\mathcal{S}_{b,a}^s(n)|$, the two are identical.
\end{proof}

The hexagon skeletons satisfy:

\begin{lemma}
\label{lem:Sh}
$$|\mathcal{S}_{b,a}^h(n)| = (n-2).3^{n-1}+\frac{5.3^{n-1}+(-1)^n}4$$
$$S^h_{b,a}(z) = \frac{4z^2}{(z+1)(3z-1)^2}.$$
\end{lemma}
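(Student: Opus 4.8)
The plan is to imitate the proof of Lemma~\ref{lem:Ss} (the square case), replacing the two closed walks based at the single special vertex by the three walks of a hexagonal skeleton and keeping careful track of what happens at the \emph{two} special vertices, then finishing with a short generating-function computation.

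First I would decompose a hexagon skeleton $S\in\mathcal{S}_a(n)$ uniquely into: its two special vertices $v_1,v_2$ (each of degree $4$, hence each carrying exactly one leaf in $S$, since it is already incident to the three walk-edges); the three walks $W_1,W_2,W_3$ from $v_1$ to $v_2$, taken in \ccw order around $v_1$ starting right after the leaf at $v_1$ (so the cyclic order of the three walk-edges at $v_1$ is fixed by this convention); and, exactly as in Lemma~\ref{lem:Ss}, at each intermediate (degree-$4$) vertex of each $W_i$ a step in $\{+1,0,-1\}$ recording whether its two leaves lie both on the left, one on each side, or both on the right of $W_i$ traversed from $v_1$ to $v_2$. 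Thus each $W_i$ with $k_i$ intermediate vertices is encoded by a Grand Motzkin prefix of length $k_i$ ending at a height $R_i\in\mathbb{Z}$ (the net leaf-excess of $W_i$), and the only remaining combinatorial data are the position of the leaf at $v_2$ relative to the three walk-edges there (one of three ``slots''), together with the root angle.

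Next I would translate the \emph{balanced} condition. A hexagonal skeleton has exactly the three cycles $W_1\overline{W_2}$, $W_2\overline{W_3}$, $W_1\overline{W_3}$; since the first two share the path $W_2$, they form a basis for the homology, so by Lemma~\ref{lem:gammahomology} the skeleton is balanced iff $\gamma(W_1\overline{W_2})=\gamma(W_2\overline{W_3})=0$ (then $\gamma(W_1\overline{W_3})=0$ follows automatically, which also gives a handy consistency check). I would compute each of these $\gamma$'s as a sum of local contributions: each intermediate vertex of $W_i$ contributes $\pm 2$ or $0$ according to its recorded step, so the walk-parts total $2(R_1-R_2)$, $2(R_2-R_3)$, $2(R_1-R_3)$ respectively, while each special vertex contributes a constant determined solely by the cyclic order there of its leaf and the three walk-edges. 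Requiring the map to have genus $1$ forces the \ccw order of the walk-edges at $v_2$ to coincide with that at $v_1$, so only the three leaf-slots at $v_2$ survive; a short case analysis then shows that in these three cases the balanced condition becomes, respectively, $R_1=R_2=R_3$, or $R_1+1=R_2=R_3$, or $R_1=R_2=R_3-1$. I expect this special-vertex bookkeeping to be the main obstacle: deciding which leaf configurations are realizable on the torus, evaluating their signed contributions to $\gamma$ on the two basis cycles (the exact additivity in Lemma~\ref{lem:gammahomology} is precisely what makes the three numbers cohere), and, via the labeling correspondence of Lemma~\ref{lem:anglemobile} (as used in Lemma~\ref{lem:bijsquarehexa}), pinning down the multiplicity $4$ coming from the admissible root angles at a special vertex. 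Careful figures are essential here, because the three walks look symmetric but the fixed leaf at $v_1$ and the genus constraint break the symmetry (the second and third cases above are mirror images under $t\mapsto -t-1$).

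Then the generating-function step is a convolution using the two identities $\sum_{t}GM_t(z)=\tfrac1{1-3z}$ (there are $3^n$ Grand Motzkin prefixes of length $n$) and $\sum_{t}GM_t(z)^2=GM(z)=\tfrac1{\sqrt{1-2z-3z^2}}$ (by folding a prefix into a Grand Motzkin path), where $GM_t(z)=[u^t]GM(z,u)$. A short computation gives $\sum_t GM_t^3+\sum_t GM_tGM_{t+1}^2+\sum_t GM_t^2GM_{t+1}=\dfrac{GM(z)^2}{1-3z}=\dfrac1{(1-3z)(1-2z-3z^2)}$. Multiplying by $z^2$ for the two special vertices and by the multiplicity $4$ yields $S^h_{b,a}(z)=\dfrac{4z^2}{(1-3z)(1-2z-3z^2)}=\dfrac{4z^2}{(z+1)(3z-1)^2}$, using $(1-3z)(1-2z-3z^2)=(1-3z)^2(1+z)=(3z-1)^2(z+1)$. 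Finally the closed form is routine: the denominator gives the recurrence $|\mathcal{S}^h_{b,a}(n)|=5|\mathcal{S}^h_{b,a}(n-1)|-3|\mathcal{S}^h_{b,a}(n-2)|-9|\mathcal{S}^h_{b,a}(n-3)|$ for $n\ge 3$ with initial values $0,0,4$, and one checks that $(n-2)3^{n-1}+\tfrac{5\cdot 3^{n-1}+(-1)^n}{4}$ satisfies both (equivalently, take the partial-fraction expansion of $\tfrac{4z^2}{(z+1)(3z-1)^2}$).
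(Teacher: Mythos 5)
Your decomposition is in substance the paper's own: two special vertices, three walks from $v_1$ to $v_2$ encoded by Grand Motzkin prefixes, a three-way case distinction according to the slot of the leaf at $v_2$, and a factor $4$ for the root angles at $v_1$. The three per-type balance conditions you assert do produce exactly the three convolution sums $\sum_t GM_t^3$, $\sum_t GM_tGM_{t+1}^2$, $\sum_t GM_t^2GM_{t+1}$, which agree with the constant-term extraction $[u^0][v^0]$ of $4z^2\bigl(uv+\tfrac1{uv}+1\bigr)GM(z,uv)GM(z,u)GM(z,v)$ that the paper performs, so the special-vertex bookkeeping you leave as ``a short case analysis'' is at least consistent with the paper's type factor (it is asserted rather than carried out, but that is not the main problem).

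The genuine gap is the final evaluation. The identity $\sum_t GM_t(z)^2=GM(z)$ is false: folding two prefixes with the same end height into a Grand Motzkin path forgets the splitting position, so a path of length $n$ arises from $n+1$ pairs; in fact $\sum_t GM_t(z)^2=\sum_{n\ge 0}(n+1)c_nz^n=GM(z)+zGM'(z)=1+2z+9z^2+\cdots$, where $c_n$ is the central trinomial coefficient, whereas $GM(z)=1+z+3z^2+\cdots$. Consequently your ``short computation'' of $\sum_t GM_t^3+\sum_t GM_tGM_{t+1}^2+\sum_t GM_t^2GM_{t+1}=\frac{GM(z)^2}{1-3z}$ is unsupported. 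That identity is true (it is equivalent to the double constant-term extraction above, and it does give $\frac{1}{(1+z)(1-3z)^2}$), but proving it is precisely the nontrivial analytic core of the paper's argument: the paper introduces the bivariate series $S^h_a(u,v,z)$ over \emph{all} hexagon skeletons and extracts $[u^0][v^0]$ by two successive partial-fraction decompositions in $u$ and then $v$, using the algebraic roots $U$, $V_0$, $V_1$ (the standard diagonal method). Nothing of that strength follows from $\sum_t GM_t=\frac1{1-3z}$ together with any correct statement about $\sum_t GM_tGM_{t+k}$; the cubic convolution genuinely requires a two-variable extraction (or a bijective argument that, as the paper's conclusion notes, is still missing). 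Until you supply such a derivation, the displayed formula for $S^h_{b,a}(z)$, and hence the closed formula for $|\mathcal{S}_{b,a}^h(n)|$, is not established; the concluding recurrence-and-initial-values check is fine once the generating function is in hand.
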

\begin{proof}
Let $S$ be an  element of $\mathcal{S}_a^h(n)$. Note that $S$ is
not assumed to be  balanced here. Considering this larger class, we
are able extract the series $S^h_{b,a}(z)$ by following the \emph{standard diagonal method}~\cite[Section~6.3]{Sta99}.
As with above notations, consider the three walks $W_1$, $W_2$ and
$W_3$ of $S$ starting from the special vertex $v_1$ containing the
root angle, ending at the second special vertex $v_2$, such that the
three paths $W_1,W_2,W_3$ appear consecutively  in \ccw order around
$v_1$, starting from the leaf attached to $v_1$.


There are different cases to consider depending on the position of the
leaves.  We say that $S$ is of Type $i$ if the leaf of $v_2$ is after
$W_i$ in the \ccw order around $v_2$ (see the top line of
Figure~\ref{fig:hexagon-type}).  In order to ease the upcoming
computation, let rename these walks depending of the type.  For types 1
and 3, let $W_c=W_1, W_x=-W_2$ and $W_y=-W_3$ (see the bottom left and
bottom right of Figure~\ref{fig:hexagon-type}). For type 2, let
$W_c = W_3, W_x=-W_1$ and $W_y=-W_2$ (see the bottom center of
Figure~\ref{fig:hexagon-type}, where edges have been redrawn
differently than in top-center figure).

\begin{figure}[!ht]
  \center 
  \begin{tabular}{ccc}
   \scalebox{0.5}{\input{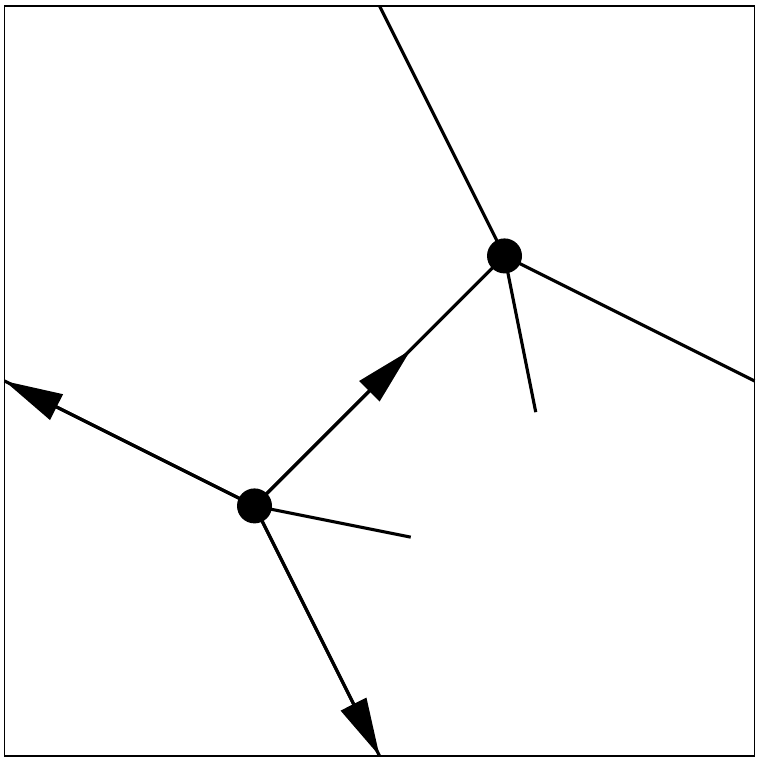_t}} \ \ & \ \
\scalebox{0.5}{\input{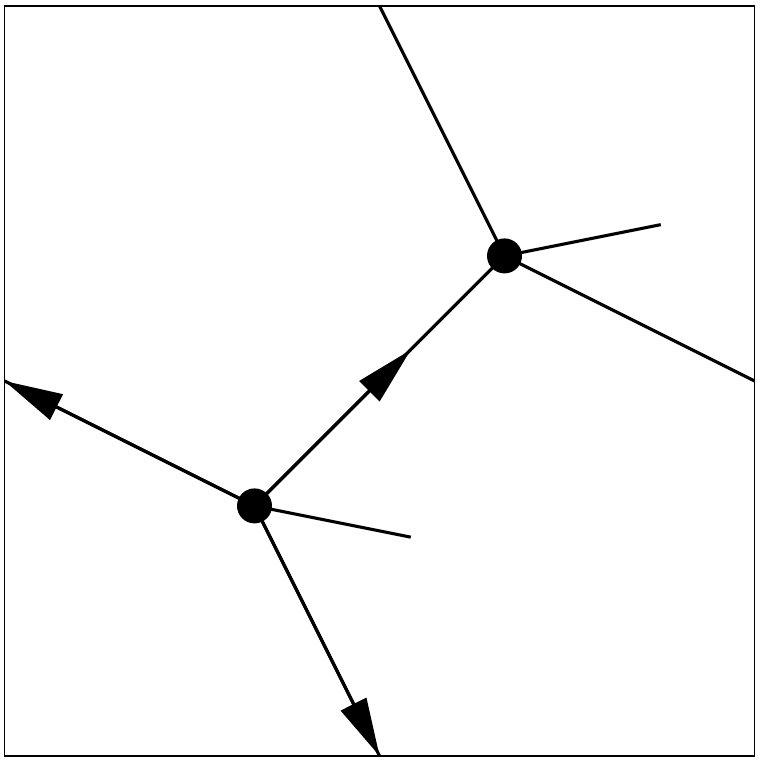_t}} \ \ & \ \
\scalebox{0.5}{\input{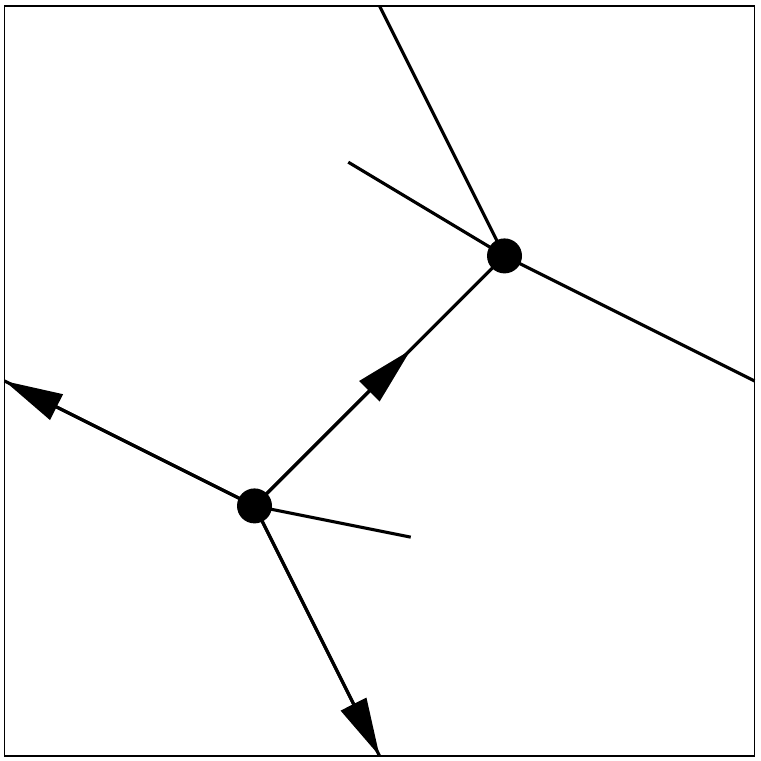_t}}\\
Type 1  \ \ & \ \
Type 2 \ \ & \ \
Type 3\\
   \scalebox{0.5}{\input{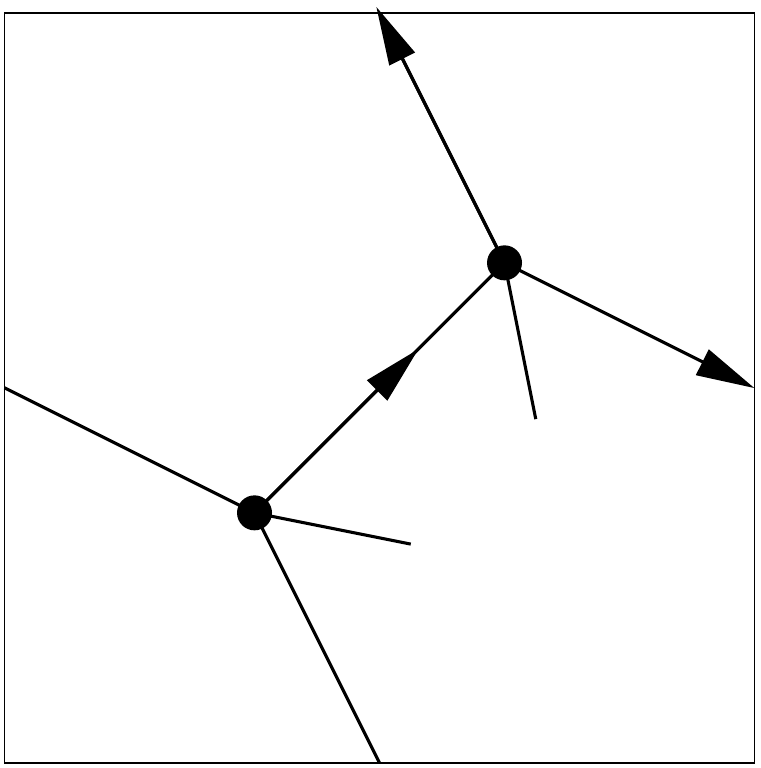_t}} \ \ & \ \
\scalebox{0.5}{\input{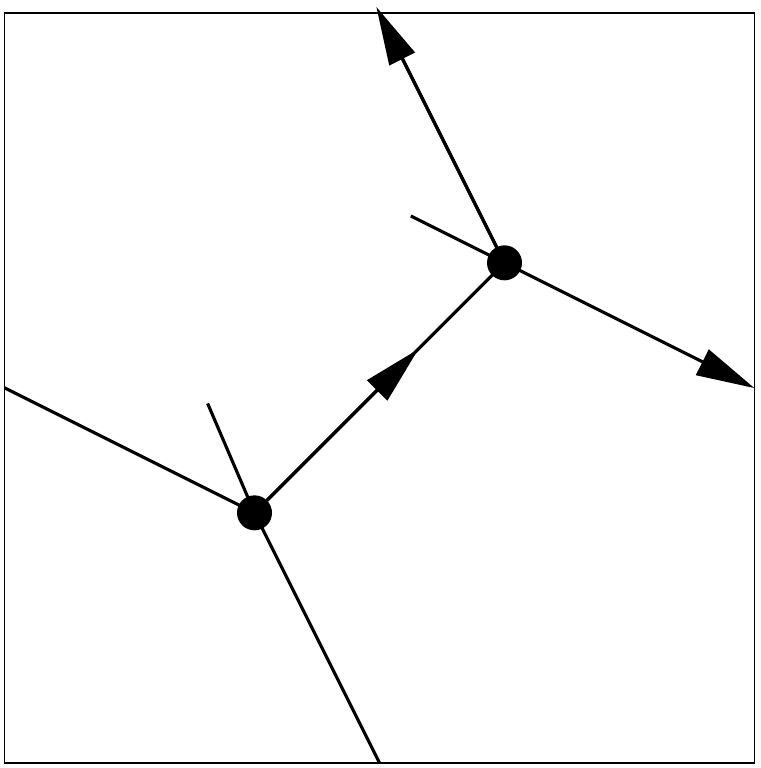_t}} \ \ & \ \
\scalebox{0.5}{\input{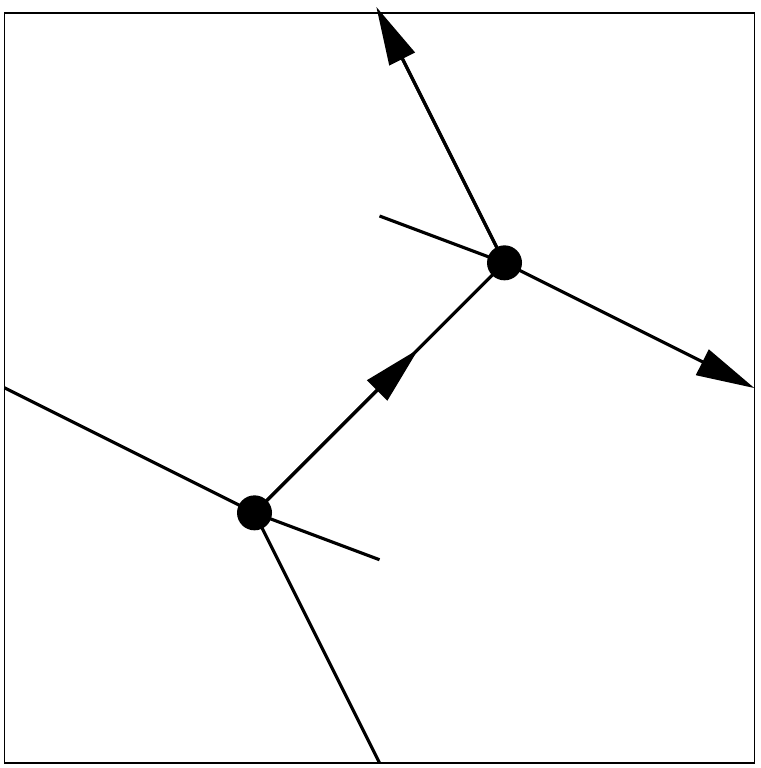_t}}\\

  \end{tabular}
\caption{Different types of  hexagon skeletons.}
\label{fig:hexagon-type}
\end{figure}

Let $C_1,C_2$ be the cycles of $S$ made of $W_c+W_x$ and $W_c+W_y$
respectively, with the direction of traversal corresponding to the
orientation of  $W_c,W_x,W_y$.
For $i\in\{1,2\}$, let $\delta_i$ be the
number of leaves of $S$ incident to $C_i$ that  are
on its right side minus the
number of leaves of $S$ incident to $C_i$ that  are
on its left side, divided by two.

Let $\mathcal{S}_a^h(k,\ell,n)$ denote the elements of
$\mathcal{S}_a^h(n)$ such that $\delta_1=k$ and $\delta_2=\ell$,
with $(k,\ell)\in \mathbb Z^2$.  Let $S^h_a(u,v,z)$ be the
associated  generating function, i.e.
$S^h_a(u,v,z) = \sum_{(k,\ell,n)\in \mathbb Z^2\times \mathbb N}
|\mathcal{S}_a^h(k,\ell,n)| u^{k}v^{\ell} z^n$.
This generating function can be computed with the following method.
An hexagonal skeleton can be decomposed into 2 special vertices
(contributing for $z^2$) plus tetra-valent caterpillars $C_c$, $C_x$
and $C_y$ respectively contributing for $GM(z,uv)$, $GM(z,u)$ and
$GM(z,v)$. Depending of the type of hexagon skeleton, the special
vertices are contributing for $+1$, $0$ or $-1$ to $\delta_1$ and
$\delta_2$, this is translated by a factor $(\frac{1}{uv}+uv+1)$ on
the generating function.  There are four possible root angles around
$v_1$, so we have:
\begin{eqnarray*}
S^h_a(u,v,z)   &=& 4z^2(uv + \frac{1}{uv} + 1)(GM(z,uv).GM(z,u).GM(z,v))\\
  &=& -\frac{4z^2(u^2v^2+uv+1)uv}{(u^2z+uz-u+z)(v^2z+vz-v+z)(u^2v^2z+uvz-uv+z)}.
\end{eqnarray*}

Observe that $S^h_{b,a}(z)=[v^0][u^0]S^h_a(u,v,z)$.

The denominator of $S^h_a(u,v,z)$, seen as a polynomial of $u$
admits four roots: $U, \frac{1}{U}, \frac{U}{v}$ and $\frac{1}{Uv}$
where $U = \frac{1-z-\sqrt{-3z^2-2z+1}}{2z}$, and we have:

$$S^h_a(u,v,z) =-\frac{4(u^2v^2+uv+1)u}{(u-U)(u-\frac{1}{U})(u-\frac{U}{v})(u-\frac{1}{Uv})v(v^2z+vz-v+z)}.$$

Hence $S^h_a(u,v,z)$ can be converted into partial fractions of $u$:

\begin{align*}
S^h_a(u,v,z) = A.\biggl(&\frac{U^2+Uv+v^2}{(1-Uu)U^2v^2(U^2-v)}+\frac{U^2v^2+Uv+1}{(1-\frac{U}{u})uUv^2(1-U^2v)}\\
                       &+\frac{U^2+U+1}{(1-\frac{U}{uv})(U^2-v)Uuv^2}+\frac{U^2+U+1}{(1-Uuv)U^2v(1-U^2v)}\biggl),
\end{align*}

where $A=\frac{4v^3U^3}{(v-1)(U^2-1)(v^2z+vz-v+z)}$.

As $A=\mc O(z^3)$ and $U=\mc O(z)$ (when $z$ tends to $0$), this identity
splits  into a sum of four power series in $z$ with coefficients in
$\mathbb{Q}[u,\frac{1}{u},v,\frac{1}{v}]$, two with only negative
powers of $u$ and two with only nonnegative powers of $u$. 

\begin{align*}
S^h_a(u,v,z) = A.\biggl( &\frac{U^2+Uv+v^2}{U^2v^2(U^2-v)}\sum_{n\geq 0}(Uu)^n +\frac{U^2v^2+Uv+1}{uUv^2(1-U^2v)}\sum_{n\geq 0} \left(\frac{U}{u}\right)^n\\
                       &+\frac{U^2+U+1}{(U^2-v)Uuv^2}\sum_{n\geq 0} \left(\frac{U}{uv}\right)^n+\frac{U^2+U+1}{U^2v(1-U^2v)}\sum_{n\geq 0} \left(Uuv\right)^n\biggl).
\end{align*}

Hence the coefficient $[u^0]$ can be directly extracted:

\begin{eqnarray*}
[u^0]S^h_a(u,v,z) &= & A.\left( \frac{U^2+Uv+v^2}{U^2v^2(U^2-v)}+\frac{U^2+U+1}{U^2v(1-U^2v)}\right)\\
                 & = & \frac{4(v^2z+vz+v+z)z^2v}{\sqrt{-(z+1)(3z-1)}(v^2z^2+vz^2+2vz+z^2-v)(v^2z+vz-v+z)}.
\end{eqnarray*}

Again, the denominator of $[u^0] S^h_a(u,v,z)$, seen as a polynomial
of $v$ admits four roots: $V_0, \frac{1}{V_0}, V_1, \frac{1}{V_1}$
where $V_0 = U$ and $V_1 =-\frac{z^2+\sqrt{-(z+1)(3z-1
)}(1-z)+2z-1}{2z^2}$.

Hence $[u^0]S^h_a(u,v,z)$ can be converted into partial fractions of $v$:

\begin{align*}
[u^0]S^h_a(u,v,z) = B .\biggl(&  \frac{V_0(V_0^2z+V_0z+V_0+z)}{(1-V_0/v)v(V_0^2-1)(V_0V_1-1)}+\frac{V_1(V_1^2z+V_1z+V_1+z)}{(1-V_1/v)v(-V_1^2+1)(V_0V_1-1)}\\
                            & +\frac{V_0^2z+V_0z+V_0+z}{(1-V_0v)(V_0^2-1)(V_0V_1-1)}+\frac{V_1^2z+V_1z+V_1+z}{(1-V_1v)(1-V_1^2)(V_0V_1-1)} \biggl),
\end{align*}

where $B = \frac{4V_0V1}{z\,\sqrt{-(z+1)(3z-1)}(V_0-V_1)}$.

As $B=\mc O(z)$, $V_0=\mc O(z)$ and $V_1=\mc O(z^2)$, this identity splits  into a sum of four power series in $z$ with coefficients in $\mathbb{Q}[v,\frac{1}{v}]$, two with only negative powers of $v$ and two with only nonnegative powers of $v$.

\begin{align*}
[u^0]S^h_a(u,v,z) = B .\biggl(&  \frac{V_0(V_0^2z+V_0z+V_0+z)}{v(V_0^2-1)(V_0V_1-1)}\sum_{n\geq 0} \left(\frac{V_0}{v}\right)^n   +\frac{V_1(V_1^2z+V_1z+V_1+z)}{v(-V_1^2+1)(V_0V_1-1)}\sum_{n\geq 0} \left(\frac{V_1}{v}\right)^n\\
                            & +\frac{V_0^2z+V_0z+V_0+z}{(V_0^2-1)(V_0V_1-1)}\sum_{n\geq 0} \left(V_0v\right)^n+\frac{V_1^2z+V_1z+V_1+z}{(1-V_1^2)(V_0V_1-1)}\sum_{n\geq 0} \left(V_1v\right)^n \biggl),
\end{align*}

Hence the coefficient $[v^0]$ can be directly extracted:

$$[v^0][u^0]S^h_a(u,v,z) = B\left(\frac{V_0^2z+V_0z+V_0+z}{(V_0^2-1)(V_0V_1-1)}+\frac{V_1^2z+V_1z+V_1+z}{(V_1^2-1)(V_0V_1-1)}\right).$$

Which simplifies into the second part of the lemma:
$$S^h_{b,a}(z) = [v^0][u^0]S^h_a(u,v,z) = \frac{4z^2}{(z+1)(3z-1)^2}.$$

We can observe that $S^h_{b,a}(z) - 5zS^h_{b,a}(z) + 3z^2S^h_{b,a}(z)
+ 9z^3S^h_{b,a}(z) = 4z$. We deduce the following recurrence for
$n>3$, $|\mathcal{S}_{b,a}^h(n)|  =
5|\mathcal{S}_{b,a}^h(n-1)|-3|\mathcal{S}_{b,a}^h(n-2)| -
9|\mathcal{S}_{b,a}^h(n-3)|$. Moreover, we have
$|\mathcal{S}_{b,a}^h(0)| = |\mathcal{S}_{b,a}^h(1)|=0$ and
$|\mathcal{S}_{b,a}^h(2)|=4$. Since
$(n-2).3^{n-1}+\frac{5.3^{n-1}+(-1)^n}4$ satisfies  the same
conditions as $|\mathcal{S}_{b,a}^h(n)|$, the two are identical.

\end{proof}

\subsection{Enumeration theorem}

The enumeration theorem that we obtain is the following:

\begin{theorem}
\label{thm:enumeration}
The generating function associated to the number $\mathcal {T}_{h}(n)$ of essentially
4-connected toroidal triangulations on $n$ vertices, rooted on any half-edge, is:

$$T_{h}(z)  = \sum_{n\geq 0} |\mathcal{T}_{h}(n)| z^n =  \frac{zA(z)}{7zA(z)^2-21zA(z)+9z+1}$$

 where $A(z)$ is the generating function of (leaf-rooted) ternary trees satisfying $A(z)=1+zA(z)^3$.

\

Moreover, the values of $|\mathcal{T}_{h}(n)|$ are given by the following formulas:

$$ |\mathcal{T}_{h}(n)|= \frac{1}{4}\sum_{k=1}^{n} |\mathcal{T}_{h,c}^p(n-k)|.|\mathcal{T}_c^t(k)|$$

$$|\mathcal{T}_{h,c}^p(n)| =\frac4{n+1}\binom{3n+1}{n}$$

$$|\mathcal T_c^t(n)|=
\sum_{k=1}^n\mathcal{S}(k).|\mathcal{F}(n-k,2k-2)|$$

$$ \mathcal S(n) =\frac{(-1)^{n-1}+(3+4n)3^{n-1}}8$$

$$|\mathcal{F}(n,k)| = \frac{k}{2n-k}\binom{3n-k-1}{n}.$$
 \end{theorem}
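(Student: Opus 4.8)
The plan is to assemble the bijective results already established into generating function identities and then normalize the resulting rational expression using the defining equation $A(z)=1+zA(z)^3$. First I would record the three structural identities. Lemma~\ref{lem:bijrootcorner} gives $4\,|\mathcal T_h(n)|=\sum_{k=1}^n|\mathcal T^p_{h,c}(n-k)|\,|\mathcal T^t_c(k)|$, hence $4\,T_h(z)=T^p_{h,c}(z)\,T^t_c(z)$; Lemma~\ref{lem:planarpart2} gives $T^p_{h,c}(z)=4A(z)^2$; and Lemma~\ref{lem:bijsquarehexa}, read at the level of cardinalities, gives $2\,|\mathcal T^t_c(n)|=2\,|\mathcal M^s_{b,a}(n)|+|\mathcal M^h_{b,a}(n)|$, i.e.\ $T^t_c(z)=M^s_{b,a}(z)+\tfrac12 M^h_{b,a}(z)$. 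Combining the first two yields $T_h(z)=A(z)^2\,T^t_c(z)$, so everything reduces to computing $T^t_c(z)$.

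Next I would compute $T^t_c(z)$. Writing $w:=zA(z)^2$, Lemma~\ref{lem:forestdecomp} gives $M^s_{b,a}(z)=S^s_{b,a}(w)/A(z)^2$ and $M^h_{b,a}(z)=S^h_{b,a}(w)/A(z)^2$, while Lemmas~\ref{lem:Ss} and~\ref{lem:Sh} give $S^s_{b,a}(w)=w/(1-2w-3w^2)$ and $S^h_{b,a}(w)=4w^2/\big((w+1)(3w-1)^2\big)$. Since $1-2w-3w^2=(1-3w)(1+w)$, putting both terms over the common denominator $(w+1)(3w-1)^2$ and using that the numerator collapses as $-w(3w-1)+2w^2=w(1-w)$, one obtains $T^t_c(z)=\dfrac{1}{A(z)^2}\cdot\dfrac{w(1-w)}{(w+1)(3w-1)^2}=\dfrac{z(1-w)}{(w+1)(3w-1)^2}$, using $w/A(z)^2=z$. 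Hence $T_h(z)=A(z)^2\,T^t_c(z)=\dfrac{w(1-w)}{(w+1)(3w-1)^2}$, again since $zA(z)^2=w$.

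Then I would normalize via the functional equation, writing $A=A(z)$ and using it in the form $zA^3=A-1$, so that $w=zA^2=(A-1)/A$, hence $1-w=1/A$, $w+1=(2A-1)/A$ and $3w-1=(2A-3)/A$. Substituting and clearing the powers of $A$ gives $T_h(z)=\dfrac{A(A-1)}{(2A-1)(2A-3)^2}=\dfrac{zA^4}{(2A-1)(2A-3)^2}$ (using $A-1=zA^3$). Finally, expanding $(2A-1)(2A-3)^2=8A^3-28A^2+30A-9$ and repeatedly reducing $zA^5=A^2(A-1)$, $zA^4=A(A-1)$, $zA^3=A-1$, one checks that $A^3\big(7zA^2-21zA+9z+1\big)=8A^3-28A^2+30A-9$; dividing numerator and denominator of $zA^4/\big((2A-1)(2A-3)^2\big)$ by $A^3$ then produces the claimed form $T_h(z)=zA(z)/\big(7zA(z)^2-21zA(z)+9z+1\big)$.

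For the explicit coefficient formulas nothing more is needed than re-reading the earlier lemmas: $|\mathcal T_h(n)|=\tfrac14\sum_{k=1}^n|\mathcal T^p_{h,c}(n-k)|\,|\mathcal T^t_c(k)|$ and $|\mathcal T^p_{h,c}(n)|=\tfrac4{n+1}\binom{3n+1}{n}$ are immediate from Lemmas~\ref{lem:bijrootcorner} and~\ref{lem:planarpart2}; and $|\mathcal T^t_c(n)|=|\mathcal M^s_{b,a}(n)|+\tfrac12|\mathcal M^h_{b,a}(n)|=\sum_{k=1}^n\big(|\mathcal S^s_{b,a}(k)|+\tfrac12|\mathcal S^h_{b,a}(k)|\big)\,|\mathcal F(n-k,2k-2)|$ by Lemmas~\ref{lem:bijsquarehexa} and~\ref{lem:forestdecomp}, where Lemmas~\ref{lem:Ss} and~\ref{lem:Sh} give $|\mathcal S^s_{b,a}(k)|+\tfrac12|\mathcal S^h_{b,a}(k)|=\big((-1)^{k-1}+(3+4k)3^{k-1}\big)/8=:\mathcal S(k)$ after collecting all terms over denominator $8$, and $|\mathcal F(n,k)|=\frac{k}{2n-k}\binom{3n-k-1}{n}$ is the cited formula of Stanley. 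The only real work is the algebraic bookkeeping — carrying the factors of $2$ coming from the $\{1,2\}$-markings, the substitution $w=zA(z)^2$, and the reduction of the rational expression in $A$ to the compact form via $zA^3=A-1$; this last normalization is the step most likely to need care, but it is a routine (if slightly lengthy) computation rather than a genuine obstacle.
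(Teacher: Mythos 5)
Your proposal is correct and follows essentially the same route as the paper: it assembles Lemmas~\ref{lem:bijrootcorner}, \ref{lem:planarpart2}, \ref{lem:bijsquarehexa}, \ref{lem:forestdecomp}, \ref{lem:Ss} and~\ref{lem:Sh} into $T_h(z)=A(z)^2\,T^t_c(z)$ with $T^t_c(z)=S(zA(z)^2)/A(z)^2$, computes $S(z)=z(1-z)/\big((1+z)(1-3z)^2\big)$, and then normalizes via $zA^3=A-1$ to the claimed closed form. Your explicit verification that $A^3(7zA^2-21zA+9z+1)=(2A-1)(2A-3)^2$ simply spells out the final simplification the paper leaves implicit; all factors of $2$ and the coefficient formulas are handled as in the paper.
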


\begin{proof}
By Lemma~\ref{lem:bijrootcorner}, we have:
$$ |\mathcal{T}_{h}(n)|= \frac{1}{4}\sum_{k=0}^{n} |\mathcal{T}_{h,c}^p(n-k)|.|\mathcal{T}_c^t(k)|$$
and so: 
\begin{equation}\label{eq:tptt}T_{h}(z)=\frac{1}{4}T_{h,c}^p(z)T_c^t(z).
\end{equation}

By Lemma~\ref{lem:planarpart2}, we have:
$$|\mathcal{T}_{h,c}^p(n)| = \frac4{n+1}\binom{3n+1}{n}$$
\begin{equation}\label{eq:tpA}
T_{h,c}^p(z) = 4 A(z)^2.
\end{equation}

By Lemma~\ref{lem:bijsquarehexa}, we have:
\begin{equation}\label{eq:gf-Ttblabla1}|\mathcal T_c^t(n)|=
|\mathcal M^s_{b,a}(n)|+ \frac{1}{2}|\mathcal
M^h_{b,a}(n)|\end{equation}

\begin{equation}\label{eq:gf-Ttblabla2}
T_c^t(z) = M^s_{b,a}(z) + \frac{1}{2}M^h_{b,a}(z).
\end{equation}





Let
$\mathcal S(n) = |\mathcal{S}_{b,a}^s(n)| +
\frac{1}{2}|\mathcal{S}_{b,a}^h(n)|$ and $S(z) = S^s_{b,a}(z) + \frac{1}{2} S^h_{b,a}(z)$.
So, by Lemma~\ref{lem:forestdecomp}, Equations~(\ref{eq:gf-Ttblabla1}) and~(\ref{eq:gf-Ttblabla2})  become:

$$|\mathcal T_c^t(n)|=  \sum_{k=1}^n\mathcal{S}(k).|\mathcal{F}(n-k,2k-2)|$$

\begin{equation} \label{eq:gf-Ttbis}
T_c^t(z) = \frac{S(zA(z)^2)}{A(z)^2}
.\end{equation}

From Lemmas~\ref{lem:Ss} and~\ref{lem:Sh}, we obtain:
$$ \mathcal S(n) =\frac{(-1)^{n-1}+(3+4n)3^{n-1}}8$$

\begin{equation} \label{eq:szblabla}
S(z) = \frac{z(1-z)}{(1+z)(1-3z)^2}.
\end{equation}

From~(\ref{eq:gf-Ttbis}) and~(\ref{eq:szblabla}), we obtain:

\begin{equation} \label{eq:tzblabla}
T_c^t(z) = \frac{z-z^2.A(z)^2}{(z.A(z)^2+1).(3z.A(z)^2-1)^2}
.\end{equation}


Combining~(\ref{eq:tptt}),~(\ref{eq:tpA})  and~(\ref{eq:tzblabla}) gives :
$$T_{h}(z)  = \frac{(z-z^2A(z)^2)A(z)^2}{(zA(z)^2+1)(3zA(z)^2-1)^2}.$$

By (\ref{eq:az3}) (see Section~\ref{sec:decomp}), one can replace $zA^3$ by $A-1$ in above formula
and obtain: 
$$T_{h}(z)   =  \frac{zA(z)}{7zA(z)^2-21zA(z)+9z+1}.$$

\end{proof}



Part of the proof of Theorem~\ref{thm:enumeration} relies on
generating function analysis and do not completely explain the
simplicity of some expressions. For instance, sequences of the number
of different kinds of skeletons ($S^s_{b,a}(n)$ and $S^h_{b,a}(n)$)
have nice simple formulas (see Lemmas~\ref{lem:Ss} and~\ref{lem:Sh})
that deserve clean bijective interpretations. Note that these
sequences  already appear in OEIS~\cite{oeis} (resp. A015518, A191008).
Having a bijective proof of the enumeration of skeletons could be
essential to provide an efficient (e.g. sub-quadratic) random
generation algorithm for essentially $4$-connected toroidal maps.

\section{Conclusion}

In this paper, we have generalized transversal structures and some of
its applications to the toroidal case.  Using only a local property in
the definition, as in the planar case, is not enough to obtain
interesting properties.  Indeed, the set of toroidal transversal
structures of a given toroidal map is partitioned into several
distributive lattices.  The main point of this paper is to be able to
find a global property, called ``balanced'', that such an object may
have or may not have. Then, the set of balanced objects defines a
unique lattice whose minimal element has properties useful to apply
techniques devised for the planar case. This is very similar to what
happens for Schnyder woods for which it is also possible to define an
analogous balanced property (see~\cite{LevHDR}). Then, one can apply
on these minimal balanced objects, the ``mobile method'' as here, or
alternatively ``Poulalhon and Schaeffer's method'' as in \cite{DGL15},
to obtain bijections between the considered classes of toroidal maps
and particular toroidal unicellular maps.  This seems to be a general
framework that can be applied to various classes of toroidal maps and
their corresponding set of $\alpha$-orientations. In a paper to come
of Eric Fusy and the second author, a similar balanced property is
found for toroidal fractional $\frac{d}{d-2}$-orientations with
similar bijective consequences.  A challenging question is to see if
one can go further and found a generalization of the balanced property
in higher genus. Currently we have no idea of what could be the answer
even for the double torus.

Schnyder woods and transversal structures are also used in the planar
case to compute straightline grid drawings of plane
triangulations~\cite{Sch89, Fus09}. Recently toroidal Schnyder woods
have been used to obtain drawings of toroidal graphs
(see~\cite{GL13}). Another natural application of toroidal transversal
structures would be a straightline grid drawing algorithm of
essentially 4-connected toroidal triangulations.




\vspace{12pt}
\noindent
{\bf Acknowledgments.} We would like to thank Mireille Bousquet-Mélou,  Eric Fusy and Daniel Gonçalves
for fruitful discussions and suggestions.

\bibliographystyle{alpha} 
\bibliography{tts} 
\end{document}